\algnewcommand\algorithmicinput{\textbf{INPUT:}}
\algnewcommand\INPUT{\item[\algorithmicinput]}
\algnewcommand{\algorithmicoutput}{\textbf{OUTPUT:}}
\algnewcommand\OUTPUT{\item[\algorithmicoutput]}
\algrenewcommand{\algorithmiccomment}[1]{\hskip3em$\%$ #1}
\newtheorem{theorem}{Theorem}[section]
\newtheorem{lemma}[theorem]{Lemma}
\newtheorem{proposition}[theorem]{Proposition}
\newenvironment{proof}[1][Proof]{\begin{trivlist}
\item[\hskip \labelsep {\bfseries #1}]}{\end{trivlist}}
\newcommand{\qed}{\nobreak \ifvmode \relax \else
      \ifdim\lastskip<1.5em \hskip-\lastskip
      \hskip1.5em plus0em minus0.5em \fi \nobreak
      \vrule height0.75em width0.5em depth0.25em\fi}
\title{Wireless Security with Beamforming Technique}
\author{Yuanrui Zhang M.Sc.}
\keywords{{LaTeX} {PhD Thesis} {School of EEECS} {Queen's University Belfast}}
\begin{document}
\frontmatter
\begin{titlepage}
  \maketitle
\end{titlepage}

\begin{abstract}

This thesis focuses on the wireless security in the physical layer with beamforming technique.
As the wireless communications grow more important, a higher level of security is more demanding as well.
New techniques have been proposed in complement to the existing encryption-based methods in the communications protocols.
One of the emerging areas is the security enhancement in the physical layer, which exploits the intrinsic properties of the wireless medium.
Beamforming, which has been proved to have many advantages, such as increasing data rates and reducing interference, can also be applied to enhance the wireless security.

One of the most common threats, i.e., passive eavesdropping, is addressed in this thesis.
Passive eavesdroppers bring significant challenges to the system design, because the knowledge of their locations and channel condition is generally difficult to acquire.
To reduce the risk of leaking information to the eavesdroppers, the physical region where the transmission is exposed to eavesdropping has been studied.
In this thesis, the exposure region based beamforming technique is proposed to combat the threat from the randomly located eavesdroppers in a fading channel.

A stochastic geometry tool is used to model the distribution of the passive eavesdroppers.
In this system model, the large-scale path loss and a general Rician fading channel model are considered.
The exposure region is established to describe the secrecy performance of the system, based on which the probability that the secrecy outage event occurs is defined to evaluate the security level of the exposure region.

The antenna array is one of the most important factors that affect the secrecy performance of the exposure region based beamforming technique.
The potential of using different array geometry and array configuration to improve security is explored.
In this thesis, two common arrays, i.e., linear and circular arrays, are considered.
Analytic expressions for general array geometry as well as for the linear and circular arrays are derived.
In addition, numerical results are used to analyze the behaviors of the antenna array towards security.
Based on the empirical results, numerical optimization algorithms are developed to exploit the array configuration to enhance the system security level.

In complement to the theoretical analysis, experiments are carried out to study the performance of the beamformer with linear and circular arrays in practice.
Especially, the impact of the mutual coupling on the security performance is investigated.
To this end, numerical simulation results as well experimental results are used to study the behaviors of the linear and circular arrays towards security.

\end{abstract}

\tableofcontents
\listoffigures
\listoftables
\printnomenclature 

\mainmatter

\chapter{Introduction}
\label{chp1}

\section{Background}
\label{chp1:bkgnd}


As wireless technologies become more reliable, more efficient and more convenient, wireless communications have gradually become an integral part of human activities over the past few decades, from everyday life to industrial productions.
New concepts and ideas have revolutionized the way of communication, e.g., from single-antenna system to massive multiple-input-multiple-output (MIMO) system; 
and the frontier of their applications has been extended to the latest technologies, e.g., `Internet of things', which is core for both 5G and `Industry 4.0'.


\nomenclature{MIMO}{multiple-input-multiple-output}

According to one of Cisco's latest reports\,\cite{index20152020global}, there will be an eightfold increase of the mobile data traffic globally in the next five years.
With the evidently increasing data traffic, the need of the wireless security grows more demanding.
Opposite to the benefits that the wireless transmission brings, the broadcast nature makes it more vulnerable to adversarial behaviors than wired transmission.
Shannon's perfect secrecy was first established in 1949\,\cite{shannon1949communication} and later in 1975, Wyner conceived the wiretap channel which in theory makes the perfect secrecy achievable\,\cite{wyner1975wire}.
However, due to the lack of practical codes for the wiretap channel and the assumption that an adversary should suffer more noise than the legitimate user, the information-theoretic security was soon overtaken by encryption techniques\,\cite{salomaa2013public} in practice.
Since then, most standard security solutions rely on the authentication and encryption schemes in the communications protocols, e.g., the IEEE 802.11. 

Encryption techniques essentially rely on the computational complexity and have not been proved strictly secure from the information-theoretic perspective\,\cite{massey1988introduction}.
Furthermore, there are many limitations when they are used to cope with various threats that exist in wireless networks\,\cite{al2006ieee}\,\cite{Dhiman2014}.  
For example, the techniques used in the 802.11 protocols, such as wireless encryption protocol (WEP) and Wi-Fi protected access (WPA), are vulnerable to practical attacks\,\cite{tews2009practical}\,\cite{shiu2011physical}. 


\nomenclature{WEP}{wireless encryption protocol}
\nomenclature{WPA}{Wi-Fi protected access}

Because of the limitations of the encryption techniques in the high layers above the physical layer, different solutions from the physical layer have been proposed.
The primary difference between the information-theoretic security and the encryption techniques is that the former limits the amount of information that can be obtained at the bit level by the adversarial receiver, whereas the latter makes it computationally hard to decipher\,\cite{mukherjee2010principles}.
Various physical layer techniques that exploit the inherent randomness of noise and wireless channels are developed towards the security aspect in order to meet the challenges that are raised by the boom of wireless communications\,\cite{bloch2011physical,mukherjee2010principles,liu2010securing,zhou2013physical}.

Security methods from the physical layer can co-exist with the existing encryption based schemes.
The security enhancement 
from the physical layer can provide many advantages compared to the conventional encryption techniques\,\cite{bloch2008wireless}.
The amount of information that can be obtained by the adversary can be precisely measured based on the channel quality; 
 it is not subject to the growing computational resources that constantly threaten the computational models based on the encryption techniques, e.g., via brute force attack.
In theory, it is possible to approach perfect secrecy using suitably long codes; 
  it is realized for quantum key distribution in practice.
In addition, it does not need complex protocols for key distribution and management.
Thus, not much change to the existing system architecture for communication is required to achieve security in the physical layer.

Beamforming has been proved to exploit the wireless channel to achieve better quality-of-service in terms of bit rate and error performance in the physical layer\,\cite{mietzner2009multiple} and has a wide range of applications in wireless communications, e.g., MIMO relay networks\,\cite{vouyioukas2013survey}.
To approach Shannon's perfect secrecy in the wiretap channel, the legitimate user's channel should have some kind of advantage over the adversary's channel. 
While the omni-directional antenna cannot provide such an  advantage, beamforming, one of the prominent multiple-antenna techniques, has the ability to create better channel for the legitimate user\,\cite{mukherjee2011robust,mukherjee2009utility}.

\nomenclature{Eve(s)}{eavesdropper(s)}

There has been work that exploits beamforming to achieve security from the information-theoretic perspective.
For example, the simplest case would be to use beamforming to increase the signal power at the legitimate user's direction while suppressing the signal power at other directions.
There are also examples to use beamforming to generate interference specifically towards adversary users\,\cite{negi2005secret,goel2008guaranteeing}.
Therefore, it is desirable to achieve reliability, efficiency and security from the beamforming techniques which have the potential to provide an all-in-one solution.

\section{Motivation}
\label{chp1:motive}


The broadcast nature of wireless transmission makes it vulnerable to many attacks, e.g., eavesdropping.
Beamforming technique has the ability to control the direction of transmission.
In fact, beamforming is essentially a spatial filter that focuses energy at a certain direction while suppresses energy at some other directions\,\cite{van1988beamforming}.
It exploits the spatial domain other than the time and frequency domains compared to the single-antenna techniques (e.g., channel coding).

This thesis mainly investigates the problem caused by a particular adversarial behavior, i.e., passive eavesdropping, which already existed before the time of wireless communications and threatens the secrecy and privacy of wireless transmissions\,\cite{pathan2006security}.
The classical model consists of Alice, Bob and Eve, where Alice is the transmitter, Bob is the legitimate user and Eve is the eavesdropper.
Alice sends a message to Bob in the presence of Eve(s). 
However, Eve can easily intercept the message sent to Bob, if she is within the coverage range.
Thus, it is suitable to use beamforming to provide the advantage that is required for Bob over Eve in Wyner's wiretap channel model, in order to achieve the information-theoretic security.

Besides the intentionally imposed difference between Bob's and Eve's channels, e.g., via beamforming, users' locations provide a certain level of distinction of the related channels.
Thus, in this thesis, the geometric locations of Bob and Eve are also considered.
Generally speaking, the closer a user to the transmitter is, the stronger the received signal is, thus the better the user's channel is.
Location was ignored in the beginning of information-theoretic security research, partly attributed to the fact that the Eve's location is often random and unknown to Alice.
With the aid of stochastic geometry theory, the distribution of the random users' locations can be modeled, e.g., via a Poisson point process (PPP)\,\cite{haenggi2009stochastic,chiu2013stochastic}.
In addition, the fact that the localization granularity keeps improving drives the related research in many contexts\,\cite{liu2008lke,yan2014optimal,yan2014signal}.
Therefore, the awareness of user's location promotes the utilization of location towards wireless security.
For example, `ArrayTrack'\,\cite{xiong2013arraytrack} that improves the granularity is also studied in the context of enhancing security\,\cite{xiong2013securearray}.

\nomenclature{PPP}{Poisson point process}

The spatial-filtering ability of beamforming is suitable for distinguishing the locations that are secure or insecure for the transmission to Bob.
In this thesis, beamforming is used to form physical region in terms of information-theoretic security.
There has been related work that attempts to create physical region to combat the randomness of both Eves' location and the wireless channel, 
however, one important factor, i.e., the antenna array itself, is overlooked.

Since beamforming is performed via antenna arrays, its security performance relies on the array configuration.
Naturally, the physical region created by using beamforming is highly related to the array and can be altered via changing the array configuration.
However, the impact of the array configuration is rarely studied.
One related work that explores the security performance for the geometric distribution of Eve\,\cite{yan2014secrecy} overlooked the importance of the array configuration.
In another work\,\cite{mehmood2015secure}, the array pattern is synthesized towards the security performance metric, but not related to the physical region.
Therefore, this thesis investigates the possibility of leveraging the array configuration to improve the wireless security. 


\section{Contributions}
\label{chp1:contri}

The objective of this thesis is to enhance the security level of the wireless transmission from Alice to Bob in the presence of the PPP distributed Eves from the spatial aspect via beamforming technique.
The major challenge is that Eves' channel state information (CSI) and locations are random and unknown to Alice.
Thus, the randomness comes both from the PPP distribution and the small-scale fading.
The key contribution of this thesis is to propose an exposure region (ER) based beamforming technique with both information-theoretic and numerical analysis to meet the previous challenge from a new angle, i.e., the antenna array that shapes the ER, which serves as the bridge between the information-theoretic security and a more controllable security-related physical region. 
In this thesis, to the author's best knowledge, the existing work related to wireless security based on the physical region is summarized for the first time.
Moreover, to examine the practicality of the ER-based beamforming technique, a transmit beamformer is built on a hardware platform to investigate its secrecy performance. 
In the following, a summary of key contributions is provided.


\nomenclature{CSI}{channel state information}
\nomenclature{ER}{exposure region}

\begin{itemize}
	\item The concept of the ER is established based on information-theoretic secrecy parameter, i.e., the secrecy outage, based on which the spatial secrecy outage probability (SSOP) is defined the performance metric and its accurate expression is derived  for a general fading channel and an arbitrary antenna array.
	\item The upper bound of the SSOP is derived to facilitate analytical analysis.
	Particularly, the analytic expressions for the uniform linear array (ULA) and the uniform circular array (UCA) are obtained and analyzed to show how the upper bound changes with different array configurations, which is used to predict the properties of the SSOP.
	\item With analytical and numerical analysis, the properties of the SSOP for the ULA and the UCA are compared with respect to various parameters.
	As the conclusion, the UCA is more suitable to develop optimization algorithms that minimize the SSOP.
	\item Based on the empirical results, two numerical optimization algorithms are developed for the adjustable and fixed transmit power scenarios to minimize the SSOP.
	One algorithm produces the optimum radius of the UCA.
The other one called the configurable beamforming technique leverages different array configurations to achieve the minimum SSOP according to Bob's dynamic location.
The algorithms can be generalized and thus are applicable to a wide range of parameters.
	\item A practical issue, i.e., the mutual coupling, is examined with the aid of the wireless open access research platform (WARP) and the numerical electromagnetics code (NEC).
A practical beamformer is built on WARP.
The impact of the mutual coupling to ULA and UCA is compared and the implication to the ER-based beamforming technique and the numerical optimization algorithms is investigated.
\end{itemize}

Part of the contributions regarding to the study of security behaviors with respect to the array parameters and the numerical optimization algorithms on the ULA is published in\,\cite{mypaper}.
Part of the contributions regarding to building a practical beamformer on WARP and the study of the mutual coupling for the ULA is published in\,\cite{mypaper2}. 
Part of the contributions regarding to the derivation of the analytic expression of the SSOP on the UCA is under view in\,\cite{mypaper3}.
Part of the contributions regarding to the concept of the ER and the SSOP and the analysis on the ULA is under view in\,\cite{mypaper4}.
Part of the contributions regarding to the optimization algorithm on the UCA is in preparation in\,\cite{mypaper5}.

\nomenclature{SSOP}{spatial secrecy outage probability}
\nomenclature{ULA}{uniform linear array}

\nomenclature{NEC}{numerical electromagnetics code}
\nomenclature{UCA}{uniform circular array}
\nomenclature{WARP}{wireless open access research platform}


\section{Thesis Outline}
\label{chp1_outline}

The rest of the thesis is organized as follows.
In Chapter\,2, a collection of several topics are presented as research background and preliminaries for the thesis.
First, an overview of the wireless security in the physical layer is given and some fundamental concepts in the area of information-theoretic security are introduced.
Subsequently, the key literature review about the security methods related to the physical region is provided with a short summary in the end.
Afterwards, the fundamental concepts, i.e., the array steering vector and the array factor, are introduced and the impact of the mutual coupling is briefly explained.
Next, the wireless channel used in the thesis is given.
In the end, an introduction for WARP and NEC is provided.

In Chapter\,3, the system model that incorporates the geometric locations for the generalized Rician channel is introduced, and the ULA is chosen as an example to develop the concept of the ER, based on which the SSOP is then derived.
The analytic upper bound of the SSOP is obtained.
Then the SSOP and its upper bound are analyzed via analytic and numerical methods.
The analysis of the security performance regarding to the array parameters is first made for the deterministic channel, then generalized for the Rician fading channel.
In addition, the tightness of the upper bound is examined.

In Chapter\,4, with the aid of the general expressions in Chapter\,3, the SSOP and its upper bound for UCA are derived.
Then the security performance regarding to the array parameters for the UCA is studied and compared in parallel with the ULA for the deterministic channel.
Subsequently, the conclusions are extended to the Rician fading channel, including the comparison for the tightness of the upper bound for the ULA and the UCA.
The mutual coupling is investigated for both ULA and UCA via WARP experiments and NEC simulations.

In Chapter\,5, the system model for the UCA with adjustable array configuration is established with some basic concepts, e.g, the array mode and the coverage zone.
The optimization problem is formulated and the key parameters of the SSOP are jointly analyzed.
Based on the empirical results, two numerical optimization algorithms, which are applicable to the generalized Rician channel, are developed for the dynamic and fixed transmit power constraints.
The deterministic channel is used as an example to illustrate the numerical implementation of the algorithms.
The error analysis for the configurable beamforming technique is conducted.
Next, the analysis of the mutual coupling on the UCA with adjustable array configuration is conducted via NEC simulations,  and the impact of the mutual coupling on the two optimization algorithms is investigated.

In Chapter\,6, the summary of this thesis is provided and suggestions on this topic for future work are given.

\chapter{Literature Review and Research Background}
\label{chp2}

\section{Introduction}
\label{chp2:intro}

This chapter contains a collection of four topics that will be involved in this thesis.
Each section covers one topic.
For each topic, the fundamentals are introduced as preliminaries for this thesis.
A literature review is carried out for these topics and the key findings of the related work are presented.

The wireless security in the physical layer is a broad area.
Without distracting from the main topic of this thesis, a comprehensive overview with a few selected fields, such as fading channels and multiple-antenna techniques, is given to reveal the development of these fields.
To facilitate further understanding, some basic concepts are presented.
Then, the related work to the wireless security from the physical region perspective is surveyed.
To the author's best knowledge, this is the first time that this area is comprehensively reviewed from two aspects, i.e., the information-theoretical aspect and the physical space security aspect.

The antenna array and the wireless channels are studied, which serves as the preliminaries for the system models in this thesis.
Then, some entry-level introduction is provided for the experiment and simulation tools that are used in this thesis, in order to help readers understand the set-ups and results in this thesis.

This chapter is organized as follows. 
In Section\,\ref{chp2:PhySec}, an overview for the wireless security in the physical layer is provided and some basic concepts are introduced; the related work to the physical region is surveyed with a brief summary in the end.
In Section\,\ref{chp2:antennas}, the fundamental concepts used in the field of antenna array are presented and the mutual coupling is introduced.
In Section\,\ref{chp2:channel}, the background of the wireless channel is given and the channel model used in this thesis is explained.
In Section\,\ref{chp2:mc}, an introduction for the experiment and simulation tools is provided.
In Section\,\ref{chp2:conclu}, the conclusions of this chapter are given.

\section{Wireless Security in the Physical Layer}
\label{chp2:PhySec}
\subsection{Overview}
\label{chp2:PhySec:zmkcs}

When the wireless security is discussed for the physical layer, 
what most people refer to is the information-theoretic security.
The beginning is when the concept of Shannon's perfect secrecy was conceived in 1949\,\cite{shannon1949communication}.
After some initial developments, Wyner established the wiretap channel model and showed the possibility to approach Shannon's perfect secrecy on the condition that Eve's channel must be weaker than Bob's channel\,\cite{wyner1975wire}.

Wyner's wiretap channel model has laid the foundation for much follow-up work.
A few years later, the wiretap channel model was extended to non-degraded discrete memoryless broadcast channels\,\cite{csiszar1978broadcast}. 
The secrecy capacity of the Gaussian wiretap channel is characterized in\,\cite{leung1978gaussian}, based on which a substantial body of work is developed.
In this work, both Bob's and Eve's channels are additive white Gaussian noise (AWGN) channels.
The \textit{secrecy capacity}, which is the maximum transmission rate at which Eve cannot decode any information, is calculated by the difference between Bob's and Eve's channel capacities.

\nomenclature{AWGN}{additive white Gaussian noise}

Several fading channels are considered based on the Gaussian wiretap channel, which leads to the usage of outage probability in the security performance metrics.
In\,\cite{barros2006secrecy,bloch2008wireless}, quasi-static fading is studied under the assumption that Eve's CSI is not available at Alice.
In the absence of Eve's CSI, the outage formulation is adopted to evaluate the secrecy performance and the \textit{secrecy outage probability} (SOP) is defined by the probability that the secrecy capacity is below a target secrecy rate.
The SOP is a useful measure for delay-limited applications.
On the other hand, for delay-tolerant applications, the ergodic secrecy capacity can be used to measure the performance of the secure communications. 
In such cases, opportunistic exploitation of the time intervals that Bob has a better channel even allows secure transmission when Eve's channel is on average better\,\cite{hong2013enhancing}.
In\,\cite{li2009secrecy,liang2008secure,gopala2008secrecy,khisti2008secure}, the secrecy capacity for ergodic fading models provided with different levels of CSI are studied for optimal power and rate allocation.
In addition, the secrecy capacity for block fading channels is studied in\,\cite{gopala2008secrecy}.

\nomenclature{SOP}{secrecy outage probability}

Various types of systems that exploit multiple antennas leverage the available spatial dimension to enhance the secrecy capabilities.
In\,\cite{li2007secret}, the achievable secrecy rate for MIMO system is studied and the analytic solution to the optimal input structure is derived for a degraded system, i.e., the multiple-input-single-output (MISO) system.
In\,\cite{shafiee2007achievable}, the achievable secrecy rate is considered for MISO system with Gaussian channel inputs and it shows that the optimal transmission strategy is beamforming.
The work is extended to the MISO system with multiple Eves, i.e., multi-input, multi-output, multi-eavesdropper (MISOME) channel and beamforming is proved to the capacity-achieving solution\,\cite{khisti2010secure}.
Furthermore, the work is extended to other multiple-antenna systems\,\cite{parada2005secrecy,khisti2010secure2,oggier2011secrecy, mukherjee2011robust}, where single-input-multiple-output (SIMO), MIMOME and multi-user MIMO are investigated, respectively.

\nomenclature{MISO}{multiple-input-single-output}
\nomenclature{MISOME}{multi-input, multi-output, multi-eavesdropper}
\nomenclature{SIMO}{single-input-multiple-output}

The study of the wiretap channel goes further in a wider range as the wireless communications techniques evolve.
After some pioneering work of the relay channel model\,\cite{van1968transmission,cover1979capacity}, cooperative systems regained much attention as distributed antenna array in the wireless network\,\cite{wang2010cooperative}.
An explicit inner bound of the capacity region for the confidential message transmission is derived in\,\cite{oohama2001coding}.
In\,\cite{he2010cooperation},  an achievable secrecy rate is studied for the general untrusted relay channel under two different scenarios.
In\,\cite{chen2015physical}, the secrecy performance of the full-duplex relay network is investigated.
Further on, the distributed network secrecy is conceived in the multilevel network that contains scattered sensors and monitors in hierarchical architecture\,\cite{lee2013distributed}; and in\,\cite{win2014cognitive} a framework with interference engineering strategies is designed and analyzed for the cognitive networks with secrecy.

As a promising technique for 5G, massive MIMO is already being considered for security.
As the number of antenna elements increases towards infinity, the spectrum and power efficiency grows rapidly\,\cite{ngo2013energy} and it produces very sharp beams and low sidelobes\,\cite{alrabadi2013beamforming}.
The advantages provided by the massive MIMO could potentially benefit the security performance.
In\,\cite{dean2013physical}, a low-complexity physical-layer cryptography based on the massive MIMO channel is developed, where  digital signatures based on location or time is created.
In\,\cite{chen2014secrecy}, explicit expressions of secrecy outage capacities for the massive MIMO channel are derived for different relay strategies.

Based on the aforementioned results in different systems with the wiretap channel model, there are two ways to increase the secrecy capacity, i.e., either by improving Bob's channel or deteriorating Eve's channel.
Motivated by the results from the information-theoretic security, signal processing techniques are developed to enlarge the difference between Bob's and Eve's channels\,\cite{hong2013enhancing}.
Previously, the literature review does not differentiate the signal processing and information-theoretic treatments\,\cite{mukherjee2010principles}.
In the following, it will be focused on the techniques from the signal processing perspective, which helps construct effective wiretap channels that allow the adoption of high-rate wiretap codes\,\cite{hong2013enhancing}.

Secrecy beamforming and precoding schemes are exploited to enhance the Bob's channel, e.g.,\,\cite{khisti2010secure,khisti2010secure2,li2011optimal,li2011multicast}.
Generally speaking, beamforming refers to the transmissions where only one data stream is sent via multiple antennas, while precoding generally means the simultaneous transmission of multiple data streams via multiple antennas. 
The key idea of both schemes is to transmit signal at directions in the spatial dimension that generates the best quality of reception for Bob while reduce the quality of reception for Eve if possible.
This thesis focuses on the beamforming techniques for the single data stream transmission as a starting point, which is also reasonable in certain scenarios, e.g., securing a transmission from an access point (AP) with an antenna array to a user with a single antenna.
In\,\cite{shafiee2007achievable}, it has been shown that beamforming is the optimum transmission strategy.
With beamforming, it often yields simpler designs.
For example, the MISO channel can be simplified into a single-input-single-output (SISO) channel.
It can also be extended into multicast scenario\,\cite{li2011multicast}.


\nomenclature{SISO}{single-input-single-output}

Artificial noise (AN) or jamming can be used on top of beamforming to further deteriorate Eve's channel\,\cite{negi2005secret,goel2008guaranteeing,mukherjee2009utility,zhou2010secure}.
It is especially useful when Eve's CSI is not known or partial known, in which case it is difficult to exploit beamforming to suppress Eve's signal quality.
The key idea is to superimpose AN to the information-bearing signal to increase the inference at Eve while Bob's reception is not or little affected, because the AN is added in the null space of Bob's channel\,\cite{goel2008guaranteeing}.
The concepts of beamforming and AN are also carried over to relay systems\,\cite{zhang2010collaborative,huang2011cooperative,jeong2012joint}.

\nomenclature{AN}{artificial noise}

So far, the reviewed work mostly refers to the information-theoretic security based on Wyner's work with a focus on the fading channel and multiple-antenna systems. 
There is, however, another branch of information-theoretic security based on the secret key that is extracted from the physical channels and is defined as the key-based security.
The key-based security is different from the encryption techniques in the way that the key is generated and shared by Alice and Bob in the physical layer and its security performance can be measured precisely by the secrecy capacity\,\cite{bloch2011physical}.
For convenience, the work based on Wyner's wiretap channel model is often referred to as the `keyless' security.

The key-based security originated in Maurer's work in 1993\,\cite{maurer1993secret} which guarantees secrecy even when Eve observes a better channel than Bob.
The key lies in the joint development of a secret key by Alice and Bob over public channel. 
The advantage of the key-based security over Wyner's wiretap model is that there is no restrictions on Eve's channel and it is simpler to design.
However, the key generation is often limited by the physical channel between Alice and Bob. 
The discussion of the key-based security is beyond the scope of this thesis and more details are referred to in\,\cite{mukherjee2010principles,rawat2013security,wang2015survey,zeng2015physical,zhangkey} and the references therein.

In addition to the information-theoretic security, there are other techniques employed in the physical layer for the purpose of achieving security for the systems with randomly located Eves.
The key idea is to exploit the spatial domain to isolate a physical region via directional antenna or smart antenna, i.e., beamforming, to limit Eve's access to the transmission between Alice and Bob.
In this chapter, these work is uniformly referred to as `physical space security' that is first mentioned in\,\cite{4595864}.

To help better understand the relationships between the aforementioned concepts and also give a high-level overview for the rest of this section, a structured diagram is presented in Fig.\,\ref{fig:chp2_structure}.
In Section\,\ref{chp1:bkgnd}, wireless security is discussed from two aspects, the encryption techniques in the higher layer and the emerging area of the information-theoretic security in the physical layer, the latter of which is the starting point of this thesis.
In this subsection, the overview of the information-theoretic security is presented with the focus on the keyless security based on Wyner's wiretap channel model, for which the information-theoretic and the signal processing aspects are both discussed.
In Section\,\ref{chp2:PhySec:mnyo} and\,\ref{chp2:PhySec:bmdl}, the physical region related work is discussed from two different aspects, one based on the information-theoretic parameters and the other based on the conventional performance metrics.
The work in this thesis is based on the physical region related work and provides the information-theoretic analysis for the created physical region, especially from the array configuration perspective, which is missing in the existing work.

\begin{figure}[t]
\centering
\includegraphics[scale=1]{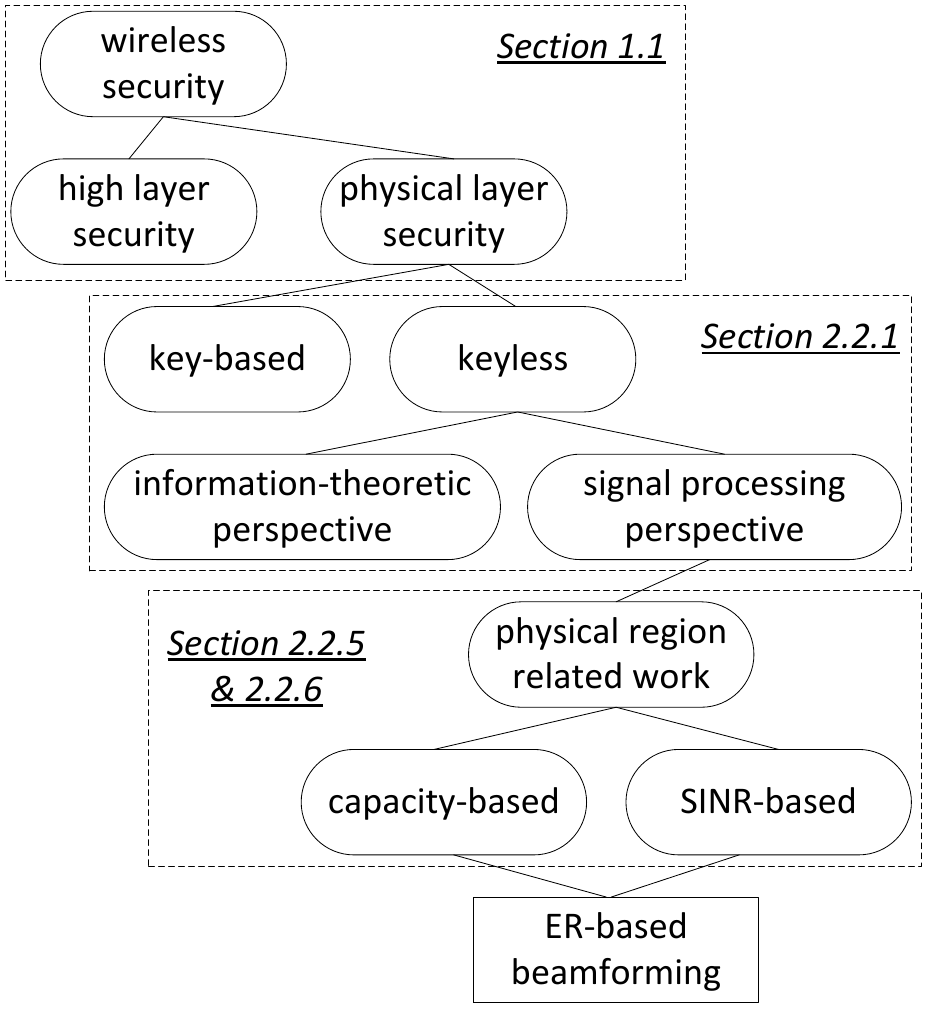}
\caption{Diagram of the reviewed work}
\label{fig:chp2_structure}
\end{figure}

\subsection{Shannon's Perfect Secrecy}
\label{chp2:PhySec:wopvm}

The fundamental principle of secure transmission was formalized by Shannon in\,\cite{shannon1949communication}.
That is, the intended receivers should recover the transmitted message without errors, while other users should acquire no information.
Fig.\,\ref{fig:chp2_shannonperfectsecrecy} illustrates Shannon's system for secrecy.
The figures and notations in this section and the next section are referenced from\,\cite{bloch2011physical}.

\begin{figure}[t]
\centering
\includegraphics[scale=1]{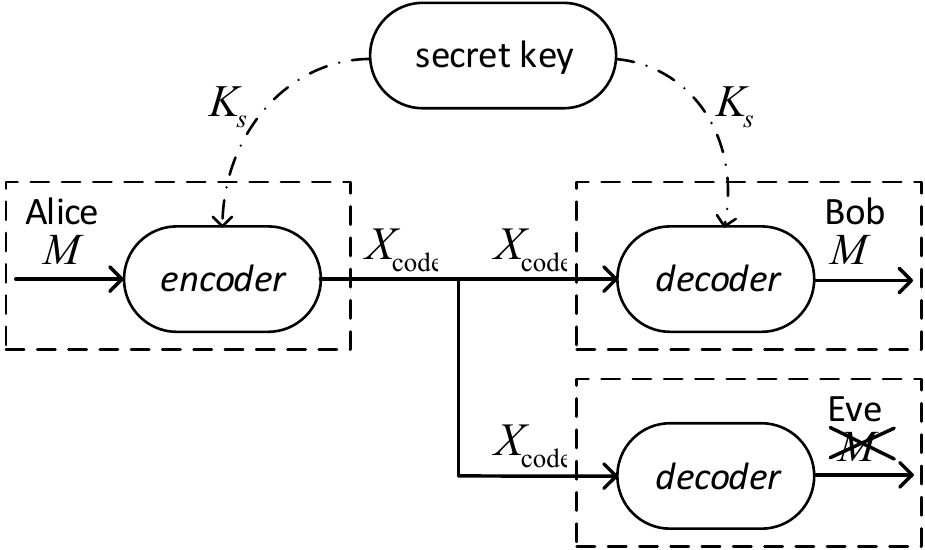}
\caption{System model for Shannon's perfect secrecy}
\label{fig:chp2_shannonperfectsecrecy}
\end{figure}

There are three parties in the system, i.e., Alice, Bob and Eve.
Alice attempts to transmit a message, denoted by $M$, to Bob in the presence of Eve.
In this system, both Bob's and Eve's channels are error-free and there is no restriction on Eve's computational power, which corresponds to the worst-case scenario for the encryption methods.

%
\nomenclature{$_i$,$_j$,$_n$,$_l$}{general index}
\nomenclature{$m,n$}{general integer}
\nomenclature{$c_0$, $c_1$}{deterministic constant}

\nomenclature{$M$}{message}

In order to achieve secrecy, Bob must gain some sort of advantage over Eve.
In this case, Alice encodes the message $M$ with a secret key $K_s$ into a codeword, denoted by $X_{code}$.
The key $K_s$ is shared by Alice and Bob, but is not known by Eve.
The encoder could be some complex computing functions or simply a XOR operator, i.e, $X_{code}=M\oplus K_s$, which is known as the one-time pad\,\cite{vernam1919secret}.
With $K_s$, Bob can recover $M$ without any error, while Eve cannot get any useful information other than some random guess.

\nomenclature{$K_s$}{secret key}
\nomenclature{$X_{code}$}{codeword}

From the information-theoretic perspective, the message $M$ and the codeword $X_{code}$ are random variables.
The entropy of a random variable indicates the amount of information that this variable has or the uncertainty level of this variable\,\cite{cover2012elements}.
The secrecy is measured by the conditional entropy of $M$ given $X_{code}$, which is also known as Eve's equivocation. 
Denoted by $\mathbb{H}(M|X_{code})$, it measures the uncertainty of $M$ at Eve based upon the correct reception of $X_{code}$.
Perfect secrecy can be achieved if Eve's equivocation equals to the a-priori uncertainty of $M$, i.e., $\mathbb{H}(M|X_{code})=\mathbb{H}(M)$.
In other words, $X_{code}$ and $M$ are statistically independent.
Since there is no correlation between $X_{code}$ and $M$, Eve cannot acquire any information about $M$ from $X_{code}$.

\nomenclature{$\mathbb{H}(M)$}{entropy of the random variable $M$}
\nomenclature{$\mathbb{H}(M|X_{code})$}{conditional entropy of $M$ given $X_{code}$}

To achieve the aforementioned condition for perfect secrecy, it is shown that the uncertainty of $K_s$ must be at least the same as $M$, i.e., $\mathbb{H}(K_s)\geq \mathbb{H}(M)$, which means the random secret key must have at least the same length as the message\,\cite{hellman1977extension}.
However, this raises questions in key distribution and management.

\subsection{Secrecy Capacity}
\label{chp2:PhySec:vks}

While Shannon's system relies on the secret key to create the advantage for Bob over Eve, Wyner's work in\,\cite{wyner1975wire} leverages the imperfections of the channel instead of using the secret key, provided that Bob's channel is better than Eve's.
Wyner's channel model is shown in Fig.\,\ref{fig:chp2_wiretap}.

\begin{figure}
\centering
\includegraphics[scale=1]{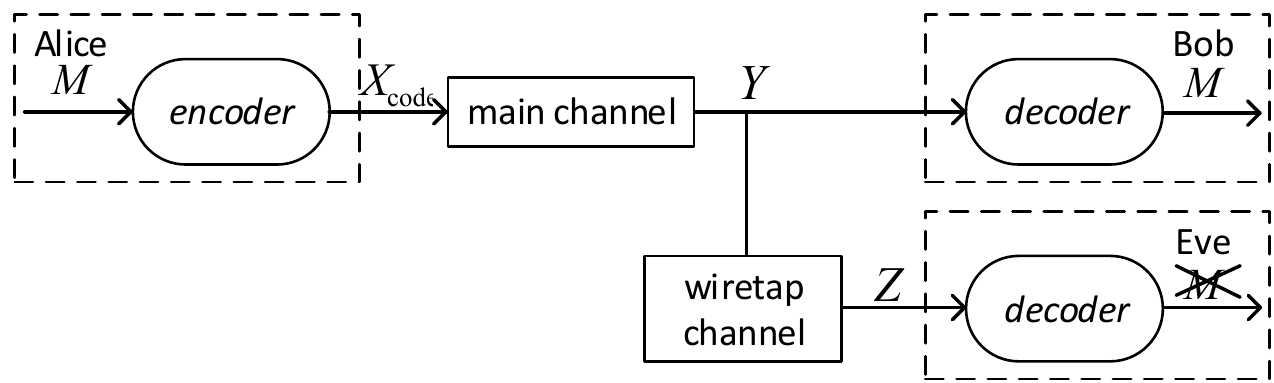}
\caption{System model for the wiretap channel model}
\label{fig:chp2_wiretap}
\end{figure}

The encoder generates codeword $X_{code}$ with block length $n$, which is the input of the main channel between Alice and Bob.
The output of the main channel, denoted by $Y$, is the input of the decoder at Bob and at the same time is the input of the wiretap channel.
The output of the wiretap channel, denoted by $Z$, is the observation of $X_{code}$ at Eve.
Both the main channel and the wiretap channel are noisy channels.
Eve's channel is a probabilistically degraded version of Bob's channel.

\nomenclature{$\mathbb{H}(M)$}{entropy of the random variable $M$}
\nomenclature{$\mathbb{H}(M|X_{code})$}{conditional entropy of $M$ given $X_{code}$}

Instead of achieving $\mathbb{H}(M|Z)=\mathbb{H}(M)$, Wyner relaxed this secrecy condition into that the equivocation rate $\frac{1}{n}\mathbb{H}(M|Z)$ is arbitrarily close to the entropy rate $\frac{1}{n}\mathbb{H}(M)$ for sufficiently large $n$, i.e., 
\begin{align}\label{eq:chp2_wyner}
	\frac{1}{n}\mathbb{I}(M;Z)=\frac{1}{n}\mathbb{H}(M)-\frac{1}{n}\mathbb{H}(M|Z)\leq \varepsilon,
\end{align}
where $\varepsilon$ is an arbitrary small value.
With this relaxed constraint, Wyner proved the existence of such codes that asymptotically guarantee the secrecy against Eve and at the same time a positive transmission rate for Bob's reliable transmission.
The maximum achievable transmission rate under these premises is the secrecy capacity.
It is worth noticing that the maximum achievable transmission rate of the main channel is regardless of the secrecy constraint.
The wiretap channel induces maximum equivocation at Eve.


Wyner's wiretap channel model was later generalized for the broadcast channel with two receivers\,\cite{csiszar1978broadcast} and the Gaussian channel\,\cite{leung1978gaussian} which lies the foundation for many wireless channels.
In\,\cite{csiszar1978broadcast}, a single-input two-output channel with private messages is considered. 
The goal is to design a encoder that a common message can be decoded by Bob and Eve while the private message is only decoded by Bob.
There exists a rate triple, \{private message rate, equivocation rate at Eve, common message rate\} for secrecy if the the private message rate is equal to the equivocation rate.
For the special case when there is common message transmitted, the secrecy capacity, denoted by $C_s$, can be defined by the maximum achievable private message rate.
Further, it can be expressed by
\begin{align}
	C_s= \max_{V\to X_{code}\to Y\,Z} I(V;Y)-I(V;Z),
\end{align}
where $V$ is an auxiliary input variable and $V\to X_{code}\to Y\,Z$ denotes the Markov relationship. 
For the degraded Gaussian wiretap channel, let the channel capacity of Bob and Eve be denoted by $C_B$ and $C_E$, respectively.
The secrecy capacity $C_s$ can be expressed by
\begin{align}
	C_s = (C_B -C_E)^+,
\end{align}
where $(x)^+$ takes the larger value between $x$ and $0$.
When $C_B>C_E$, there is a positive secrecy capacity.
When $C_B\leq C_E$, the secrecy capacity is zero.
For complex Gaussian channel via which complex-valued signals are transmitted, the real and imaginary parts of the additive noise are jointly Gaussian random variables.
The channel capacity $C$ can be calculated by\,\cite{tse2005fundamentals}
\begin{align}\label{eq:chp2_capacity}
	C=\log(1+\gamma),
\end{align}
where $\gamma$ is the signal-to-noise ratio (SNR). 
Notice that the channel capacity is only achieved when the channel input is a Gaussian random variable with zero mean.
The discussion of secrecy capacity derivation is out of the scope of this thesis. 
More details are available in\,\cite{bloch2011physical}.

\nomenclature{SNR}{signal-to-noise ratio}
\nomenclature{$C_s$}{secrecy capacity}
\nomenclature{$C$}{channel capacity}
\nomenclature{$C_B$}{channel capacity of Bob}
\nomenclature{$C_E$}{channel capacity of Eve}
\nomenclature{$(x)^+$}{the larger value between $x$ and $0$}
\nomenclature{$\gamma$}{SNR}

\subsection{Secrecy Outage Probability}
\label{chp2:PhySec:vsp}

For a non-fading channel, the secrecy capacity solely relies on the Bob's and Eve's received SNR, whereas in fading channels, it also depends on the random channel coefficient that is subject to a certain distribution, e.g., a Rayleigh distribution.
Therefore, the secrecy capacity becomes a random variable that is subject to certain fading distribution.
For a quasi-static fading channels, the channel capacity can be calculated using (\ref{eq:chp2_capacity}) with random $\gamma$ that is subject to certain fading distribution\,\cite{tse2005fundamentals}.

Analogy to the conventional outage metric, the outage formulation can be applied to the random secrecy capacity in fading channels.
In\,\cite{barros2006secrecy}, the SOP is defined by the probability that the instantaneous secrecy capacity is less than certain target secrecy rate $R_s>0$.
Denoted by $p_{out}(R_s)$, the SOP can be calculated by
\begin{align}\label{eq:chp2_SOP}
	p_{out}(R_s)=\text{Prob}\{C_s<R_s\}.
\end{align}

\nomenclature{$R_s$}{target secrecy rate}
\nomenclature{$p_{out}$}{SOP}

$p_{out}(R_s)$ indicates the percentage of fading realizations where the wiretap channel model can sustain target secrecy rate $R_s$.
Moreover, it is a useful performance metric when Eve's CSI is not known to Alice.
Notice that $R_s$ is an arbitrarily chosen value for certain system.
The meaning of $R_s$ is that Alice assumes that Eve's channel capacity is $C_E'=C_B-R_s$.
If the actual channel capacity of Eve is less than Alice's assumed capacity, i.e., $C_E<C_E'$, then it can be derived that $C_s>R_s$. In this case, the wiretap codes with transmission rate higher than $C_E'$ can guarantee the perfect secrecy.
Otherwise, if Eve's channel is better than Alice's assumption, i.e., $C_E>C_E'$, then $C_s<R_s$.
In this case, the wiretap codes with transmission rate higher than $C_E'$ is at the risk of leaking information to Eve and the information-theoretic security is compromised. 

\nomenclature{$C_E'$}{the assumed Eve's channel capacity by Alice}

The SOP is particular useful when Eve's instantaneous CSI is not known by Alice, which is usually the case because Eves could be passive and do not easily give away their CSI.
On the other hand, Bob's CSI can be assumed to be available by Alice.
In this case, the SOP can be calculated if the distribution of Eve's fading channel is known, which can be used as performance measure for secure communications.

The work in\,\cite{zhou2011rethinking} puts forth an alternative formulation other than (\ref{eq:chp2_SOP}), which distinguishes the difference between insecure transmission and unreliable transmission.  
For example, when $C_B<R_s$ (which implies that $C_s<R_s$), Alice knows that Bob's channel cannot support the secrecy rate, then suspends the transmission, which is not a failure in achieving perfect secrecy.

To explicitly measure the probability that a transmission fails to achieve perfect secrecy, two rates are employed, i.e., the rate of the transmitted codewords $R_B$ and the rate of the confidential information $R_s$.
The rate difference $R_B-R_s$ is the cost for secure communication against passive Eves.
When $C_B>R_B$, Bob can successfully recover the transmitted message.
However, if $C_E>R_B-R_s$, the secure transmission fails.
Thus, the SOP can be defined as the conditional probability\,\cite{zhou2011rethinking},
\begin{align}\label{eq:chp2_sop1}
	p_{so}=\text{Prob}\{C_E>R_B-R_s|\text{message transmission}\}.
\end{align}
The condition of message transmission is designable according to different targets.
It can be set to $C_B>R_s$ (e.g., in\,\cite{yan2014secrecy}) or $C_B>R_B$ to guarantee Bob's correct reception, or even to maximize the throughput of the secure transmission\,\cite{zhou2011rethinking}.
With Bob's instantaneous CSI, Alice will transmit to Bob when the message transmission is guaranteed; otherwise, Alice will stop the transmission.

\nomenclature{$R_B$}{the rate of the transmitted codewords}
\nomenclature{$p_{so}$}{alternative formulation of SOP}

When $R_B$ and $R_s$ are fixed system parameters, $p_{so}$ is independent of the condition of message transmission\,\cite{zhou2011rethinking}.
Thus, $p_{so}$ reduces to 
\begin{align}\label{eq:chp2_sop2}
	p_{so}=\text{Prob}\{C_E>R_B-R_s\}.
\end{align}
In this case, $C_B$ and $C_E$ can be separately studied according to the fading distribution.

\subsection{Physical Region in Information-Theoretic Security}
\label{chp2:PhySec:mnyo}

As mentioned in Section\,\ref{chp1:motive}, the location of a user plays a key role in the user's channel capacity, thus directly affects the difference between Bob's and Eve's channel capacities.
Therefore, the large-scale path loss, which mainly relies on the users' locations, should be incorporated in practical scenarios, no matter it is Gaussian channel or fading channel, or it is a single-antenna system or multiple-antenna system.
This section reviews the work that considers users' locations, which is normally studied in relation to some sort of physical regions.
An example of the physical region is illustrated in Fig.\,\ref{fig:chp2_demon} where the AP that is equipped with an antenna array performs beamforming to secure a region surrounding Bob and limits Eve's access to the legitimate transmission.
The common property shared by these papers is that their physical regions are based on the information-theoretic security parameters.
In comparison to the physical space security that will be introduced in the next section, it is called capacity-based approach for convenience.

\begin{figure}
\centering
\includegraphics[scale=1]{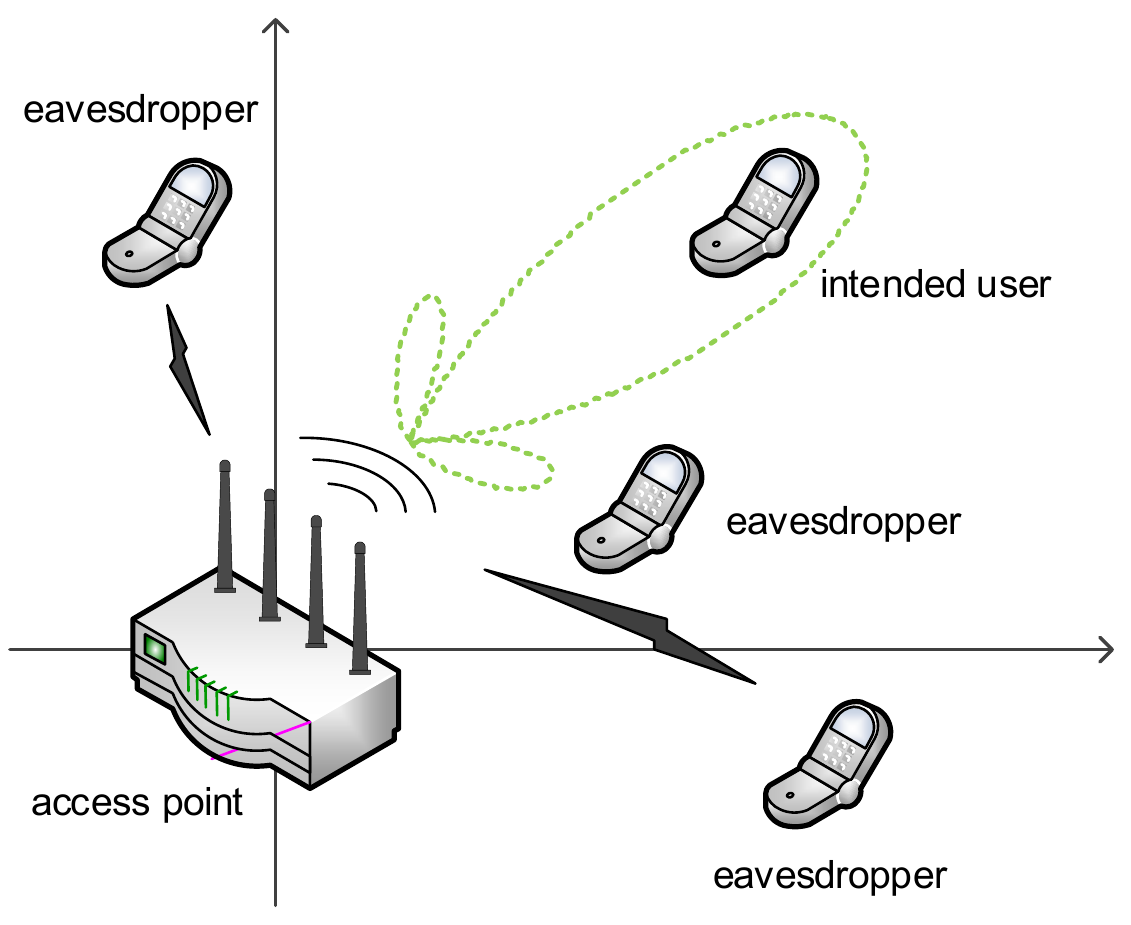}
\caption{An example of secure physical region in a Wi-Fi network with one intended user and multiple eavesdroppers.}
\label{fig:chp2_demon}
\end{figure}

There are versatile approaches from the location or physical region perspective.
This section provides a high-level overview in terms of system model, performance metric and signal processing technique. 
Most of the related work\,\cite{vilela2011wireless,li2012secure,li2013security,li2014secure,wang2015jamming} exploits the advantage of beamforming and AN/jamming, while other work\,\cite{zheng2014transmission,yan2014secrecy,yan2014line} purely investigates  the security performance of beamforming techniques.
Not surprisingly, there is also work that explores the possibility of achieving security in relay systems\,\cite{marina2010characterization,sarma2013joint}.

A physical region is usually defined for the reason that Eve's CSI or location is random and unknown to Alice.
Different kinds of regions are defined for different beamforming and AN techniques, provided with certain level of Bob's CSI or location.
The most common definition is based on the SOP. 
In\,\cite{li2013security,wang2015jamming,marina2010characterization,sarma2013joint}, the system performance metric is some sort of insecure region where the secrecy goal is compromised.
For example, the compromised secrecy region (CSR) in\,\cite{li2013security} is the region where the SOP is above a certain threshold.
In\,\cite{wang2015jamming}, the secrecy outage region (SOR) is used to define the region where Eve causes the secrecy capacity below a target rate, i.e., secrecy outage happens.
The special case for such a definition is when the target secrecy rate is set to zero, which is used to defined the vulnerability region (VR) in relay systems\,\cite{marina2010characterization, sarma2013joint}.

When fading is not considered, e.g., spatial diversity and time-diversity are used to counteract channel fading, the secrecy outage is caused solely by Eve's random location.
Notice in such case, the meaning of SOP is extended from the small-scale fading to the large-scale path loss, i.e., Eve's random location. 
Such an example is the insecure region in\,\cite{li2014secure}.

Opposite to the insecure region, the secure region is also used.
In\,\cite{li2012secure},  the outage secrecy region (OSR) is defined by the region where Eve causes the SOP that is below a threshold.
A similar definition is the secure region in\,\cite{zheng2014transmission} where for any Bob within the SOP is below an arbitrarily small value.
In addition, the jamming coverage is defined by the region where the SOP is reduced by the utilization of jamming in a quasi-static fading channel\,\cite{vilela2011wireless}.

\nomenclature{CSR}{compromised secrecy region}
\nomenclature{SOR}{secrecy outage region}
\nomenclature{VR}{vulnerability region}
\nomenclature{OSR}{outage secrecy region}

The goal of these papers is to either minimize the insecure region or maximize the secure region or jamming coverage.
In\,\cite{vilela2011wireless}, a legitimate transmission from Alice to Bob is aided by friendly jamming.
The secrecy performance of various jamming strategies given different levels of CSI is evaluated and the optimal jamming configuration is studied regardless of Eve's location.
In particular, it is shown that a single jammer is not sufficient to maximize the jamming coverage and efficiency simultaneously.
As an extension for\,\cite{vilela2011wireless}, the cooperative jamming system is developed to optimize the location and the power allocation for the jammer in\,\cite{li2013security}.

Aided with multiple antennas, the authors in\,\cite{li2012secure} develop a novel technique to generate AN at Bob, when Bob has stronger ability than Alice, e.g., more antennas.
This method is robust in the sense that no feedback of Bob's CSI is needed by Alice and there is no restriction of Eve's number of antennas.
However, it is shown that the area in the vicinity of Bob is well protected whereas the area surrounding Alice is still vulnerable.
\,\cite{li2014secure} extended the work in\,\cite{li2012secure} by generating AN from both Alice and Bob to impair Eve's channel with an optimum power allocation strategy to minimize the size of the insecure region.
In\,\cite{wang2015jamming}, different jamming strategies as well as the optimal power allocation between the information-bearing signal and the AN are investigated in a massive MIMO system with and without the information of possible locations of Eves.

Besides AN/jamming, multiple antennas are used for beamforming in\,\cite{zheng2014transmission}.
Two scenarios, i.e., non-colluding and colluding Eves, are investigated with the integral expression of SOP and the closed-form upper bound.
Based on these, the secure region is derived to guide Bob's location in presence of randomly located Eves.
In addition, the parameters that impact the secure region is analyzed.
As for the distributed antenna array, a simple cooperative system with a single relay is proposed in\,\cite{marina2010characterization}, where the VR is studied for different synchronization and interference models.
\,\cite{sarma2013joint} continues the work in a multi-hop relay system.
However, both papers use the Gaussian channel as a starting point.

There is other work that is closely related to, but not directly based on the physical region\,\cite{yan2014secrecy,yan2014line}.
Although there is no concise geometric model, the performance of these methods is evaluated based on geometric locations.
In\,\cite{yan2014secrecy}, a scenario where Eves' locations follow the PPP distribution is considered.
Alice is aware of Bob's location and only the distribution of Eves' locations, but not Eves' CSI.
The closed-form expression of SOP is derived for Rician fading channel where a line-of-sight (LOS) component exists.
It is shown that beamforming towards Bob's location is the optimal strategy that minimizes the SOP.
In\,\cite{yan2014line}, a threat model that describes possible locations for Eve, e.g., an annulus threat model with a uniform distribution of Eve, is used to quantify the SOP for beamforming towards known Bob's location with multiple antennas.
However, the work is limited in free-space scenario.

\nomenclature{LOS}{line-of-sight}

In most reviewed work, there is no closed-form formulation for these physical regions, and only numerical approximations or results are used.
Except that in\,\cite{yan2014line,zheng2014transmission,wang2015jamming}, the SOR is analytically derived and a new outage probability is defined based on the SOR.
In\,\cite{yan2014line}, the analytic expressions are given for free-space scenario without considering fading channel.
The Rayleigh fading that generates simple expressions is considered in\,\cite{zheng2014transmission}.
However, it is not very practical to obtain Bob's location or CSI without the LOS component.
In\,\cite{wang2015jamming}, the Rician fading channel is used, but the fading effect is completely averaged out for very large number of antennas in massive MIMO system and is treated as constant.

While a single Eve is at present in the network in most reviewed work, multiple Eves are considered in\,\cite{zheng2014transmission,yan2014secrecy,wang2015jamming}.
In particular, the PPP is exploited to study the distribution of unknown Eves' locations in\,\cite{zheng2014transmission,yan2014secrecy}.
It is worth noticing that almost all the reviewed work does not take the antenna array's configuration into consideration to optimize the physical region.
The only work that considers some aspect of the array configuration does not explicitly have analytic expressions for the array configuration\,\cite{yan2014secrecy}.

\subsection{Physical Space Security}
\label{chp2:PhySec:bmdl}

While the reviewed work in Section\,\ref{chp2:PhySec:mnyo} is based on the information-theoretic parameters, there is another branch of work from the signal processing perspective that are based on the traditional performance metrics, e.g., the bit error rate (BER) or signal-to-interference-plus-noise ratio (SINR).
In comparison to the capacity-based approaches, the work that are reviewed in this section is also referred to as the SINR-based approach for convenience.

\nomenclature{BER}{bit error rate}
\nomenclature{SINR}{signal-to-interference-plus-noise ratio}

The principle of the SINR-based approaches is to limit the knowledge of the existence of the message to Eve\cite{4595864,5357443}.  
To this end, various techniques are developed to confine the effective communications into certain physical region, e.g., by designing transmission schemes that restrict the BER or SINR at Eve below certain thresholds.
The SINR-based approach and the capacity-based approach share the common ground in the sense that the difference of Bob's and Eve's channel should be enlarged to improve the security performance of the system.

In fact, the boundary between the capacity-based approaches and the SINR-based approaches is not so strict.
From the theoretical perspective, the channel capacity is determined by the SINR for most channels.
For example, the SINR of Bob and Eve is used to define the VR where Eve's channel capacity is larger than Bob's channel capacity, i.e.,  zero secrecy capacity\,\cite{sarma2013joint,sarma2015optimal}.
As well pointed out in\,\cite{hong2013enhancing}, while the SINR-based approaches do not guarantee perfect secrecy in the information-theoretic sense, they achieve a practical notion of secrecy in the way that discriminates the performance among Bob and Eve, and are useful in some applications.
In addition, they can often simplify the system design\,\cite{mukherjee2010principles}.
For example, a SINR-based power allocation and scheduling technique provides a simple solution for a multi-hop wireless network, because finding the secrecy capacity for some complex systems is a hard problem\,\cite{sarma2015optimal}.

Despite the difference between the SINR-based and the capacity-based approaches in terms of the performance metric, there is another important difference, that is, the SINR-based approaches have a strong background from the smart antennas, i.e., beamforming and direction-of-arrival (DoA) estimation\,\cite{gross2005smart}.
The benefits that are brought by beamforming and DoA estimation, i.e., focusing or suppressing energy at certain directions and direction-finding, had been applied to improve security in the physical layer in the early 2000s\,\cite{sun2003improving}, when the information-theoretic security still waited for its reemergence.
In fact, one of the early attempts even employed directional antenna on both the transmitter and the receiver to reduce the signal coverage region\,\cite{1606699}.

\nomenclature{DoA}{direction-of-arrival}

A minor distinction of the SINR-based approaches from the capacity-based approaches is that most work takes application for the wireless local area network (WLAN), such as 802.11.
Thus, the AP acts as Alice and the downlink transmission from the AP to Bob is to be protected in the presence of Eves.
Nevertheless, the developed techniques are also applicable to other wireless networks.

\nomenclature{AP}{access point}
\nomenclature{WLAN}{wireless local area network}

Intuitively, the ability of beamforming (or directional antennas) can be constructively exploited to enhance the signal strength at Bob's direction, while suppressing the signal strength at other directions, especially at Eve's direction if Eve's location is known to the AP.
Therefore, the physical region can be created by the AP(s) that is(are) equipped with directional antenna or antenna array, and Eve's access to the signal is limited if Eve is not inside such a  physical region.
In the absence of Eve's location or CSI, the created physical region should be minimized so that the possibility of Eve being within this region is minimized and the system security level is enhanced.

A common set-up is to use multiple APs, each of which is equipped with multiple antennas, to jointly created a small region\,\cite{1400008,4595864,5357443,sheth2009geo,sattari2009secure,6618765}.
While one AP can only limit the physical region to a certain extent, multiple APs can further reduce this region by creating a smaller joint region.
The idea is conceived in\,\cite{1400008}.
To achieve this goal, a single packet is divided into fragments, each of which is separately transmitted by one AP in the network in a time-division manner.
To form the region, the AP needs to adjust its transmit power according to the user's location.
Only the users in the joint region can access the whole packet.

Although the multiple-AP technique brings some challenges to the practical design, such as synchronization of multiple APs and other protocol modifications\,\cite{1400008}, this idea is further developed by the authors in\,\cite{4595864,5357443}.
To achieve a higher level of security, secret sharing is used\,\cite{shamir1979share}. 
All fragments of the packet are encrypted in a way that the whole packet can be decrypted only if all fragments are correctly received.
In the same work, the term `physical space security' is coined.
In\,\cite{4595864,5357443}, the authors for the first time defined the ER as performance metric, which refers to the area within which Eve(s) can access and decode the signals being transmitted.
Note that the ER here is not defined based on the information-theoretic parameters. 
Without Eve's location or CSI, the ER is to be minimized in order to improve the system security level.

In the multiple-AP system, each AP can be assigned to different tasks, i.e., beamforming or jamming, depending on the transmission strategy.
Besides the secret sharing strategy, two other strategies are proposed to reduce the ER\,\cite{4595864,5357443}, which uses jamming signal or signals from multiple sources to cause more interference to reduce Eve's quality of reception.
By controlling the direction of a jamming signal or multiple-source signals, Bob's reception is not affected.
A similar idea of jamming can be found in\,\cite{kim2012carving}, where jammers use an omni-directional antenna to forge a walled wireless coverage, which is a secure Wi-Fi zone.
Through adjusting locations and the transmit power of jammers,
the forged secure zone matches well with the prediction model against the leakage to other zones.

The idea behind the multiple-AP systems is to confine the signal transmission in a controlled region.
The motivation is that the WLAN usually operates inside a physical perimeter, e.g, an office floor, and the security threat can be reduced by imposing physical boundaries to the boundless radio transmission through manipulation of the properties of signal propagation.
Such idea is emphasized in\,\cite{sheth2009geo,sattari2009secure,tiwari2008wireless}.
In\,\cite{sheth2009geo}, multiple APs jointly perform beamforming with the transmit power control to isolate a physical region.
The approach is similar to the secret-sharing strategy used in\,\cite{4595864,5357443}, but with an improvement on the joint optimization of beam patterns for all APs.
The experiment results in several indoor scenarios show that 
different shapes and sizes from 5 feet $\times$ 5 feet to 25 feet $\times$ 20 feet can be isolated by three such APs.

The work in\,\cite{sattari2009secure,tiwari2008wireless} takes different routes to achieve the confinement of the radio propagation.
In\,\cite{sattari2009secure}, all the users inside a certain perimeter communicate with the base station through some intermediate nodes.
Multiple nodes are deployed alongside the physical perimeter to detect the users within and manage the access control to the base station.
Once the users are detected and recognized as a legitimate user, they are granted access to the base station.
A similar idea is found in\,\cite{tiwari2008wireless} where the Radio Frequency Sentry Devices (RFSD) are deployed on the perimeter of a confined region.
The RFSD performs a `cloaking' function, which consists of two stages, i.e., detection of the signal transmission from the users inside the perimeter via DoA estimation and transmission of an altered signal with approximately the same transmit power.
Eves outside the perimeter receive an superposition of the original signal and the altered signal from the RFSD, thus cannot decode the message correctly.
Both methods in\,\cite{sattari2009secure,tiwari2008wireless} serve the purpose of confining a local transmission inside a predefined region.

\nomenclature{RFSD}{radio frequency sentry device}

So far, multiple-AP systems are mainly used to create the  physical region.
In the following, the work that focuses on improving the performance of a single antenna array with more advanced techniques is presented.
In \cite{anand2012strobe} the authors proposed a cross-layer design called the `simultaneous transmission with orthogonally blinded  eavesdroppers' (STROBE) to reduce Eves' signal quality.
The multiple antennas, such as in 802.11n and 802.11ac standards, are designed to simultaneously transmit multiple data streams using zero-forcing beamforming.
STROBE exploits the capability of this multiple-antenna technique to insert orthogonal interference that are transmitted simultaneously with the intended data stream, so that potential Eves cannot decode correctly while Bob is remain unaffected by the interference. 
Multipath creates advantages for Bob in the STROBE system and the indoor experimental results show that a difference of 15\,dB of the SINR between Bob and Eves can be consistently served. 
The work in\,\cite{6502515} designs a type of smart antenna that has two synthesized radiation patterns that can alternatively transmit in a time-division manner.
The transmitted packet is divided into two parts, each of which is transmitted via one synthesized pattern.
Both patterns are slightly away from Bob's direction, but have an overlap at Bob's direction.
By fast switching between the two patterns, an artificial fading effect is created for users that are not within the overlapped region, thus reduces the signal quality of unintended users, while Bob is little affected.
The overlap region can then be minimized to enhance the security.

\nomenclature{STROBE}{simultaneous transmission with orthogonally blinded  eavesdroppers}

There are some versatile approaches that are combined with other techniques, e.g., joint design with encryption methods.  
In\,\cite{matoba2012novel}, the distributed nodes that are equipped a single antenna in the same network cooperatively create the ER in a similar way to\,\cite{kim2012carving}, except that instead of jamming, the neighbor nodes transmit side information which can be used to encrypt the transmission between Alice and Bob.
Decryption is only possible with both the encrypted message and the side information, i.e., the receiver needs to be in the overlapped region from Alice and the helping nodes.
The overlapped region where decryption can be done is called the ER, which is to be reduced by dynamic selection of the helping nodes according to their locations.
In\,\cite{6618765}, a hybrid cross-layer protocol that combines the network security protocol with the exploitation of the secret-sharing scheme is designed as an extension to\,\cite{4595864,5357443}.
The combination of the public key encryption and the ER reduction restricts the access to the legitimate transmission even when Eve is located inside the ER.

The SINR-based approaches reviewed in this section do not have closed-form expressions for the created physical regions, e.g., the ER.
Since they are not based on the information-theoretic parameters, the information-theoretic analysis is absent.
However, these arguments do not dismiss the usefulness of the SINR-based approaches.
On the contrary, many work resort to experimental results to prove the effectiveness of these approaches\,\cite{1606699,4595864,5357443,sheth2009geo,anand2012strobe,kim2012carving}.
After all, the beamforming and jamming technique that are exploited here are not essentially different from the capacity-based approaches.
Thus, the basic principle can be carried over to other applications.
It is worth noticing that while the linear array is used in most approaches, the circular array is chosen in\,\cite{1400008,sheth2009geo,6502515} where the fine-grained region is shaped.

\section{Antenna Array Fundamentals}
\label{chp2:antennas}

Antenna arrays are used in many areas, such as land-mobile, indoor-radio, and satellite-based systems\,\cite{godara1997applications1}, and are used for a wide range of purposes, e.g., achieving security as mentioned in Section\,\ref{chp2:PhySec}.
From the smart antennas perspective, beamforming is the signal processing algorithm that performs on the antenna array and makes the array `smart'\,\cite{gross2005smart}.
Besides beamforming, the other main function of the smart antennas is DoA estimation\,\cite{godara1997application2}.
This section introduces some fundamental concepts. 
More details are provided in\,\cite{godara1997applications1,gross2005smart,godara1997application2,adaptivearraysystems} and the references therein.

\subsection{Uniform Linear Array}
\label{chp2:antennas:ULA}

The antenna array consists of multiple antennas that are deployed in a certain geometry.
The array geometry refers to the positions of the antenna elements that make up the array.
The most common array geometry is the ULA where all elements are in a line with equal spacing.
The number of elements and the spacing of the ULA are denoted by $N$ and $\Delta d$, respectively.
An example of ULA is shown in Fig.\,\ref{fig:chp2_ULA}.
To study the behavior of the array, the antenna element is usually assumed as an omni-directional antenna with spherical radiation pattern.

\nomenclature{$N$}{(active) number of antenna elements in the array}
\nomenclature{$\Delta d$}{element spacing in the array}

\begin{figure}
\centering
\includegraphics[scale=1]{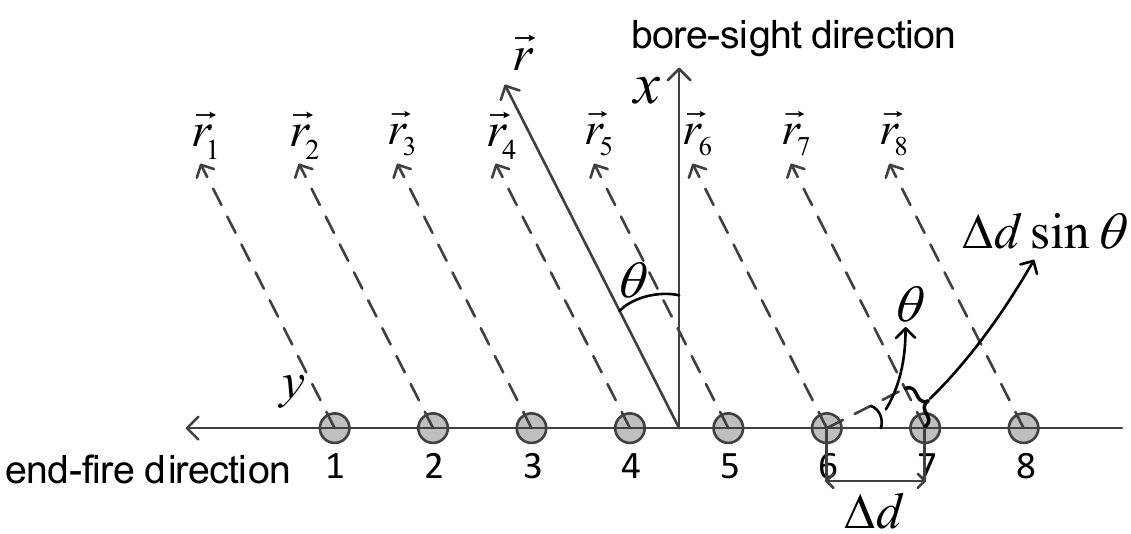}
\caption{Linear array of 8 elements}
\label{fig:chp2_ULA}
\end{figure}

As shown in Fig.\,\ref{fig:chp2_ULA}, the bore-sight direction of the ULA is the norm direction; the end-fire direction is parallel to the array.
The direction-of-emission (DoE), denoted by $\theta_{\text{doe}}$, is the angle at which the ULA concentrates energy, and is usually defined in relation to the bore-sight direction. 
For ease of mathematical derivation, the ULA shown in Fig.\,\ref{fig:chp2_ULA} is put along $y$-axis with its center at the origin point and the first element at the positive $y$-axis.
In this way, the bore-sight of the array is on the $x$-axis and the angles related to the array is the same angle in the polar coordinates.

\nomenclature{DoE}{direction-of-emission}
\nomenclature{$\theta_{\text{doe}}$}{DoE angle}

The signals transmitted from different antenna elements interfere with each other in space.
The overall signal at a certain point is the superposition of all signals with different amplitudes and phases.
The array factor, denoted by $G$, indicates the complex gain of the signal at a certain angle $\theta$.
As shown Fig.\,\ref{fig:chp2_ULA}, assume that the vector from the center of the array to the target user is $\overrightarrow{r}$ and the vector from the $i$-th element to the target user is $\overrightarrow{r}_i$, $i=1,...,N$.

\nomenclature{$G(\cdot)$}{array factor}
\nomenclature{$\theta$}{user's angle}
\nomenclature{$\overrightarrow{r}$}{vector from the center of the array to the target user}
\nomenclature{$\overrightarrow{r}_i$}{vector from the $i$-th element in the array to the target user}
\nomenclature{$|\cdot|$}{$L^2$-norm of a vector or magnitude of a scalar}

Strictly speaking, $\overrightarrow{r}$ and $\overrightarrow{r}_i$, $i=1,...,N$ should point at the same position.
When $|\overrightarrow{r}|\gg \Delta d$, i.e., the distance of the target user is far greater than the size of the array, the far-field condition is fulfilled.
In this case,  $|\overrightarrow{r}|\approx|\overrightarrow{r}_i|$, and $\overrightarrow{r}$ and $\overrightarrow{r}_i$ are assumed to be parallel.

To calculate $G$, first consider a simple case where the array does not focus energy at any particular angle and all elements transmit with the same amplitude and phase.
The signal from the right element always arrives sooner at the target user than the signal from the left element for $\theta\in[0,\frac{\pi}{2}]$, which leads to certain advance in phase.
The phase difference can be calculated by $2\pi\frac{\Delta d\sin\theta}{\lambda}$ for angle $\theta$.
Take the 1st element as the reference point for phase zero.
Then the relative phase shift between the $i$-th element and the 1st element, denoted by $\phi_i(\theta)$, is 
\begin{align}
	\phi_i(\theta)=k\Delta d(i-1)\sin\theta,
\end{align}
where $k=\frac{2\pi}{\lambda}$ is the wave number.
The array steering vector, denoted by $\mathbf{s}(\theta)$, is defined based on the relative phase shifts of the signals from all elements at angle $\theta$,
\begin{align}\label{eq:chp2_steeringvector}
	\mathbf{s}(\theta)=[e^{-j\phi_1(\theta)},...,e^{-j\phi_i(\theta)},...,e^{-j\phi_N(\theta)}]^T.
\end{align}
In this case, the superposition of all signals is then 
\begin{align}
	G(\theta)=\sum_{i=1}^N e^{-j\phi_i(\theta)}.
\end{align}

\nomenclature{$\lambda$}{wavelength}
\nomenclature{$\phi_i(\cdot)$}{relative phase shift on the $i$-th element}
\nomenclature{$k$}{wave number}
\nomenclature{$\mathbf{s}(\cdot)$}{array steering vector}
\nomenclature{$j$}{imaginary unit}

The beamforming weight vector, denoted by $\mathbf{w}$, is used to precode the transmitted signal.
The vector $\mathbf{w}$ is a complex vector, thus the signal from each element is weighted by a complex number.
To concentrate energy at $\theta_{\text{doe}}$, all signals should arrive at angle $\theta_{\text{doe}}$ at the same time, which requires phase alignment.
To correct for the different phase shift $\phi_i(\theta_{\text{doe}})$, $\mathbf{w}$ is set by 
\begin{align}\label{eq:ch2_BF_Weights}
	\mathbf{w}=\frac{\mathbf{s}(\theta_{\text{doe}})}{\sqrt{N}},
\end{align}
where $\mathbf{s}(\theta_{\text{doe}})$ is the array steering vector at $\theta_{\text{doe}}$ and $\sqrt{N}$ is the normalization factor that keeps unit transmit power.
In this case, $G$ can be calculated by
\begin{align}\label{eq:chp2_TXgain}
	G(\theta,\theta_{\text{doe}})=\mathbf{w}^H\mathbf{s}(\theta)=\frac{1}{\sqrt{N}} \sum_{i=1}^N e^{j[\phi_i(\theta_{\text{doe}})-\phi_i(\theta)]}.
\end{align}
Because $G$ is determined by two angles, i.e, $\theta$ and $\theta_{\text{doe}}$, the notation of $G(\theta,\theta_{\text{doe}})$ is used in this thesis.
Physically, it means the complex gain of the signal at angle $\theta$ when the DoE angle is $\theta_{\text{doe}}$.
Notice that (\ref{eq:chp2_steeringvector})-(\ref{eq:chp2_TXgain}) are the general expressions which are valid for any array geometry.
For ULA with $N$ elements and $\Delta d$ spacing, $G(\theta,\theta_{\text{doe}})$ is obtained by
\begin{align}
G(\theta,\theta_{\text{doe}}) &=\frac{1}{\sqrt{N}}\sum_{i=1}^N e^{jk\Delta d(\sin\theta_{\text{doe}}-\sin\theta)(i-1)} \label{eq:chp2_AF_ULA} \\
 &= \frac{1}{\sqrt{N}}\frac{1-e^{jNk\Delta d(\sin\theta_{\text{doe}}-\sin\theta)}}{1-e^{jk\Delta d(\sin\theta_{\text{doe}}-\sin\theta)}}. \label{eq:chp2_AF_ULA2}
\end{align}
Some examples of the array patterns for $G(\theta,\theta_{\text{doe}})$ is shown in Section\,\ref{chp3:analysis:jfeow}.

\nomenclature{$\mathbf{w}$}{beamforming weight vector}
\nomenclature{$^H$}{Hermitian transpose}

%

\subsection{Uniform Circular Array}
\label{chp2:antennas:UCA}

Although the ULA is very common in practice, there are occasions where a ULA is not appropriate.
Other array geometries, e.g., the UCA, can be used.
Examples of UCA have been shown in the literature for creating physical regions for wireless security\,\cite{1400008,sheth2009geo,6502515}.

For the UCA, all the $N$ elements are equally allocated with spacing $\Delta d$ on a circle with radius $R$.
$\theta_{\text{doe}}$ is the angle between the target user and the first element in the array.
In this thesis, only 2D-plane is considered.
An example of a UCA with $8$ elements is shown in Fig.\,\ref{fig:chp2_UCA}.
For the ease of mathematical derivation, the center of the UCA is at the origin point and the first element is put on the positive $x$-axis.
In this way, $\theta_{\text{doe}}$ is the angle between the target user and the positive $x$-axis.
In addition, the phase angle $\psi_i$ for the $i$-th element is 
\begin{align}
	\psi_i=2\pi(i-1)/N.
\end{align}

\nomenclature{$R$}{radius of the UCA}
\nomenclature{$\psi_i$}{phase angle of the $i$-th element in the UCA}

\begin{figure}
\centering
\includegraphics[scale=1]{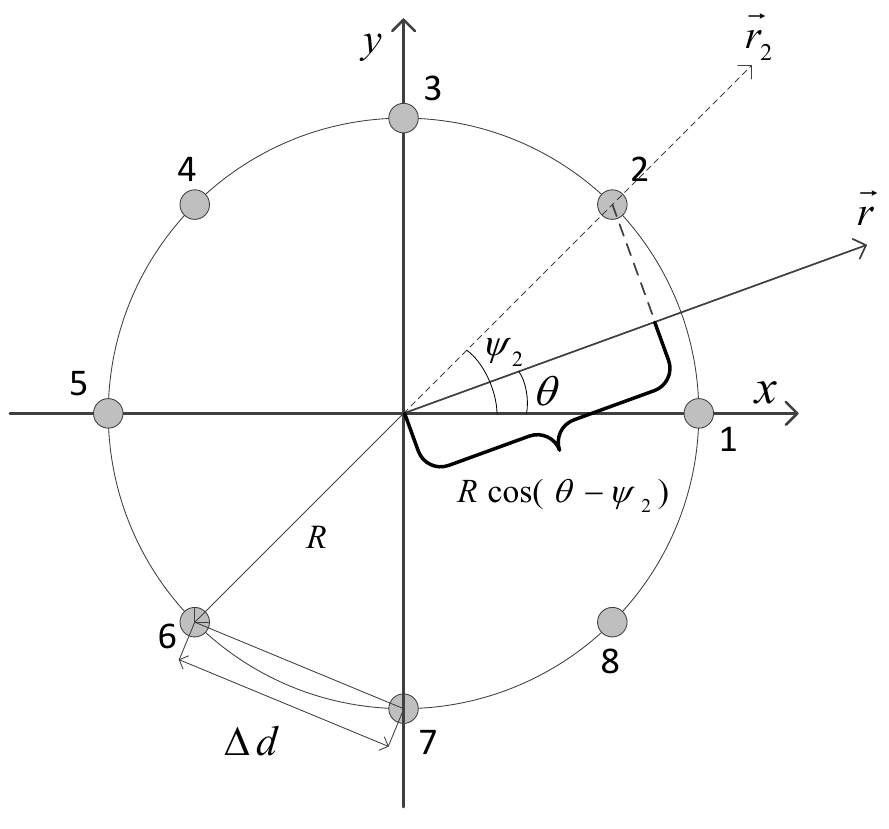}
\caption{Circular array of 8 elements}
\label{fig:chp2_UCA}
\end{figure}

$G(\theta,\theta_{\text{doe}})$ is determined by the array geometry.
The overall signal at angle $\theta$ is the superposition of the signals transmitted from all elements in the UCA.
For the UCA, $|\overrightarrow{r}|\gg R$ is assumed for the far-field condition.
The same as ULA, assume $|\overrightarrow{r}|\approx|\overrightarrow{r}_i|$, and $\overrightarrow{r}$ and $\overrightarrow{r}_i$ are assumed to be parallel.

To calculate $G(\theta,\theta_{\text{doe}})$, $\overrightarrow{e}$ and $\overrightarrow{e}_i$ are used to denote the unit vectors in the direction of $\overrightarrow{r}$ and $\overrightarrow{r}_i$, respectively.
\begin{align}
	&\overrightarrow{e}=\cos(\theta)\overrightarrow{e}_x+\sin(\theta)\overrightarrow{e}_y, \\
	&\overrightarrow{e}_i=\cos(\psi_i)\overrightarrow{e}_x+\sin(\psi_i)\overrightarrow{e}_y,
\end{align}
where $\overrightarrow{e}_x$ and $\overrightarrow{e}_y$ are the unit vector on the $x$-axis and $y$-axis, respectively.
The distance $|\overrightarrow{r}_i|$ is less than the distance $|\overrightarrow{r}|$ by the scalar projection of $\overrightarrow{r}_i$ into $\overrightarrow{r}$,
\begin{align}
	\overrightarrow{r}_i=\overrightarrow{r}-R\overrightarrow{r}_i\cdot\overrightarrow{r}.
\end{align}
An example is shown by $\overrightarrow{r}$ and $\overrightarrow{r}_2$ for the 2nd element in Fig.\,\ref{fig:chp2_UCA}.
It can be calculated that
\begin{align}
	\overrightarrow{r}_i\cdot\overrightarrow{r}=\cos\theta\cos\psi_i+\sin\theta\sin\psi_i=\cos(\theta-\psi_i).
\end{align}
Thus, the relative phase shift for the $i$-th element is
\begin{align}
	\phi_i(\theta)=kR\cos(\theta-\psi_i).
\end{align}
According to (\ref{eq:chp2_steeringvector})-(\ref{eq:chp2_TXgain}), $G(\theta,\theta_{\text{doe}})$ for the UCA when transmitting towards $\theta_{\text{doe}}$ is 
\begin{align} \label{eq:chp2_AF_UCA}
G(\theta,\theta_{\text{doe}}) =\frac{1}{\sqrt{N}}\sum_{i=1}^N e^{jkR[\cos(\theta_{\text{doe}}-\psi_i)-\cos(\theta-\psi_i)]}.
\end{align}

\nomenclature{$\overrightarrow{e}$}{unit vector}

The maximum gain, denoted by $G_{\text{max}}$, is obtained at the DoE angle $\theta_{\text{doe}}$, which can be calculated by
\begin{align}\label{eq:chp2_max_gain}
	G_{\text{max}}=\max_{\theta}\{G(\theta,\theta_{\text{doe}})\}
	=G(\theta_{\text{doe}},\theta_{\text{doe}})
	=\sqrt{N}.
\end{align}
It is worth noticing that $G_{\text{max}}$ only depends on the number of elements $N$ and is regardless to the array geometry.

\nomenclature{$G_{\text{max}}$}{maximum gain}
\nomenclature{$_{\text{max}}$}{maximum value}

The ULA is a one-dimensional array, while the UCA is a planar array in 2D space.
In addition to the ULA and the UCA, there are other array geometries of planar arrays  for different purposes.
In\,\cite{sanudin2012semi,yuan2012direction,heidenreich2012joint}, semi-circular, triangular and rectangular arrays are used for the DoA estimation.
In\,\cite{zaman2013application,gazzah2013optimizing,biao2009doa}, the L-shape, V-shape and Y-shape arrays are exploited to address issues in the DoA estimation, e.g., pair matching and estimation failure.
The basic concepts, e.g., the array steering vector and the array factor still apply for these array geometries.

\subsection{Mutual Coupling}
\label{chp2:antennas:novwep}

In this thesis, besides theoretical analysis, a practical issue, i.e., mutual coupling, which is inherent in antenna arrays is investigated.
The mutual coupling between two nearby antennas is caused by the energy absorption of one antenna from another antenna which either radiates or receives.
The nearby antenna absorbs part of the energy that is supposed to either radiate away from or be received by the other antenna. 
When two antennas are close together, their transmitted/received energy is highly correlated, which degrades the antenna efficiency in both radiation and reception modes.

An example of a 2-antenna system is shown in Fig.\,\ref{fig:chp2_mutualcoupling} to illustrate the impact of mutual coupling.
The radiation and reception modes have the same principle.
Thus, in this example, the radiation mode is studied.
Antenna 1 is excited by a voltage source $v_{g1}$ with source
internal impedance $Z_{g1}$.
The current and voltage on antenna 1 are denoted by $i_1$ and $v_1$, respectively. 
The radiated field from antenna 1 is intercepted by antenna 2.  The current and voltage that are induced on antenna 2 are denoted by $i_2$ and $v_2$, respectively.
The radiated field from antenna 2 again affects $i_1$ and $v_1$, which changes the radiation pattern of antenna 1.

\begin{figure}
\centering
\includegraphics[scale=1]{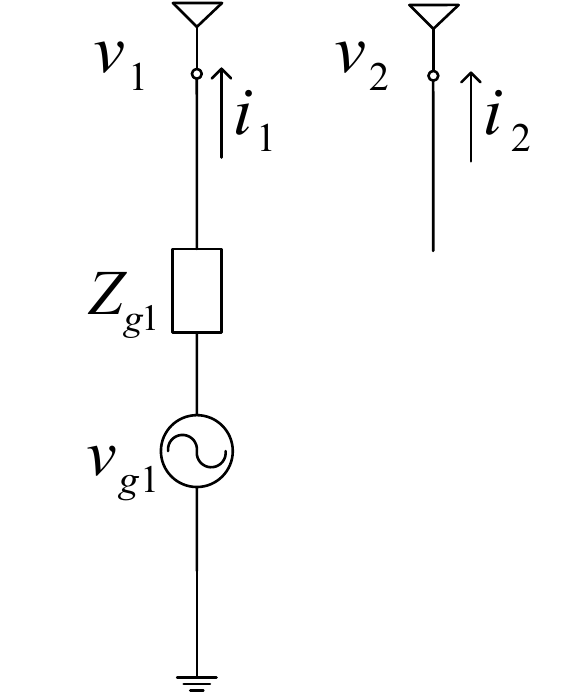}
\caption{An example of a 2-antenna system with antenna 1 excited by a voltage source }
\label{fig:chp2_mutualcoupling}
\end{figure}

Mutual coupling is also known as active element pattern\,\cite{310010} and it is always associated with multiple antenna techniques\,\cite{friedlander1991direction,dai2014recursive}.
$G(\theta,\theta_B)$ is subject to mutual coupling, because it is calculated based on the assumption of an omnidirectional antenna element, the pattern of which is distorted by the mutual coupling.


\section{Channel Models}
\label{chp2:channel}

The security performance of the various techniques reviewed in Section\,\ref{chp2:PhySec} is determined by the wireless channel through which the signal propagates. 
Wireless communication techniques are developed to take advantage of the wireless channel and mitigate the impairments brought by wireless propagation.
In this section, a brief introduction of wireless channel models is given and  some basic aspects that are involved in this thesis, i.e., large-scale path loss and small-scale fading as well as the MISO channel, are covered.
More details are available in\,\cite{goldsmith2005wireless,4460436,cho2010mimo} and the references therein.
In this thesis, the indoor channel models, e.g., TGn channel\,\cite{erceg2004tgn}, are focused on.

\subsection{Large-Scale Path Loss}
\label{chp2:channel:iewurwo}

The basic channel model is the free-space path loss (FSPL) channel model when the signal propagates in the free-space.
The channel causes attenuation in the amplitude, i.e., path loss.
Let $PL$ denote the path loss, which is usually measured in dB scale.
\begin{align}
	PL=10\log_{10}\frac{P_t}{P_r},
\end{align}
where $P_t$ and $P_r$ are the transmit and receive power, respectively, and the unit transmit and receive gains are assumed.
In the free space, $PL$ depends on the signal frequency and the distance that the signal travels.
\begin{align}\label{eq:chp2_FSPL}
	PL=10\log_{10}\Big(\frac{4\pi d}{\lambda}\Big)^2,
\end{align}
where $d$ is the distance and $\lambda=2\pi/f_0$ is the wavelength, where $f_0$ is the carrier frequency.

\nomenclature{FSPL}{free-space path loss}
\nomenclature{$PL$}{path loss}
\nomenclature{$P_t$}{transmit power}
\nomenclature{$P_r$}{receive power}
\nomenclature{$d$}{distance between transmitter and receiver or user's distance in polar coordinates}
\nomenclature{$d_0$}{breakpoint distance}
\nomenclature{$\beta$}{path loss factor}

In a more realistic environment, it is difficult to obtain an accurate model that characterizes the path loss.
A simplified model, i.e., the large-scale path loss model, is used to refer to the average loss in the signal strength over distance.
Let $d_0$ denote the breakpoint distance.
In the close range (i.e., $d\leq d_0$), the channel can be still  assumed to be the FSPL model.
When $d>d_0$, the path loss is
\begin{align}\label{eq:chp2_PL}
	PL=10\log_{10}\Big(\frac{4\pi d_0}{\lambda}\Big)^2+10\log_{10}\Big(\frac{d}{d_0}\Big)^{\beta},
\end{align}
where $\beta$ is the path loss factor and its typical value is from 2 to 6\,\cite{goldsmith2005wireless}.
When $\beta=2$, it reduces to the FSPL channel model.
For realistic channels, the signal attenuates quicker over distance than the free-space environment.

When there are objects that block the signal path or there are some changes in the reflecting surfaces and scatters, the path loss varies randomly for a given distance, which is referred to as shadowing.
A common model is the lognormal shadowing model, which includes a combination of a large number of random variations, and thus is characterized by a decibel (dB) Gaussian random variable $\chi$.
Combined with shadowing, $PL$ can be expressed by
\begin{align}\label{eq:chp2_PL_shadowing}
	PL=10\log_{10}\Big(\frac{4\pi d_0}{\lambda}\Big)^2+10\log_{10}\Big(\frac{d}{d_0}\Big)^{\beta}+\chi.
\end{align}

\nomenclature{$\chi$}{Gaussian random variable for shadowing effect}
\nomenclature{dB}{decibel}

\subsection{Small-Scale Fading}
\label{chp2:channel:nvwo}

While the path loss models refer to the the signal variation over a large distance, the small-scale fading effects are caused by changes over a small distance.
The small-scale fading is caused by multiple versions of the signal when it takes different paths to arrive at the receive antenna;
the multiple versions are combined either constructively or destructively, which causes severe changes in the signal.
Besides the multipath, another reason that causes the small-scale fading is motion.
The movements of the transmitter, receiver or the surrounding objects change the channel characteristics.

The impulse response of a multipath channel is commonly comprised of a discrete number of taps (hence, it is called tapped delay line model). 
Let $h(t,\tau)$ denote the impulse response,
\begin{align}
	h(t,\tau)=\sum_{i} \alpha_i(t,\tau)\delta[\tau-\tau_i(t)],
\end{align}
where $\alpha_i(t,\tau)$ and $\tau_i(t)$ are the complex channel gain and the tap delay for the $i$-th path.
Note that $\alpha_i(t,\tau)$ and $\tau_i(t)$ change with time.

\nomenclature{$h(t,\tau)$}{channel impulse response}
\nomenclature{$t$}{time}
\nomenclature{$\tau$}{time delay}
\nomenclature{$\alpha$}{complex channel gain}

WLAN packets are designed to have short time durations, which is illustrated by an example in\,\cite{4460436}.
The human walking speed is very low (e.g. 1\,m/s) for a typical indoor environment, which leads to a large coherence time (e.g., 70\,ms) compared to the WLAN packet duration (e.g., less than 1\,ms).
Thus, the channel can be regarded as a quasi-static fading channel, 
and the time variance can be suppressed.
Thus, $h(t,\tau)$ reduces to
\begin{align}
	h(\tau)=\sum_{i} \alpha_i(\tau)\delta(\tau-\tau_i).
\end{align}
The multiple versions of packet arrive at the receiver with different delay $\tau_i$, which could causes inter-symbol interference (ISI) among sequential packets.
It is shown by another example in\,\cite{4460436} that the frequency-selective fading could be regarded as a flat fading channel if the symbol duration is designed to be much longer than the root mean square (RMS) delay spread.
This is possible with the orthogonal frequency division multiplexing (OFDM) technique that is incorporated in the 802.11 protocols.
In this thesis, a quasi-static fading channel with a single tap is used and the channel gain $h$ is a random variable subject to certain fading distribution.

\nomenclature{ISI}{inter-symbol interference}
\nomenclature{OFDM}{orthogonal frequency division multiplexing}
\nomenclature{RMS}{root mean square}

There are two commonly used fading channels, Rician fading and Rayleigh fading channels.
When there exists a LOS, the channel is subject to Rician fading;
when there is no dominant path, the channel is called a non-line-of-sight (NLOS) channel and is subject to Rayleigh fading.
Let $X$ be a complex Gaussian random variable with zero mean and variance $2\sigma^2$, i.e., $X\sim\mathcal{CN}(0,2\sigma^2)$.
For the Rayleigh fading channel, the channel gain $h$ can be represented by
\begin{align}
	h=X=X_{Re}+jX_{Im},
\end{align}
where $X_{Re}$ and $X_{Im}$ are the real and imaginary part of $X$, and $X_{Re},X_{Im}\sim\mathcal{N}(0,\sigma^2)$.
The magnitude of $h$, i.e., $|h|$, is a Rayleigh random variable with probability density function (PDF)
\begin{align}
	f_{|h|}(x)=\frac{x}{\sigma^2}e^{-\frac{x^2}{2\sigma^2}}.
\end{align}
When there is a LOS, the Rician channel can be represented by
\begin{align}
	h=\nu+X,
\end{align}
where $\nu$ represents the LOS component.
The magnitude $|h|$ is a Rician random variable with PDF
\begin{align}\label{eq:chp2_rician_pdf}
	f_{|h|}(x)=\frac{x}{\sigma^2}e^{-\frac{x^2+\nu^2}{2\sigma^2}}I_0(\frac{\nu}{\sigma^2}x),
\end{align}
where $I_0(\cdot)$ is the modified Bessel function of the first kind with order zero.
Conventionally, the Rician $K$-factor is used to denote the power ratio of the LOS and NLOS component,
\begin{align}
	K=\frac{\nu^2}{2\sigma^2}.
\end{align}
Alternatively, the complex channel gain for Rician channel can be written in the form of $K$-factor.
\begin{align}
	h=\sqrt{\frac{K}{K+1}}e^{j\phi}+\sqrt{\frac{1}{K+1}}g,
\end{align}
where $\phi$ is the phase component of the LOS path and $g$ is complex Gaussian random variable with unit variance, i.e., $g\sim\mathcal{CN}(0,1)$.
The LOS component $\sqrt{\frac{K}{K+1}}e^{j\phi}$ is deterministic and the NLOS component is $\sqrt{\frac{1}{K+1}}g$.
The total power of $h$ is normalized to one.

\nomenclature{NLOS}{non-line-of-sight}
\nomenclature{PDF}{probability density function}
\nomenclature{$X$}{complex Gaussian random variable with zero mean and variance $2\sigma^2$}
\nomenclature{$\sigma$}{standard deviation for distribution}
\nomenclature{$_{Re}$}{real part of complex variable}
\nomenclature{$_{Im}$}{imaginary part of complex variable}
\nomenclature{$\mathcal{CN}$}{circularly-symmetric complex Gaussian distribution}
\nomenclature{$\mathcal{N}$}{Normal distribution}
\nomenclature{$I_0(\cdot)$}{modified Bessel function of the first kind with order zero}
\nomenclature{$K$}{Rician $K$ factor}
\nomenclature{$g$}{complex Gaussian random variable with zero mean and unit variance}
\nomenclature{$f_{X}(x)$}{PDF of random variable $X$}

The Rician fading channel model is a generalized model.
Note that when $K=0$, the Rician channel degrades into the Rayleigh channel.
When $K$ approaches infinity, the fading channel becomes deterministic.
In this thesis, the generalized Rician fading channel model is considered.

\subsection{MISO Channel Model}
\label{chp2:channel:nvuewbaav}


For the MISO channel,  the signals for the array elements on the LOS path will experience the phase differences between them, which can be captured by $\mathbf{s}(\theta)$, where $\theta$ is the user's angle.
For the generalized Rician channel model, the LOS component should encompass $\mathbf{s}(\theta)$, while the channel for each antenna element experiences the independent and identically distributed (i.i.d.) Rician fading.
Let $\mathbf{h}$ denote the channel gain vector between the multi-antenna transmitter and the receiver.
$\mathbf{h}$ can be written as 
\begin{align}
	\mathbf{h}=\sqrt{\frac{K}{K+1}}\mathbf{s}(\theta)+\sqrt{\frac{1}{K+1}}\mathbf{g},
\end{align}
where $\mathbf{s}(\theta)$ is the LOS component and $\mathbf{g}=[g_1,g_2,...,g_N]^T$ is the NLOS component; the entry $g_i$ is i.i.d. circularly-symmetric complex Gaussian random variable with zero mean and unit variance, i.e., $g_i\sim\mathcal{CN}(0,1)$.
Note that in practice there is spatial correlation in the channels between different antennas.
The spatial correlation decreases in a rich multipath propagation environment or when the spacing among the antenna elements increases.
In this thesis, the impact of the spatial correlation is not considered.

\nomenclature{i.i.d.}{independent and identically distributed}
\nomenclature{$\mathbf{h}$}{channel gain vector}
\nomenclature{$\mathbf{g}$}{complex Gaussian random vector}

In this thesis, the channel with both the large-scale path loss and the small-scale fading for a certain environment is considered.
Thus, the breakpoint distance can be assumed constant and the random shadowing can be ignored.
In some literature\,\cite{zheng2014transmission,wang2015jamming}, the large-scale path loss is simply represented by $d^{-\frac{\beta}{2}}$ and the constant components in (\ref{eq:chp2_PL}) is omitted.
Therefore, the channel gain vector combining the large-scale path loss and the small-scale fading can be expressed by
\begin{align}\label{eq:chp2_ch_gain_vec}
	\mathbf{h}=\frac{1}{\sqrt{d^{\beta}}}\Big[\sqrt{\frac{K}{K+1}}\mathbf{s}(\theta)+\sqrt{\frac{1}{K+1}}\mathbf{g}\Big].
\end{align}
This expression has been used in\,\cite{yan2014secrecy,wang2015jamming}.
Notice that $d$ is assumed to be sufficiently large so that the far-field assumption mentioned in Section\,\ref{chp2:antennas} is fulfilled.

\section{Experiment and Simulation Tools}
\label{chp2:mc}

In this thesis, the array factor is measured in real experiments on WARP as well as in numerical simulations by NEC.
WARP is a soft-defined radio platform which enables transmission/reception and processing of signals in the physical layer\,\cite{warpProject,xiong2010secureangle,xiong2013securearray,xiong2013arraytrack}. 
The results from NEC simulations are well accepted in the literature~\cite{dandekar2000effect}.
In this section, a brief introduction to WARP and NEC tool is given.

\subsection{WARP Hardware}
\label{appdx:warpnec:wioo}

WARP version 3 board integrates a Virtex-6 FPGA with peripheral functional modules, which is shown in Fig.\,\ref{fig:appdx_WARP_warp_v3_kit_sm}.
For example, the clocking module generates the reference frequency for up/down conversion in the transceiver as well as the sampling frequency for the AD/DA conversion in the baseband processing.
The radio-frequency (RF) module, which mainly consists of the AD/DA chips and the transceiver chips, transforms the sampled data into a radio signal for transmission, and also captures the received radio signal and stores the sampled data.
Just to name a few, there are also memory module, Ethernet module, power module and so on.

\nomenclature{RF}{radio-frequency}

\begin{figure}
\centering
\includegraphics[width=0.8\textwidth]{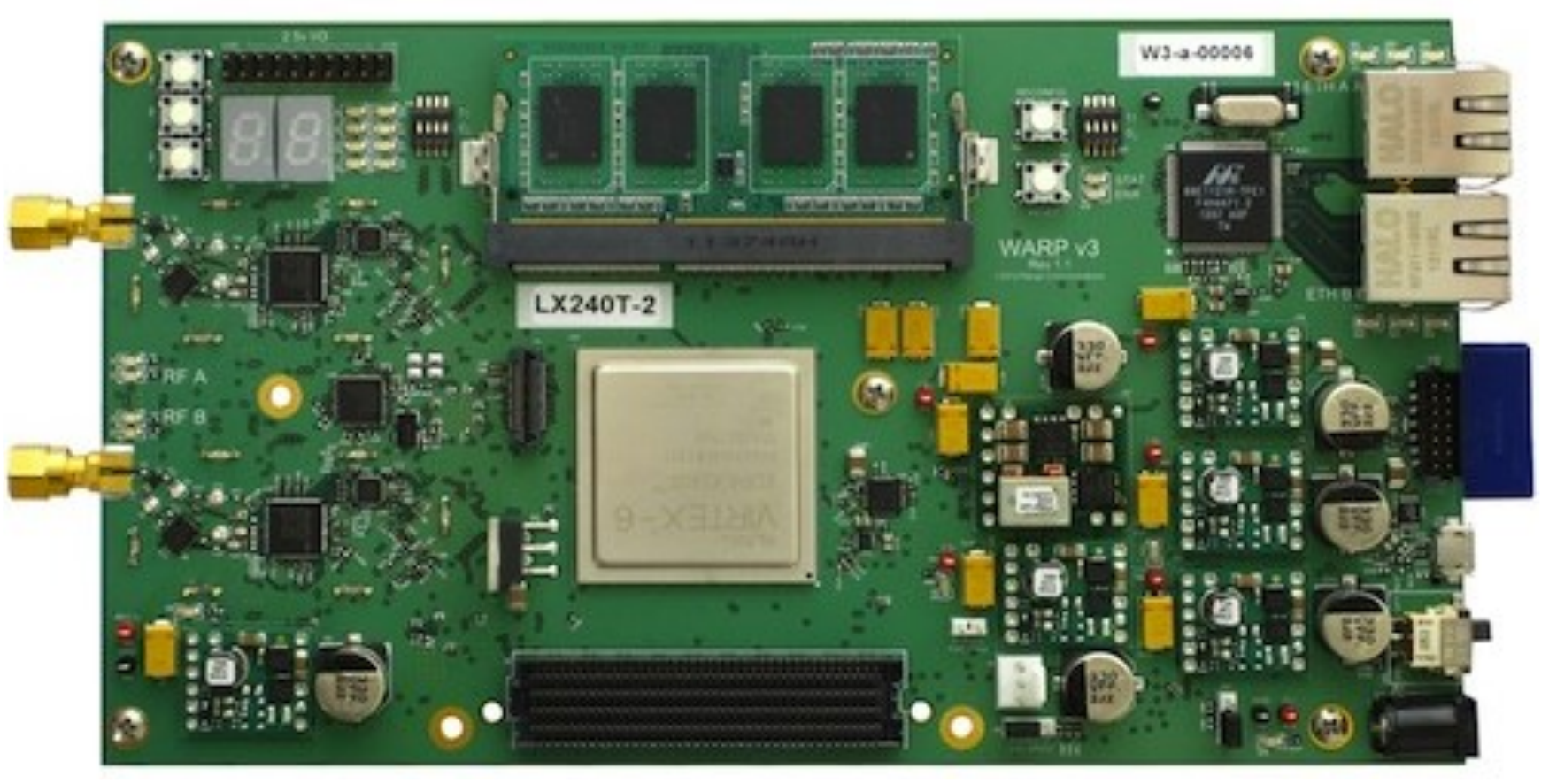}
\caption{WARP v3 board}
\label{fig:appdx_WARP_warp_v3_kit_sm}
\end{figure}

The FPGA executes commands to control the peripherals.
Each WARP board is a node, like a PC/laptop in the network.
To make the node work, there are some customized hardware designs (e.g., set of commands, memory allocations and etc.) that are loaded to the FPGA via various methods, such as JTAG and SD card, when the FPGA is powered on.

In this thesis, the WARPLab design is used, which allows physical layer prototyping for single and multi-antenna transmit and receive nodes.
Each WARPLab node (for short `node' hereinafter) is connected in a local network via Ethernet switch and cables, together with a PC/laptop. 
A typical topology is shown in Fig.\,\ref{fig:appdx_WARP_warplab}.

\begin{figure}
\centering
\includegraphics[width=0.8\textwidth]{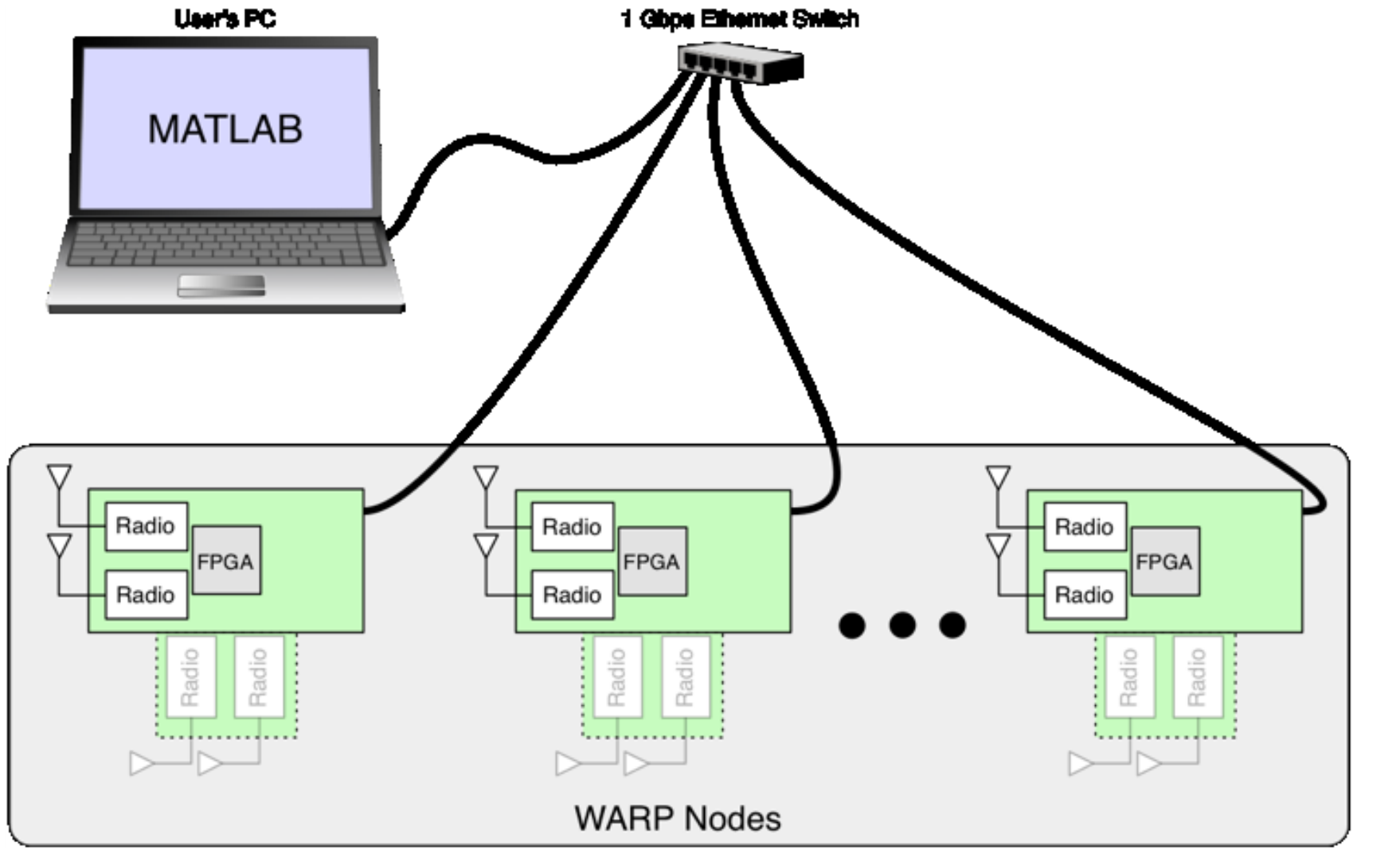}
\caption{WARPLab network}
\label{fig:appdx_WARP_warplab}
\end{figure}

In Fig.\,\ref{fig:appdx_WARP_warplab}, there is one laptop installed with MATLAB and several nodes, which are connected via the switch.
In the WARPLab hardware design, there are communications protocols (TCP/UDP) that enable the exchange of packets between MATLAB and the nodes.
The packet flow between MATLAB and the nodes is focused on to give a brief description of how a communications system is realized.
The details of the WARPLab design are beyond the scope of this thesis.

A packet can be either a command packet from MATLAB to the nodes, e.g., transmit/receive, or a data packet between MATLAB and the nodes, e.g., samples of BPSK modulated symbols.
As will be seen, these packets are transmitted between MATLAB and the RF module on the node, which constitutes a communications system.

Before introducing the communications system, first, the transceiver's structure is given in Fig.\,\ref{fig:appdx_WARP_transceiver}.
There are two or four RF interfaces on each node. 
Each RF interface is half-duplex and can be set to transmit or receive mode per request.
On the right side of block diagram, there are three sample buffers which store data samples that are sent from the MATLAB or captured from the RF band.
On the left side, there are baseband (BB) and RF amplifiers as well as the up/down-convertors.
On the left side, the RF chain is connected to the antenna.

\nomenclature{BB}{baseband}

\begin{figure}
\centering
\includegraphics[scale=1.2]{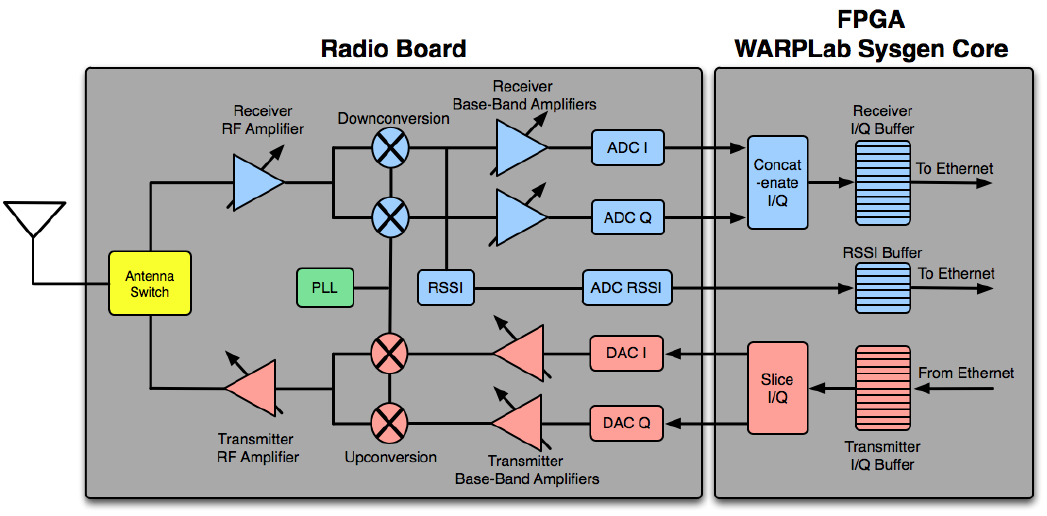}
\caption{Transceiver of one RF interface}
\label{fig:appdx_WARP_transceiver}
\end{figure}

In the transmit mode (red color, in the bottom), the data packet is delivered from the MATLAB via switch to the buffer.
The packet is normally on a intermediate frequency (IF) band that contains both real and imaginary parts.
Then it is split into I/Q branches that are passed to the DA convertor and the up-converter before they are physically sent via an antenna.
During this process, there are two amplifiers, i.e., BB and RF amplifiers, each has discrete power levels.
In the receive mode (blue color on the top), the process is just in the reverse order. 
The signal is captured by the antenna and is down-converted and sampled.
The I/Q branches are stored as complex numbers in the buffer and later will be sent to MATLAB for post-processing.
In addition to the captured data, the received signal strength indicator (RSSI) is measured in hardware and stored in a separate buffer.
Notice that the reference and sampling frequencies can be either generated by a local oscillator or obtained from an external source, which corresponds to the phase-lock loop (PLL) unit in Fig.\,\ref{fig:appdx_WARP_transceiver}.

\nomenclature{IF}{intermediate frequency}
\nomenclature{RSSI}{received signal strength indicator}
\nomenclature{PLL}{phase-lock loop}

\subsection{Communications System on WARPLab}
\label{appdx:warpnec:rqe}

A basic SISO system can be formed by a transmit node and a receive node as shown in Fig.\,\ref{fig:appdx_WARP_warplab}. 
First, a packet is generated in MATLAB and delivered to the transmit buffer. 
The structure of a data packet is shown in Fig.\,\ref{fig:appdx_WARP_packet}.
The packet consists of a preamble and a payload. 
The preamble is used for sample synchronization, channel estimation and etc; the payload stores the messages or commands.
Normally, the length of the packet does not exceed the buffer's size, e.g., $2^{15}$ samples.
The complex (or real) samples in the packet are then transformed into an analog signal before up-conversion to 2.4\,GHz or 5\,GHz. 
The transmitted signal travels via the wireless (or wired) channel and reaches the receive antenna, where it is captured, down-converted, sampled and stored in the receive buffer and waits to be delivered to MATLAB.

\begin{figure}
\centering
\includegraphics[scale=1]{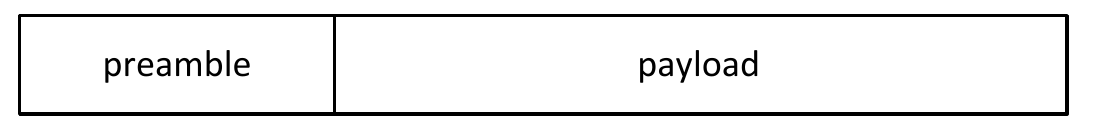}
\caption{Data packet}
\label{fig:appdx_WARP_packet}
\end{figure}

The process that a packet is generated from MATLAB and passed down to the transmit node, then transmitted over the air and captured by the receive node, and finally delivered back to MATLAB is how a basic communications system works on WARPLab.
The packet is both generated and post-processed in MATLAB, which grants WARPLab users the freedom to construct their systems at will.
A simple example is shown in Fig.\,\ref{fig:appdx_WARP_commsyst} to illustrate the design of a communications system.
The blocks in the dashed circle represent  MATLAB processing.

\begin{figure}
\centering
\includegraphics[scale=1]{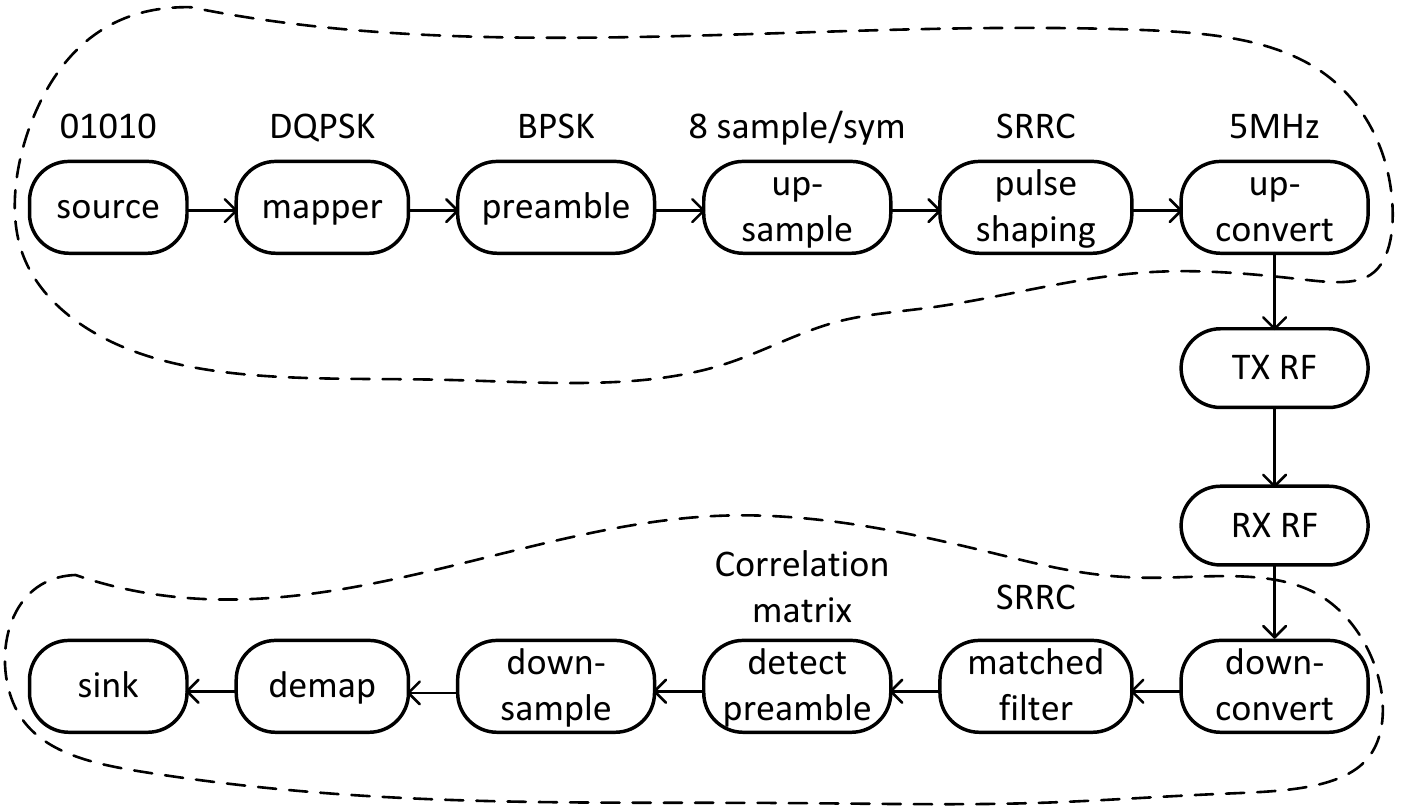}
\caption{An example of SISO system on WARPLab}
\label{fig:appdx_WARP_commsyst}
\end{figure}

In Fig.\,\ref{fig:appdx_WARP_commsyst}, an example of SISO system in WARPLab is shown. 
Data bits are randomly generated from the source then mapped to a DQPSK symbol, which constitutes the payload.
A BPSK modulated preamble is added to the payload, as shown in Fig.\,\ref{fig:appdx_WARP_packet}. 
The packet is then up-sampled and passed a square-root raised cosine (SRRC) filter.
The packet is up-converted to 5\,MHz before being sent to the transmit buffer.
Up till this point, everything takes place in MATLAB.
After the packet is sent to the transmit buffer, it will be sent and received on chosen RF frequency and is finally stored in the receiver buffer.
The samples in the receiver buffer is packed and sent to MATLAB for filtering, de-mapping and so on.

\nomenclature{SRRC}{square-root raised cosine}


The MISO system can be built in the same way as the SISO system.
The major difference is that multiple RF interfaces are needed at the transmitter side, in order to send the packets via different wireless/wired channels.
A generalized system model is shown in Fig.\,\ref{fig:appdx_WARP_commsyst2}, where $N$ RF interfaces are at the transmit node(s).
Each RF interface is a half-duplex transceiver shown in Fig.\,\ref{fig:appdx_WARP_transceiver}. 
The blocks inside the dashed boxes are MATLAB processing.

\begin{figure}
\centering
\includegraphics[scale=1]{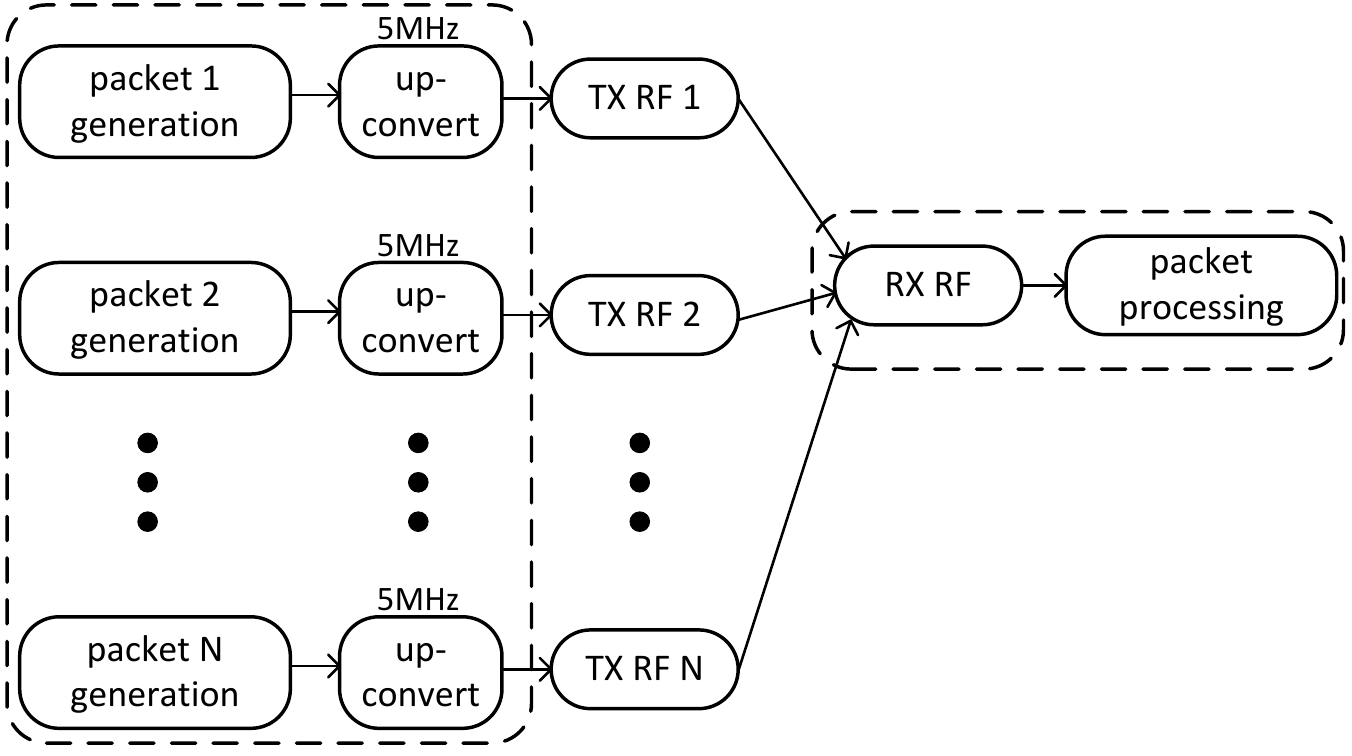}
\caption{An example of MISO system on WARPLab}
\label{fig:appdx_WARP_commsyst2}
\end{figure}

In MATLAB, $N$ packets are generated for $N$ buffers on WARP nodes. 
For the purpose of building a transmit beamformer, each packet is assigned to a different complex weight, i.e., amplitude and phase.
The radio waves that emit from different antennas superimpose over the air and create patterns according to the design.


\subsection{NEC}
\label{appdx:warpnec:vbjhbie}

NEC provides us a numerical method to calculate patterns that have the mutual coupling effect.
It was originally developed at the Lawrence Livermore Laboratory for wire antennas~\cite{burke1981numerical}.
Different versions of NEC are freely distributed in the internet.
In this thesis, the 4NEC2 tool is used, which is based on NEC, to create antenna array and generate array patterns.

In NEC, a wire antenna is decomposed into multiple thin, perfectly conducting wire segments, each of which can be excited with different amplitudes and phases, which corresponds to $\mathbf{w}$. 
Multiple wire antennas can be allocated by an arbitrary array geometry.
Then, for any array geometry and any $\mathbf{w}$, $G(\theta,\theta_B)$ can be simulated based on a numerical solution of electromagnetic field integrals on such segments using the method of moments\,\cite{burke1981numerical}.

An input file manages all the information needed for simulations, e.g., the configuration of the wire segments, the array geometry and $\mathbf{w}$.
In this thesis, the half-wavelength dipole, which is a copper wire of 0.001\,m radius, is used as the array element.
Each wire is divided into 9 segments, of which the middle segment is excited by a voltage source with a frequency of 2414\,MHz, which corresponds the Wi-Fi channel 14. 
Different patterns are produced by imposing $\mathbf{w}$ to the voltage sources.
The dipoles are placed along z-axis in 3-D Cartesian coordinates, so the azimuth pattern of each dipole is circular if there is no mutual coupling. 
All elements are located in the x-y plane according to Fig.\,\ref{fig:chp2_ULA} and\,\ref{fig:chp2_UCA}.

$G(\theta,\theta_B)$ that includes the mutual coupling can be simulated for a specific input file.
Compared to WARP experiments, the mutual coupling effect simulated in the NEC tool does not incorporate numerous imperfections in practice, such as technical malfunctions on fabrication, carrier frequency/phase offsets, inaccurate alignment of array geometry and so on. 
The advantage of using NEC results is that they are more accurate than the WARP results that need calibration before measurements, as will be discussed in Section\,\ref{chp4:sec5:pzoxsp}.

\section{Conclusions}
\label{chp2:conclu}

In this chapter, the literature for the wireless security in the physical layer is reviewed with the focus on the physical region related work.
The problem is to enhance the security in the physical layer against the randomly distributed passive Eves.
There are two different routes to solve this problem, each of which incorporates various techniques.
One starts from information-theoretic security concepts and analyzes the secrecy provided by creating the physical region via beamfomring, jamming, and etc.
The other route focuses on the creation of the physical region from a more practical point of view.
Both routes share the common ground, that is, the physical region that is vulnerable to Eves (or secure from Eves) should be minimized (or maximized), which serves as the basic principle of this thesis.

The array factors for both ULA and UCA are introduced. 
Although there is a wide range of array geometries, e.g., L-shape and V-shape arrays, the commonly used ULA and UCA are examined in this thesis.
The fundamental concepts introduced in this thesis and the methodologies used based on the array factor are applicable to other arrays.

In this thesis, the antenna array is mounted on the AP and beamforming is exploited to create the physical region.
Thus, the MISO channel model is introduced in this chapter.
The generalized Rician fading channel model that incorporates the large-scale path loss and the small-scale fading is used.

Finally, the WARP hardware is introduced to illustrate how to build a MISO system using WARPLab design in order to develop the beamformer.
The NEC tool is introduced to build the ULA and the UCA and measure their array patterns.

\chapter{Spatial Secrecy Outage Probability and Analysis for Uniform Circular Arrays}
\label{chp3}

\section{Introduction}
\label{chp3:intro}

In this chapter, the security performance of the ER-based beamforming with the ULA is investigated.
As discussed in Chapter\,2, the issue caused by passive Eves can be addressed by creating a physical region using the beamforming technique.
Now the potential of using antenna array to enhance the security is explored from the physical region perspective in different channel conditions.

Previous research based on the information-theoretic parameters has focused on the information-theoretic secrecy with less attention to the spatial area in the physical environment.
However, there are many applications that require security inside an enclosed area, such as different zones in an exhibition hall or different assembly lines in a factory.
It is desirable that signals can be confined in a limited physical region, which is defined as the ER.
In this thesis, the ER is defined based on the information-theoretic secrecy parameter.
The challenges for reducing the ER towards high wireless security is the small-scale fading effects and the randomly located Eves.

In this chapter, ER-based beamforming is proposed and the spatial security performance is evaluated for the generalized Rician channel.
To properly investigate the security performance, first the system with geometric locations is defined, which enables  exploration from the spatial aspect.
As for the system performance metric, the SSOP, which has its roots in the information-theoretic security, is defined based on the ER to describe the spatial security performance.
To facilitate the analysis of the SSOP, its analytic upper bound is derived.
Combined with numerical results, the SSOP and its upper bound for the ULA are analyzed with respect to different array parameters in different channel conditions.

This chapter is organized as follows. 
In Section\,\ref{chp3:syst}, the system model that incorporates the geometric locations in the generalized Rician channel is introduced.
In Section\,\ref{chp3:metric}, the concept of the ER is introduced. Then the SSOP and its upper bound are derived and analyzed.
In Section\,\ref{chp3:analysis}, the SSOP and its upper bound are analyzed for the simple path loss channel model.
In Section\,\ref{chp3:result}, the numerical results are discussed and the properties of the SSOP and its upper bound are investigated for the generalized Rician channel.
In Section\,\ref{chp3:concl}, the conclusions of this chapter are given.

\section{System and Channel Models}
\label{chp3:syst}
\subsection{System Model with Geometric Locations}
\label{chp3:syst:model}

Consider a dense wireless communications system where the AP communicates to Bob in presence of a large number of Eves, as shown in Fig.\,\ref{fig:chp3_demon}. 
The AP is equipped with an antenna array, e.g., the ULA, which has $N$ antenna elements with spacing $\Delta d$; 
while Bob and Eves have a single antenna. 
So, it is a MISO system from the AP to either Bob or Eve.
For convenience, both Bob and Eves are simply referred to as a `general user' or a `user', unless otherwise stated.

\begin{figure}[t]
\centering
\includegraphics[scale=1]{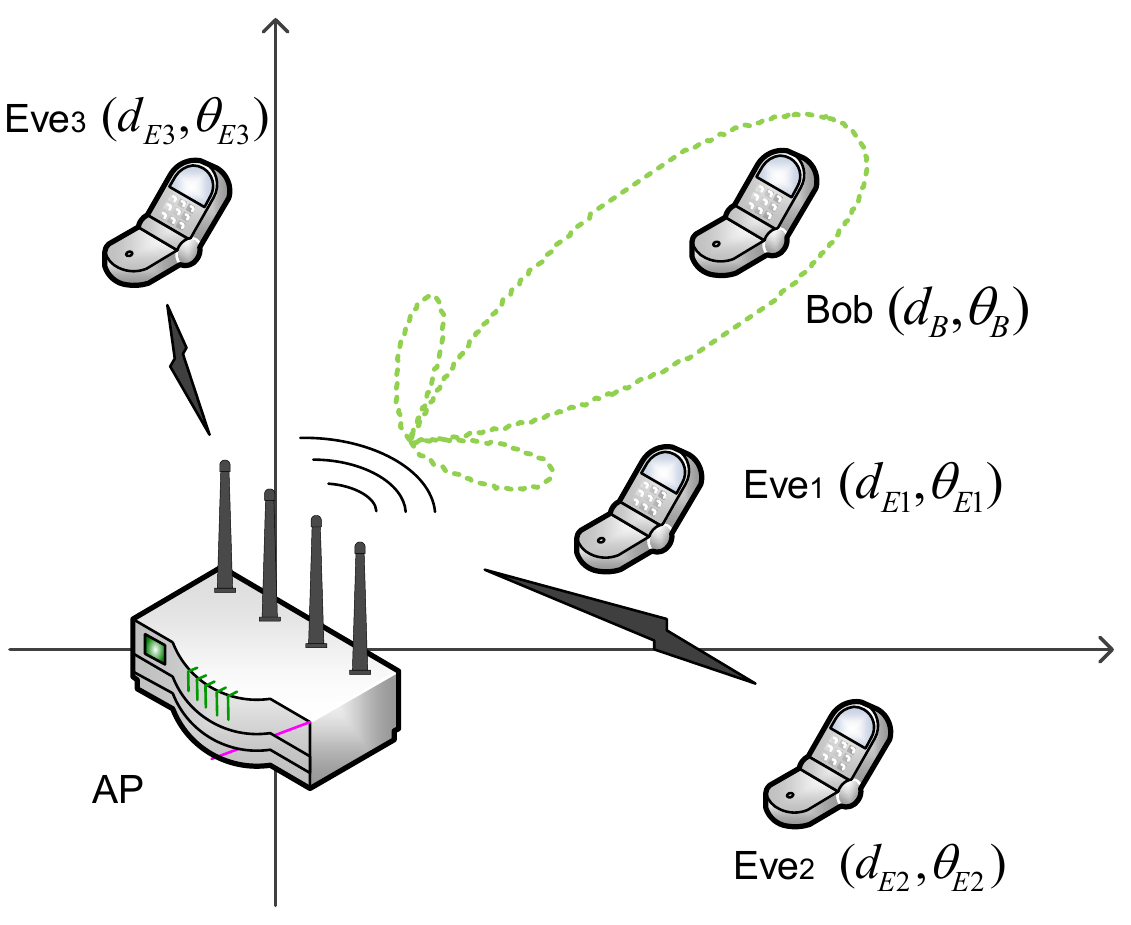}
\caption{An example of a dense communications system with one AP, Bob and several Eves}
\label{fig:chp3_demon}
\end{figure}

Without loss of generality, the AP is located at the origin point in polar coordinates, as shown in Fig.\,\ref{fig:chp3_demon}. 
Assume that the users are distributed by a homogeneous PPP $\Phi_e$ with density $\lambda_e$\,\cite{ghogho2011physical}.
The user's coordinates are denoted by $z=(d,\theta)$.
The subscripts `$_B$' and `$_E$' are used for Bob and Eves hereinafter.
Thus, Bob's coordinates are denoted by $z_B=(d_B,\theta_B)$; the $i$-th Eve's coordinates are $z_{Ei}=(d_{Ei},\theta_{Ei}), i\in\mathbb{N}^+$.

\nomenclature{$\Phi_e$}{homogeneous PPP}
\nomenclature{$\lambda_e$}{density of PPP}
\nomenclature{$z$}{user's coordinates}
\nomenclature{$_B$}{Bob}
\nomenclature{$_E$}{Eve}

Consider a time-division duplex (TDD) system. 
Assume that the AP could estimate Bob's CSI, which can be further used to estimate Bob's coordinates $(d_B,\theta_B)$.
For example, the AP sends a probe message, then the CSI can be estimated via a separate feedback channel.
Alternatively, the user could send a probe message and the AP listens to and estimates the channel.
Unlike the availability of Bob's CSI, the AP cannot obtain Eves' CSI, because the passive Eves only listen to the channel between the AP and Bob; thus their CSI is difficult to detect.

\nomenclature{TDD}{time-division duplex}


Let $x$ be the modulated symbol with unit power, $\mathbb{E}[|x|^2]=1$, and $P_t$ be its transmit power. 
The transmitted vector, denoted by $\mathbf{u}$, is given by
\begin{align}\label{eq:chp3_TX_Signal}
	\mathbf{u}=\sqrt{P_t}\mathbf{w}^*x,
\end{align}
where $\mathbf{w}$ is the beamforming weight vector.
For given $\theta_B$, the AP designs $\mathbf{w}$ associated with $\theta_B$.
To maximize Bob's received signal power, the AP steers the mainbeam of array pattern towards $\theta_B$, i.e., $\theta_{\text{doe}}=\theta_B$.
In this case, $\mathbf{w}$ is obtained by substituting $\theta_{\text{doe}}=\theta_B$ into (\ref{eq:ch2_BF_Weights}),
\begin{align}\label{eq:ch3_BF_Weights}
	\mathbf{w}=\frac{\mathbf{s}(\theta_B)}{\sqrt{N}}.
\end{align}

\nomenclature{$x$}{modulated symbol with unit power}
\nomenclature{$\mathbb{E}[\cdot]$}{mean}
\nomenclature{$\mathbf{u}$}{transmitted vector}

Consider the channel model that incorporates both the large-scale path loss and the Rician small-scale fading.
In this thesis, the channel gain vector in (\ref{eq:chp2_ch_gain_vec}) is used.
The large-scale path loss is determined by $d$, and $\mathbf{s}(\theta)$ is related to $\theta$. 
To emphasize the location-based channel, $\mathbf{h}(z)$ is used to denote the MISO channel gain vector between the AP and the user at $z=(d,\theta)$,
\begin{align}\label{eq:chp3_CH_Gain}
	\mathbf{h}(z)=\frac{1}{\sqrt{d^{\beta}}}\big(\sqrt{\frac{K}{K+1}}\mathbf{s}(\theta)+\sqrt{\frac{1}{K+1}}\mathbf{g}\big),
\end{align}
where $\beta$ is the path loss factor and the Rician fading has Rician $K$ factor.
In this thesis, the term `the generalized Rician channel' is used to incorporate both the large-scale path loss and the small-scale fading.

According to (\ref{eq:chp3_TX_Signal})-(\ref{eq:chp3_CH_Gain}), the received signal of the user at $z$ can be obtained by
\begin{align}\label{eq:chp3_RX_Signal}
	r(z)&=\mathbf{h}^T(z)\mathbf{u}+n_W \nonumber \\
	    &=\sqrt{\frac{P_t}{d^{\beta}}}\big(\sqrt{\frac{K}{K+1}}\mathbf{s}^T(\theta)+\sqrt{\frac{1}{K+1}}\mathbf{g}^T\big)\frac{\mathbf{s}^*(\theta_B)}{\sqrt{N}}x+n_W \nonumber \\			
			&=\sqrt{\frac{P_t}{d^{\beta}}}\tilde{h}x+n_W,
\end{align}
where $n_W$ is the AWGN with zero mean and variance $\sigma_n^2$ and $\tilde{h}$ is the equivalent channel factor, which is given by
\begin{align}\label{eq:chp3_h_tilde_Ri}
	\tilde{h}&=\big(\sqrt{\frac{K}{K+1}}\mathbf{s}^T(\theta)+\sqrt{\frac{1}{K+1}}\mathbf{g}^T\big)\frac{\mathbf{s}^*(\theta_B)}{\sqrt{N}} \nonumber \\
	         &=\sqrt{\frac{K}{K+1}}\frac{\mathbf{s}^T(\theta)\mathbf{s}^*(\theta_B)}{\sqrt{N}}+\sqrt{\frac{1}{K+1}}\frac{\mathbf{g}^T\mathbf{s}^*(\theta_B)}{\sqrt{N}} \nonumber \\	    
			     &=\sqrt{\frac{K}{K+1}}G(\theta,\theta_B)+\sqrt{\frac{1}{K+1}}\frac{\mathbf{s}^H(\theta_B)\mathbf{g}}{\sqrt{N}},
\end{align}
where $G(\theta,\theta_B)$ is the array factor when $\theta_{\text{doe}}=\theta_B$.
For the ULA, $G(\theta,\theta_B)$ is obtained by substituting $\theta_{\text{doe}}=\theta_B$ in (\ref{eq:chp2_AF_ULA}) and (\ref{eq:chp2_AF_ULA2}).
\begin{align}
G(\theta,\theta_B) &=\frac{1}{\sqrt{N}}\sum_{i=1}^N e^{jk\Delta d(\sin\theta_B-\sin\theta)(i-1)} \label{eq:chp3_AF_ULA} \\
 &= \frac{1}{\sqrt{N}}\frac{1-e^{jNk\Delta d(\sin\theta_B-\sin\theta)}}{1-e^{jk\Delta d(\sin\theta_B-\sin\theta)}} \label{eq:chp3_AF_ULA2},
\end{align}
where $k=2\pi/\lambda$.
For the 2.4\,GHz Wi-Fi signals, $\lambda=\frac{3\times10^8\text{\,m/s}}{2.4\times10^9\text{\,Hz}}=0.125$\,m.

It is reflection symmetry for the ULA, as shown in Fig.\,\ref{fig:chp3_ULA_symmetric}.
Thus, the following proposition can be deduced.
\begin{proposition}\label{prop:chp3_theta_B_range}
Because of the symmetric geometry, the array patterns for $G(\theta,\theta_\text{doe})$ at $\theta_{\text{doe}}=\pm(\theta_B\pm\pi)$ are of the same shape and are symmetric to each other.
In other words, it suffices to study $G(\theta,\theta_B)$ only in the range of $\theta_B\in[0,\frac{\pi}{2}]$.
\end{proposition}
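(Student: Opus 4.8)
The proposition asserts two related facts about the ULA array factor $G(\theta,\theta_{\text{doe}})$: first, a symmetry relating the patterns at steering angles $\theta_{\text{doe}}=\pm(\theta_B\pm\pi)$, and second, the reduction of the study to $\theta_B\in[0,\frac{\pi}{2}]$. The plan is to work directly from the closed-form expression \eqref{eq:chp2_AF_ULA}, since every feature of the pattern is controlled by the single quantity $\sin\theta_{\text{doe}}-\sin\theta$ appearing in the exponent. The key observation is that the magnitude $|G|$, which determines the shape of the array pattern, depends on $\theta$ and $\theta_{\text{doe}}$ only through the difference $\sin\theta_{\text{doe}}-\sin\theta$. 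Therefore any transformation of the pair $(\theta,\theta_{\text{doe}})$ that preserves (or merely negates) this difference produces an identical (or mirror-image) pattern.

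\textbf{Key steps.}
First I would write, from \eqref{eq:chp2_AF_ULA},
\begin{align}
|G(\theta,\theta_{\text{doe}})| = \frac{1}{\sqrt{N}}\left|\sum_{i=1}^N e^{\,jk\Delta d(\sin\theta_{\text{doe}}-\sin\theta)(i-1)}\right|,
\end{align}
which exhibits the dependence on $u:=\sin\theta_{\text{doe}}-\sin\theta$ alone. Second, I would record the two elementary trigonometric identities $\sin(\pi-\alpha)=\sin\alpha$ and $\sin(-\alpha)=-\sin\alpha$, and use them to track how $\sin\theta_{\text{doe}}$ transforms under the four sign/shift combinations $\theta_{\text{doe}}\mapsto\pm(\theta_B\pm\pi)$. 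Concretely, $\sin(\theta_B+\pi)=-\sin\theta_B$, $\sin(\theta_B-\pi)=-\sin\theta_B$, and $\sin(-(\theta_B\pm\pi))=\sin\theta_B$, so the set of achievable values of $\sin\theta_{\text{doe}}$ over these four choices is just $\{\sin\theta_B,-\sin\theta_B\}$. Third, I would argue that replacing $\sin\theta_{\text{doe}}$ by $-\sin\theta_B$ is equivalent, after the reflection $\theta\mapsto-\theta$ of the observation angle (which sends $\sin\theta\mapsto-\sin\theta$ and hence $u\mapsto-u$), to the original pattern, because $|G|$ depends on $u$ only through $|e^{jk\Delta d\,u(i-1)}|$-type sums that are invariant under $u\mapsto-u$ (the sum and its conjugate have equal modulus). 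This establishes that the four patterns coincide in shape and are pairwise mirror images across the bore-sight axis.

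\textbf{Reduction to $[0,\tfrac{\pi}{2}]$ and the main obstacle.}
Finally, for the range reduction I would note that as $\theta_B$ ranges over $[0,\frac{\pi}{2}]$, $\sin\theta_B$ sweeps the full interval $[0,1]$, and by the symmetry just established every steering angle in $(-\pi,\pi]$ yields a pattern whose shape matches that of some $\theta_B\in[0,\frac{\pi}{2}]$; hence studying this quarter-range loses no information. The routine parts here are the trigonometric bookkeeping, which I would not grind through in full. The main obstacle, and the point deserving care, is to state precisely what ``the same shape'' and ``symmetric to each other'' mean: the claim is genuinely about $|G|$ as a function of the \emph{observation} angle $\theta$ for a \emph{fixed} steering angle, not about the complex-valued $G$ itself (whose phase does change), and the symmetry is a reflection of the pattern about the bore-sight ($x$-axis) rather than a mere relabelling. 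Making the invariance $u\mapsto -u$ correspond to a genuine geometric reflection $\theta\mapsto-\theta$ of the user's angle, and verifying that this is exactly the reflection symmetry depicted in Fig.\,\ref{fig:chp3_ULA_symmetric}, is the step where the argument must be pinned down carefully rather than waved through.
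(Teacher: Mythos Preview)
Your proposal is correct and, in fact, considerably more detailed than what the paper offers. The paper does not prove this proposition formally: it simply observes that the ULA possesses reflection symmetry, points to Fig.~\ref{fig:chp3_ULA_symmetric}, and asserts that the proposition ``can be deduced.'' Your approach---working directly from the closed form \eqref{eq:chp2_AF_ULA}, isolating the dependence on $u=\sin\theta_{\text{doe}}-\sin\theta$, tracking the trigonometric identities, and then matching the sign flip $u\mapsto -u$ to the geometric reflection $\theta\mapsto -\theta$---is the analytical counterpart of the paper's pictorial argument and makes precise exactly the step the paper leaves implicit. Both routes rest on the same underlying fact (the ULA array factor depends on the steering and observation angles only through their sines), but yours spells out why the four sign/shift combinations collapse to $\{\pm\sin\theta_B\}$ and why that suffices for the quarter-range reduction.
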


\begin{figure}[t]
\centering
\includegraphics[scale=1]{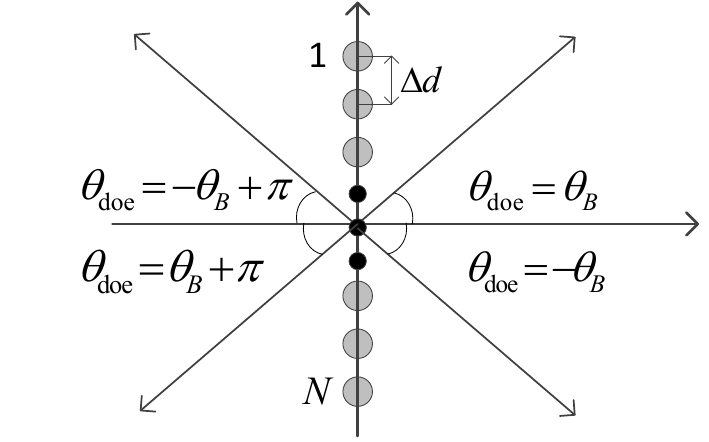}
\caption{An example of ULA with $N$ elements}
\label{fig:chp3_ULA_symmetric}
\end{figure}

\nomenclature{$r$}{received signal}
\nomenclature{$^T$}{transpose}
\nomenclature{$n_W$}{AWGN with zero mean and variance $\sigma_n^2$}
\nomenclature{$\tilde{h}$}{equivalent channel factor}

\subsection{Channel Model and Distribution}
\label{chp3:syst:AF}

Since $\tilde{h}$ is a random variable that depends on $\mathbf{g}$, it is of interest to know the distribution of the amplitude $|\tilde{h}|$.
In (\ref{eq:chp3_h_tilde_Ri}), $\mathbf{s}^H(\theta_B)\mathbf{g}$ is a circularly-symmetric complex Gaussian random variable, because $\mathbf{s}(\theta_B)$ is deterministic for certain Bob's location and is independent to $\mathbf{g}$. 
Similarly, $G(\theta,\theta_B)$ is deterministic as well.
Therefore, $|\tilde{h}|$ is a Rician random variable.
Let $\tilde{h}_{LOS}$ and $\tilde{h}_{NLOS}$ denote the LOS and NLOS components of $\tilde{h}$, respectively.
According to (\ref{eq:chp3_h_tilde_Ri}), they can be written as
\begin{align}
	\tilde{h}_{LOS}&=\sqrt{\frac{K}{K+1}}G(\theta,\theta_B), \\
	\tilde{h}_{NLOS}&=\sqrt{\frac{1}{K+1}}\frac{\mathbf{s}^H(\theta_B)\mathbf{g}}{\sqrt{N}}.
\end{align}
The power of the LOS component $\tilde{\nu}^2$ and the average power of the NLOS component $2\tilde{\sigma}^2$ are
\begin{align}
	\tilde{\nu}^2&=|\tilde{h}_{LOS}|^2= \frac{K}{K+1}G^2(\theta,\theta_B), \\
	2\tilde{\sigma}^2&=\mathbb{E}[|\tilde{h}_{NLOS}|^2]= \frac{1}{N(K+1)}\mathbb{E}[|\mathbf{s}^H(\theta_B)\mathbf{g}|^2]=\frac{1}{K+1}.
\end{align}
The pdf of $|\tilde{h}|$ can be obtained according to (\ref{eq:chp2_rician_pdf}),
\begin{align}\label{eq:chp3_h_tilde_pdf}
	f_{|\tilde{h}|}(x)=\frac{x}{\tilde{\sigma}^2}e^{-\frac{x^2+\tilde{\nu}^2}{2\tilde{\sigma}^2}}I_0(\frac{\tilde{\nu}}{\tilde{\sigma}^2}x),
\end{align}
The Rician $K$ factor for $|\tilde{h}|$ is $\frac{\tilde{\nu}^2}{2\tilde{\sigma}^2}=KG^2(\theta,\theta_B)$ and the total power is $\tilde{\nu}^2+2\tilde{\sigma}^2=\frac{KG^2(\theta,\theta_B)+1}{K+1}$.
Both depend on $G(\theta,\theta_B)$ in (\ref{eq:chp2_AF_ULA}) and (\ref{eq:chp2_AF_ULA2}).

\nomenclature{$_{LOS}$}{line-of-sight component for Rician random variable}
\nomenclature{$_{NLOS}$}{non-line-of-sight component for Rician random variable}

For the case when the channel only contains the LOS path, the Rician fading channel becomes deterministic. 
In (\ref{eq:chp3_h_tilde_Ri}), let $K$ approach the infinity, the equivalent channel $\tilde{h}$ is reduced to
\begin{align}\label{eq:chp3_h_tilde_De}
	\tilde{h}=G(\theta,\theta_B).
\end{align}
For the worst case when the channel does not contain the LOS path, the Rician channel becomes the Rayleigh channel.

Accordingly, the received signal power, denoted by $P_r(z)$, can be computed from (\ref{eq:chp3_RX_Signal}),
\begin{align}\label{eq:chp3_receivedpower}
	P_r(z) =\frac{P_t}{d^{\beta}}|\tilde{h}|^2.
\end{align}
Then, the SNR $\gamma(z)$ can be written as
\begin{align}\label{eq:chp3_SNR}
	\gamma(z)=\frac{P_r(z)}{\sigma_n^2}=\frac{P_t}{\sigma_n^2d^{\beta}}|\tilde{h}|^2.
\end{align}
The channel capacity of the general user located at $z$ can be calculated by
\begin{align}\label{eq:chp3_channelcapacity}
	C(z)=\log_2 [1+\gamma(z)]=\log_2 \Big[1+\frac{P_r(z)}{\sigma_n^2}\Big]=\log_2 \Big[1+\frac{P_t}{\sigma_n^2d^{\beta}}|\tilde{h}|^2\Big].
\end{align}
For convenience, let $C_B=C(z_B)$ and $C_{Ei}=C(z_{Ei})$ denote the channel capacities of Bob and the $i$-th Eve hereinafter.
From (\ref{eq:chp3_receivedpower}) to (\ref{eq:chp3_channelcapacity}), it can be seen that the randomness of $C_{Ei}$ comes from the random location $z_{Ei}$ and the small-scale fading.
Specially, due to the fact that $\tilde{h}$ in (\ref{eq:chp3_h_tilde_Ri}) depends on $G(\theta,\theta_B)$, $G(\theta,\theta_B)$ can be improved by properly designing. 

Given $\theta_B$, the equivalent channel factor for Bob $\tilde{h}_B$ can be calculated by substituting (\ref{eq:chp2_max_gain}) into (\ref{eq:chp3_h_tilde_Ri}),
\begin{align}\label{eq:chp3_h_tilde_Bob}
	\tilde{h}_B=\sqrt{\frac{KN}{K+1}}+\sqrt{\frac{1}{K+1}}\frac{\mathbf{s}^H(\theta_B)\mathbf{g}}{\sqrt{N}}.
\end{align}
$\tilde{h}_B$ has the maximum total power, i.e., $\mathbb{E}[|\tilde{h}_B|^2]=\frac{KN+1}{K+1}$.
Thus, $P_{rB}$ (or $C_B$) is also the maximum value in the range $\theta\in[0,2\pi]$.

\section{SSOP and Upper Bound Derivation}
\label{chp3:metric}

The goal is to provide a reliable and secure transmission for Bob in presence of randomly located passive Eves, by using the transmit beamforming with antenna arrays.
Due to the randomness of the $i$-th Eve's capacity $C_{Ei}$, due to either random location or small-scale fading, it is likely that one or more Eves have a higher channel capacity than the known channel capacity $C_B$.
In this context, while Bob has a reliable transmission from the AP, Eve can correctly receive the whole or part of the signal.
That is referred to as the secrecy outage event.

Because the randomness of $C_{Ei}$ is partially from Eves' random locations, it is natural to exploit the geometric region in the description of the above secrecy outage event.
To this end, \textit{the exposure region (ER) is introduced.
Secrecy outage occurs whenever any Eve randomly appears within the ER.} 
Then, the SSOP is derived based on the ER, and its analytic upper bound is derived, in order to measure the system security level.

\subsection{Exposure Region}
\label{chp3:metric:ER}

The secrecy outage formulation in\,\cite{zhou2011rethinking}, as shown by (\ref{eq:chp2_sop1}) in Section\,\ref{chp2:PhySec:vsp}, is adopted in this thesis.
A secrecy outage event occurs when the perfect secrecy is compromised based upon a reliable transmission for Bob.
Let $R_B$ and $R_s$ be the rate of the transmitted codewords and the rate of the confidential information, respectively.
For fixed $R_B$ and $R_s$, the formulation in (\ref{eq:chp2_sop2}) should be used.
A reliable transmission to Bob can be guaranteed when $C_B\geq R_B$.
Eves' CSI is unknown to the AP and is independent from $C_B$.
The value $C_{Ei}$ can be so large that $C_{Ei}>R_B-R_s$. 
In this case, the secrecy is breached and the secrecy outage event occurs.

The SOP in (\ref{eq:chp2_sop1}) cannot characterize the secrecy outage event for the PPP distributed Eves.
To solve this problem, the ER, denoted by $\Theta$, is defined by the geometric region where Bob faces the secrecy outage event.
Particularly, the geometric region $\Theta$ where $C(z)>R_B-R_s, \exists z=(d,\theta)\in \Theta$, is considered.
The $i$-th Eve will cause secrecy outage, if and only if $z_{Ei}$ randomly appears in $\Theta$.  
Accordingly, $\Theta$ can be represented by
\begin{align}\label{eq:chp3_ERdef}
	\Theta=\{z:\;C(z)>R_B-R_s\}.
\end{align}
In the same time, $C_B\geq R_B$ need to be guaranteed. 

\nomenclature{$\Theta$}{ER}

Substitute (\ref{eq:chp3_channelcapacity}) into (\ref{eq:chp3_ERdef}) and rearrange $(d,\theta)$. $\Theta$ can be transformed into
\begin{align}\label{eq:chp3_erdefinition2}
	\Theta=\{z:\;d<D(\theta)\},
\end{align}
where 
\begin{align}\label{eq:chp3_ERboundary}
	D(\theta)=\Big[{\frac{P_t|\tilde{h}|^2}{\sigma_n^2(2^{R_B-R_S}-1)}}\Big]^{\frac{1}{\beta}}.
\end{align}
$D(\theta)$ is a function regarding to $\theta$ and is regardless of $d$.
In (\ref{eq:chp3_ERboundary}), $\tilde{h}$ can be replaced by that for any channel model. 

\nomenclature{$D(\theta)$}{contour of $\Theta$}

Notice that $D(\theta)$ determines the contour of $\Theta$, which gives a clear geometric meaning.
In polar coordinates, $\Theta$ corresponds an enclosed area that is bounded by $D(\theta)$ and contains the origin point (i.e., the AP).

The shape of $D(\theta)$ (i.e., $\Theta$) is mainly determined by $\tilde{h}$.
Because the equivalent channel factor $\tilde{h}$ is random, $D(\theta)$ is a random variable which is subject to the small-scale fading.
Thus, $\Theta$ is a dynamic region with a shifting boundary.
When the channel is deterministic, $D(\theta)$ is also deterministic.

According to $\tilde{h}$ in (\ref{eq:chp3_h_tilde_Ri}), $D(\theta)$ is highly related to the array factor $G(\theta,\theta_B)$.
Especially when it is the deterministic channel in (\ref{eq:chp3_h_tilde_De}), $\Theta$ is affected by the array pattern.
An example of $\Theta$ is depicted in Fig.\,\ref{fig:chp3_ER_illustration}, where $D(\theta)$ is indicated by the $(R_B-R_s)$-curve.
A reliable transmission is guaranteed for Bob, if Bob is inside the $R_B$-curve. 
\textit{Secrecy outage} occurs if the randomly located Eve is inside $D(\theta)$.
If Eve is located inside the $R_B$-curve, i.e., $C_{Ei}>R_B$, the secrecy capacity is zero.
For random fading channels, the boundary of $\Theta$, i.e., $D(\theta)$, also changes. 
So there will be some variations based on pattern shown in Fig.\,\ref{fig:chp3_ER_illustration}.

\begin{figure}
\centering
\includegraphics[scale=1]{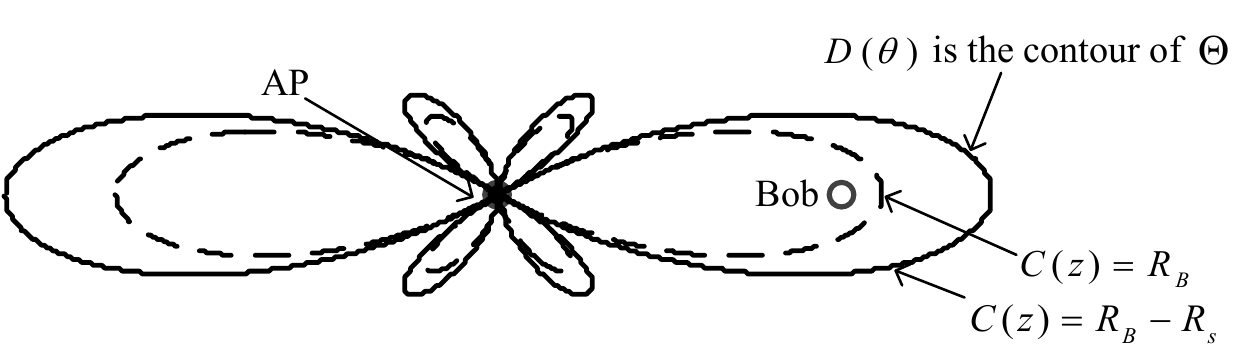}
\caption{Illustration of $\Theta$}
\label{fig:chp3_ER_illustration}
\end{figure}

Now a new expression for the size of $\Theta$ is formulated as one key parameter metric.
To this end, focus only on Eves that are randomly located inside $\Theta$. This is because the perfect secrecy is compromised only when $z_{Ei}\in\Theta$. 
Intuitively, the smaller area of $\Theta$ is, the smaller the number of Eves that are statistically located in $\Theta$. 
This leads to lower secrecy outage based on (\ref{eq:chp3_ERdef}). 
The size of $\Theta$ is denoted by $A$.
In polar coordinates, $A$ can be formulated using (\ref{eq:chp3_ERboundary}),
\begin{align}\label{eq:chp3_A1}
A=\frac{1}{2}\int_0^{2\pi}D^2(\theta)\,\mathrm{d}\theta
 =\frac{1}{2}\int_0^{2\pi}\Big[{\frac{P_t|\tilde{h}|^2}{\sigma_n^2(2^{R_B-R_S}-1)}}\Big]^{\frac{2}{\beta}}\,\mathrm{d}\theta.
\end{align}
$A$ is measured in\,m$^2$ and depends on $\tilde{h}$.


\nomenclature{$A$}{size of $\Theta$}

\subsection{Spatial Secrecy Outage Probability}
\label{chp3:metric:ssop}

Using (\ref{eq:chp3_A1}), the probability of secrecy outage event associated to $\Theta$ can be quantitatively measured.
Any Eve located inside $\Theta$ causes secrecy outage and this is referred to as spatial secrecy outage (SSO) event.
Thus, the probability of the SSO event is equal to the probability that any Eve is inside $\Theta$.

\nomenclature{SSO}{spatial secrecy outage}

Denoted by $p$, the \textit{spatial secrecy outage probability} (SSOP) is defined by the probability that 
any Eve located inside $\Theta$.
$p$ measures the security level of the system in presence of randomly located Eves.
The meaning of the term `spatial' is two fold: it differentiates from the conventional secrecy formulation which does not have a  dynamic geometric implication; it also emphasizes the fact that the secrecy outage is caused by spatially distributed Eves within the dynamic region $\Theta$.

\nomenclature{$p$}{SSOP}

For the PPP-distributed Eves, the probability that $m$ Eves are located inside $\Theta$ (with size $A$) is given by
\begin{align}\label{eq:chp3_PPP}
	\text{Prob}\{m\;\text{Eves in}\;\Theta\}=\frac{(\lambda_eA)^m}{m!}e^{-\lambda_eA},
\end{align}
The SSOP $p$ can be formulated by referring to the `no secrecy outage' event that no Eves are located inside $\Theta$.
Using (\ref{eq:chp3_PPP}), it can be derived that
\begin{align}\label{eq:chp3_SSOP}
	p=1-\text{Prob}\{0\;\text{Eve in}\;\Theta\}=1-e^{-\lambda_eA}.
\end{align}
The smaller the SSOP $p$ is, the less probable the secrecy outage occurs and the more secure the transmission to Bob is.

The secrecy outage formulation in (\ref{eq:chp3_SSOP}) is from the geometric aspect which allows us to compute the probability that no Eve is inside the ER, whereas the geometric concept is weakened in some research and the average secrecy outage is calculated over all PPP distributed Eves' locations\,\cite{zheng2014transmission,zheng2015multi}.

Notice in (\ref{eq:chp3_SSOP}) that there is a positive correlation between $p$ and $A$, which verifies the intuition of `the smaller $\Theta$ is, the lower the secrecy outage' in Section\,\ref{chp3:metric:ER}.
For certain environment, it is reasonable to assume that $\lambda_e$ is a constant. Thus, $p$ solely depends on $A$.

$p$ depends on the equivalent channel factor $\tilde{h}$ via $A$.
Due to the fact that $\tilde{h}$ represents random channel fading, it is more interesting to study the expectation of $p$, which reflects the averaged behavior over a short time period.
The averaged $p$, denoted by $\bar{p}$, can be calculated by
\begin{align}\label{eq:chp3_meanSSOP_Ri_0}
	\bar{p}&=\mathbb{E}_{|\tilde{h}|}[p]=1-\mathbb{E}_{|\tilde{h}|}[e^{-\lambda_eA}].
\end{align}

\nomenclature{$\bar{p}$}{averaged spatial secrecy outage probability for fading channel}

\begin{theorem}\label{th:chp3_ssop}
$\bar{p}$ in (\ref{eq:chp3_meanSSOP_Ri_0}) can be computed by
\begin{align}\label{eq:chp3_meanSSOP_Ri_2}
	\bar{p}&=1-\int_{-\infty}^{\infty}\int_{-\infty}^{\infty} \text{exp}\Big\{-\frac{\lambda_e}{2}c_0^{\frac{2}{\beta}}\int_0^{2\pi}\Big[\frac{KG^2(\theta,\theta_B)}{K+1} \nonumber \\
	& +\frac{x^2+y^2}{K+1}+\frac{2\sqrt{K}G(\theta,\theta_B)}{K+1}x\Big]^{\frac{2}{\beta}}\,\mathrm{d}\theta\Big\} \frac{e^{-(x^2+y^2)}}{\pi} \,\mathrm{d}x\,\mathrm{d}y,
\end{align}
where $\lambda_e$ is the density of Eves, $c_0=\frac{P_t}{\sigma_n^2(2^{R_B-R_S}-1)}$, which is deterministic, $\beta$ is the path loss factor, $K$ is the Rician $K$ factor, $G(\theta,\theta_B)$ is the array factor when the DoE angle is Bob's angle $\theta_B$.
\end{theorem}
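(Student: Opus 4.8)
The plan is to start from the definition $\bar p = 1 - \mathbb{E}_{|\tilde h|}[e^{-\lambda_e A}]$ in (\ref{eq:chp3_meanSSOP_Ri_0}), with $A$ given by (\ref{eq:chp3_A1}), and to reduce the expectation over the $N$-dimensional fading vector $\mathbf{g}$ to an ordinary two-dimensional Gaussian integral. The crucial observation is that in the equivalent channel factor (\ref{eq:chp3_h_tilde_Ri}) the randomness enters only through the scalar $w := \mathbf{s}^H(\theta_B)\mathbf{g}/\sqrt N$, which does \emph{not} depend on $\theta$. Hence, for a single realization of $\mathbf{g}$, the entire boundary $D(\theta)$ of the exposure region --- across all angles $\theta$ --- is governed by one complex random number $w$, so the averaging in (\ref{eq:chp3_meanSSOP_Ri_0}) collapses to an integral over the law of $w$ rather than over an infinite family of per-angle fading variables.

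First I would determine the distribution of $w$. Since every entry of $\mathbf{s}(\theta_B)$ has unit modulus and the entries of $\mathbf{g}$ are i.i.d. $\mathcal{CN}(0,1)$, the scalar $w$ is zero-mean complex Gaussian with variance $\frac1N\sum_i|s_i(\theta_B)|^2 = 1$, i.e. $w\sim\mathcal{CN}(0,1)$, consistent with the identity $\mathbb{E}[|\mathbf{s}^H(\theta_B)\mathbf{g}|^2]=N$ used earlier in computing $2\tilde\sigma^2$. Writing $w = x + jy$ with $x,y$ independent $\mathcal{N}(0,\tfrac12)$ then gives the joint density $\tfrac1\pi e^{-(x^2+y^2)}$, which is exactly the weight appearing in the target formula (\ref{eq:chp3_meanSSOP_Ri_2}).

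Next I would expand the squared magnitude. With $a=\sqrt{K/(K+1)}$ and $b=\sqrt{1/(K+1)}$, (\ref{eq:chp3_h_tilde_Ri}) yields $|\tilde h|^2 = a^2|G|^2 + b^2|w|^2 + 2ab\,\mathrm{Re}(G^*w)$. Substituting $|w|^2 = x^2+y^2$ and $\mathrm{Re}(G^*w)=Gx$ (see below) produces precisely the bracketed integrand of (\ref{eq:chp3_meanSSOP_Ri_2}); the coefficients $a^2=\tfrac{K}{K+1}$, $b^2=\tfrac{1}{K+1}$, $2ab=\tfrac{2\sqrt K}{K+1}$ match term by term. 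Feeding this into $A$ via (\ref{eq:chp3_A1}) and factoring out the deterministic constant $c_0 = P_t/[\sigma_n^2(2^{R_B-R_s}-1)]$ gives $A = \tfrac12 c_0^{2/\beta}\int_0^{2\pi}[\,\cdot\,]^{2/\beta}\mathrm{d}\theta$; rewriting $\bar p = 1 - \int\!\!\int e^{-\lambda_e A}\,\tfrac1\pi e^{-(x^2+y^2)}\,\mathrm{d}x\,\mathrm{d}y$ is then the claimed identity.

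The step that needs care --- and the main obstacle --- is the reduction $\mathrm{Re}(G^*w)=Gx$, because the array factor $G(\theta,\theta_B)$ of (\ref{eq:chp2_AF_ULA}) is in general complex with a $\theta$-dependent phase, so one cannot rotate $w$ to remove the phase uniformly in $\theta$ within a single realization. The clean resolution is that only $|\tilde h|$ is physically meaningful, and placing the array's phase reference at its geometric centre makes $G(\theta,\theta_B)$ real for every $\theta$ simultaneously (for the ULA, $G = N^{-1/2}\sin(N\mu/2)/\sin(\mu/2)$ with $\mu = k\Delta d(\sin\theta_B-\sin\theta)$), without altering either $|G|^2$ or the $\mathcal{CN}(0,1)$ law of $w$. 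Any residual sign of this real $G$ is immaterial, since the Gaussian weight is even in $x$ and the substitution $x\mapsto -x$ leaves the double integral invariant. Once this is justified, the remaining manipulations are the routine insertion of constants and the conversion of the expectation into the $(x,y)$-integral.
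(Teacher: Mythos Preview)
Your proposal is correct and follows essentially the same route as the paper: both reduce the $N$-dimensional fading average to a two-dimensional Gaussian integral over the single scalar $g=\mathbf{s}^H(\theta_B)\mathbf{g}/\sqrt{N}\sim\mathcal{CN}(0,1)$ (this is the content of the paper's Lemma~\ref{le:chp3_h_tilde_squre}), then substitute the resulting decomposition of $|\tilde h|^2$ into $A$ from (\ref{eq:chp3_A1}). You are in fact more careful than the paper on one point: the proof of Lemma~\ref{le:chp3_h_tilde_squre} silently treats $G(\theta,\theta_B)$ as real when writing the cross term as $2\sqrt{K}G\,g_{Re}/(K{+}1)$, whereas you explicitly flag and resolve this by shifting the phase reference to the array centre.
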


\begin{lemma}\label{le:chp3_h_tilde_squre}
$|\tilde{h}|^2$ can be decomposed by
\begin{align}\label{eq:chp3_h_tilde_square}
	|\tilde{h}|^2 =\frac{KG^2(\theta,\theta_B)}{K+1}+\frac{1}{K+1}g_{Re}^2+\frac{1}{K+1}g_{Im}^2+\frac{2\sqrt{K}G(\theta,\theta_B)}{K+1}g_{Re},
\end{align}
where $g_{Re}$ and $g_{Im}$ are the real and imaginary parts of a complex Gaussian random variable $g\sim{CN}(0,1)$.
So, $g_{Re}$ and $g_{Im}$ are jointly normal distributed variables, i.e., $g_{Re}, g_{Im}\sim N(0,\frac{1}{2})$.
\end{lemma}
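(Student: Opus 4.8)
The plan is to expand $|\tilde h|^2$ directly from the expression for $\tilde h$ in (\ref{eq:chp3_h_tilde_Ri}), after first recognising the aggregated fading term as a single standard circularly-symmetric complex Gaussian. I would set $g=\mathbf{s}^H(\theta_B)\mathbf{g}/\sqrt N$ and verify that $g\sim\mathcal{CN}(0,1)$: since $\mathbf{g}=[g_1,\dots,g_N]^T$ has i.i.d.\ $\mathcal{CN}(0,1)$ entries and the coefficients $e^{j\phi_i(\theta_B)}/\sqrt N$ each have modulus $1/\sqrt N$, the quantity $g$ is a linear combination of jointly circularly-symmetric complex Gaussians, hence itself circularly-symmetric complex Gaussian, with variance $\tfrac1N\sum_{i=1}^N|e^{j\phi_i(\theta_B)}|^2=1$. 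This is precisely the computation already recorded for $2\tilde\sigma^2$ in Section~\ref{chp3:syst:AF}, so $g\sim\mathcal{CN}(0,1)$ and, writing $g=g_{Re}+jg_{Im}$, the real and imaginary parts are independent with $g_{Re},g_{Im}\sim N(0,\tfrac12)$, which is the distributional claim at the end of the lemma.

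With $\tilde h=a\,G(\theta,\theta_B)+b\,g$ where $a=\sqrt{K/(K+1)}$ and $b=\sqrt{1/(K+1)}$, I would then expand the squared modulus as
\begin{align}
  |\tilde h|^2=a^2\,|G(\theta,\theta_B)|^2+b^2\,|g|^2+2ab\,\mathrm{Re}\!\left(G^*(\theta,\theta_B)\,g\right),
\end{align}
and substitute $a^2=K/(K+1)$, $b^2=1/(K+1)$, $ab=\sqrt K/(K+1)$ together with $|g|^2=g_{Re}^2+g_{Im}^2$. Matching term by term against (\ref{eq:chp3_h_tilde_square}) then yields the line-of-sight term $KG^2/(K+1)$, the diffuse term $(g_{Re}^2+g_{Im}^2)/(K+1)$, and the cross term $2\sqrt K\,G\,g_{Re}/(K+1)$.

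The one genuine subtlety — and the step I expect to require the most care — is the cross term, since the array factor $G(\theta,\theta_B)$ is in general complex, whereas (\ref{eq:chp3_h_tilde_square}) is written with $G^2$ and $G\,g_{Re}$ as though $G$ were real. The clean way to reconcile this is to invoke the rotational invariance of the circularly-symmetric Gaussian: writing $G=|G|e^{j\psi}$, the variable $e^{-j\psi}g$ has the same distribution as $g$, so in terms of $g'=e^{-j\psi}g$ one has $|\tilde h|^2=a^2|G|^2+b^2|g'|^2+2ab\,|G|\,g'_{Re}$ with $g'\sim\mathcal{CN}(0,1)$. Thus the decomposition holds in distribution with $G$ read as $|G|$ (equivalently $G^2$ read as $|G|^2$), which is exactly the convention already adopted for $\tilde\nu^2$ in Section~\ref{chp3:syst:AF}. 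Because Theorem~\ref{th:chp3_ssop} uses this lemma only through the \emph{distribution} of $|\tilde h|^2$ inside the expectation defining $\bar p$, establishing the identity in distribution is all that is needed; the final integral (\ref{eq:chp3_meanSSOP_Ri_2}) then follows by substituting (\ref{eq:chp3_h_tilde_square}) into $A$ in (\ref{eq:chp3_A1}) and integrating against the joint density $\tfrac1\pi e^{-(x^2+y^2)}$ of $(g_{Re},g_{Im})$.
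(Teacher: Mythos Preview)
Your proposal is correct and follows essentially the same route as the paper: verify that $g=\mathbf{s}^H(\theta_B)\mathbf{g}/\sqrt{N}\sim\mathcal{CN}(0,1)$, split into real and imaginary parts, and expand $|\tilde h|^2$ directly. The paper's proof simply treats $G(\theta,\theta_B)$ as real without comment, so your rotational-invariance step to absorb the phase of $G$ into $g$ is in fact more careful than the original and makes explicit why the decomposition is valid in distribution for the use made of it in Theorem~\ref{th:chp3_ssop}.
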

The following proof of Theorem\,\ref{th:chp3_ssop} requires Lemma\,\ref{le:chp3_h_tilde_squre}, the proof of which is given in Appendix\,\ref{appdx:bessel:nmve}.
\begin{proof}

First, substituting $c_0$ into (\ref{eq:chp3_A1}), $A$ in (\ref{eq:chp3_meanSSOP_Ri_0}) can be simplified into
\begin{align}\label{eq:chp3_A2}
A=\frac{1}{2}\int_0^{2\pi}(c_0|\tilde{h}|^2)^{\frac{2}{\beta}}\,\mathrm{d}\theta.
\end{align}

$\bar{p}$ in (\ref{eq:chp3_meanSSOP_Ri_0}) cannot be directly computed using the pdf of $|\tilde{h}|$ in (\ref{eq:chp3_h_tilde_pdf}) and (\ref{eq:chp3_A2}),
\begin{align}\label{eq:chp3_meanSSOP_Ri_1}
	\bar{p}&=1-\int_0^{\infty} e^{-\lambda_eA} f_{|\tilde{h}|}(x) \,\mathrm{d}x \nonumber \\
	&=1-\int_0^{\infty} \text{exp}\Big[-\frac{\lambda_e}{2}\int_0^{2\pi}(c_0x^2)^{\frac{2}{\beta}}\,\mathrm{d}\theta\Big] 
	\frac{x}{\tilde{\sigma}^2}e^{-\frac{x^2+\tilde{\nu}^2}{2\tilde{\sigma}^2}}I_0(\frac{\tilde{\nu}}{\tilde{\sigma}^2}x),
\end{align}
where $\tilde{\nu}^2=\frac{KG^2(\theta,\theta_B)}{K+1}$, which depends on $\theta$. 
It means that at different angle $\theta$, $\tilde{h}$ follows the Rician fading with different Rician $K$-factor and total power.

Instead, $\tilde{h}$ for each user at different location $(d,\theta)$ experiences an i.i.d. circularly-symmetric complex Gaussian distribution.
According to Lemma\,\ref{le:chp3_h_tilde_squre}, $\bar{p}$ can be calculated by
\begin{align}\label{eq:chp3_meanSSOP_Ri_temp}
	\bar{p}&=\mathbb{E}_{g_{Re},g_{Im}}[p]=1-\int_{-\infty}^{\infty}\int_{-\infty}^{\infty} \text{exp}\Big\{-\frac{\lambda_e}{2}c_0^{\frac{2}{\beta}}\int_0^{2\pi}\Big[\frac{KG^2(\theta,\theta_B)}{K+1} \nonumber \\
	& +\frac{1}{K+1}x^2+\frac{1}{K+1}y^2+\frac{2\sqrt{K}G(\theta,\theta_B)}{K+1}x\Big]^{\frac{2}{\beta}}\,\mathrm{d}\theta\Big\} f_{g_{Re}}(x)f_{g_{Im}}(y) \,\mathrm{d}x\,\mathrm{d}y.
\end{align}
For the normal distribution,
\begin{align}
	 f_{g_{Re}}(x)&=\frac{1}{\sqrt{\pi}}e^{-x^2} \label{eq:chp3_normal_pdf_x},\\
	 f_{g_{Im}}(y)&=\frac{1}{\sqrt{\pi}}e^{-y^2} \label{eq:chp3_normal_pdf_y}.
\end{align}
(\ref{eq:chp3_meanSSOP_Ri_2}) can be obtained by substituting (\ref{eq:chp3_normal_pdf_x}) and (\ref{eq:chp3_normal_pdf_y}) into (\ref{eq:chp3_meanSSOP_Ri_temp}).
Thus, the proof is completed.
\end{proof}

\begin{theorem}\label{th:chp3_p_De}
For the special case when $K\to\infty$, i.e., the channel is deterministic, $\bar{p}$ in (\ref{eq:chp3_meanSSOP_Ri_2}) can be simplified into
\begin{align}\label{eq:chp3_SSOP_De}
	\bar{p}=1-\text{exp}\Big\{-\frac{\lambda_e}{2}c_0^{\frac{2}{\beta}}\int_0^{2\pi}[G^2(\theta,\theta_B)]^{\frac{2}{\beta}}\,\mathrm{d}\theta\Big\}.
\end{align}
\end{theorem}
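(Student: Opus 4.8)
The plan is to evaluate the limit $K\to\infty$ directly inside the double integral of (\ref{eq:chp3_meanSSOP_Ri_2}). First I would track how each of the three coefficients in the bracketed integrand behaves as $K$ grows: clearly $\frac{K}{K+1}\to 1$, $\frac{1}{K+1}\to 0$, and $\frac{2\sqrt{K}}{K+1}=\frac{2}{\sqrt{K}}\cdot\frac{K}{K+1}\to 0$. Hence the inner integrand
\begin{align}
\Big[\frac{KG^2(\theta,\theta_B)}{K+1}+\frac{x^2+y^2}{K+1}+\frac{2\sqrt{K}G(\theta,\theta_B)}{K+1}x\Big]^{\frac{2}{\beta}}
\end{align}
converges pointwise to $[G^2(\theta,\theta_B)]^{\frac{2}{\beta}}$, which no longer depends on the integration variables $x$ and $y$. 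The physical content is that in this regime the equivalent channel factor collapses to the deterministic $\tilde{h}=G(\theta,\theta_B)$ of (\ref{eq:chp3_h_tilde_De}), so the randomness disappears.

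Once the limiting integrand is independent of $(x,y)$, the entire exponential factor becomes a constant with respect to $x$ and $y$, and I would pull it outside the double Gaussian integral. The residual factor $\int_{-\infty}^{\infty}\int_{-\infty}^{\infty}\frac{e^{-(x^2+y^2)}}{\pi}\,\mathrm{d}x\,\mathrm{d}y$ factorises into the product of the two normal densities (\ref{eq:chp3_normal_pdf_x}) and (\ref{eq:chp3_normal_pdf_y}), each integrating to $1$, so it equals $1$. What remains is exactly
\begin{align}
\bar{p}=1-\exp\Big\{-\frac{\lambda_e}{2}c_0^{\frac{2}{\beta}}\int_0^{2\pi}[G^2(\theta,\theta_B)]^{\frac{2}{\beta}}\,\mathrm{d}\theta\Big\},
\end{align}
which is (\ref{eq:chp3_SSOP_De}).

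The hard part will be rigorously justifying that the limit $K\to\infty$ may be exchanged with the (nested) integrals. I would argue in two layers of dominated convergence. For the inner $\theta$-integral, with $(x,y)$ fixed, the bracketed term is bounded above by $G^2(\theta,\theta_B)+(x^2+y^2)+2|G(\theta,\theta_B)|\,|x|$ for all $K\ge 1$ (using $\frac{2\sqrt{K}}{K+1}\le 2$), and since $G(\theta,\theta_B)$ is bounded and continuous on $[0,2\pi]$ this is an integrable dominating function, so the exponent converges. For the outer integral over $(x,y)$, the exponential factor is at most $1$ for every $K$, $x$, $y$, so the Gaussian weight $\frac{1}{\pi}e^{-(x^2+y^2)}$ serves as an integrable dominating function independent of $K$, and the limit passes through.

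As a cross-check I would note the alternative, more direct route afforded by (\ref{eq:chp3_h_tilde_De}): since $\tilde{h}=G(\theta,\theta_B)$ is deterministic when $K\to\infty$, the region size $A$ in (\ref{eq:chp3_A1}) is deterministic, the expectation in (\ref{eq:chp3_meanSSOP_Ri_0}) is trivial, and $\bar{p}=p=1-e^{-\lambda_e A}$ with $A=\frac{1}{2}\int_0^{2\pi}(c_0\,G^2(\theta,\theta_B))^{\frac{2}{\beta}}\,\mathrm{d}\theta$, again yielding (\ref{eq:chp3_SSOP_De}). This confirms the formal limit and shows the two viewpoints agree.
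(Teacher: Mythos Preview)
Your proposal is correct and follows essentially the same approach as the paper: compute the limits of the three coefficients, observe the inner integrand becomes $[G^2(\theta,\theta_B)]^{2/\beta}$ independent of $(x,y)$, pull the exponential outside, and use that the Gaussian weight integrates to one. The paper handles the limit--integral exchange by truncating $(x,y)$ to a large box $[-Q,Q]^2$ and invoking the approximation $\int_{-Q}^Q f_{g_{Re}}(x)\,\mathrm{d}x\approx 1$, whereas your two-layer dominated-convergence argument (with the Gaussian weight as the outer dominating function) is the cleaner and fully rigorous version of the same step; your direct cross-check via (\ref{eq:chp3_h_tilde_De}) and (\ref{eq:chp3_meanSSOP_Ri_0}) is a nice addition not present in the paper.
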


\begin{proof}
$\lim_{K\to\infty}\frac{K}{K+1}=1$, $\lim_{K\to\infty}\frac{1}{K+1}=0$ and $\lim_{K\to\infty}\frac{\sqrt{K}}{K+1}=0$.
For simplicity, assume that $x$, $y$ take value from a finite range $[-Q,Q]$, where $Q$ a sufficiently large positive real number that satisfies $\int_{-Q}^Q f_{g_{Re}}(x)\,\mathrm{d}x\approx 1$.
For example, when $Q=3$, $\int_{-Q}^Q f_{g_{Re}}(x)\,\mathrm{d}x=0.9999779$.
Thus, when $K\to\infty$, $\bar{p}$ can be approximated by
\begin{align}
	\lim_{K\to\infty}\bar{p}&\approx1-\int_{-Q}^{Q}\int_{-Q}^{Q} \text{exp}\Big\{-\frac{\lambda_e}{2}c_0^{\frac{2}{\beta}}\int_0^{2\pi}[G^2(\theta,\theta_B)]^{\frac{2}{\beta}}\,\mathrm{d}\theta\Big\} \frac{e^{-(x^2+y^2)}}{\pi} \,\mathrm{d}x\,\mathrm{d}y \nonumber \\
	&=1-\text{exp}\Big\{-\frac{\lambda_e}{2}c_0^{\frac{2}{\beta}}\int_0^{2\pi}[G^2(\theta,\theta_B)]^{\frac{2}{\beta}}\,\mathrm{d}\theta\Big\}\int_{-Q}^{Q}\int_{-Q}^{Q}  \frac{e^{-(x^2+y^2)}}{\pi} \,\mathrm{d}x\,\mathrm{d}y \nonumber \\
	&=1-\text{exp}\Big\{-\frac{\lambda_e}{2}c_0^{\frac{2}{\beta}}\int_0^{2\pi}[G^2(\theta,\theta_B)]^{\frac{2}{\beta}}\,\mathrm{d}\theta\Big\}.
\end{align}
\end{proof}


\nomenclature{$Q$}{sufficiently large positive value for certain distribution}

\begin{theorem}\label{th:chp3_p_Ra}
For the special case when $K=0$, i.e., the Rayleigh channel, $\bar{p}$ in (\ref{eq:chp3_meanSSOP_Ri_2}) can be simplified into
\begin{align}\label{eq:chp3_meanSSOP_Ra}
	\lim_{K\to 0}\bar{p}&\approx 1-\int_{-Q}^{Q}\int_{-Q}^{Q} \text{exp}\Big\{-\frac{\lambda_e}{2}c_0^{\frac{2}{\beta}}\int_0^{2\pi}(x^2+y^2)^{\frac{2}{\beta}}\,\mathrm{d}\theta\Big\} \frac{e^{-(x^2+y^2)}}{\pi} \,\mathrm{d}x\,\mathrm{d}y \nonumber \\
	&=1-\int_{-Q}^{Q}\int_{-Q}^{Q} \text{exp}\Big\{-\lambda_e\pi c_0^{\frac{2}{\beta}}(x^2+y^2)^{\frac{2}{\beta}}\Big\} \frac{e^{-(x^2+y^2)}}{\pi} \,\mathrm{d}x\,\mathrm{d}y.
\end{align}
\end{theorem}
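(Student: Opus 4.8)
The plan is to start from the general expression for $\bar{p}$ in (\ref{eq:chp3_meanSSOP_Ri_2}) and pass to the limit $K\to 0$ inside the double integral, mirroring the argument already used in the proof of Theorem\,\ref{th:chp3_p_De}. The key observation is that the entire $K$-dependence of the integrand is confined to the three coefficients $\frac{K}{K+1}$, $\frac{1}{K+1}$ and $\frac{\sqrt{K}}{K+1}$, which multiply $G^2(\theta,\theta_B)$, $x^2+y^2$ and $G(\theta,\theta_B)x$ respectively inside the bracket raised to the power $\frac{2}{\beta}$. Once these coefficients are evaluated at the limit, the bracket simplifies dramatically.

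First I would compute the elementary limits $\lim_{K\to 0}\frac{K}{K+1}=0$, $\lim_{K\to 0}\frac{1}{K+1}=1$ and $\lim_{K\to 0}\frac{\sqrt{K}}{K+1}=0$. Substituting these, the bracketed quantity collapses to $(x^2+y^2)^{\frac{2}{\beta}}$, which is independent of the integration variable $\theta$. Consequently the inner integral over $\theta$ merely contributes a factor of $2\pi$, i.e. $\int_0^{2\pi}(x^2+y^2)^{\frac{2}{\beta}}\,\mathrm{d}\theta = 2\pi(x^2+y^2)^{\frac{2}{\beta}}$, so the exponent $-\frac{\lambda_e}{2}c_0^{\frac{2}{\beta}}\cdot 2\pi(x^2+y^2)^{\frac{2}{\beta}}$ reduces to $-\lambda_e\pi c_0^{\frac{2}{\beta}}(x^2+y^2)^{\frac{2}{\beta}}$. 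This step turns the first line of the claimed identity into the second directly, and is purely algebraic.

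To make the interchange of the limit and the double integral legitimate --- which is the only non-routine point --- I would reuse the truncation device from Theorem\,\ref{th:chp3_p_De}: restrict $x$ and $y$ to a finite window $[-Q,Q]$ chosen so that $\int_{-Q}^{Q} f_{g_{Re}}(x)\,\mathrm{d}x\approx 1$, which is exactly why the statement carries an $\approx$ rather than an equality. On this compact domain the exponential factor is uniformly bounded by $1$ and the Gaussian weight $\frac{e^{-(x^2+y^2)}}{\pi}$ is integrable, so the pointwise limit can be carried under the integral sign by dominated convergence. The main obstacle is therefore not the calculation but this justification of passing $K\to 0$ through the integral; the truncation to $[-Q,Q]$ is precisely what sidesteps it, at the cost of the approximate equality. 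Everything after that is the substitution described above, completing the proof.
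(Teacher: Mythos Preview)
Your proposal is correct and follows precisely the approach the paper intends: the paper does not spell out a separate proof for Theorem\,\ref{th:chp3_p_Ra} but leaves it as an immediate parallel to the proof of Theorem\,\ref{th:chp3_p_De}, using the same truncation to $[-Q,Q]$ and the limits of the three $K$-dependent coefficients as $K\to 0$. Your additional remark that the bracket becomes $\theta$-independent, so the inner integral contributes the factor $2\pi$ and yields the second line, is exactly the simplification the paper records.
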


For the deterministic channel, $\bar{p}$ is mainly decided by $G(\theta,\theta_B)$.
For the Rayleigh channel,  $\bar{p}$ in (\ref{eq:chp3_meanSSOP_Ra}) does not contain $G(\theta,\theta_B)$, because there is no LOS component in Rayleigh fading channel.

Note that $\bar{p}$ can be simulated by Monte-Carlo method based on (\ref{eq:chp3_meanSSOP_Ri_0}) and numerically calculated based on (\ref{eq:chp3_meanSSOP_Ri_2}).
However, these methods do not provide any analytic insights; thus the impact of certain parameters, such as $N$, $\theta_B$, $K$ and $\beta$, cannot be understood.
In order to investigate the impact of these parameters, in the next section, the analytic expression for the upper bound of $\bar{p}$ will be derived to facilitate further theoretical analysis.

\subsection{Upper Bound of SSOP}
\label{chp3:metric:bounds}

To get the analytic expression for the upper bound, consider two major obstacles to obtain the analytic expression for $\bar{p}$.
First, let $X_{\theta}=c_0|\tilde{h}|^2$.
Then (\ref{eq:chp3_A2}) can be rewritten as
\begin{align}\label{eq:chp3_A3}
A=\frac{1}{2}\int_0^{2\pi}X_{\theta}^{\frac{2}{\beta}}\,\mathrm{d}\theta.
\end{align}
$X_{\theta}$ contains the array factor $G(\theta,\theta_B)$, which makes the integral very difficult to solve when $\beta>2$.
This is true for both deterministic and fading channels.
The other obstacle is that $\mathbb{E}[e^{-\lambda_eA}]$ in (\ref{eq:chp3_meanSSOP_Ri_0}) is difficult to obtain due to the composite array factor and the Rician fading channels.

\nomenclature{$X_{\theta}$}{random variable depending on $\theta$}

The idea to overcome the aforementioned obstacles is that while the direct solution to $\mathbb{E}[e^{-\lambda_eA}]$ is difficult, it is possible to obtain the moments of $|\tilde{h}|$.
Using Jensen's Inequality, the upper bound for $\bar{p}$, denoted by $\bar{p}_{up}$, can be obtained via the moments of $|\tilde{h}|$.
The following two inequalities based on Jensen's Inequality are used.

\begin{lemma}\label{le:chp3_jensensinequality}
Because $e^{(\cdot)}$ is a convex function, according to Jensen's inequality, 
\begin{align}\label{eq:chp3_JI_1}
	\mathbb{E}[e^X]\geq e^{\mathbb{E}[X]},
\end{align}
where $X$ is a random variable.
The equality holds if and only if $X$ is a deterministic value.

\nomenclature{$\bar{p}_{up}$}{upper bound of $\bar{p}$}

On the other hand, $(\cdot)^{\frac{2}{\beta}}$ is a concave function, when $\beta>2$.
According to Jensen's inequality, 
\begin{align}\label{eq:chp3_JI_2}
  \mathbb{E}[X^{\frac{2}{\beta}}]\leq (\mathbb{E}[X])^{\frac{2}{\beta}}.
\end{align}
The equality holds when $\beta=2$ for any $X$.
\end{lemma}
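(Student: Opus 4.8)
The plan is to invoke Jensen's inequality directly, so the work reduces to two routine checks: (i) verifying the relevant convexity/concavity of the two functions involved, and (ii) identifying precisely when each inequality becomes an equality. Recall that Jensen's inequality states that for a convex function $\varphi$ and an integrable random variable $X$ one has $\varphi(\mathbb{E}[X])\le\mathbb{E}[\varphi(X)]$, with the inequality reversed when $\varphi$ is concave; moreover, when $\varphi$ is strictly convex (respectively strictly concave) the inequality is strict unless $X$ is almost surely constant. Both bounds in the lemma are special cases of this principle.

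First I would establish (\ref{eq:chp3_JI_1}). Taking $\varphi(x)=e^x$, its second derivative $\varphi''(x)=e^x>0$ for all $x$, so $\varphi$ is strictly convex on $\mathbb{R}$. Applying Jensen's inequality yields $e^{\mathbb{E}[X]}\le\mathbb{E}[e^X]$, which is exactly (\ref{eq:chp3_JI_1}). For the equality condition, strict convexity excludes any affine segment on the support of $X$, so equality forces $X$ to be degenerate, i.e.\ a deterministic value; conversely, if $X$ is deterministic both sides coincide trivially, which gives the stated ``if and only if''.

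Next I would establish (\ref{eq:chp3_JI_2}). Here the relevant random variable is nonnegative, since in the intended application $X=c_0|\tilde{h}|^2\ge 0$ with $c_0>0$, so it suffices to work on the domain $[0,\infty)$. Taking $\varphi(x)=x^{2/\beta}$ with $\beta>2$, the exponent satisfies $p:=2/\beta\in(0,1)$, and $\varphi''(x)=p(p-1)x^{p-2}<0$ for $x>0$, so $\varphi$ is concave on $[0,\infty)$. Jensen's inequality for concave functions then gives $\mathbb{E}[X^{2/\beta}]\le(\mathbb{E}[X])^{2/\beta}$, which is (\ref{eq:chp3_JI_2}). For the equality case, when $\beta=2$ we have $p=1$ and $\varphi(x)=x$ is affine, so both sides reduce to $\mathbb{E}[X]$ and equality holds for every $X$.

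Since each part is immediate once the sign of $\varphi''$ is computed, the only points demanding any care are the equality analysis and the implicit integrability assumptions. I would note that $\mathbb{E}[X]$, $\mathbb{E}[e^X]$ and $\mathbb{E}[X^{2/\beta}]$ are all finite for the bounded, nonnegative quantities arising in the subsequent derivation of $\bar{p}_{up}$, so Jensen's inequality applies without further qualification. The main (mild) obstacle is thus not a calculation but the bookkeeping of the two distinct equality regimes: strict convexity of $e^x$ forcing degeneracy of $X$ in the first bound, versus the affine boundary case $\beta=2$ trivializing the second bound.
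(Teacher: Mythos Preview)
Your proposal is correct and matches the paper's treatment: the paper does not give a separate proof of this lemma but simply states it as a direct consequence of Jensen's inequality, and your verification via the sign of the second derivative together with the standard equality conditions is exactly the intended justification.
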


\begin{theorem}
The upper bound $\bar{p}_{up}$ can be expressed by
\begin{align}\label{eq:chp3_meanSSOP_up_2}
	\bar{p}_{up}= 1-\text{exp}\Big\{-\lambda_e\pi\Big[\frac{c_0K}{2\pi(K+1)}\int_0^{2\pi}G^2(\theta,\theta_B)\,\mathrm{d}\theta+\frac{c_0}{K+1}\Big]^{\frac{2}{\beta}}\Big\}.
\end{align}
\end{theorem}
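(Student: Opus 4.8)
The plan is to bound $\bar{p}=1-\mathbb{E}_{|\tilde{h}|}[e^{-\lambda_e A}]$ from (\ref{eq:chp3_meanSSOP_Ri_0}) by successively applying the two Jensen inequalities collected in Lemma \ref{le:chp3_jensensinequality}, reducing everything to a single second moment of $|\tilde{h}|$. First I would apply the convexity bound (\ref{eq:chp3_JI_1}) to the random variable $-\lambda_e A$, which gives $\mathbb{E}[e^{-\lambda_e A}]\ge e^{-\lambda_e\mathbb{E}[A]}$ and hence $\bar{p}\le 1-e^{-\lambda_e\mathbb{E}[A]}$. Since $t\mapsto e^{-\lambda_e t}$ is decreasing, it then suffices to produce an upper bound on $\mathbb{E}[A]$: any such bound inserted into the exponent only increases $1-e^{-\lambda_e(\cdot)}$, so the chain of inequalities runs in the correct direction.

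Next I would estimate $\mathbb{E}[A]$ using the concavity bound (\ref{eq:chp3_JI_2}), applied twice. Writing $A=\frac{1}{2}\int_0^{2\pi}X_\theta^{2/\beta}\,\mathrm{d}\theta$ with $X_\theta=c_0|\tilde{h}|^2$ as in (\ref{eq:chp3_A3}), Fubini lets me swap the fading expectation with the angular integral, and a first application of (\ref{eq:chp3_JI_2}) over the fading law at each fixed $\theta$ yields $\mathbb{E}[X_\theta^{2/\beta}]\le(\mathbb{E}[X_\theta])^{2/\beta}$. The second application treats the normalized angular measure $\mathrm{d}\theta/2\pi$ as a probability measure: rewriting $\frac{1}{2}\int_0^{2\pi}(\cdot)^{2/\beta}\,\mathrm{d}\theta=\pi\,\mathbb{E}_\theta[(\cdot)^{2/\beta}]$ and invoking concavity once more pulls the exponent $2/\beta$ outside the angular average, giving $\mathbb{E}[A]\le\pi\big(\frac{1}{2\pi}\int_0^{2\pi}\mathbb{E}[X_\theta]\,\mathrm{d}\theta\big)^{2/\beta}$.

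The remaining computation is the second moment. Using Lemma \ref{le:chp3_h_tilde_squre} together with $g_{Re},g_{Im}\sim N(0,\frac{1}{2})$, so that $\mathbb{E}[g_{Re}]=0$ and $\mathbb{E}[g_{Re}^2]=\mathbb{E}[g_{Im}^2]=\frac{1}{2}$, I would read off $\mathbb{E}[X_\theta]=c_0\mathbb{E}[|\tilde{h}|^2]=c_0\big(\frac{KG^2(\theta,\theta_B)}{K+1}+\frac{1}{K+1}\big)$. Averaging this over $\theta$ and multiplying by $\pi$ produces exactly $\pi\big[\frac{c_0K}{2\pi(K+1)}\int_0^{2\pi}G^2(\theta,\theta_B)\,\mathrm{d}\theta+\frac{c_0}{K+1}\big]^{2/\beta}$, and substituting this into $1-e^{-\lambda_e\mathbb{E}[A]}$ delivers the claimed $\bar{p}_{up}$ in (\ref{eq:chp3_meanSSOP_up_2}).

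The main obstacle I anticipate is bookkeeping rather than any deep step: the two Jensen inequalities act on different sources of randomness (the Gaussian fading and the uniform angle), and one must verify that composing them with the decreasing exponential preserves the inequality direction at every stage. A secondary point worth checking is that the bound genuinely requires $\beta>2$, since (\ref{eq:chp3_JI_2}) rests on the concavity of $(\cdot)^{2/\beta}$; at $\beta=2$ both inequalities degenerate to equalities, which also indicates precisely when $\bar{p}_{up}$ is expected to be tight.
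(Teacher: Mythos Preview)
Your proposal is correct and follows essentially the same route as the paper: first apply Jensen's inequality for the convex exponential to pass from $\bar p$ to $1-e^{-\lambda_e\mathbb{E}[A]}$, then apply the concavity of $(\cdot)^{2/\beta}$ twice---once over the fading law and once over the uniform angular measure---and finish by evaluating $\mathbb{E}[|\tilde h|^2]=\frac{KG^2(\theta,\theta_B)+1}{K+1}$. The only cosmetic difference is that the paper applies the two concavity steps in the opposite order (angle first via (\ref{eq:chp3_JI_theta}), then fading via (\ref{eq:chp3_inequality_jvje})); since both orderings land on $\pi\big(\frac{1}{2\pi}\int_0^{2\pi}\mathbb{E}[X_\theta]\,\mathrm{d}\theta\big)^{2/\beta}$, the swap is immaterial, and your closing remark on tightness should also keep in mind that the convexity step (\ref{eq:chp3_JI_1}) remains strict unless $K\to\infty$, as recorded in Proposition~\ref{prop:chp3_SSOP_up_analysis}.
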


\begin{proof}
According to (\ref{eq:chp3_meanSSOP_Ri_0}) and (\ref{eq:chp3_JI_1}), it can be derived that
\begin{align}\label{eq:chp3_meanSSOP_up_inequality}
	 \bar{p}=1-\mathbb{E}_{|\tilde{h}|}[e^{-\lambda_eA}]\leq 1-e^{-\lambda_e\mathbb{E}_{|\tilde{h}|}[A]}.
\end{align}
Notice that $A$ depends on random variable $\tilde{h}$ and is not constant, except for $K=\infty$.
Thus, the equality holds only for the deterministic channel.

To solve (\ref{eq:chp3_meanSSOP_up_inequality}), assume that $\theta\sim \mathcal{U}(0,2\pi)$.
According to (\ref{eq:chp3_A3}), $A$ in (\ref{eq:chp3_meanSSOP_up_inequality}) can be converted into
\begin{align}\label{eq:chp3_meanSSOP_up_inequality2}
A=2\pi\frac{1}{2}\int_0^{2\pi}\frac{1}{2\pi}X_{\theta}^{\frac{2}{\beta}}\,\mathrm{d}\theta
 =\pi\mathbb{E}_{\theta}[X_{\theta}^{\frac{2}{\beta}}].
\end{align}
According to (\ref{eq:chp3_JI_2}), (\ref{eq:chp3_meanSSOP_up_inequality2}) can be bounded by
\begin{align}\label{eq:chp3_JI_theta}
A\leq \pi(\mathbb{E}_{\theta}[X_{\theta}])^{\frac{2}{\beta}}
 =\pi \Big(\int_0^{2\pi}\frac{1}{2\pi}X_{\theta}\,\mathrm{d}\theta \Big)^{\frac{2}{\beta}}.
\end{align}
In the inequality, the equality holds when $\beta=2$ for any $K$.

\nomenclature{$\mathcal{U}$}{uniform distribution}

According to (\ref{eq:chp3_meanSSOP_up_inequality}) and (\ref{eq:chp3_JI_theta}), it can be derived that
\begin{align}\label{eq:chp3_meanSSOP_up_inequality3}
	\mathbb{E}_{|\tilde{h}|}[A] 
	\leq \pi\mathbb{E}_{|\tilde{h}|}\Big[\Big(\int_0^{2\pi}\frac{1}{2\pi}X_{\theta}\,\mathrm{d}\theta\Big)^{\frac{2}{\beta}}\Big].
\end{align}
Then applying (\ref{eq:chp3_JI_2}) and (\ref{eq:chp3_meanSSOP_up_inequality3}), it can be derived that
\begin{align}\label{eq:chp3_inequality_jvje}
	\pi\mathbb{E}_{|\tilde{h}|}\Big[\Big(\int_0^{2\pi}\frac{1}{2\pi}X_{\theta}\,\mathrm{d}\theta\Big)^{\frac{2}{\beta}}\Big]
	\leq \pi\Big(\mathbb{E}_{|\tilde{h}|}\Big[\int_0^{2\pi}\frac{1}{2\pi}X_{\theta}\,\mathrm{d}\theta\Big]\Big)^{\frac{2}{\beta}}. 
\end{align}
Exchanging the integral and $\mathbb{E}_{|\tilde{h}|}$, then substituting $X_{\theta}=c_0|\tilde{h}|^2$, it can be derived that
\begin{align}\label{eq:chp3_meanA_up}
	\mathbb{E}_{|\tilde{h}|}[A] \leq \pi\Big(\frac{c_0}{2\pi}\int_0^{2\pi}\mathbb{E}_{|\tilde{h}|}[|\tilde{h}|^2]\,\mathrm{d}\theta\Big)^{\frac{2}{\beta}}.
\end{align}
Notice that when $\beta=2$, the equality holds.

Apply (\ref{eq:chp3_meanA_up}) to (\ref{eq:chp3_meanSSOP_up_inequality}) then obtain
\begin{align}
	\bar{p}
	\leq 1-e^{-\lambda_e\mathbb{E}_{|\tilde{h}|}[A]} 
	\leq 1-\text{exp}\Big[-\lambda_e\pi\Big(\frac{c_0}{2\pi}\int_0^{2\pi}\mathbb{E}_{|\tilde{h}|}[|\tilde{h}|^2]\,\mathrm{d}\theta\Big)^{\frac{2}{\beta}}\Big].
\end{align}
The upper bound $\bar{p}_{up}$ can be expressed by
\begin{align}\label{eq:chp3_meanSSOP_up_1}
	\bar{p}_{up}=1-\text{exp}\Big[-\lambda_e\pi\Big(\frac{c_0}{2\pi}\int_0^{2\pi}\mathbb{E}_{|\tilde{h}|}[|\tilde{h}|^2]\,\mathrm{d}\theta\Big)^{\frac{2}{\beta}}\Big].
\end{align}
As mentioned in Section\,\ref{chp3:syst:AF}, $|\tilde{h}|$ is a Rician random variable with total power $\frac{KG^2(\theta,\theta_B)+1}{K+1}$.
Thus, $\mathbb{E}_{|\tilde{h}|}[|\tilde{h}|^2]=\frac{KG^2(\theta,\theta_B)+1}{K+1}$.
Substituting the previous result into (\ref{eq:chp3_meanSSOP_up_1}), it can be derived that
\begin{align}
	\bar{p}_{up}&=1-\text{exp}\Big\{-\lambda_e\pi\Big[\frac{c_0}{2\pi}\int_0^{2\pi}\frac{KG^2(\theta,\theta_B)+1}{K+1}\,\mathrm{d}\theta\Big]^{\frac{2}{\beta}}\Big\} \nonumber \\
	&= 1-\text{exp}\Big\{-\lambda_e\pi\Big[\frac{c_0K}{2\pi(K+1)}\int_0^{2\pi}G^2(\theta,\theta_B)\,\mathrm{d}\theta+\frac{c_0}{K+1}\Big]^{\frac{2}{\beta}}\Big\}.
\end{align}
Thus, the proof is completed.
\end{proof}

\begin{proposition}\label{prop:chp3_SSOP_up_analysis}
For the determinist channel, (\ref{eq:chp3_meanSSOP_up_inequality}) reduces to an equality; thus, the upper bound is tighter when $\beta=2$ than that when $\beta>2$.
When $\beta=2$, (\ref{eq:chp3_meanA_up}) reduces to an equality;
thus, the upper bound is tighter for the deterministic channel than that for the fading channel.
Only for the deterministic channel when $\beta=2$, the equality holds for $\bar{p}_{up}=\bar{p}$.
However, for the fading channel when $\beta>2$, the tightness of the upper bound is not clear.
The numerical results of $\bar{p}_{up}$ for different $K$ and $\beta$ will be given in Section\,\ref{chp3:result:mnbv}.
\end{proposition}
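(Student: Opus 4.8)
The plan is to trace the derivation of $\bar{p}_{up}$ and isolate exactly the two places where a strict inequality can occur, then characterize when each collapses to an equality. The bound was obtained in two stages: first the convexity step (\ref{eq:chp3_meanSSOP_up_inequality}), which applies (\ref{eq:chp3_JI_1}) to $X=-\lambda_e A$; and second the chain of concavity steps (\ref{eq:chp3_JI_theta})--(\ref{eq:chp3_inequality_jvje}) summarised in (\ref{eq:chp3_meanA_up}), each of which applies (\ref{eq:chp3_JI_2}) with the map $x\mapsto x^{\frac{2}{\beta}}$. By Lemma\,\ref{le:chp3_jensensinequality}, the first step is an equality if and only if $A$ is almost surely a deterministic constant, while each concavity step is an equality when $\beta=2$ (the map is then linear) and otherwise strict provided its argument is non-degenerate.

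First I would settle the determinism criterion governing (\ref{eq:chp3_meanSSOP_up_inequality}). Since $A=\frac{1}{2}\int_0^{2\pi}(c_0|\tilde{h}|^2)^{\frac{2}{\beta}}\,\mathrm{d}\theta$ by (\ref{eq:chp3_A2}), the only randomness in $A$ enters through $\mathbf{g}$. When $K\to\infty$ we have $\tilde{h}=G(\theta,\theta_B)$ by (\ref{eq:chp3_h_tilde_De}), so $A$ is a fixed number and (\ref{eq:chp3_meanSSOP_up_inequality}) is an equality; for any finite $K$ the magnitude $|\tilde{h}|$ is a genuine Rician variable, so $A$ fluctuates and the step is strict. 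Next I would settle the $\beta$ criterion governing (\ref{eq:chp3_meanA_up}): at $\beta=2$ the exponent $\frac{2}{\beta}$ equals $1$, so both (\ref{eq:chp3_JI_theta}) and (\ref{eq:chp3_inequality_jvje}) become equalities and (\ref{eq:chp3_meanA_up}) is exact; for $\beta>2$ strict concavity makes them strict, because the argument $X_\theta=c_0|\tilde{h}|^2$ is a non-constant function of $\theta$ through the array factor $G(\theta,\theta_B)$.

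With these two criteria in hand the three assertions follow by combining them. For the deterministic channel the convexity step is exact, so the sole remaining slack is the concavity step, which vanishes at $\beta=2$ and is positive for $\beta>2$; hence the bound is tighter at $\beta=2$. For $\beta=2$ the concavity step is exact, so the sole remaining slack is the convexity step, which vanishes for the deterministic channel and is positive for any fading channel; hence the bound is tighter for the deterministic channel. The two sources of slack are logically independent, so $\bar{p}_{up}=\bar{p}$ requires both to vanish simultaneously, i.e.\ the deterministic channel together with $\beta=2$, giving the third assertion.

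The point requiring most care is confirming that the inequalities are \emph{strict}, not merely non-strict, in the off-diagonal cases, which amounts to checking non-degeneracy of the Jensen arguments: that $A$ is a non-constant random variable whenever $K<\infty$, and that $X_\theta$ is a non-constant function of $\theta$ whenever $G(\theta,\theta_B)$ is not flat. The final clause, that for a fading channel with $\beta>2$ the tightness is ambiguous, is not a claim to be proved but an observation: both slacks are then simultaneously present with no monotone relationship between them, so no closed-form comparison is available and I would simply defer to the numerical evaluation in Section\,\ref{chp3:result:mnbv}.
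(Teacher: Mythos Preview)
Your proposal is correct and follows essentially the same reasoning as the paper: the proposition is justified there by the remarks embedded in the proof of the preceding theorem, namely that the equality in (\ref{eq:chp3_meanSSOP_up_inequality}) holds only when $A$ is deterministic (i.e.\ $K\to\infty$) and that the equalities in (\ref{eq:chp3_JI_theta})--(\ref{eq:chp3_meanA_up}) hold when $\beta=2$. If anything, you are more careful than the paper in flagging the non-degeneracy conditions needed for \emph{strict} inequality in the off-diagonal cases.
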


\begin{theorem}\label{th:chp3_p_up_De}
For the special case when $K\to\infty$, $\bar{p}_{up}$ in (\ref{eq:chp3_meanSSOP_up_2}) can be simplified into
\begin{align}\label{eq:chp3_SSOP_De_up}
	\bar{p}_{up}=1-\text{exp}\Big\{-\lambda_e\pi \Big[\frac{c_0}{2\pi}\int_0^{2\pi}G^2(\theta,\theta_B)\,\mathrm{d}\theta \Big]^{\frac{2}{\beta}} \Big\}.
\end{align}
\end{theorem}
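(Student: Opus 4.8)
The plan is to obtain the claimed expression by taking the limit $K \to \infty$ directly in the general upper bound (\ref{eq:chp3_meanSSOP_up_2}). Since the only $K$-dependence sits in the two coefficients inside the square bracket, namely the factor $\frac{c_0 K}{2\pi(K+1)}$ multiplying $\int_0^{2\pi}G^2(\theta,\theta_B)\,\mathrm{d}\theta$ and the additive term $\frac{c_0}{K+1}$, the whole argument reduces to evaluating these two coefficient limits and then invoking continuity of the outer maps.

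First I would record the elementary limits $\lim_{K\to\infty}\frac{K}{K+1}=1$ and $\lim_{K\to\infty}\frac{1}{K+1}=0$, exactly the identities already used in the proof of Theorem\,\ref{th:chp3_p_De}. Applying the first to the coefficient of the integral yields $\frac{c_0}{2\pi}$, while the second sends the additive constant to zero. Hence the bracketed quantity converges to $\frac{c_0}{2\pi}\int_0^{2\pi}G^2(\theta,\theta_B)\,\mathrm{d}\theta$, which is a finite deterministic constant for any fixed array geometry and fixed $\theta_B$.

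Next I would pass the limit through the outer functions. Both $(\cdot)^{2/\beta}$ and $\exp(\cdot)$ are continuous on the relevant domain, and the bracketed quantity is nonnegative since $G^2\geq 0$; continuity therefore permits interchanging the limit with the power and the exponential, producing precisely (\ref{eq:chp3_SSOP_De_up}).

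I do not anticipate any genuine obstacle: this is the most routine of the limiting results in the section, and the only point meriting a word of care is that the interchange of the limit with the composed continuous maps is legitimate because $\int_0^{2\pi}G^2(\theta,\theta_B)\,\mathrm{d}\theta$ is a fixed finite constant independent of $K$. If one preferred, one could alternatively note that for the deterministic channel the Jensen step (\ref{eq:chp3_meanSSOP_up_inequality}) becomes an equality, in agreement with Proposition\,\ref{prop:chp3_SSOP_up_analysis}; but for this statement the direct substitution is simplest and I would present it that way.
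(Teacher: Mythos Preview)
Your proposal is correct and matches the paper's treatment: the paper states this theorem without an explicit proof, treating it as an immediate consequence of substituting $\lim_{K\to\infty}\frac{K}{K+1}=1$ and $\lim_{K\to\infty}\frac{1}{K+1}=0$ into (\ref{eq:chp3_meanSSOP_up_2}), exactly as you do. Your additional remark on continuity of $(\cdot)^{2/\beta}$ and $\exp(\cdot)$ is a welcome bit of rigor, but otherwise there is nothing to add.
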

\begin{theorem}\label{th:chp3_p_up_Ra}
For the special case when $K=0$, $\bar{p}_{up}$ in (\ref{eq:chp3_meanSSOP_up_2}) can be simplified into
\begin{align}\label{eq:chp3_SSOP_Ra_up}
	\bar{p}_{up}=1-\text{exp}(-\lambda_e\pi c_0^{\frac{2}{\beta}}). 
\end{align}
\end{theorem}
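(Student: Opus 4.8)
The plan is to obtain Theorem~\ref{th:chp3_p_up_Ra} by direct specialization of the general upper bound~(\ref{eq:chp3_meanSSOP_up_2}) to the case $K=0$, exactly parallel to the way Theorem~\ref{th:chp3_p_up_De} is recovered by taking $K\to\infty$. Since the closed-form expression for $\bar{p}_{up}$ has already been established for arbitrary $K$, no new integration or application of Jensen's inequality is needed; the task is purely an algebraic simplification of the bracketed quantity inside the exponential.

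First I would isolate the bracket appearing in~(\ref{eq:chp3_meanSSOP_up_2}),
\begin{align}
\frac{c_0K}{2\pi(K+1)}\int_0^{2\pi}G^2(\theta,\theta_B)\,\mathrm{d}\theta+\frac{c_0}{K+1},
\end{align}
and substitute $K=0$. The prefactor $\frac{c_0K}{2\pi(K+1)}$ carries $K$ in its numerator, so it vanishes at $K=0$ irrespective of the value of the array-factor integral; the entire line-of-sight contribution therefore drops out. The surviving term $\frac{c_0}{K+1}$ evaluates to $c_0$. Raising the result to the power $2/\beta$ leaves $c_0^{2/\beta}$, so~(\ref{eq:chp3_meanSSOP_up_2}) collapses to
\begin{align}
\bar{p}_{up}=1-\text{exp}(-\lambda_e\pi c_0^{\frac{2}{\beta}}),
\end{align}
which is precisely the claimed identity.

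There is essentially no obstacle here, and the only point worth remarking is a consistency check: the disappearance of the $G(\theta,\theta_B)$ term is the physically correct behavior, since $K=0$ is the pure Rayleigh channel with no LOS component, matching the observation made after Theorem~\ref{th:chp3_p_Ra} that the averaged SSOP itself contains no array factor in the Rayleigh case. I would also emphasize that, in contrast to the $K\to\infty$ limit treated in Theorem~\ref{th:chp3_p_De}, the substitution $K=0$ is exact rather than a limiting approximation, so none of the finite-range truncation argument (with the cutoff $Q$) used there is required.
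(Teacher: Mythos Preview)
Your proof is correct and matches the paper's approach: the paper does not give an explicit proof of Theorem~\ref{th:chp3_p_up_Ra} but simply states it as an immediate specialization of~(\ref{eq:chp3_meanSSOP_up_2}), which is exactly the direct substitution $K=0$ you carry out.
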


Compare $\bar{p}_{up}$ in (\ref{eq:chp3_meanSSOP_up_2}) and (\ref{eq:chp3_SSOP_De_up}) to $\bar{p}$ in (\ref{eq:chp3_meanSSOP_Ri_2}) and (\ref{eq:chp3_SSOP_De}), it can be seen that
$(\cdot)^{\frac{2}{\beta}}$ is moved outside of the integral, which makes solving the integral in (\ref{eq:chp3_A2}) possible. 
Besides, the expectation takes place on $|\tilde{h}|$ directly, which is simpler with known pdf of $|\tilde{h}|$.
For the Rayleigh channel, $\bar{p}$ in (\ref{eq:chp3_meanSSOP_Ra}) and $\bar{p}_{up}$ in (\ref{eq:chp3_SSOP_Ra_up}) are both deterministic values.

So far, the two major obstacles mentioned in the beginning of this section are tackled.
The final step to obtain the analytic expression of $\bar{p}_{up}$ is to solve the integral $\int_0^{2\pi}G^2(\theta,\theta_B)$ in (\ref{eq:chp3_meanSSOP_up_2}).
Let $A_0$ denote the integral,
\begin{align}\label{eq:chp3_A_0}
	A_0=\int_0^{2\pi} G^2(\theta,\theta_B)\,\mathrm{d}\theta.
\end{align}
$A_0$ is actually the pattern area.

\nomenclature{$A_0$}{pattern area}

As (\ref{eq:chp3_A_0}) is a general expression, any type of array with an analytic expression or numerically measured pattern can be calculated using this equation.
For the ULA, the analytic expression of $A_0$ cannot be directly obtained by substituting $G(\theta,\theta_B)$ in (\ref{eq:chp2_AF_ULA2}).
Here, the analytic expression of $A_0$ is directly given.
\begin{theorem}\label{th:chp3_A_0L}
\begin{align}\label{eq:chp3_A_0L}
	A_0=2\pi+4\pi\sum_{n=1}^{N-1} \frac{N-n}{N}J_0(k\Delta dn)\cos(k\Delta dn\sin\theta_B).
\end{align}
$A_0$ is in the form of finite summation of weighted $J_0(x)$, where $J_0(x)$ is the Bessel function of the first kind with order zero.
\end{theorem}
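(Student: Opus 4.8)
The plan is to evaluate the pattern-area integral $A_0=\int_0^{2\pi}G^2(\theta,\theta_B)\,\mathrm{d}\theta$ from its definition in (\ref{eq:chp3_A_0}) directly, where $G^2$ denotes $|G(\theta,\theta_B)|^2$ and $G(\theta,\theta_B)$ is the ULA array factor in (\ref{eq:chp3_AF_ULA}). First I would expand the squared modulus as a double sum over the element indices. Writing $u=\sin\theta_B-\sin\theta$, one has $|G|^2=\frac{1}{N}\sum_{i=1}^N\sum_{l=1}^N e^{jk\Delta d\,u(i-l)}$, so the entire angular dependence is carried by the exponentials $e^{jk\Delta d\,u(i-l)}$, and the per-element phases conveniently cancel into the index difference $i-l$.

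Next I would collapse the double sum onto the difference index $n=i-l$. Since the pairs $(i,l)\in\{1,\dots,N\}^2$ with $i-l=n$ number exactly $N-|n|$ for each $n\in\{-(N-1),\dots,N-1\}$, the double sum becomes $|G|^2=\frac{1}{N}\sum_{n=-(N-1)}^{N-1}(N-|n|)\,e^{jk\Delta d\,n(\sin\theta_B-\sin\theta)}$. Because the summation is finite, I would exchange it with the $\theta$-integral and evaluate each term separately, factoring out the $\theta$-independent phase $e^{jk\Delta d\,n\sin\theta_B}$ and reducing everything to the single integral $\int_0^{2\pi}e^{-jk\Delta d\,n\sin\theta}\,\mathrm{d}\theta$.

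The crux is this last integral, which I would handle with the integral representation of the Bessel function, $\int_0^{2\pi}e^{-jx\sin\theta}\,\mathrm{d}\theta=2\pi J_0(x)$ (the even case of the Jacobi--Anger expansion, as invoked in the Bessel appendix). This gives $\int_0^{2\pi}e^{jk\Delta d\,n(\sin\theta_B-\sin\theta)}\,\mathrm{d}\theta=2\pi e^{jk\Delta d\,n\sin\theta_B}J_0(k\Delta d\,n)$. Finally I would separate the $n=0$ term, which contributes $\frac{1}{N}\cdot N\cdot 2\pi J_0(0)=2\pi$, and pair each $n\geq 1$ with $-n$; since both $J_0$ and the weight $N-|n|$ are even in $n$, the two conjugate phases combine as $e^{jk\Delta d\,n\sin\theta_B}+e^{-jk\Delta d\,n\sin\theta_B}=2\cos(k\Delta d\,n\sin\theta_B)$, which yields exactly the claimed expression (\ref{eq:chp3_A_0L}). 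The only genuine obstacle is recognizing and justifying the Bessel integral identity that converts the residual $\theta$-integral into $J_0$; the re-indexing by $n=i-l$ and the $\pm n$ pairing are elementary counting and parity arguments.
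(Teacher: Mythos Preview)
Your proposal is correct and matches the paper's proof in Appendix~\ref{appdx:bessel:veaien} essentially step for step: expand $|G|^2$ as a double sum, reduce to the index difference $n=i-l$ with multiplicity $N-|n|$, apply the Bessel identity $\int_0^{2\pi}e^{-jx\sin\theta}\,\mathrm{d}\theta=2\pi J_0(x)$, and pair $\pm n$ to produce the cosine. The only cosmetic difference is ordering---you collapse to the single-index sum before integrating, whereas the paper integrates first and then collapses---but the ingredients and the argument are the same.
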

The proof of Theorem\,\ref{th:chp3_A_0L} is in Appendix\,\ref{appdx:bessel:veaien}.

$A_0$ is determined by $N$, $\Delta d$ and $\theta_B$.
It is relatively easy to analytically analyze $A_0$, because $J_0(x)$ has a decreasing envelope with the maximum value $J_0(0)=1$ at $x=0$.
When $x$ increases to infinity, $J_0(x)$ approaches zero, which makes $J_0(x)$ negligible for certain values of $n$ and $\Delta d$.
Thus, appropriate approximations of $A_0$ can be found to analytically analyze $A_0$.

According to (\ref{eq:chp3_meanSSOP_up_2}) and (\ref{eq:chp3_A_0}), the general expression of $\bar{p}_{up}$ for any array type under the Rician channel can be written in the form of $A_0$,
\begin{align}\label{eq:chp3_meanSSOP_up_3}
	\bar{p}_{up}= 1-\text{exp}\Big\{-\lambda_e\pi\Big[\frac{c_0K}{2\pi(K+1)}A_0+\frac{c_0}{K+1}\Big]^{\frac{2}{\beta}}\Big\}.
\end{align}
Because $\bar{p}_{up}$ is positively correlated with $A_0$, the properties of $\bar{p}_{up}$ can be derived based on those of $A_0$, which serves as guidance for the analysis of $\bar{p}$.
Moreover, it is worth noticing that the approximated $\bar{p}_{up}$ based on the approximations of $A_0$ ceases to be the upper bound of $\bar{p}$, but rather approximations of $\bar{p}$.
In the rest of this chapter, the approximations of $\bar{p}_{up}$ (i.e., the approximations of $\bar{p}$) are used to provide theoretical analysis thanks to the tractable nature, and accurate numerical results of $\bar{p}_{up}$ and $\bar{p}$ are used to verify the theoretical analysis.

\nomenclature{$J_n(\cdot)$}{Bessel function of the first kind with order $n$}

\section{Impact of Array Parameters on SSOP}
\label{chp3:analysis}

It can be seen from (\ref{eq:chp3_meanSSOP_up_3}) that there is a positive correlation between $\bar{p}_{up}$ and $A_0$.
Thus, the analysis of $\bar{p}_{up}$ can be carried out by studying the behavior of $A_0$, which is determined by $N$ and $\theta_B$.
As stated in Proposition\,\ref{prop:chp3_SSOP_up_analysis}, $\bar{p}_{up}=\bar{p}$ for the deterministic channel when $\beta=2$. 
Thus, the investigation starts from this simple channel model.
When $K\to\infty$ and $\beta=2$, $\bar{p}_{up}$ in (\ref{eq:chp3_meanSSOP_up_3}) can be simplified into
\begin{align}\label{eq:chp3_p_De_beta_is_2}
	\bar{p}_{up}=1-\text{exp}(-\frac{\lambda_ec_0}{2}A_0 ).
\end{align}
First the behavior of $A_0$ for the ULA is analyzed with respect to $N$ and $\theta_B$ via both analytic and numerical methods.
More numerical results for $\bar{p}_{up}$ for other values of $\beta$ and $K$ will be shown in Section\,\ref{chp3:result:wier}.

\subsection{Impact of Array Parameters on Pattern Area}
\label{chp3:analysis:sve}

\subsubsection{Impact of Bob's Angle}
To analyze the impact of $\theta_B$ on $A_0$, an appropriate approximation of $A_0$ in (\ref{eq:chp3_A_0L}) is required.  
 As stated in Proposition\,\ref{prop:chp3_theta_B_range}, the range of $\theta_B\in[0,\frac{\pi}{2}]$ is chosen.

First, let $A_{0,n}$, $n=1,...,N-1$, denote the summation term in (\ref{eq:chp3_A_0L}),
\begin{align}\label{eq:chp3_A_0L_n}
	A_{0,n}&=4\pi\frac{N-n}{N}J_0(k\Delta dn)\cos(k\Delta dn\sin\theta_B).
\end{align}
Then, $A_0$ in (\ref{eq:chp3_A_0L}) can be written by
\begin{align}\label{eq:chp3_A_0L_2}
	A_0=2\pi+\sum_{n=1}^{N-1}A_{0,n}.
\end{align}

\nomenclature{$A_{0,n}$}{$n$-th summation term in $A_0$}

For the ULA, consider the half-wavelength spacing, i.e., $\Delta d=0.5\lambda$. 
This is because when $\Delta d$ is smaller than $0.5\lambda$, there will be very high mutual coupling that distorts the array pattern; 
when $\Delta d$ is larger than $0.5\lambda$, there will be large sidelobes and even multiple mainbeams\,\cite{adaptivearraysystems}.
Thus, the behaviors of $A_0$ are only considered in terms of $N$ and $\theta_B$ with $\Delta d=0.5\lambda$.

When $\Delta d=0.5\lambda$, $A_{0,n}$ in (\ref{eq:chp3_A_0L_n}) can be written as
\begin{align}\label{eq:chp3_A_0L_n_2}
	A_{0,n}=4\pi\frac{N-n}{N}J_0(n\pi)\cos(n\pi\sin\theta_B).
\end{align}
An example of ULA with $N=8$ and $\Delta d=0.5\lambda$ is shown in Fig.\,\ref{fig:chp3_besselj_ULA}.
In the upper plot, $J_0(n\pi)$ decreases as $n$. 
The lower plot shows the decreasing envelope of $A_{0,n}$, i.e., $\frac{N-n}{N}J_0(n\pi)$. 
When $n=1$, the value is the largest;
when $n=7$, the value is negligible.

\begin{figure}
\centering
\includegraphics[scale=0.9]{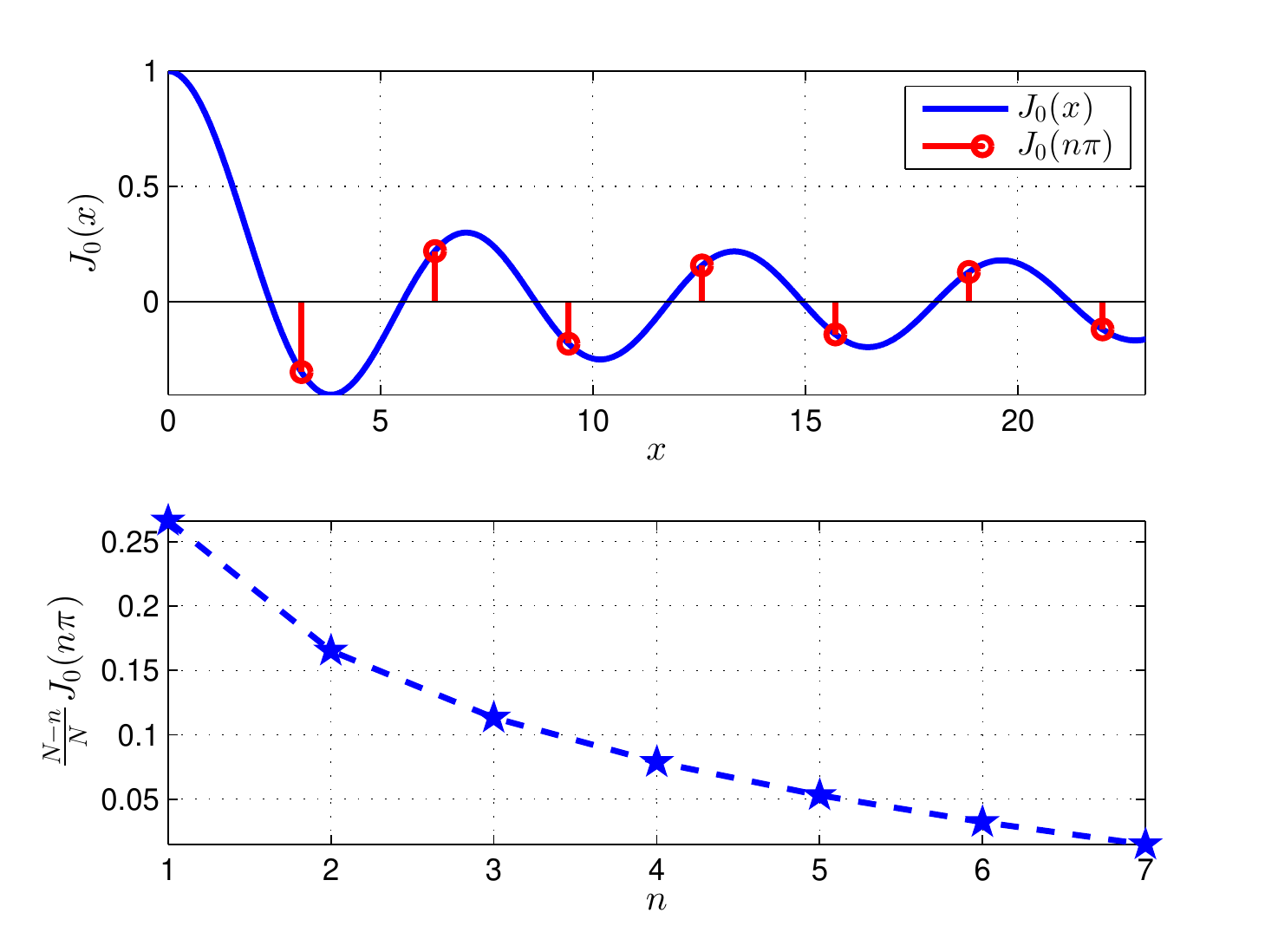}
\caption{$J_0(n\pi)$ and $\frac{N-n}{N}J_0(n\pi)$ for $n=1,2,...,N-1$. $N=8$}
\label{fig:chp3_besselj_ULA}
\end{figure}

As a result, for fixed $N$, $A_0$, the summation of $A_{0,n}$, can be approximated by the first few summation terms.
In the case when $\Delta d=0.5\lambda$, $A_{0,1}$ is very dominant and it suffices to approximate $A_0$ using only $A_{0,1}$, i.e., 
\begin{align}\label{eq:chp3_A_0L_approx}
	A_0\approx 2\pi+4\pi\frac{N-1}{N}J_0(\pi)\cos(\pi\sin\theta_B).
\end{align}

Thus, using (\ref{eq:chp3_p_De_beta_is_2}) and (\ref{eq:chp3_A_0L_approx}), $\bar{p}_{up}$ can be approximated by
\begin{align}\label{eq:chp3_p_De_approx_betais2}
	\bar{p}_{up}\approx 1-\text{exp}\Big\{-\frac{\lambda_ec_0}{2}\Big[2\pi+4\pi\frac{N-1}{N}J_0(\pi)\cos(\pi\sin\theta_B)\Big] \Big\}.
\end{align}

From (\ref{eq:chp3_A_0L_approx}) and (\ref{eq:chp3_p_De_approx_betais2}), it can be seen that for any fixed $N$, when $\theta_B$ increases from $0$ to $\frac{\pi}{2}$, $\pi\sin\theta_B$ increases from $0$ to $\pi$.
Then $\cos(\pi\sin\theta_B)$ decreases from $1$ to $-1$.
As shown in Fig.\,\ref{fig:chp3_besselj_ULA}, $J_0(\pi)<0$, leading to the approximations of $A_0$ and $\bar{p}_{up}$ being a monotonic increasing function in the range $\theta_B\in[0,\frac{\pi}{2}]$.

In Fig.\,\ref{fig:chp3_p_DoE_De_ULA}, $A_{0,n}$ and $\bar{p}_{up}$ versus $\theta_B$ are depicted for the ULA with $N=8$ and $\Delta d=0.5\lambda$.
In the left plot, $A_{0,1}$ has the largest variation from $\theta_B=0^{\circ}$ to $\theta_B=90^{\circ}$.
As $n$ increases, the variation becomes smaller.
This corresponds to the decreasing envelope shown in Fig.\,\ref{fig:chp3_besselj_ULA}.

\begin{figure}
\centering
\includegraphics[scale=0.9]{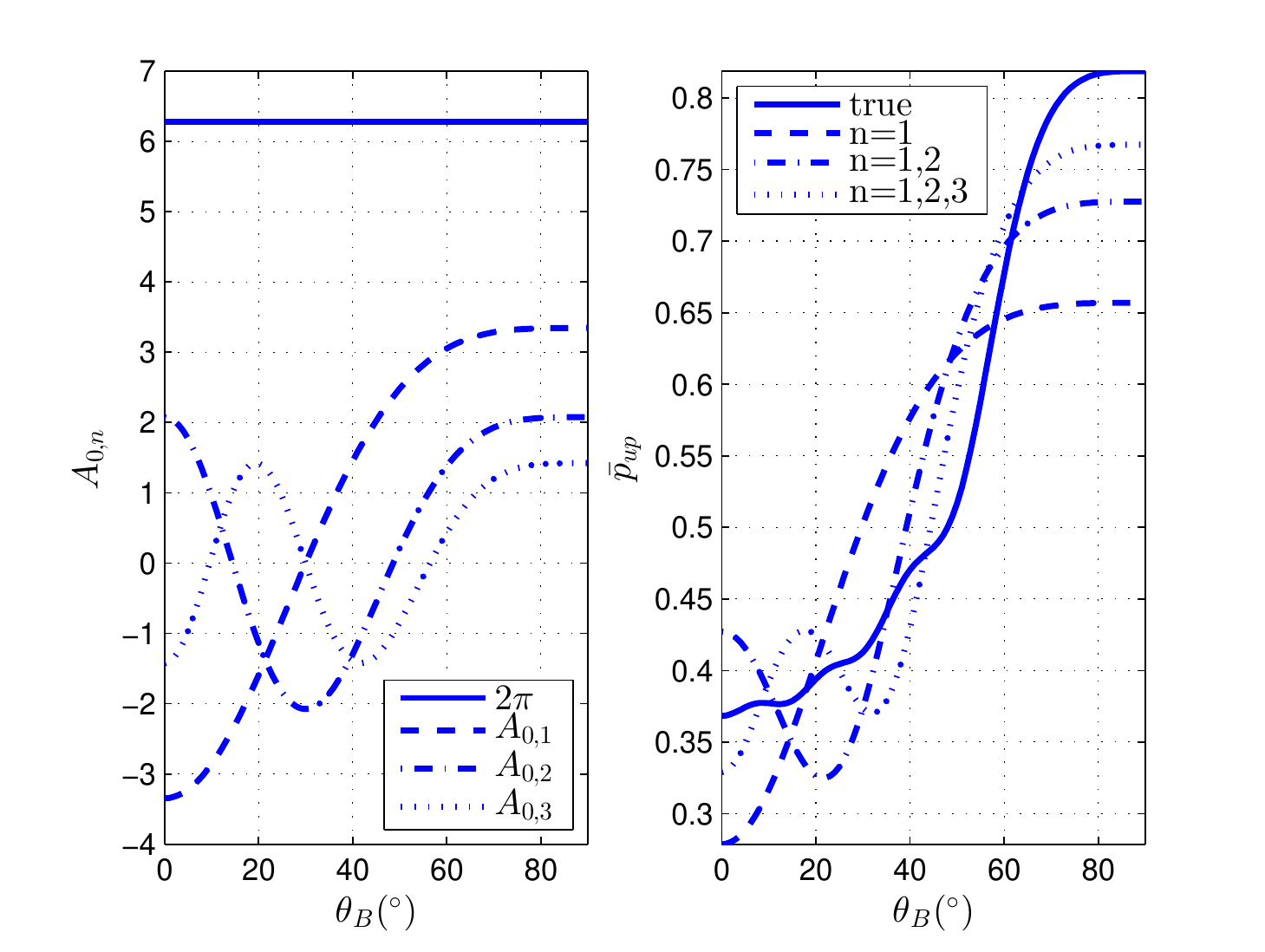}
\caption{Left plot: $A_{0,n}$ versus $\theta_B$. Right plot: true value and approximations of $\bar{p}_{up}$ versus $\theta_B$. $N=8$, $\Delta d=0.5\lambda$. $P_t/\sigma_n^2=40$\,dB, $R_B=3.4594$\,bps/Hz, $R_s=1$\,bps/Hz, $\lambda_e=1\times10^{-4}$}
\label{fig:chp3_p_DoE_De_ULA}
\end{figure}

In the right plot, the true value of $\bar{p}_{up}$ is shown as comparison to its different approximations.
When $n=1$, the approximated $\bar{p}_{up}$ in (\ref{eq:chp3_p_De_approx_betais2}) is comprised of the constant $2\pi$ and $A_{0,1}$;
when $n=1,2$, the approximated $\bar{p}_{up}$ in (\ref{eq:chp3_p_De_beta_is_2}) and (\ref{eq:chp3_A_0L_2}) is comprised of the constant $2\pi$, $A_{0,1}$ and $A_{0,2}$, and so forth.
It can be seen in Fig.\,\ref{fig:chp3_p_DoE_De_ULA} that when $n=1$, the approximation already captures the increasing trend of the true value.
With more $A_{0,n}$, the approximation becomes closer to the true value.
It is worth noticing from Fig.\,\ref{fig:chp3_p_DoE_De_ULA} that for $n>2$, $\cos(n\pi\sin\theta_B)$ (i.e., $A_{0,n}$) is not monotonic in the range $\theta_B\in[0,\frac{\pi}{2}]$.
However, when $n>2$, $\frac{N-n}{N}J_0(n\pi)$ is less dominant than $\frac{N-1}{N}J_0(n\pi)$ for $N=8$. Overall, the true value of $\bar{p}_{up}$ in general has a monotonic increasing relationship with $A_{0,1}$.

\subsubsection{Impact of Number of Elements}

While there exist simple approximations when $N$ is fixed, it is more complicated when $N$ changes.
Because when $N$ is fixed and only $\theta_B$ changes, there is a fixed envelope for $A_{0,n}$, $n=1,...N-1$.
When $N$ changes, the number of summation terms, i.e., $A_{0,n}$, as well as their envelopes also change.
Therefore, a different method is adopted to analyze how $A_0$ changes with $N$.

Let $\{q_n\}$ be a sequence, $n\in\mathbb{N}^+$,
\begin{align}
	q_n = J_0(k\Delta dn)\cos(k\Delta dn\sin\theta_B).
\end{align}
Notice that $\{q_n\}$ is an infinite sequence and is independent from $N$. 
$A_0$ in (\ref{eq:chp3_A_0L}) is the summation of the first $N-1$  terms of $\{q_n\}$ with weights and the constant $2\pi$,
\begin{align}\label{eq:chp3_A_0L_3}
	A_0=2\pi+4\pi\sum_{n=1}^{N-1}\frac{N-n}{N}q_n.
\end{align}
The weights are $\frac{N-n}{N}$ and $N$ can be any positive integer larger than 1.

\nomenclature{$\{q_n\}$}{series used for $A_0$}

Examples of $\{q_n\}$ when $\Delta d=0.5\lambda$ are shown in Fig.\,\ref{fig:chp3_besselj_ULA_2}.
There are three values of $\theta_B$.
For different value $\theta_B$, the behavior of $\{q_n\}$ differs greatly.
When $\theta_B=0^{\circ}$, $q_n=J_0(n\pi)$.
So, $\{q_n\}$ are discrete samples of $J_0(x)$.
When $\theta_B=30^{\circ}$, $q_n=J_0(n\pi)\cos(\frac{n\pi}{2})$, which is zero for odd $n$; and it is the samples of $(-1)^{n/2}J_0(x)$ for even $n$.
When $\theta_B=60^{\circ}$, $q_n=J_0(n\pi)\cos(\frac{n\sqrt{3}\pi}{2})$.

\begin{figure}
\centering
\includegraphics[scale=0.9]{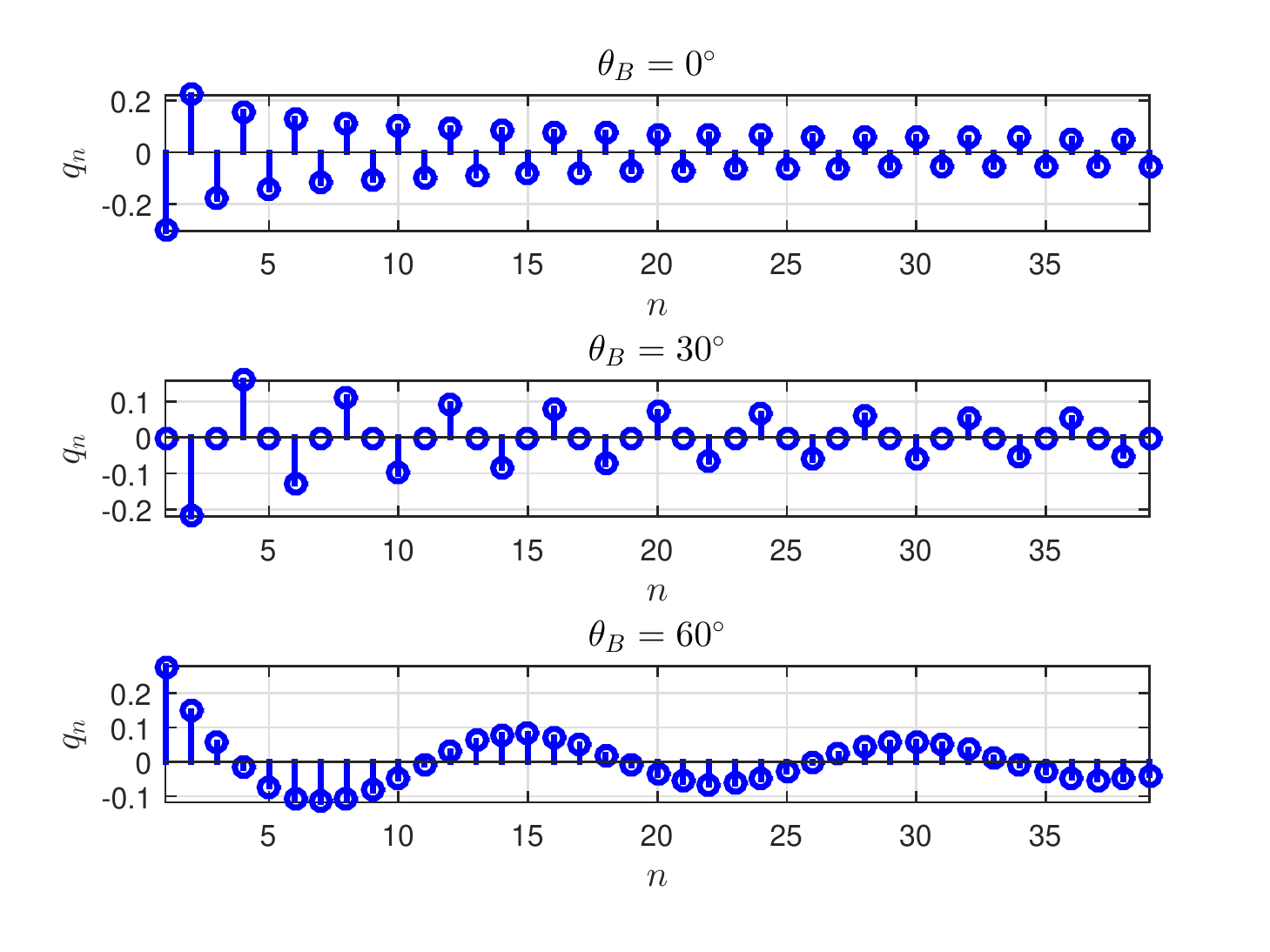}
\caption{$q_n$ for different $\theta_B$}
\label{fig:chp3_besselj_ULA_2}
\end{figure}

Take $\theta_B=0^{\circ}$ as an example.
When $N=2,3,4$, it can be derived that 
\begin{align}
	A_0&=2\pi+4\pi \frac{1}{2}q_1 =  4.3718, \\
	A_0&=2\pi+4\pi \frac{2}{3}q_1+4\pi \frac{1}{3}q_2 = 4.6575, \\
	A_0&=2\pi+4\pi \frac{3}{4}q_1+4\pi \frac{2}{4}q_2 +4\pi \frac{1}{4}q_3 = 4.2311,
\end{align}
where $q_1=-0.3042$, $q_2=0.2203$, $q_3=-0.1812$.
As $N$ increases, the weights for the first few terms becomes more significant, which indicates the changing envelope.
For example, the weight of $q_1$ increases from $\frac{1}{2}$ to $\frac{3}{4}$.
But, there is no clear increasing or decreasing relationship when $N$ increases from $2$ to $4$.
It depends on the specific values of $q_n$.

The conclusion can be generalized for any $\theta_B$.
When $N$ increases, although the number of summation terms increases with $N$, and the weights $\frac{N-n}{N}$ for smaller $n$ become more significant, the overall summation $A_0$ is still determined by the nature of $\{q_n\}$, which is in turn determined by $\theta_B$.
In Fig.\,\ref{fig:chp3_p_N_De_ULA}, the corresponding results for $\bar{p}_{up}$ versus $N$ are shown.
It can be seen that because of the difference in $\{q_n\}$, when $N$ increases, $\bar{p}$ changes differently for different $\theta_B$.

\begin{figure}
\centering
\includegraphics[scale=0.9]{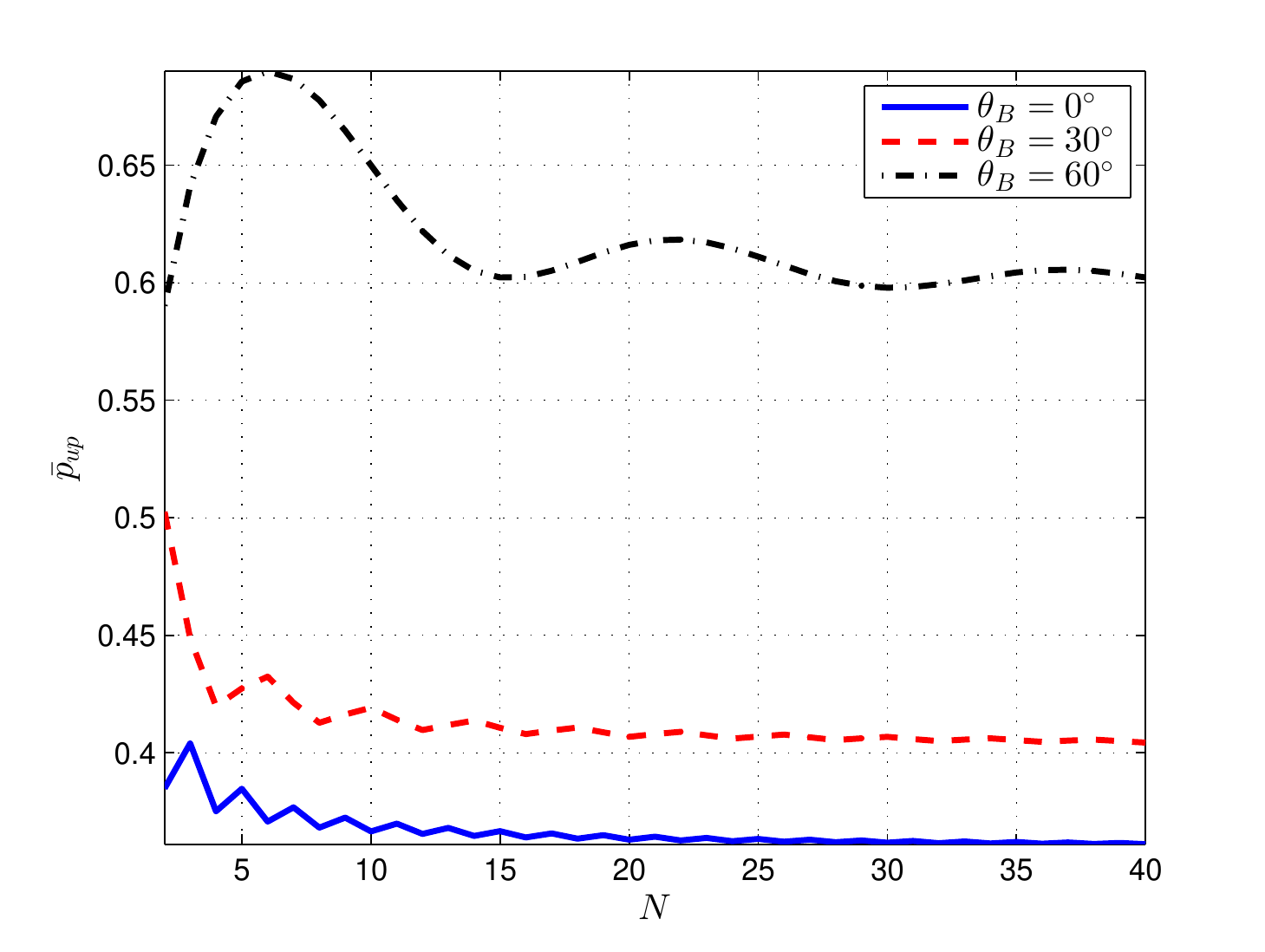}
\caption{$\bar{p}_{up}$ versus $N$ for different $\theta_B$. $\Delta d=0.5\lambda$, $P_t/\sigma_n^2=40$\,dB, $R_B=3.4594$\,bps/Hz, $R_s=1$\,bps/Hz, $\lambda_e=1\times10^{-4}$}
\label{fig:chp3_p_N_De_ULA}
\end{figure}

The common behavior shared by $\bar{p}$ for any $\theta_B$ is that when $N$ is sufficiently large, $\bar{p}$ approaches to a fixed value.
This can also be traced back to a property of $\{q_n\}$.
The sequence $\{q_n\}$ is comprised of weighted samples of $J_0(x)$, which approaches $0$ when $x$ goes to infinity.
As $N$ increases, the weights at the end of $\{q_n\}$ approach $0$ in addition to the vanishing tails of $J_0(x)$.
Thus, the total summation is more determined by the front terms of $\{q_n\}$.
In addition, it can be seen in Fig.\,\ref{fig:chp3_p_N_De_ULA} that $\bar{p}$ increases with $\theta_B$.

In summary, when $\Delta d=0.5\lambda$, $\bar{p}_{up}$ (i.e., $A_0$) in general increases with $\theta_B\in[0,\frac{\pi}{2}]$.
The relationship between $\bar{p}_{up}$ and $N$ is determined by $\theta_B$.
Nevertheless, for any $\theta_B$, $\bar{p}_{up}$ approaches to certain values (depending on $\theta_B$) when $N$ increases.

\subsection{Impact of Array Parameters on Array Pattern}
\label{chp3:analysis:jfeow}

In the previous section, $\bar{p}_{up}$ is analyzed via the pattern area $A_0$.
This is because $\bar{p}_{up}$ is based on $\Theta$, which is determined by $G(\theta,\theta_B)$.
Naturally, the area of $\Theta$ (i.e., $A$) is determined by the area of the array pattern, which is $A_0$.
To fully understand the properties of $\bar{p}_{up}$ in Section\,\ref{chp3:analysis:sve}, the array pattern of  the ULA is studied and $\bar{p}_{up}$ is analyzed from the spatial aspect with respect to $N$ and $\theta_B$.
Although the spatial perspective does not provide a very accurate analysis, it does helps better understanding of the properties of $\bar{p}_{up}$.

First, some array patterns with different $N$ and $\theta_B$ are shown in Fig.\,\ref{fig:chp3_patterns_ULA},  where $\Delta d=0.5\lambda$.
Some patterns have different $N$ with the same $\theta_B$; while others have different $\theta_B$ with the same $N$.

\begin{figure}
\centering
\includegraphics[scale=0.9]{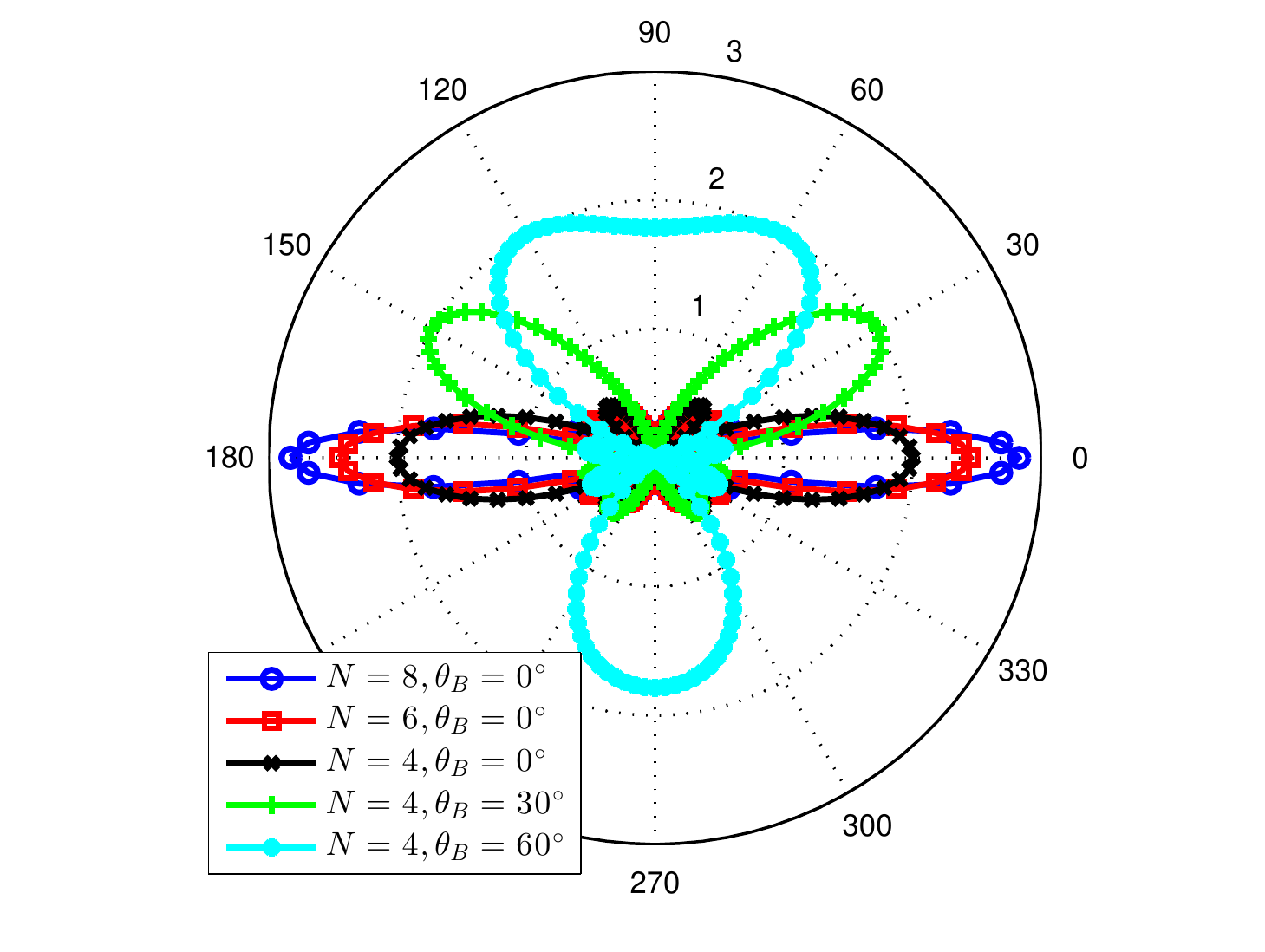}
\caption{Array patterns of ULA for different $N$ and $\theta_B$, $\Delta d=0.5\lambda$}
\label{fig:chp3_patterns_ULA}
\end{figure}

As can be seen in Fig.\,\ref{fig:chp3_patterns_ULA}, for the same $N=4$, the mainbeam becomes wider as $\theta_B$ increases from $0^{\circ}$ to $60^{\circ}$.
For $\theta_B=0^{\circ}$, the mainbeam gets narrower but longer when $N$ increases from $4$ to $8$.
According to (\ref{eq:chp2_max_gain}), $G_{\text{max}}=\sqrt{N}$.
Thus, the length of the mainbeam increases in order of $\sqrt{N}$.

The half-power beamwidth (HPBW) is used to measure the width of the mainbeam. 
Denoted by $\Delta \theta_{HP}$, it is the angular separation between the half power points of the mainbeam. 
Based on its definition, $G(\theta_B-\Delta \theta_{HP}/2,\theta_B)=G(\theta_B+\Delta \theta_{HP}/2,\theta_B)=\sqrt{\frac{N}{2}}$.
For $\theta_B\in[0,\frac{\pi}{2}]$, the HPBW for ULA can be
calculated by\,\cite{adaptivearraysystems}
\begin{align}\label{eq:chp3_HPBW_L}
	\Delta\theta_{HP}= 2\Big[\theta_B-\arcsin\Big(\sin\theta_B-\frac{2.782}{Nk\Delta d}\Big)\Big].
\end{align}

\nomenclature{HPBW}{half-power beamwidth}
\nomenclature{$\Delta \theta_{HP}$}{HPBW}

It can be seen that $\Delta\theta_{HP}$ is jointly determined by $N$ and $\theta_B$.
$\arcsin(\cdot)$ is a monotonically increasing function.
For certain $\theta_B$, when $N$ increases, $\frac{2.782}{Nk\Delta d}$ decreases, leading to the decrease of $\Delta\theta_{HP}$.

To study the relationship between $\Delta\theta_{HP}$ in (\ref{eq:chp3_HPBW_L}) and $\theta_B$.
The derivative of $\Delta\theta_{HP}$ is calculated, 
\begin{align}
	\frac{\partial}{\partial\theta_B}\Delta\theta_{HP}=2-\frac{2\cos\theta_B}{\sqrt{1-(\sin\theta_B-\frac{2.782}{Nk\Delta d})^2}}.
\end{align}

\begin{theorem}\label{th:chp3_HPBW_L}
When $\Delta d=0.5\lambda$,
\begin{align}
	\frac{\partial}{\partial\theta_B}\Delta\theta_{HP}=
	\begin{cases}
	<0 &  \theta_B\in[0,\arcsin\frac{1.391}{N\pi}) \\
	=0 &  \theta_B=\arcsin\frac{1.391}{N\pi} \\
	>0 &  \theta_B\in(\arcsin\frac{1.391}{N\pi},\frac{\pi}{2}]
	\end{cases}
\end{align}
\end{theorem}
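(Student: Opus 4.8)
The plan is to reduce the transcendental comparison hidden in the derivative to an elementary inequality in $\sin\theta_B$. First I would specialise the constant to half-wavelength spacing: since $k=2\pi/\lambda$ and $\Delta d=0.5\lambda$ give $k\Delta d=\pi$, we have $Nk\Delta d=N\pi$, so the derivative displayed just above the theorem reads
\begin{align}
\frac{\partial}{\partial\theta_B}\Delta\theta_{HP}=2-\frac{2\cos\theta_B}{\sqrt{1-\big(\sin\theta_B-c\big)^2}},\qquad c:=\frac{2.782}{N\pi}.
\end{align}
Before anything else I would check that this is well defined on all of $\theta_B\in[0,\tfrac{\pi}{2}]$. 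Since $\sin\theta_B\in[0,1]$ and $0<c\le \tfrac{2.782}{\pi}<1$, the quantity $\sin\theta_B-c$ stays strictly inside $(-1,1)$, so the radicand $1-(\sin\theta_B-c)^2$ is strictly positive and the quotient never becomes singular.

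The core step is to read the sign off a single comparison. The derivative is positive precisely when $\cos\theta_B<\sqrt{1-(\sin\theta_B-c)^2}$; because both sides are nonnegative on $[0,\tfrac{\pi}{2}]$, squaring is an equivalence rather than a one-way implication, and substituting $\cos^2\theta_B=1-\sin^2\theta_B$ this collapses as
\begin{align}
1-\sin^2\theta_B<1-(\sin\theta_B-c)^2 \;\Longleftrightarrow\; 0<2c\sin\theta_B-c^2 \;\Longleftrightarrow\; \sin\theta_B>\frac{c}{2},
\end{align}
the last equivalence dividing by $2c>0$. The zero and negative cases follow verbatim on replacing ``$<$'' by ``$=$'' and ``$>$''. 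Since $c/2=\tfrac{1.391}{N\pi}$ and $\arcsin$ is strictly increasing on $[0,1]$ with $\tfrac{1.391}{N\pi}\le 1$ for every admissible $N$, these three cases translate directly into $\theta_B$ exceeding, equalling, or falling below $\arcsin\tfrac{1.391}{N\pi}$, which is exactly the claimed trichotomy.

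There is no deep obstacle here; the computation is short, and the only points demanding care are the two I flagged: verifying the radicand is strictly positive so that the quotient is defined throughout, and noting that $\cos\theta_B\ge 0$ on $[0,\tfrac{\pi}{2}]$ so squaring preserves the direction of the inequality. I would treat the right endpoint explicitly: at $\theta_B=\tfrac{\pi}{2}$ one has $\cos\theta_B=0$, forcing the derivative to equal $2>0$, which is consistent with $\sin\tfrac{\pi}{2}=1>c/2$ and confirms the sign at the end of the range, completing the proof.
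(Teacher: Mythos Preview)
Your proof is correct and follows essentially the same route as the paper's: compare $\cos\theta_B$ with the radical, square both sides (legitimate since both are nonnegative on $[0,\tfrac{\pi}{2}]$), and reduce to $\sin\theta_B\lessgtr \tfrac{1.391}{N\pi}$. Your version is in fact slightly more careful than the paper's in explicitly verifying the radicand is positive and checking the endpoint $\theta_B=\tfrac{\pi}{2}$.
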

The proof of Theorem\,\ref{th:chp3_HPBW_L} is in Appendix\,\ref{appdx:bessel:oitor}.
Theorem\,\ref{th:chp3_HPBW_L} suggests that the turning point is $\theta_B=\arcsin\frac{1.391}{N\pi}$.
$\Delta\theta_{HP}$ first decreases till the turning point, then increases till $\frac{\pi}{2}$.

Examples of $\Delta\theta_{HP}$ versus $\theta_B$ for different $N$ are shown in Fig.\,\ref{fig:chp3_HPBW_DoE_ULA}.
The turning point of $\Delta\theta_{HP}$ are at $6.36^{\circ}$, $4.23^{\circ}$ and $3.17^{\circ}$ for $N=4,6,8$, respectively, all of which are relatively small compared to the whole angle range.
Thus, $\Delta\theta_{HP}$ in general increases in the whole range $\theta_B\in[0,\frac{\pi}{2}]$.
In addition, given the same $\theta_B$, the larger $N$, the smaller $\Delta\theta_{HP}$ is.

\begin{figure}
\centering
\includegraphics[scale=0.9]{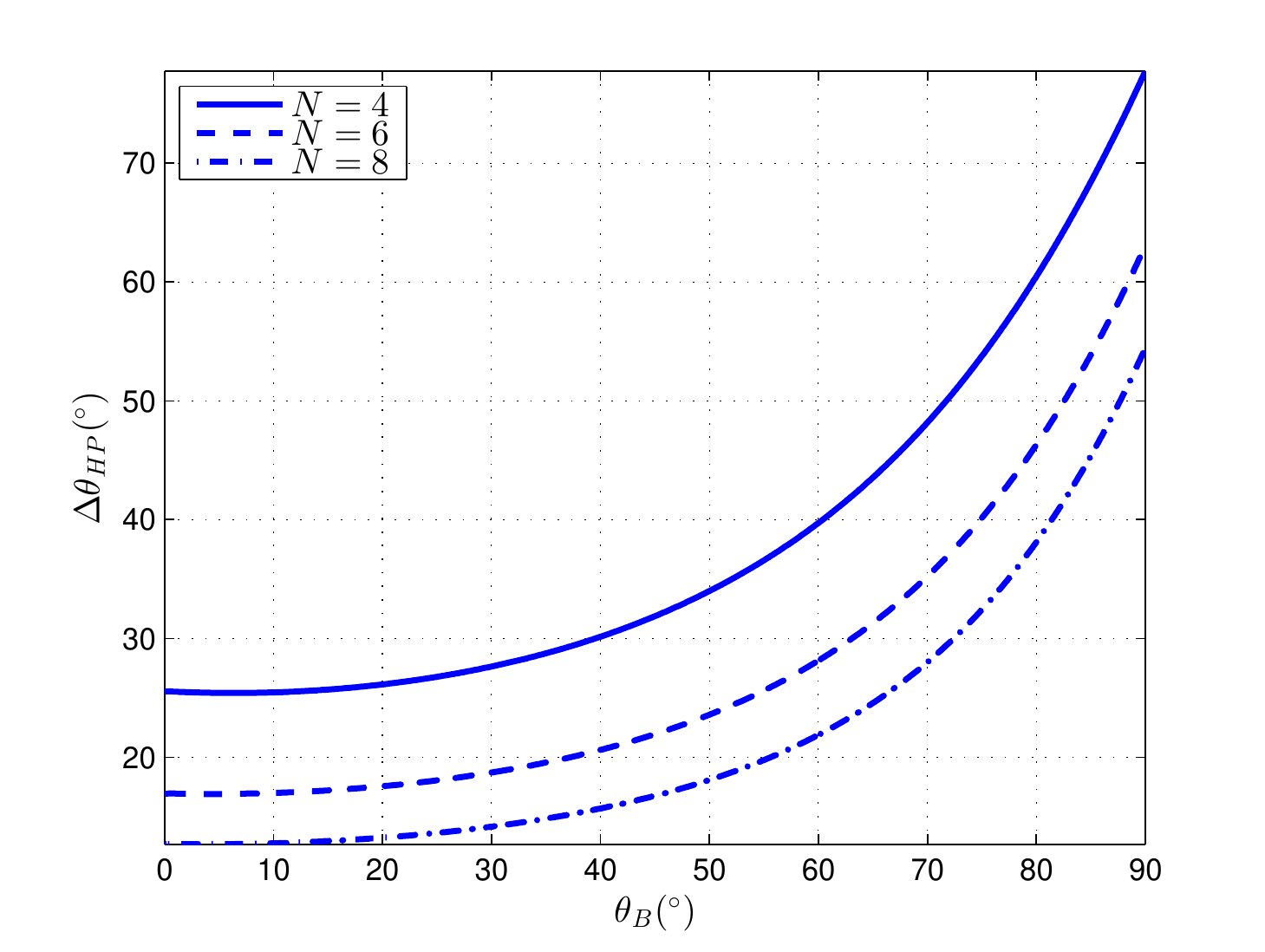}
\caption{$\Delta\theta_{HP}$ versus $\theta_B$ for different $N$, $\Delta d=0.5\lambda$}
\label{fig:chp3_HPBW_DoE_ULA}
\end{figure}

When $N$ is fixed, $G_{\text{max}}$ is the same for different $\theta_B$, which means the length of the mainbeam is fixed.
As $\theta_B$ increases from $0^{\circ}$ to $90^{\circ}$, the mainbeam in general becomes wider. 
Therefore, the mainbeam area becomes larger.
For $\Delta d=0.5\lambda$, the sidelobes are less dominant compared to the mainbeam.
Thus, the mainbeam contributes to the majority part of the pattern area, which explains why $A_0$ in general increases in the range $\theta_B\in[0,\frac{\pi}{2}]$.

When $\theta_B$ is fixed to a certain value, $G_{\text{max}}$ increases with $N$, which means the mainbeam becomes longer.
On the other hand, as $N$ increases, the mainbeam becomes narrower.
Thus, the area of the mainbeam could either increase or decrease  with $N$, resulting in a complex relationship between $A_0$ and $N$.

\section{Numerical Results for Generalized Rician Channel Model}
\label{chp3:result}
Both $\bar{p}$ and $\bar{p}_{up}$ are determined by two group of parameters: channel parameters ($K$ and $\beta$) and array parameters ($N$ and $\theta_B$).
In Section\,\ref{chp3:analysis}, the simple case that $\bar{p}=\bar{p}_{up}$ is analyzed with respect to the array parameters.
In this section, the numerical results are used to analyze the proprieties of $\bar{p}$ and $\bar{p}_{up}$ for the generalized Rician channel (i.e., any $K$ and $\beta$) with respect to the array parameters.
In addition, the tightness of the upper bound will be examined.

\subsection{SSOP and Its Upper Bound}
\label{chp3:result:wier}

In (\ref{eq:chp3_meanSSOP_up_3}), $\bar{p}_{up}$ is positively correlated with $\Big[\frac{c_0K}{2\pi(K+1)}A_0+\frac{c_0}{K+1}\Big]^{\frac{2}{\beta}}$.
For any fixed $\beta$ and $K$, $\bar{p}_{up}$ also has a positive relationship with $A_0$. 
Thus, the conclusions that are reached about $A_0$ regarding to the impact of $N$ and $\theta_B$ also apply to $\bar{p}_{up}$ of the generalized Rician channel with different $\beta$ and $K$.

For convenience, let $A_1$ denote $\frac{c_0K}{2\pi(K+1)}A_0+\frac{c_0}{K+1}$.
When $\beta$ increases from $2$ to $6$, $A_1^{\frac{2}{\beta}}$ decreases, because $A_1$ is generally larger than $1$.
It is also noticed that when $A_0=2\pi$, the $K$ factor disappears in the equation, i.e, $A_1=c_0$.
When $A_0<2\pi$, the larger $K$ is, the smaller $A_1$ (i.e., $\bar{p}_{up}$) is; 
when $A_0>2\pi$, the larger $K$ is, the larger $A_1$ (i.e., $\bar{p}_{up}$) is.

It can be seen that the properties of $\bar{p}_{up}$ with respect to $K$ and $\beta$ mainly relies on the value of $A_0$.
Thus, in Fig.\,\ref{fig:chp3_p_up_K_beta}, the examples of $\bar{p}_{up}$ for different values of $K$ and $\beta$ are given for three typical values of $A_0$, i.e., $4.1326$, $2\pi$ and $15.3761$, which corresponds to $\theta_B=0^{\circ}$, $48.35^{\circ}$ and $90^{\circ}$ when $N=8$.

\begin{figure}
\centering
\includegraphics[scale=0.9]{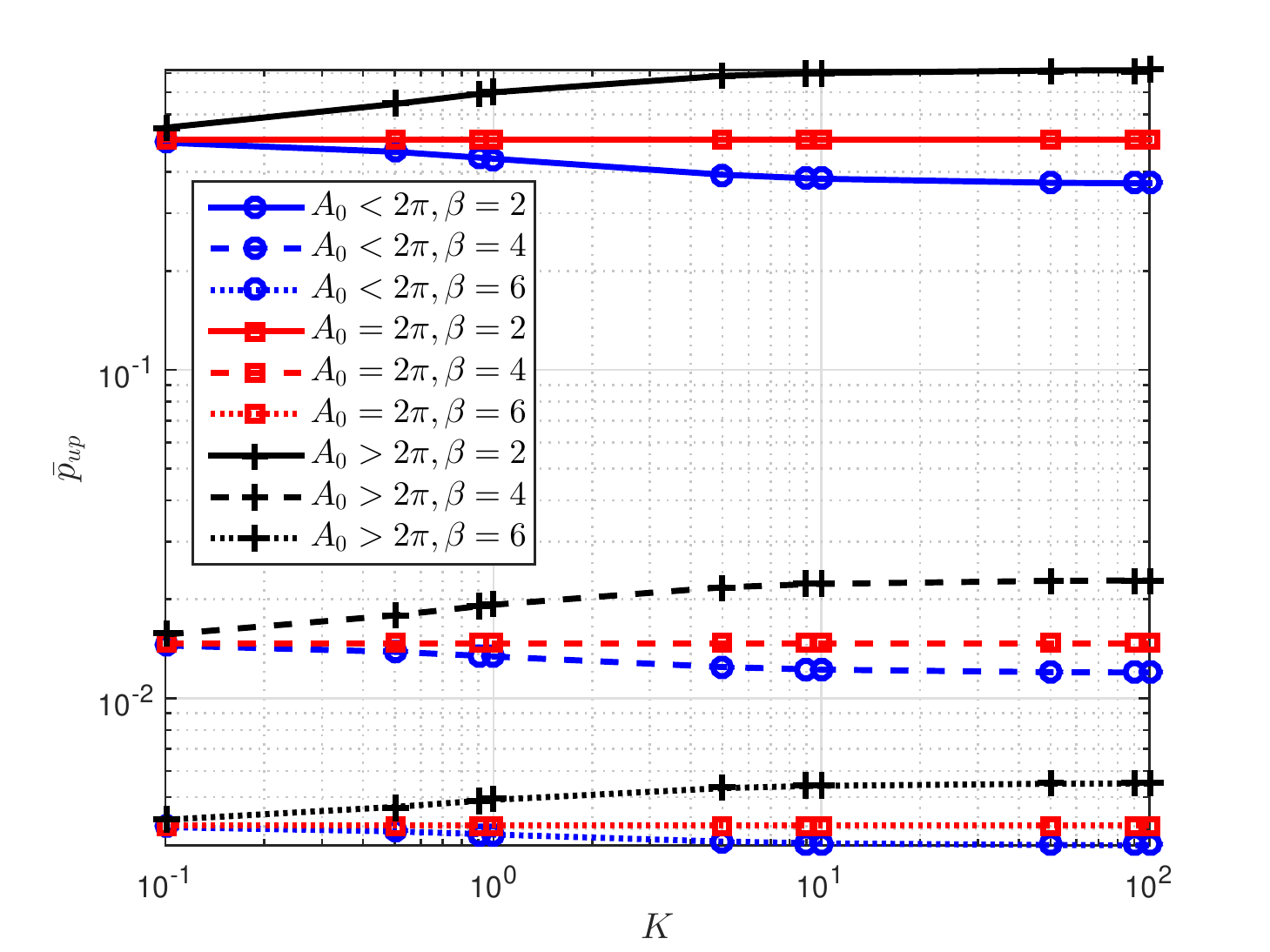}
\caption{$\bar{p}_{up}$ for different values of $A_0$, $K$ and $\beta$. $P_t/\sigma_n^2=40$\,dB, $R_B=3.4594$\,bps/Hz, $R_s=1$\,bps/Hz, $\lambda_e=1\times10^{-4}$}
\label{fig:chp3_p_up_K_beta}
\end{figure}

In Fig.\,\ref{fig:chp3_p_up_K_beta}, the logarithm scale is used to clearly show the ranges of $\bar{p}_{up}$ and $K$.
It can be seen that, when $\beta$ increases, $\bar{p}_{up}$ drops very quickly, because in this case, the constant $c_0$ is very large (i.e., $2.2222\times 10^3$).
For fixed $\beta$, $\bar{p}_{up}$ increases, stays unchanged or decreases depending on the value of $A_0$.

The range of $K$ in linear scale is from $0.01$ to $100$.
When $K=0.01$, the Rician channel approaches the Rayleigh channel ($K=0$).
When $K=100$, the Rician channel approaches the deterministic channel ($K\to\infty$).
It can be seen that for fixed $\beta$, $\bar{p}_{up}$ is a constant for $K=0$ and is irrelevant to $A_0$ (nor $N$, $\theta_B$).
When $K>10$, $\bar{p}_{up}$ approaches to a certain value that depends on $A_0$ which in turn depends on $N$ and $\theta_B$.

\begin{figure}
\centering
\includegraphics[scale=0.9]{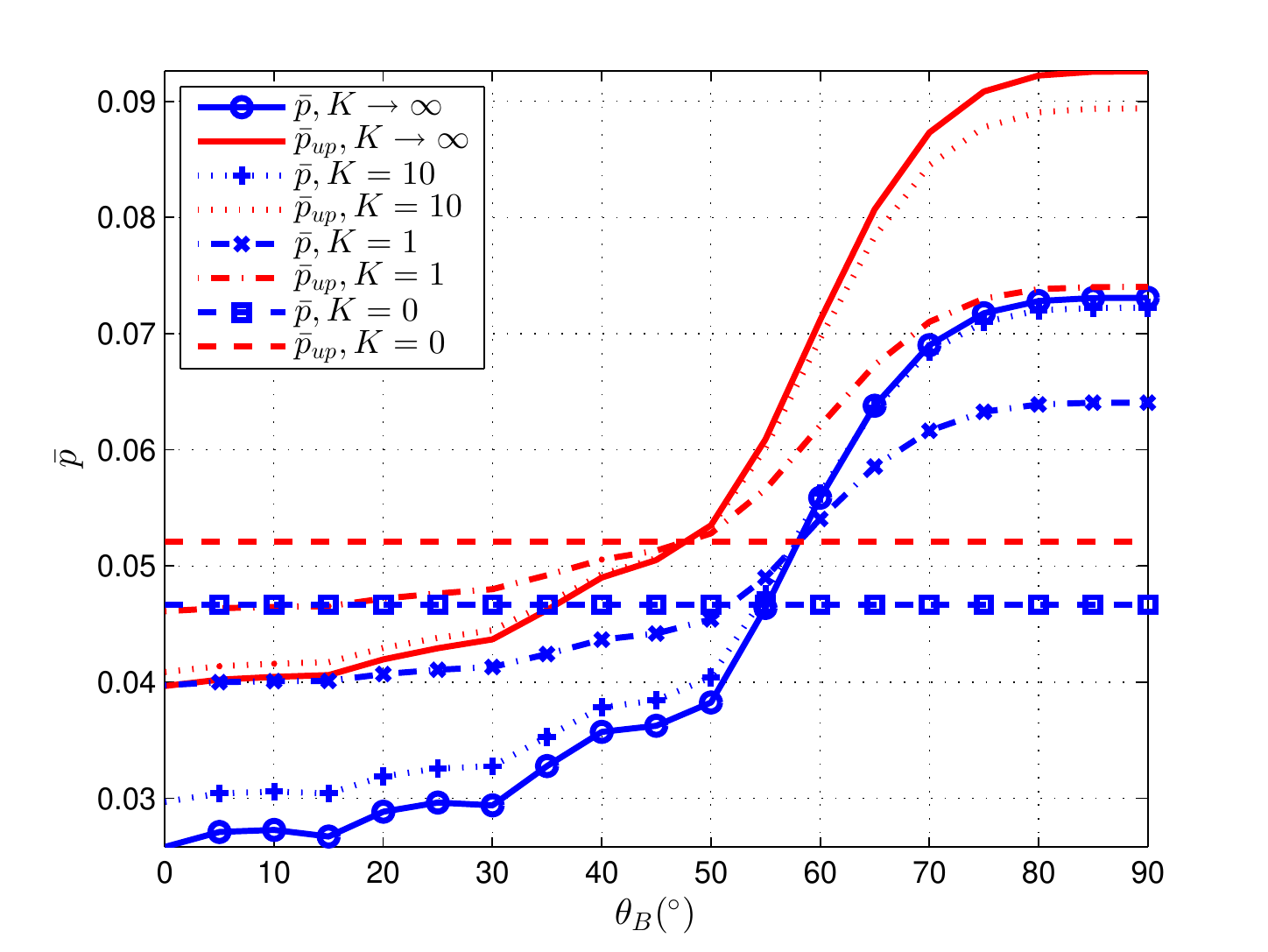}
\caption{$\bar{p}$ and $\bar{p}_{up}$ versus $\theta_B$ for different $K$. $\beta=3$, $N=8$, $\Delta d=0.5\lambda$. $P_t/\sigma_n^2=40$\,dB, $R_B=3.4594$\,bps/Hz, $R_s=1$\,bps/Hz, $\lambda_e=1\times10^{-4}$}
\label{fig:chp3_p_and_bounds_DoE_beta_3_L}
\end{figure}

While it is relatively straightforward to analyze the properties of $\bar{p}_{up}$ with respect to $(K,\beta, N,\theta_B)$, the properties of $\bar{p}$ cannot be easily analyzed according to (\ref{eq:chp3_meanSSOP_Ri_2}).
Thus, numerical results are used.
An example of $\bar{p}$ and $\bar{p}_{up}$ versus $\theta_B$ for $\beta=3$ and $N=8$ is given in Fig.\,\ref{fig:chp3_p_and_bounds_DoE_beta_3_L}.
$\beta=3$ is a typical value for some indoor scenarios such as home and factory\,\cite{goldsmith2005wireless}.
Typical values of $K$ are chosen as 0, 1, 10 and $\infty$.

It can be seen that $\bar{p}$ and $\bar{p}_{up}$ increase in the range $\theta_B\in[0,\frac{\pi}{2}]$, except for $K=0$.
When $K=0$, the curves are flat because $\bar{p}$ and $\bar{p}_{up}$ are irrelevant to $\theta_B$, according to (\ref{eq:chp3_meanSSOP_Ra}) and (\ref{eq:chp3_SSOP_Ra_up}).
By comparing $\bar{p}_{up}$ and $\bar{p}$, it can be observed that the upper bound reflects the trend very well.
It can also be seen that for both $\bar{p}$ and $\bar{p}_{up}$, the curve for $K=10$ is closer to that for $K\to\infty$, while the curve for $K=1$ is closer to that for $K=0$.

In Fig.\,\ref{fig:chp3_p_and_bounds_DoE_beta_3_L}, there is a pivot point at $\theta_B=48.35^{\circ}$ where $A_0=2\pi$, and all curves of $\bar{p}_{up}$ come across.
When $\theta_B<48.35^{\circ}$, $A_0<2\pi$; thus $\bar{p}_{up}$ decreases with $K$;
When $\theta_B>48.35^{\circ}$, $A_0>2\pi$; thus $\bar{p}_{up}$ increases with $K$;
As $K$ changes, the curve of $\bar{p}_{up}$ pivots around this point and approaches the curves for $K=0$ or $K\to\infty$.

For completeness, Fig.\,\ref{fig:chp3_p_and_bounds_N_beta_3_L} shows an example of $\bar{p}$ and $\bar{p}_{up}$ versus $N$ for $\beta=3$ and $\theta_B=0^{\circ}$.
It can be seen that $\bar{p}$ and $\bar{p}_{up}$ decrease to different floor levels depending on $K$.
The same behavior has been shown in Fig.\,\ref{fig:chp3_p_N_De_ULA} where $K=\infty$ and $\beta=2$.
In addition, for both $\bar{p}$ and $\bar{p}_{up}$, the curves for $K=10$ are closer to those for $K\to\infty$, while the curves for $K=1$ are closer to those for $K=0$.

\begin{figure}
\centering
\includegraphics[scale=0.9]{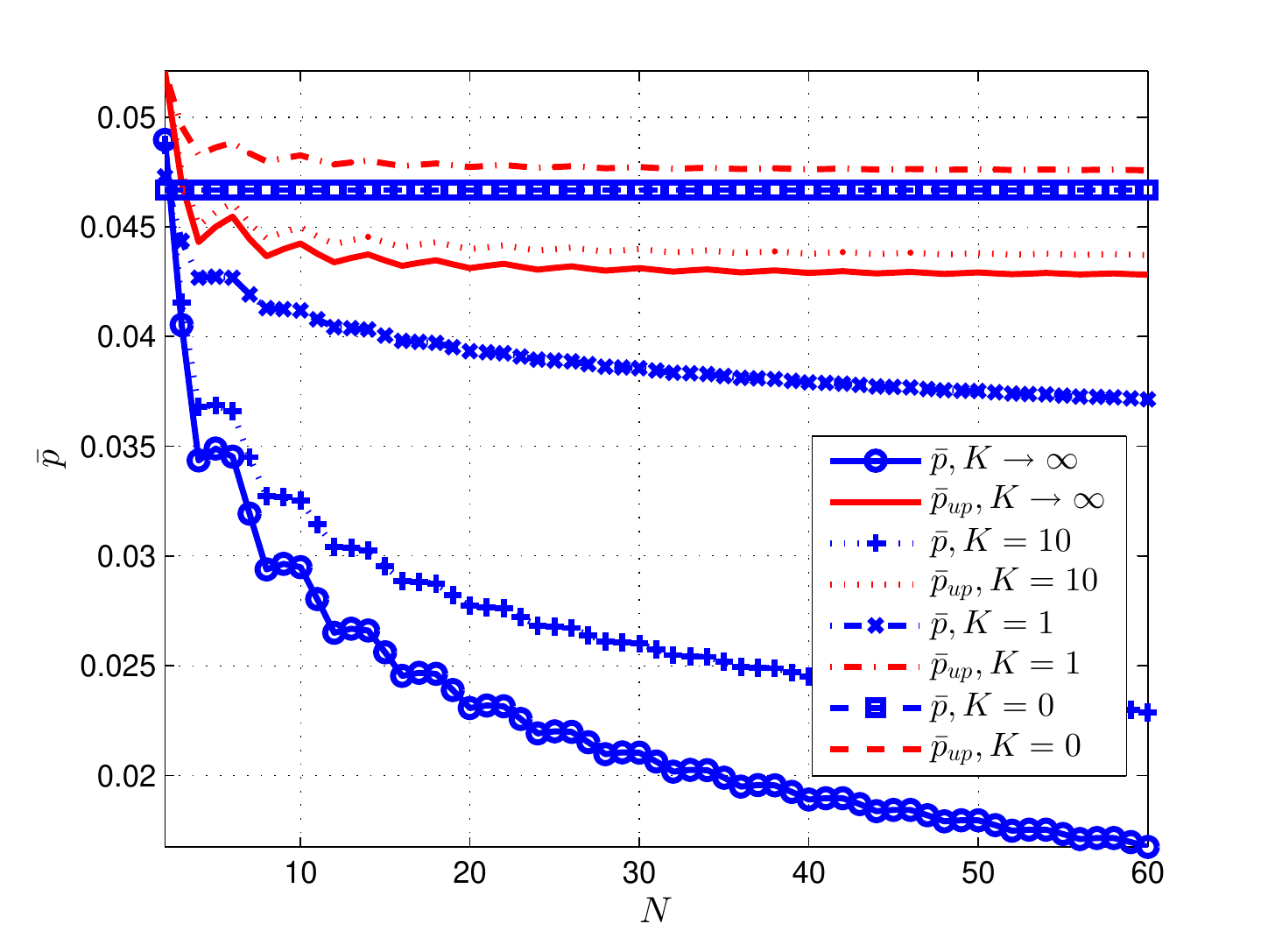}
\caption{$\bar{p}$ and $\bar{p}_{up}$ versus $N$ for different $K$. $\beta=3$, $\theta_B=0^{\circ}$, $\Delta d=0.5\lambda$. $P_t/\sigma_n^2=40$\,dB, $R_B=3.4594$\,bps/Hz, $R_s=1$\,bps/Hz, $\lambda_e=1\times10^{-4}$}
\label{fig:chp3_p_and_bounds_N_beta_3_L}
\end{figure}

Although both $\bar{p}$ and $\bar{p}_{up}$ decreases with $N$, $\bar{p}$ converges with a much slower speed, which causes a larger gap between $\bar{p}$ and $\bar{p}_{up}$ as $N$ increases.
The differences of the gaps between $\bar{p}$ and $\bar{p}_{up}$ for different $K$ is not very obvious in Fig.\,\ref{fig:chp3_p_and_bounds_DoE_beta_3_L}, because $N$ is small.

In summary, the properties of $A_0$ with respect to $N$ and $\theta_B$ can be extended to $\bar{p}_{up}$, because there is a straightforward relationship between $\bar{p}_{up}$ and $A_0$ for any $K$ and $\beta$.
On the other hand, the numerical results show that while $\bar{p}$ has similar properties to $A_0$ with respect to $N$ and $\theta_B$, the gaps between $\bar{p}$ and $\bar{p}_{up}$ increase as $N$.
Therefore, in the next section, the tightness of $\bar{p}_{up}$ will be examined.

\subsection{Tightness of Upper Bound}
\label{chp3:result:mnbv}

In this section, the tightness of the upper bound is examined via numerical results with respect to $(K,\beta, N,\theta_B)$.
An example of $\bar{p}$ and $\bar{p}_{up}$ for different $K$ and $\beta$ with $N=8$ and $\theta_B=0^{\circ}$ is shown in Fig.\,\ref{fig:chp3_p_and_p_up_K_beta}.
At lower region of $K$, the channel approaches the Rayleigh channel.
Thus, $\bar{p}$ and $\bar{p}_{up}$ converge to the certain values that only depend on $\beta$ according to (\ref{eq:chp3_meanSSOP_Ra}) and (\ref{eq:chp3_SSOP_Ra_up}).
At higher region of $K$, the channel approaches the deterministic channel.
$\bar{p}$ and $\bar{p}_{up}$ converge to the certain values that depend on $\beta$ and $G(\theta,\theta_B)$,  according to (\ref{eq:chp3_SSOP_De}) and (\ref{eq:chp3_SSOP_De_up}).

It can also be seen that when $\beta=2$, the curves for $\bar{p}$ and $\bar{p}_{up}$ emerge as $K$ increases, which corresponds to $\bar{p}=\bar{p}_{up}$ for the deterministic channel.
For other values of $\beta$, as $K$ increases, the gaps between $\bar{p}$ and $\bar{p}_{up}$ increases.

\begin{figure}
\centering
\includegraphics[scale=0.9]{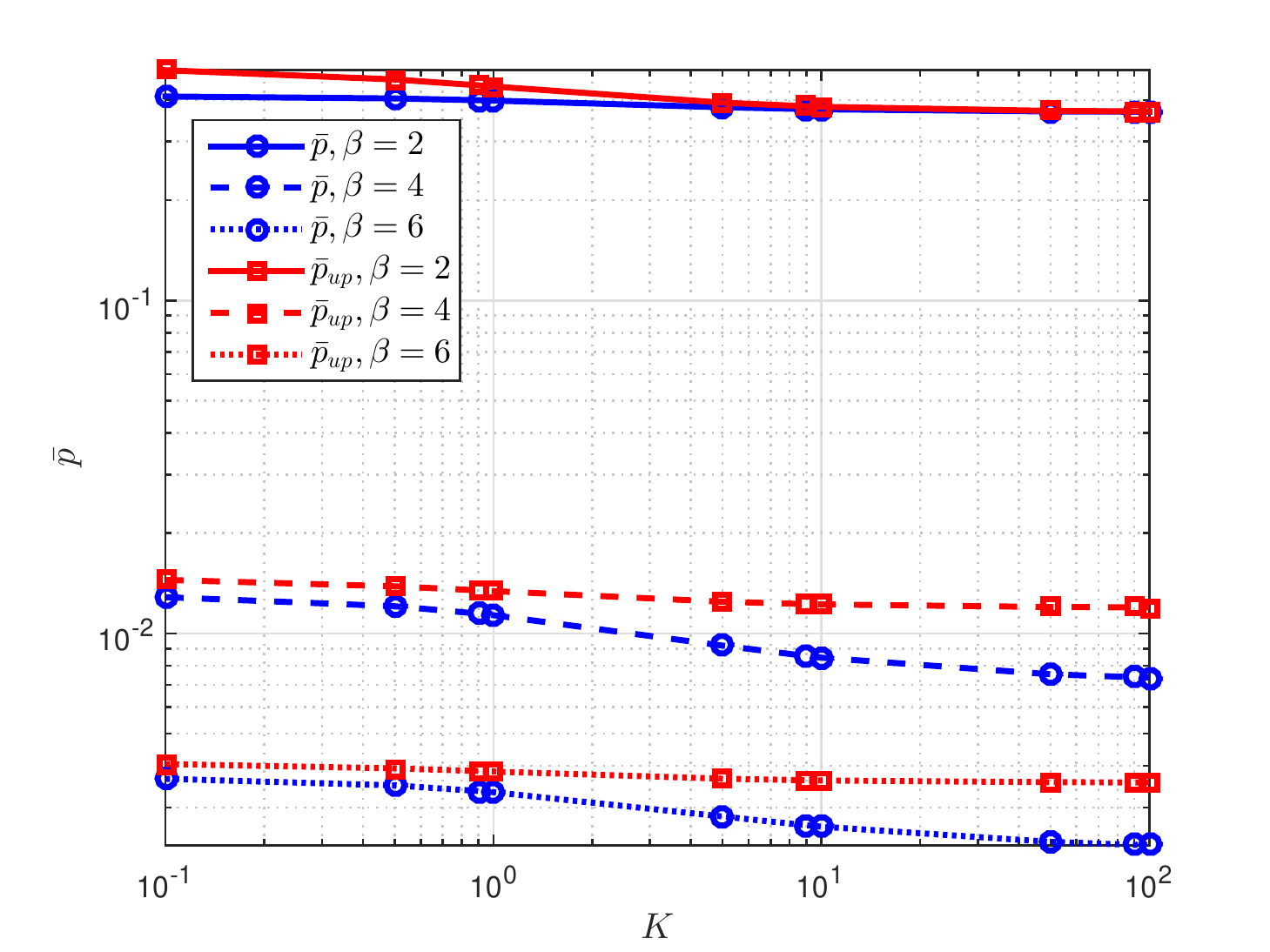}
\caption{$\bar{p}$ and $\bar{p}_{up}$ for different $K$ and $\beta$. $N=8$, $\theta_B=0^{\circ}$, $\Delta d=0.5\lambda$. $P_t/\sigma_n^2=40$\,dB, $R_B=3.4594$\,bps/Hz, $R_s=1$\,bps/Hz, $\lambda_e=1\times10^{-4}$}
\label{fig:chp3_p_and_p_up_K_beta}
\end{figure}

In this section, the ratio between $\bar{p}_{up}$ and $\bar{p}$ is used to measure the tightness of $\bar{p}_{up}$. 
Let $\eta$ denote the ratio,
\begin{align}\label{eq:chp3_eta}
	\eta=\frac{\bar{p}_{up}}{\bar{p}}.
\end{align}
$\eta\geq 1$.
The smaller value of $\eta$, the tighter $\bar{p}_{up}$ is.
In Fig.\,\ref{fig:chp3_p_and_p_up_K_beta}, it can be deduced that when $\beta=2$, $\eta$ will take the maximum value at $K=0$ and approach the minimum value $\eta=1$ at $K\to\infty$.
On the contrary, when $\beta>2$, $\eta$ will take the minimum value at $K=0$ and approach the maximum value at $K\to\infty$.
Thus, in the following, the extreme cases $K=0$ and $K\to\infty$ are used to study the range of $\eta$ for different $N$, $\theta_B$ and $\beta$.

\nomenclature{$\eta$}{ratio of $\bar{p}_{up}$ to $\bar{p}$}

In Fig.\,\ref{fig:chp3_eta_DoE_bounds_ULA}, $\eta$ is plotted against $\theta_B$ for $K=0$ and $K\to\infty$ for all $\beta$.
The ULA has $N=8$ elements and $\Delta d=0.5\lambda$.
For Rayleigh channel, both $\bar{p}$ and $\bar{p}_{up}$ are irrelevant to $\theta_B$, thus $\eta$ is flat across $\theta_B\in[0,90^{\circ}]$. 
For the deterministic channel, when $\beta=2$, $\eta=1$; 
when $\beta>2$, $\eta$ in general decrease with $\theta_B$.

\begin{figure}
\centering
\includegraphics[scale=0.9]{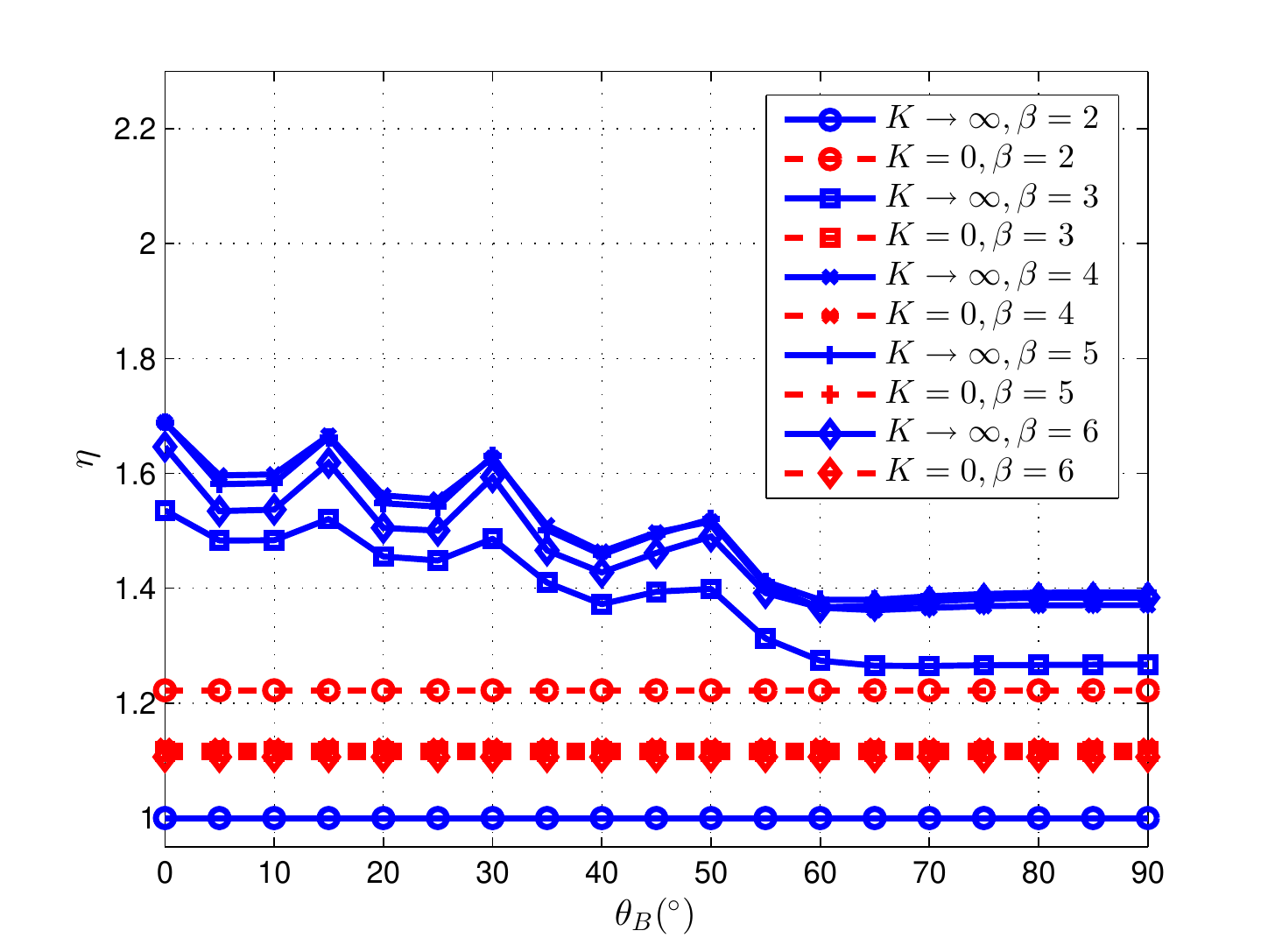}
\caption{$\eta$ versus $\theta_B$ for deterministic and Rayleigh channels for all $\beta$, $N=8$}
\label{fig:chp3_eta_DoE_bounds_ULA}
\end{figure}

Comparing the curves for both the deterministic and the Rayleigh channels, it is noticed that when $\beta>2$, the ratios are located closely in a cluster.
However, there does not exist monotonic relationship between $\eta$ and $\beta$.
For example, when $\beta=6$, $\eta$ for the deterministic channel is smaller than that when $\beta=4$.

In Fig.\,\ref{fig:chp3_eta_N_bounds_ULA}, $\eta$ is plotted against $N$ for $K=0$ and $K\to\infty$ for all $\beta$.
The ULA has $\Delta d=0.5\lambda$ and $\theta_B=0^{\circ}$.
For the Rayleigh channel, $\eta$ is flat across $N$ for all $\beta$.
For the deterministic channel, $\eta$ in general increases with $N$ when $\beta>2$, which verifies the observation from Fig.\,\ref{fig:chp3_p_and_bounds_N_beta_3_L}.
It can be seen that when as $N$ increases, $\eta$ does not converge to a certain value, but increases instead.
This means that in the larger region of $N$, $\bar{p}_{up}$ does not serve the purpose of predicting the behavior of $\bar{p}$ anymore. 

\begin{figure}
\centering
\includegraphics[scale=0.9]{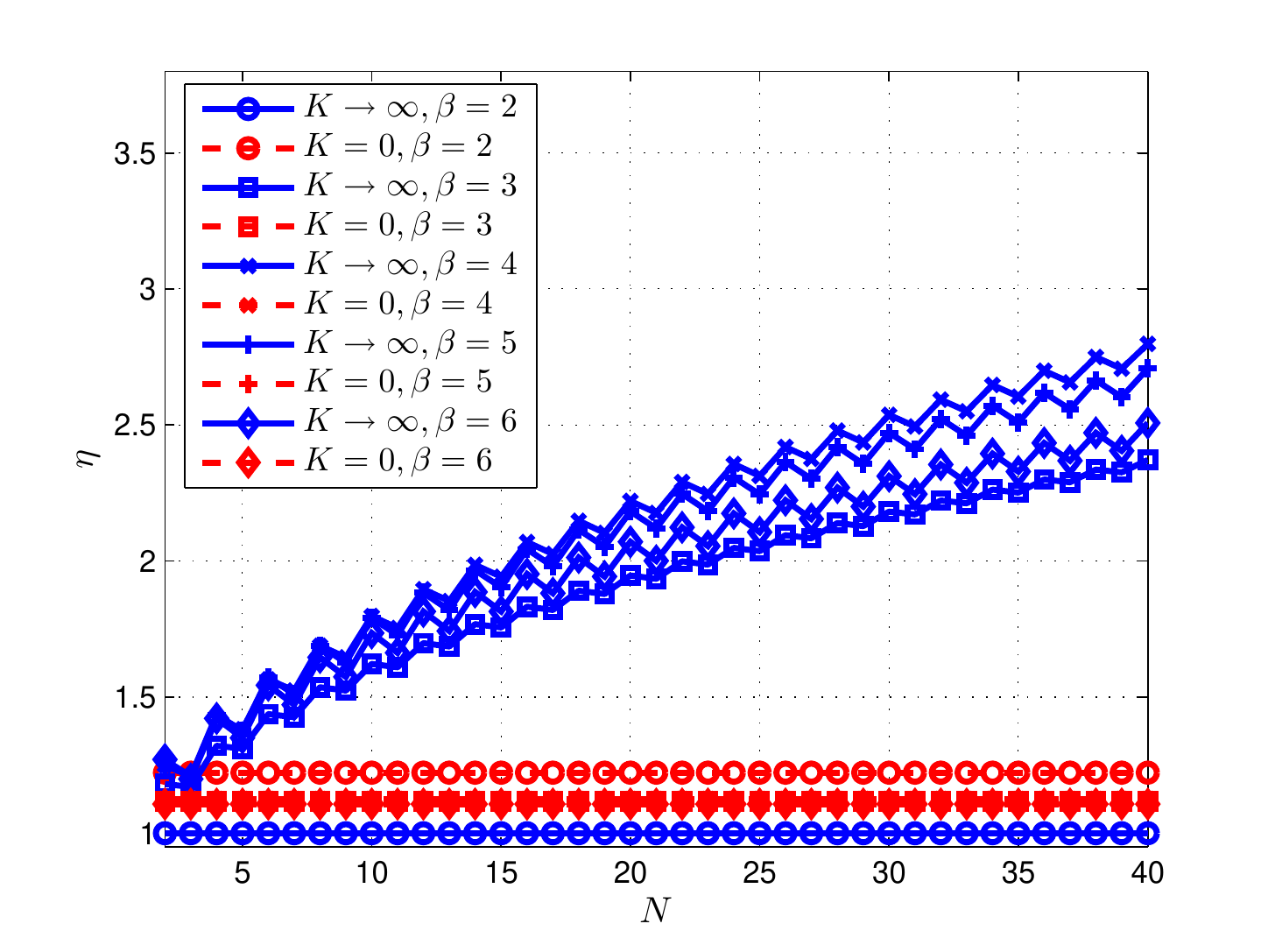}
\caption{$\eta$ versus $N$ for deterministic and Rayleigh channels for all $\beta$, $\theta_B=0^{\circ}$}
\label{fig:chp3_eta_N_bounds_ULA}
\end{figure}

Notice that in both Fig.\,\ref{fig:chp3_eta_DoE_bounds_ULA} and Fig.\,\ref{fig:chp3_eta_N_bounds_ULA}, when $\theta_B$ or $N$ change, $\eta$ fluctuates in a different way than how $\bar{p}_{up}$ changes with $\theta_B$ or $N$ in Fig.\,\ref{fig:chp3_p_DoE_De_ULA} and Fig.\,\ref{fig:chp3_p_N_De_ULA}.
The tightness for each inequality in (\ref{eq:chp3_JI_1}) and (\ref{eq:chp3_JI_2}) depends on the properties of $\bar{p}$.
The combination of two inequalities makes the tightness of upper bound hard to investigate.

In summary, when $\beta=2$, $\eta$ decreases with $K$ till the minimum value $\eta=1$;
when $\beta>2$, $\eta$ increases with $K$ till certain value that depends on $N$ and $\theta_B$, and the values of $\eta$ for different $\beta$ stay in a cluster.
For given $\beta$ and $K$, $\eta$ generally decreases with $\theta_B$ and increases with $N$.
In a lower region of $N$, e.g., $N<10$, the value of $\eta$ is smaller than 2.

\section{Conclusions}
\label{chp3:concl}

In this chapter, the secure transmission to Bob with ER based beamforming in presence of PPP distributed is investigated with a ULA.
The ER is created by beamforming based on the SSO and the physical layer security is quantitatively measured by the SSOP from the spatial aspect.
The concepts of ER and SSOP are applicable to a general array geometry and a general fading channel.

The exact expression of the SSOP is obtained, which can be used for numerical simulations; in the mean time, its analytic upper bound is obtained to facilitate analytical analysis.
Both analytical and numerical results show that the SSOP and its upper bound in general increase with the DoE angle (i.e., Bob's angle) in the range $[0,\frac{\pi}{2}]$ for a ULA with any number of elements and half-wavelength spacing; and they asymptotically approach certain values depending on the DoE angle when the number of elements increases. 
These properties can also be verified by observing array patterns.
The tightness of the upper bound (i.e., the ratio $\eta$) of the SSOP is also examined by numerical results, which shows that $\eta$ has a monotonic relationship with $K$, but a non-linear relationship with $\beta$.
It is worth noticing that $\eta$ increases with the number of elements, which makes it less useful in predicting the behavior of SSOP when the number of elements is very large.

\chapter{Comparison of Spatial Secrecy Outage Probability for Uniform Linear and Circular Arrays}
\label{chp4}

\section{Introduction}
\label{chp4:sec1}

In this chapter, the security performance of the ER-based beamforming with the UCA is studied and compared with the ULA.
Compared to the ULA, the UCA has a symmetric geometry around 360$^{\circ}$ and is more flexible on the choices of array configuration.
Thus, it is more appealing to applications that require a wider range of coverage.
In addition to the theoretical analysis, the practical issue, i.e., the mutual coupling, is examined towards the wireless security.

The expressions for the SSOP and its upper bound in Chapter\,3 are generally applicable to any array type.
Different array geometries should have different behaviors in terms of security.
By studying the secrecy performance of the UCA, a comparison can be made with the ULA, which will provide some insights on choosing an array geometry for different situations to achieve higher level of security.

To investigate the SSOP for the UCA, the system model and the methodology used in Chapter\,3 are reused except for the ULA geometry.
The analytic expressions of the pattern area for the UCA are derived, based on which the SSOP and the tightness of the upper bound for the UCA are evaluated, and the comparison with the ULA with respect to the array parameters is made.

The analysis about the SSOP and its upper bound in Chapter\,3 reveals the important role of the array factor; it determines the shape of the ER and affects the SSOP.
However, the mutual coupling distorts the array factor in practice and has different impact for different array geometries.
In this chapter, a practical beamformer is built on WARP and the  mutual coupling effect is numerically analyzed with WARP experiments and NEC simulation results towards the security performance.

This chapter is organized as follows. 
In Section\,\ref{chp4:sec2}, the system model is briefly introduced with the focus on the UCA and the expressions of the SSOP and its upper bound for the UCA are derived.
In Section\,\ref{chp4:sec3}, the SSOP and its upper bound are analyzed for the UCA with respect to the array parameters; a parallel comparison with the ULA is made.
In Section\,\ref{chp4:sec4}, the theoretical analysis and numerical results are given for the generalized Rician channel.
In Section\,\ref{chp4:sec5}, the mutual coupling is introduced and the conclusions are reached via studying the experimental and simulation results.
In Section\,\ref{chp4:sec6}, the conclusions of this chapter are given.

\section{System Model and SSOP for UCA}
\label{chp4:sec2}
\subsection{System Model with UCA}
\label{chp4:sec2:ownvw}

Consider a dense wireless communications system where the AP communicates to Bob in presence of a large number of Eves.
While the AP is equipped with an antenna array, Bob and Eves have a single antenna.
Users are distributed by a homogeneous PPP $\Phi_e$ with density $\lambda_e$.

The UCA has $N$ elements that are equally allocated on a circle with radius $R$ with spacing $\Delta d$.
An example of UCA is shown in Fig.\,\ref{fig:chp2_UCA}. 
To avoid ambiguity, the subscript `$_L$' and `$_C$' are used to distinguish between the ULA and the UCA hereinafter.
The DoE angle is set to Bob's angle, i.e., $\theta_{\text{doe}}=\theta_B$.
$G_C(\theta,\theta_B)$ can be obtained by substituting $\theta_{\text{doe}}=\theta_B$ into (\ref{eq:chp2_AF_UCA}),
\begin{align} \label{eq:chp4_AF_UCA}
G_C(\theta,\theta_B) =\frac{1}{\sqrt{N}}\sum_{i=1}^N e^{jkR[\cos(\theta_B-\psi_i)-\cos(\theta-\psi_i)]},
\end{align}
where $k=2\pi/\lambda$ and $\psi_i=2\pi(i-1)/N$.

\nomenclature{$_L$}{uniform linear array}
\nomenclature{$_C$}{uniform circular array}

The term `array dimension', denoted by $l_a$, is used to refer to the size of the array.
For the ULA, the array dimension is the array length, i.e., $l_{a,L}=(N-1)\Delta d$; for the UCA, the array dimension is the diameter, i.e., $l_{a,C}=2R$.
The relationship between $\Delta d$ and $l$ for the ULA and the UCA is then given by 
\begin{align}
	\Delta d_L &=\frac{l_{a,L}}{N-1}, \label{eq:chp4_spacing_and_dim_ULA} \\
	\Delta d_C &=l_{a,C}\sin(\frac{\pi}{N}). \label{eq:chp4_spacing_and_dim_UCA}
\end{align}

Unlike the ULA, the configuration of the UCA is more flexible, i.e., the spacing is not necessarily set to $0.5\lambda$.
To uniformly investigate and compare the ULA and the UCA, the term `array configuration' is used to refer to $(N,l_a)$.
In this chapter, the ULA and the UCA are set to either the same $l_a$ or the same $\Delta d$ for any given $N$.

\nomenclature{$l_a$}{array dimension}

While it is the reflection symmetry for the ULA, the UCA has both reflection symmetry and rotational symmetry.
Similar to Proposition\,\ref{prop:chp3_theta_B_range}, the following proposition can be deduced.
\begin{proposition}\label{prop:chp4_theta_B_range}
The array pattern for $G_C(\theta,\theta_B)$ repeats itself every $\frac{2\pi}{N}$ for $\theta_B$.
As the first element of the UCA lies on the positive x-axis, $G_C(\theta,\theta_B)$ is symmetric regarding to $\theta_B=\frac{i\pi}{N}$, $i\in\mathbb{Z}$.
Therefore, it suffices to study $G_C(\theta,\theta_B)$ in range of $\theta_B\in[0,\frac{\pi}{N}]$.
Nevertheless, as often in comparison with the ULA, the range of $\theta_B$ is set to $[0,\frac{\pi}{2}]$ for the UCA.
\end{proposition}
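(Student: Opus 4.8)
The plan is to establish the two symmetries of the array factor $G_C(\theta,\theta_B)$ in (\ref{eq:chp4_AF_UCA}) separately---a rotational symmetry of period $\frac{2\pi}{N}$ and a reflection symmetry about the $x$-axis---and then to combine them to locate the symmetry axes at $\theta_B=\frac{i\pi}{N}$. The whole argument rests on the observation that the element phase angles $\psi_i=2\pi(i-1)/N$ are equally spaced around the circle, so that rotating or reflecting the index set merely permutes the summands in (\ref{eq:chp4_AF_UCA}) and leaves the sum unchanged. This mirrors the reasoning used for the ULA in Proposition~\ref{prop:chp3_theta_B_range}, but with the richer dihedral symmetry of the circle.

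First I would prove the rotational identity $G_C(\theta+\frac{2\pi}{N},\theta_B+\frac{2\pi}{N})=G_C(\theta,\theta_B)$. Substituting the shifted arguments into (\ref{eq:chp4_AF_UCA}) and using $\psi_i-\frac{2\pi}{N}=\psi_{i-1}$, each phase $\cos(\theta_B+\frac{2\pi}{N}-\psi_i)$ collapses to $\cos(\theta_B-\psi_{i-1})$, and likewise for the $\theta$-term. Reindexing the sum by $i\mapsto i-1$ and using the $2\pi$-periodicity of the cosine to identify $\psi_0$ with $\psi_N$ (both labelling the same physical element position), the set of summands is recovered intact, so the value is unchanged. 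This is exactly the statement that the pattern repeats every $\frac{2\pi}{N}$ in $\theta_B$.

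Next I would establish the reflection identity $G_C(-\theta,-\theta_B)=G_C(\theta,\theta_B)$ about the $x$-axis on which the first element sits. Replacing $\theta,\theta_B$ by their negatives turns each phase into $\cos(\theta_B+\psi_i)-\cos(\theta+\psi_i)$; since the map $\psi_i\mapsto-\psi_i$ sends the index $i$ to $2-i\pmod N$ and thus permutes the equally spaced $\{\psi_i\}$ among themselves, another reindexing restores the original sum. Composing this reflection about $\theta_B=0$ with the rotational symmetry of the previous step yields a reflection axis at every $\theta_B=\frac{i\pi}{N}$: in particular, reflecting about the bisector of the first two elements gives $G_C(\frac{2\pi}{N}-\theta,\frac{2\pi}{N}-\theta_B)=G_C(\theta,\theta_B)$, placing an axis at $\theta_B=\frac{\pi}{N}$. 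Between the two consecutive axes at $0$ and $\frac{\pi}{N}$ every distinct pattern shape already appears, so restricting attention to $\theta_B\in[0,\frac{\pi}{N}]$ suffices; the choice $[0,\frac{\pi}{2}]$ for comparison with the ULA is then just a convenient (redundant) enlargement of this fundamental domain.

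The main obstacle is bookkeeping rather than anything conceptual: the reindexing steps require reading indices modulo $N$ and $\psi$-values modulo $2\pi$, and one must check that the two conventions are compatible---for instance that $\psi_0$ and $\psi_N$ genuinely coincide as element positions---so that no summand is double-counted or dropped. Once the permutation of the index set is verified in each of the two cases, the invariance of the sum, and hence of the pattern $|G_C(\theta,\theta_B)|$, follows immediately.
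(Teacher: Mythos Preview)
Your argument is correct. The paper, however, does not supply a formal proof of this proposition: it simply precedes the statement with the remark that ``the UCA has both reflection symmetry and rotational symmetry'' and that the claim ``can be deduced'' analogously to Proposition~\ref{prop:chp3_theta_B_range} for the ULA, which was likewise asserted on geometric grounds without computation. Your approach makes explicit what the paper leaves implicit---verifying directly from (\ref{eq:chp4_AF_UCA}) that shifting both angles by $\frac{2\pi}{N}$, or negating both, merely permutes the index set $\{\psi_i\}$ and hence leaves the sum invariant. This is more rigorous than the paper's treatment and exposes exactly which algebraic facts (the equal spacing of the $\psi_i$ and the $2\pi$-periodicity of cosine) drive the symmetry, whereas the paper's presentation trades that detail for brevity by appealing to the evident dihedral symmetry of the array geometry.
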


The large-scale path loss and the Rician small-scale fading are considered in addition to the AWGN.
Assume that the AP has the knowledge of Bob's CSI or coordinates, but has no knowledge of Eves' CSI except for their distribution, i.e., the PPP distribution.
Substituting $G_C(\theta,\theta_B)$ in (\ref{eq:chp4_AF_UCA}) for the general expression $G(\theta,\theta_B)$, $|\tilde{h}_C|^2$ can be obtained from the general expression $|\tilde{h}|^2$ in (\ref{eq:chp3_h_tilde_square}).
Thus, the received signal power $P_r(z)$ in (\ref{eq:chp3_receivedpower}) and the channel capacity $C(z)$ in (\ref{eq:chp3_channelcapacity}) for the UCA can be written based on $|\tilde{h}_C|^2$.

\subsection{SSOP and Its Upper Bound for UCA}
\label{chp4:sec2:ozix}

$\Theta$ is defined in the same way for the UCA as for the ULA, as shown in (\ref{eq:chp3_erdefinition2}).
For the UCA, $D(\theta)$ in (\ref{eq:chp3_ERboundary}) relies on $|\tilde{h}_C|^2$, which in turn depends on $G_C(\theta,\theta_B)$.
Similarly, $\bar{p}_C$ can be obtained by substituting $G_C(\theta,\theta_B)$ into the general expressions $\bar{p}$ in (\ref{eq:chp3_meanSSOP_Ri_2}), 
\begin{align}\label{eq:chp4_meanSSOP_Ri}
	\bar{p}_C&=1-\int_{-\infty}^{\infty}\int_{-\infty}^{\infty} \text{exp}\Big\{-\frac{\lambda_e}{2}c_0^{\frac{2}{\beta}}\int_0^{2\pi}\Big[\frac{KG_C^2(\theta,\theta_B)}{K+1} \nonumber \\
	& +\frac{x^2+y^2}{K+1}+\frac{2\sqrt{K}G_C(\theta,\theta_B)}{K+1}x\Big]^{\frac{2}{\beta}}\,\mathrm{d}\theta\Big\} \frac{e^{-(x^2+y^2)}}{\pi} \,\mathrm{d}x\,\mathrm{d}y.
\end{align}
It can be seen that it is not tractable to analytically analyze.

The upper bound $\bar{p}_{up,C}$ can be obtained by substituting $A_{0,C}$ into (\ref{eq:chp3_meanSSOP_up_3}),
\begin{align}\label{eq:chp4_meanSSOP_up}
	\bar{p}_{up,C}= 1-\text{exp}\Big\{-\lambda_e\pi\Big[\frac{c_0K}{2\pi(K+1)}A_{0,C}+\frac{c_0}{K+1}\Big]^{\frac{2}{\beta}}\Big\},
\end{align}
where $A_{0,C}$ is the pattern area for the UCA and is given by
\begin{align}\label{eq:chp4_A0C}
	A_{0,C}=\int_0^{2\pi}G_C^2(\theta,\theta_B)\,\mathrm{d}\theta.
\end{align}
Notice that (\ref{eq:chp4_meanSSOP_Ri}) and (\ref{eq:chp4_meanSSOP_up}) are for the generalized Rician channel.

For the special cases, i.e., the deterministic channel and Rayleigh fading channel, $\bar{p}_C$ and $\bar{p}_{up,C}$ can be obtained by substituting $K\to\infty$ and $K=0$ into (\ref{eq:chp4_meanSSOP_Ri}) and (\ref{eq:chp4_meanSSOP_up}), respectively, which is the same as Theorem\,\ref{th:chp3_p_De}, Theorem\,\ref{th:chp3_p_Ra}, Theorem\,\ref{th:chp3_p_up_De} and Theorem\,\ref{th:chp3_p_up_Ra}.
Notice that for the Rayleigh fading channel, $\bar{p}$ in (\ref{eq:chp3_meanSSOP_Ra}) and $\bar{p}_{up}$ in (\ref{eq:chp3_SSOP_Ra_up}) are regardless of the array geometry.

To obtain the analytic expression for $\bar{p}_{up}$ in (\ref{eq:chp4_meanSSOP_up}), the analytic expression of $A_{0,C}$ is required.
Here, $A_{0,C}$ is directly given.
\begin{theorem}\label{th:chp4_A_0C}
\begin{align}\label{eq:chp4_A_0C_2}
	A_{0,C}=2\pi+2\pi\sum_{n=1}^{N-1}J_0(kRW_n) \sum_{l=-\infty}^{\infty}(-1)^{ln+lN} J_{lN}(kRW_n) e^{jlN\theta_B},
\end{align}
where $W_n=2\sin(\frac{n}{N}\pi)$; $J_0(x)$ and $J_{lN}(x)$ are the Bessel function of the first kind with order zero and $lN$, respectively.
\end{theorem}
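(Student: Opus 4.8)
The plan is to start from the definition $A_{0,C}=\int_0^{2\pi}G_C^2(\theta,\theta_B)\,\mathrm{d}\theta$ in (\ref{eq:chp4_A0C}), reading $G_C^2$ as $|G_C|^2$, and expand the squared modulus of (\ref{eq:chp4_AF_UCA}) into a double sum over element indices $i,m$. This yields $A_{0,C}=\frac{1}{N}\sum_{i=1}^N\sum_{m=1}^N e^{jkR[\cos(\theta_B-\psi_i)-\cos(\theta_B-\psi_m)]}\int_0^{2\pi}e^{-jkR[\cos(\theta-\psi_i)-\cos(\theta-\psi_m)]}\,\mathrm{d}\theta$, so the first task is the inner $\theta$-integral and the second is the residual phase sum that carries the $\theta_B$ dependence.

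For the inner integral I would apply the sum-to-product identity to write $\cos(\theta-\psi_i)-\cos(\theta-\psi_m)=-2\sin(\theta-\tfrac{\psi_i+\psi_m}{2})\sin(\tfrac{\psi_m-\psi_i}{2})$, turning the integrand into a single sinusoid $e^{jz\sin(\theta-\phi_0)}$ with $z=2kR\sin(\tfrac{\psi_m-\psi_i}{2})$. The standard identity $\int_0^{2\pi}e^{jz\sin(\theta-\phi_0)}\,\mathrm{d}\theta=2\pi J_0(z)$ (valid for any phase $\phi_0$ since the range spans a full period) collapses the integral to $2\pi J_0\big(kRW_{m-i}\big)$, where $\psi_m-\psi_i=2\pi(m-i)/N$ and $W_n=2\sin(\pi n/N)$; the evenness of $J_0$ absorbs the sign of $m-i$. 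Reindexing the double sum cyclically by the difference $n=m-i \bmod N$ (legitimate because all trigonometric terms are $2\pi$-periodic in $\psi$) separates the diagonal $n=0$, which contributes $2\pi$, from the off-diagonal blocks indexed by $n=1,\dots,N-1$, each carrying the factor $J_0(kRW_n)$.

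It then remains to evaluate, for each $n$, the phase sum $S_n=\sum_{i=1}^N e^{jkR[\cos(\theta_B-\psi_i)-\cos(\theta_B-\psi_{i+n})]}$; this is where the $\theta_B$-dependent structure of the claim is produced, and I expect it to be the crux. Applying sum-to-product once more rewrites the exponent as $-kRW_n\sin(\theta_B-\psi_i-\tfrac{\pi n}{N})$, and the Jacobi--Anger expansion $e^{-jz\sin\phi}=\sum_{l}J_l(z)e^{-jl\phi}$ expresses each summand as a Bessel series. After interchanging the finite $i$-sum with the series and invoking the discrete orthogonality $\sum_{i=1}^N e^{j2\pi l(i-1)/N}=N$ when $N\mid l$ and $0$ otherwise, only the harmonics $l=l'N$ survive, which is exactly what forces the order-$lN$ Bessel functions and the phase $e^{jlN\theta_B}$ to appear.

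Finally I would tidy the surviving phases: the factor $e^{jl'\pi n}=(-1)^{l'n}$ emerges from the $\tfrac{\pi n}{N}$ offset, while replacing the dummy $l'\to-l'$ together with $J_{-l'N}(x)=(-1)^{l'N}J_{l'N}(x)$ flips the exponential to $e^{+jlN\theta_B}$ and supplies the combined sign $(-1)^{ln+lN}$. Collecting the diagonal $2\pi$ with $\tfrac{2\pi}{N}\sum_{n=1}^{N-1}J_0(kRW_n)S_n$ and cancelling the $N$ gives (\ref{eq:chp4_A_0C_2}). The main obstacle is purely the bookkeeping of signs and index conventions across the two Jacobi--Anger/orthogonality steps; the analytic content is wholly contained in the Bessel integral and the root-of-unity sum.
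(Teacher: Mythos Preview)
Your proposal is correct and follows essentially the same route as the paper: expand $|G_C|^2$ as a double sum, apply the sum-to-product identity to reduce the $\theta$-integral to $2\pi J_0(kRW_n)$, reorganize by the index difference $n$, and then handle the remaining $\theta_B$-dependent phase sum via Jacobi--Anger plus the discrete orthogonality of $N$-th roots of unity to isolate the $l=l'N$ harmonics. The only cosmetic difference is that you reindex cyclically by $n=m-i\bmod N$ from the outset, whereas the paper introduces $W_{i,j}$ and $Z_{i,j}$ and then invokes a periodicity argument ($A_{0,i,j+N}=A_{0,i,j}$) via an ``extended table'' to reach the same diagonal-sum structure; your cyclic reindexing is a slightly cleaner way to arrive at the identical expression.
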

The proof of Theorem\,\ref{th:chp4_A_0C} is in Appendix\,\ref{appdx:bessel:ownbe}.
$A_{0,C}$ is obtained in the form of double summation of Bessel functions of the first kind.

\nomenclature{$W_n$}{intermediate variable in $A_{0,C}$}

the probabillity $\bar{p}_C$ in (\ref{eq:chp4_meanSSOP_Ri}) and $\bar{p}_{up,C}$ in  (\ref{eq:chp4_meanSSOP_up})  are determined by the channel parameters (i.e., $K$, $\beta$) as well as $G(\theta,\theta_B)$ in (\ref{eq:chp4_AF_UCA}) or $A_{0,C}$ in (\ref{eq:chp4_A_0C_2}), both of which depend on $N$, $R$ and $\theta_B$.
Thus, in this section, the impact of $N$, $R$ and $\theta_B$ is investigated for different channel parameters $K$ and $\beta$.
As mentioned in Proposition\,\ref{prop:chp3_SSOP_up_analysis}, for the deterministic channel when $\beta=2$, $\bar{p}_C=\bar{p}_{up,C}$, in which case $\bar{p}_{up}$  reduces to
\begin{align}\label{eq:chp4_p_De_beta_is_2}
	\bar{p}_{up,C}=1-\text{exp}\Big(-\frac{\lambda_ec_0}{2}A_{0,C}\Big).
\end{align}
First, the investigation of the properties of $\bar{p}_{up,C}$ in (\ref{eq:chp4_p_De_beta_is_2}) is carried out with respect to $N$, $R$ and $\theta_B$.
Then the properties of $\bar{p}_C$ and $\bar{p}_{up,C}$ for the generalized Rician channel will be examined.

Due to the positive correlation between $\bar{p}_{up,C}$ and $A_{0,C}$, the properties of $A_{0,C}$ are first examined.
Notice that $A_{0,C}$ in (\ref{eq:chp4_A_0C_2}) contains the complex component $e^{jlN\theta_B}$.
Since $A_{0,C}$ is the pattern area, it should be a real value.
In Appendix\,\ref{appdx:bessel:bxv}, $A_{0,C}$ is further derived to obtain the `real-value' form.
However, for even and odd $N$, the `real-value' forms of $A_{0,C}$ are slightly different.
Here, the `real-value' forms of $A_{0,C}$ are directly given,
\begin{align}
	A_{0,C,even}&=2\pi+2\pi \sum_{n=1}^{N-1} J_0^2(kRW_n)+4\pi\sum_{n=1}^{N-1} J_0(kRW_n)\sum_{l=1}^{\infty}(-1)^{ln} J_{lN}(kRW_n)\cos(lN\theta_B), \label{eq:chp4_A0C_even} \\
	A_{0,C,odd}&=2\pi+2\pi\sum_{n=1}^{N-1} J_0^2(kRW_n)+4\pi\sum_{n=1}^{N-1} J_0(kRW_n)\sum_{l=1}^{\infty} J_{2lN}(kRW_n) \cos(2lN\theta_B). \label{eq:chp4_A0C_odd}
\end{align}
$A_{0,C,even}$ in (\ref{eq:chp4_A0C_even}) and $A_{0,C,odd}$ in (\ref{eq:chp4_A0C_odd}) are only slight different, which allows us to use the same method to approximate and analyze them.
$A_{0,C,even}$ is taken as an example in this chapter; therefore, the subscript $_{even}$ is omitted in the expressions for convenience.

\nomenclature{$_{even}$}{even number}
\nomenclature{$_{odd}$}{odd number}

\section{Impact of Array Parameters on SSOP for UCA}
\label{chp4:sec3}
\subsection{Impact of DoE Angle}
\label{chp4:sec3:opbwec541}

To analytically analyze $A_{0,C}$ in (\ref{eq:chp4_A0C_even}), an appropriate approximation is needed.
To this end, $A_{0,C}$ is re-written by
\begin{align}\label{eq:chp4_A_0C_owjeonwp}
	 A_{0,C}=2\pi+\sum_{n=1}^{N-1}A_{0,C,n},
\end{align}
where $A_{0,n,C}$, $n=1,...,N-1$, is the summation term in (\ref{eq:chp4_A0C_even}),
\begin{align}\label{eq:chp4_A_0C_n}
	A_{0,C,n}=2\pi J_0^2(kRW_n)+4\pi J_0(kRW_n)\sum_{l=1}^{\infty}(-1)^{ln} J_{lN}(kRW_n)\cos(lN\theta_B).
\end{align}

\begin{proposition}\label{prop:chp4_W_n_range}
Because $\frac{n}{N}$ is not an integer for $n=1,...,N-1$, $W_n=2\sin(\frac{n}{N}\pi)\neq 0$.
When $N$ is even, $\sin(\frac{n}{N}\pi)=1$ at $n=\frac{N}{2}$.
Therefore, $W_n$ is in the range $(0,2]$ and $kRW_n$ is in the range $(0,2kR]$, for $n=1,...,N-1$.
\end{proposition}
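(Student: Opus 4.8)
The plan is to read the range of $W_n$ off directly from the range swept out by the argument $\frac{n}{N}\pi$ as $n$ runs over $\{1,\dots,N-1\}$, using nothing more than the positivity and unimodality of the sine on $[0,\pi]$. The first step is to locate that argument: since $1\le n\le N-1$ forces $0<\frac{n}{N}<1$ strictly, multiplication by $\pi$ places $\frac{n}{N}\pi$ strictly inside the open interval $(0,\pi)$, never reaching either endpoint.

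Next I would handle the lower bound and the non-vanishing claim simultaneously. On $(0,\pi)$ the sine is strictly positive, since it vanishes only at $0$ and $\pi$ and those are both excluded; hence $\sin(\frac{n}{N}\pi)>0$, so $W_n=2\sin(\frac{n}{N}\pi)>0$ and in particular $W_n\neq 0$. This is exactly the ingredient the proposition needs in order to guarantee that the arguments $kRW_n$ of the Bessel functions appearing in Theorem\,\ref{th:chp4_A_0C} stay away from the origin.

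For the upper bound I would invoke that $\sin$ attains its maximum value $1$ on $[0,\pi]$ exactly at $\frac{\pi}{2}$, so $\sin(\frac{n}{N}\pi)\le 1$ with equality precisely when $\frac{n}{N}=\frac12$, i.e., $n=N/2$. The index $N/2$ is a legitimate element of $\{1,\dots,N-1\}$ if and only if $N$ is even, which is precisely the parenthetical remark in the statement; for odd $N$ the value $2$ is approached but not attained. Combining the two bounds gives $0<W_n\le 2$, i.e., $W_n\in(0,2]$, and since $k=2\pi/\lambda>0$ and $R>0$ the strictly increasing scaling $x\mapsto kRx$ carries $(0,2]$ into $(0,2kR]$, yielding $kRW_n\in(0,2kR]$.

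There is really no hard step here: the whole argument is an elementary positivity-and-maximum observation about $\sin$ on $(0,\pi)$. The only point deserving a word of care is the attainment of the right endpoint — strictly speaking the image of $\{W_n\}$ equals $(0,2]$ only for even $N$, whereas for odd $N$ it is a proper subset — but since the proposition asserts the containment $W_n\in(0,2]$ rather than surjectivity onto that interval, the statement holds uniformly in the parity of $N$.
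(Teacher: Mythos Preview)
Your argument is correct and follows exactly the reasoning implicit in the paper's statement; indeed, the paper treats this proposition as self-evident and offers no separate proof beyond the observations embedded in the proposition itself. Your careful remark about the parity of $N$ and the attainment versus containment of the right endpoint is a nice clarification that the paper leaves tacit.
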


Similar to the method used in Section\,\ref{chp3:analysis:sve}, the most dominant $A_{0,C,n}$ should be chosen to approximate $A_{0,C}$.
The value of $A_{0,C,n}$ is dominated by $J_0(kRW_n)$.
On one hand, $J_{lN}(x)$ in general decreases as $l$ increases.
On the other hand, the inner summation of $J_{lN}(kRW_n)$ in (\ref{eq:chp4_A_0C_n}) is weighted by $(-1)^{ln}$, which further cancels the impact of $J_{lN}(kRW_n)$ on the overall summation.

An example of $J_{lN}(x)$ is shown by the upper plot in Fig.\,\ref{fig:chp4_besselj_UCA}, where $N=8$ and $R=\frac{l_a}{2}=\frac{(N-1)\Delta d_L}{2}$.
The UCA has the same array dimension as the ULA with the same $N$.
For $\Delta d_L=0.5\lambda$, $2kR=21.99$.

Because the order of $J_{lN}(x)$ increases in the step of $N$ in (\ref{eq:chp4_A_0C_n}), $J_{lN}(x)$ with higher orders vanishes quickly in lower range of $x\in(0,2kR]$.
It can be seen that up till around $x=5$, $J_{lN}(x)$ is negligible for $l\geq 1$.
For the values of $n$ that satisfy $kRW_n<5$, $J_0(kRW_n)$ is absolutely dominant in $A_{0,C,n}$ in (\ref{eq:chp4_A_0C_n}).
In addition, for the whole range $x\in[0,2kR]$, only the first few $J_{lN}(x)$, i.e., $l=1,2$, are comparable to $J_0(x)$.

\begin{figure}
\centering
\includegraphics[scale=0.9]{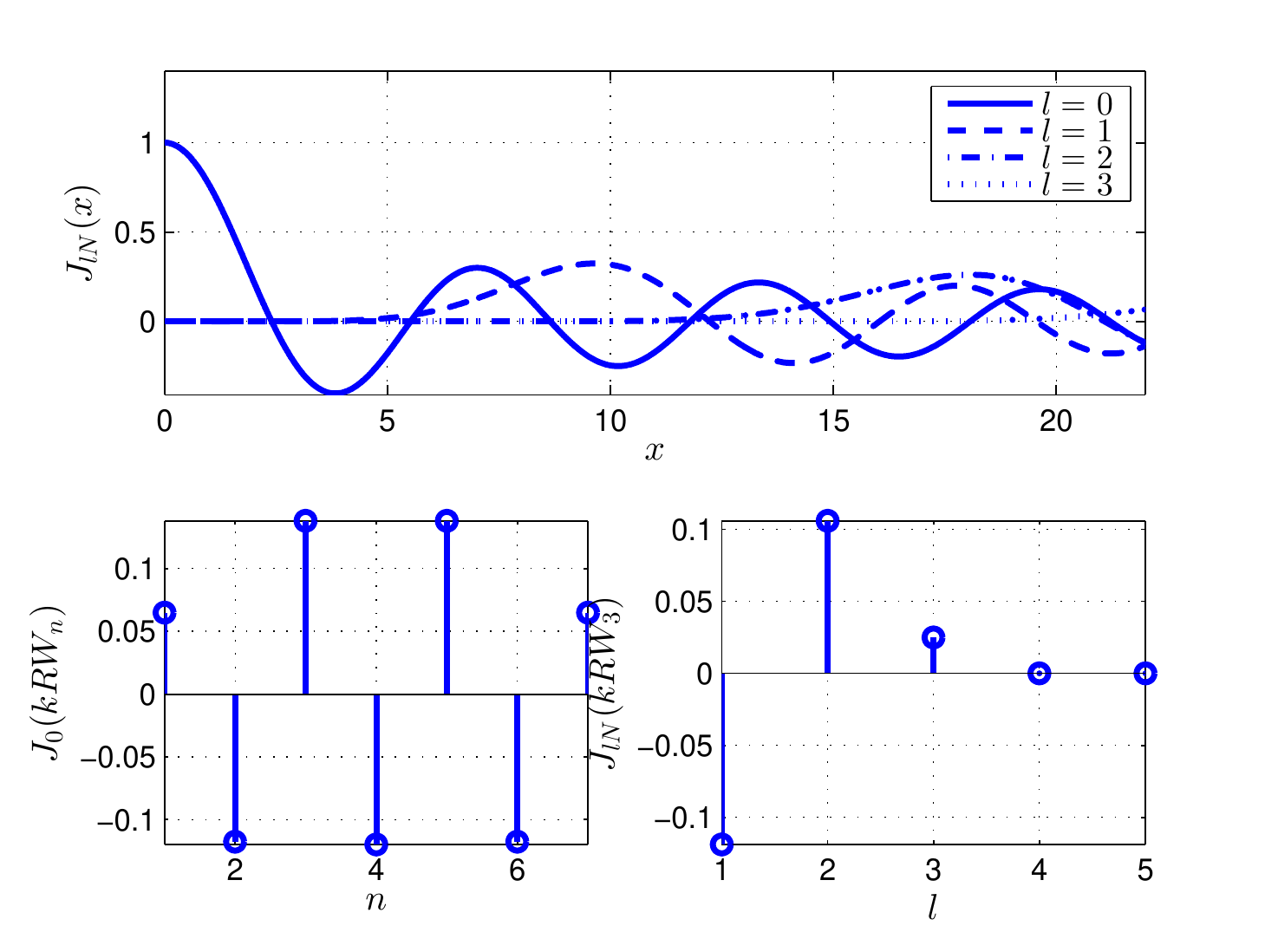}
\caption{Upper plot: $J_{lN}(x)$; lower left plot: $J_0(kRW_n)$ versus $n$; lower right plot: $J_{lN}(kRW_3)$ versus $l$. $N=8$, $R=1.75\lambda$.}
\label{fig:chp4_besselj_UCA}
\end{figure}

Take $N=8$ as an example in the following.
$kRW_n$ are written by
\begin{align}
	kRW_n=\frac{2\pi}{\lambda}\cdot\frac{(N-1)\Delta d}{2}\cdot2\sin(\frac{n}{N}\pi)=(N-1)\sin(\frac{p}{N}\pi)\pi=7\sin(\frac{p}{8}\pi)\pi.
\end{align}
The range of $kRW_n$ is $[7\pi\sin\frac{\pi}{8},7\pi]$.
The value of $J_0(kRW_n)$ in this range is shown in the lower left plot in Fig.\,\ref{fig:chp4_besselj_UCA}.
It can be seen that there is no absolutely dominant term of $J_0(kRW_n)$, because unlike $A_{0,L,n}$ that has the term $\frac{N-n}{N}$, there is no such term for $A_{0,C,n}$.

Based on the previous analysis of $A_{0,C,n}$, $A_{0,C}$ in (\ref{eq:chp4_A_0C_owjeonwp}) can be  approximated  by choosing the $A_{0,C,n}$ that has the most significant $J_0(kRW_n)$.
In this case, $J_0(kRW_3)$ has the largest absolute value, which makes the biggest impact in the summation of $A_{0,C}$.
Thus, $A_{0,C,3}$ is picked to approximate $A_{0,C}$,
\begin{align}
	A_{0,C}\approx 2\pi+2\pi J_0^2(kRW_3)+4\pi J_0(kRW_3)\sum_{l=1}^{\infty}(-1)^{3l} J_{8l}(kRW_3)\cos(8l\theta_B).
\end{align}
In the lower left plot in Fig.\,\ref{fig:chp4_besselj_UCA}, it shows that when $l\geq 3$, $J_{lN}(kRW_3)$ becomes negligible. 
Thus, only $l=1,2$ are taken into consideration,
\begin{align}\label{eq:chp4_A0C_approx}
	A_{0,C}\approx 2\pi+2\pi J_0^2(kRW_3)+4\pi J_0(kRW_3)[J_{16}(kRW_3)\cos(16\theta_B)-J_{8}(kRW_3)\cos(8\theta_B)].
\end{align}
Then, the approximation of $\bar{p}_{up,C}$ can be obtained by substituting (\ref{eq:chp4_A0C_approx}) into (\ref{eq:chp4_p_De_beta_is_2}).

For $A_{0,L,n}$ in (\ref{eq:chp3_A_0L_n_2}), $\theta_B$ exists for all $A_{0,L,n}$, $n=1,...,N-1$.
Compared to $A_{0,L,n}$, $\theta_B$ has less impact on $A_{0,C,n}$,
because for smaller $n$ where $kRW_n<5$, $J_{lN}(kRW_n)$ is negligible, which means $\theta_B$ does not impact these $A_{0,C,n}$; for the rest $A_{0,C,n}$, $\theta_B$ affects only the summation terms for $l>1$, which are not dominant. 
This means that $A_{0,C}$ does not change as much with $\theta_B$ as $A_{0,L}$.
Given the positive correlation between $A_0$ and $\bar{p}_{up}$, $\bar{p}_{up,C}$ does not change as much with $\theta_B$ as $\bar{p}_{up,L}$.

An example $\bar{p}_{up,C}$ and its approximation versus $\theta_B$ are shown in Fig.\,\ref{fig:chp4_p_DoE_De} together with $\bar{p}_{up,L}$ with the same $N=8$ and $l_a=3.5\lambda$.
It can be seen that the approximation is very close to $\bar{p}_{up,C}$.
By comparing $\bar{p}_{up,C}$ and $\bar{p}_{up,L}$, it can be seen that $\bar{p}_{up,C}$ varies much less in the range $\theta_B\in[0,\frac{\pi}{2}]$ than $\bar{p}_{up,L}$.
It can also be seen that in the lower range of $\theta_B$, e.g., $[0^{\circ},50^{\circ}]$, $\bar{p}_{up,L}$ is smaller than $\bar{p}_{up,C}$.

\begin{figure}
\centering
\includegraphics[scale=0.9]{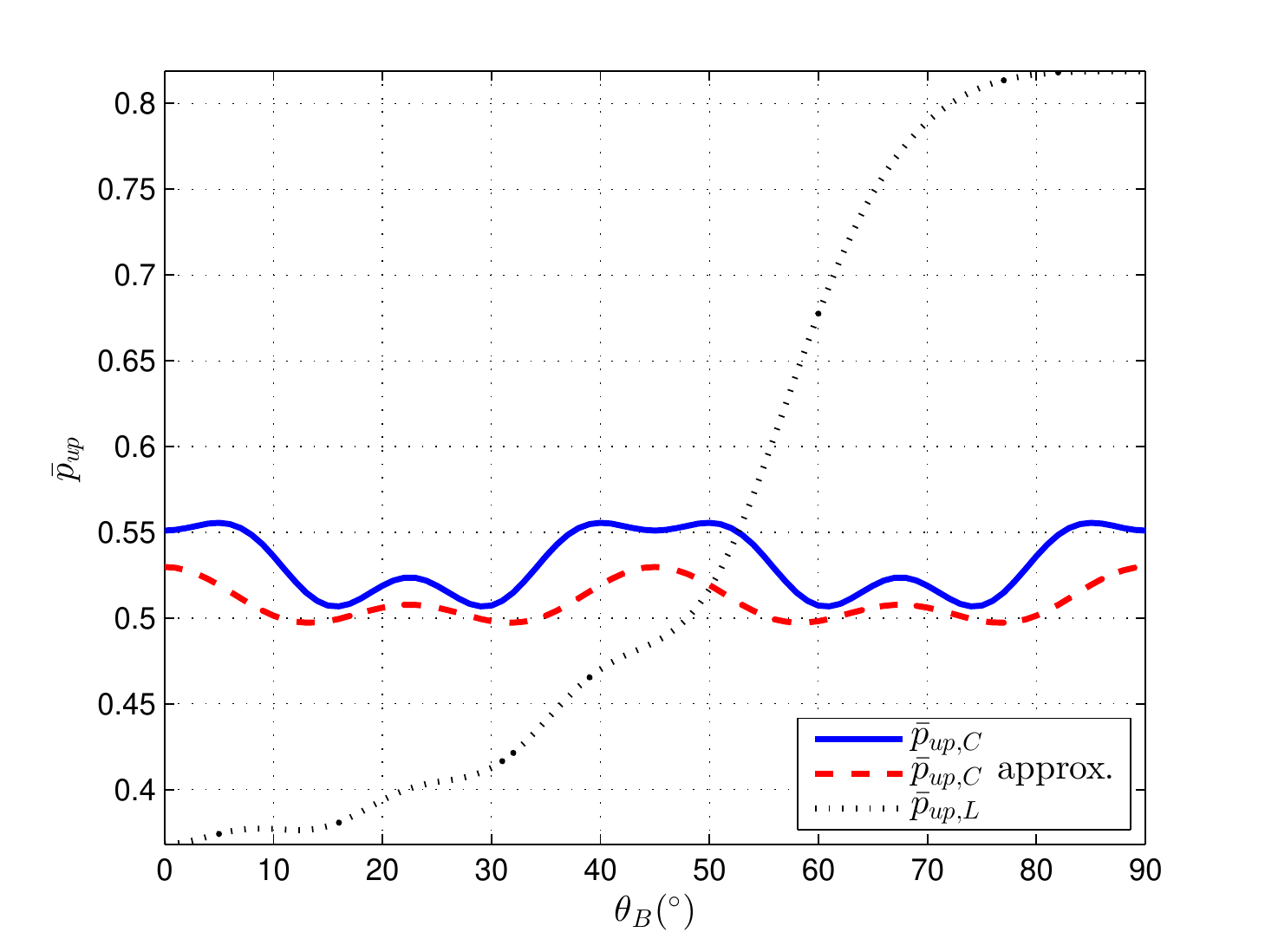}
\caption{$\bar{p}_{up,L}$, $\bar{p}_{up,C}$ and the approximation of $\bar{p}_{up,C}$ versus $\theta_B$. $N=8$, $R=1.75\lambda$. $P_t/\sigma_n^2=40$\,dB, $R_B=3.4594$\,bps/Hz, $R_s=1$\,bps/Hz, $\lambda_e=1\times10^{-4}$}
\label{fig:chp4_p_DoE_De}
\end{figure}

For the ULA, the difference between the maximum and minimum value of $\bar{p}_{up,L}$ is 0.3784.
For the UCA, the difference between the maximum and minimum value of $\bar{p}_{up,C}$ is 0.0636.
From both theoretical analysis and numerical results, it can be seen that in the range $\theta_B\in[0,\frac{\pi}{2}]$, $\bar{p}_{up,C}$ is more constant than $\bar{p}_{up,L}$.

\subsection{Impact of Array Configuration}
\label{chp4:sec3:njowq}

\subsubsection{Impact of Number of Elements}

As discussed in Section\,\ref{chp4:sec3:opbwec541}, $J_{lN}(x)$ is negligible for high order $lN$ in the low region of $x$.
Let $x_0$ denote the upper limit where $J_{lN}(x)$ is negligible in the range $x\in[0,x_0]$ for certain $lN$.
$x_0$ depends on the order ${lN}$.
For example, in the upper plot in Fig.\,\ref{fig:chp4_besselj_UCA}, $J_8(x)$ is negligible in the range $x\in[0,5]$ and $J_{16}(x)$ is negligible in the range $x\in[0,12]$.
As the order $lN$ increases, $x_0$ increases.

As shown in Proposition\,\ref{prop:chp4_W_n_range}, for fixed $R$, the range of $x=KRW_n$ is fixed, i.e., $(0,2kR]$.
As $N$ increases, $x_0$ also increases.
Once $x_0$ becomes larger than $2kR$, all $J_{lN}(x)$ for $l\geq 1$ are negligible in the range $(0,2kR]$.
Thus, for sufficiently large $N$, $A_{0,C}$ in (\ref{eq:chp4_A0C_even}) can be approximated by
\begin{align}\label{eq:appdx_bessel_A0C_asymptotic}
	A_{0,C}\approx 2\pi+2\pi \sum_{n=1}^{N-1} J_0^2(kRW_n).
\end{align}

For fixed $R$, the asymptotic behavior of $A_{0,C}$ with $N$ can be analyzed through (\ref{eq:appdx_bessel_A0C_asymptotic}).
As $N$ increases, $W_n=2\sin(\frac{n}{N}\pi)$ takes more samples of $\sin x$ in the range of $x\in(0,\pi]$, thus $J_0^2(kRW_n)$ takes more samples of $J_0^2(x)$ in the range $x\in(0,2kR]$.
Because $J_0^2(x)$ is non-negative, the more samples are taken, the larger the summation of $A_{0,C}$ is.
However, when $N$ is not very large, (\ref{eq:appdx_bessel_A0C_asymptotic}) is not valid and there does not exist a simple monotonic relationship between $A_{0,C}$ and $N$.

Due to the positive correlation between $A_{0,C}$ and $\bar{p}_{up,C}$, $\bar{p}_{up,C}$ has the same behavior with repect to $N$.
The upper plot in Fig.\,\ref{fig:chp4_p_N_and_R_De} shows the examples of $\bar{p}_{up,C}$ versus $N$, where $R$ is fixed to $1.75\lambda$.

\begin{figure}
\centering
\includegraphics[scale=0.9]{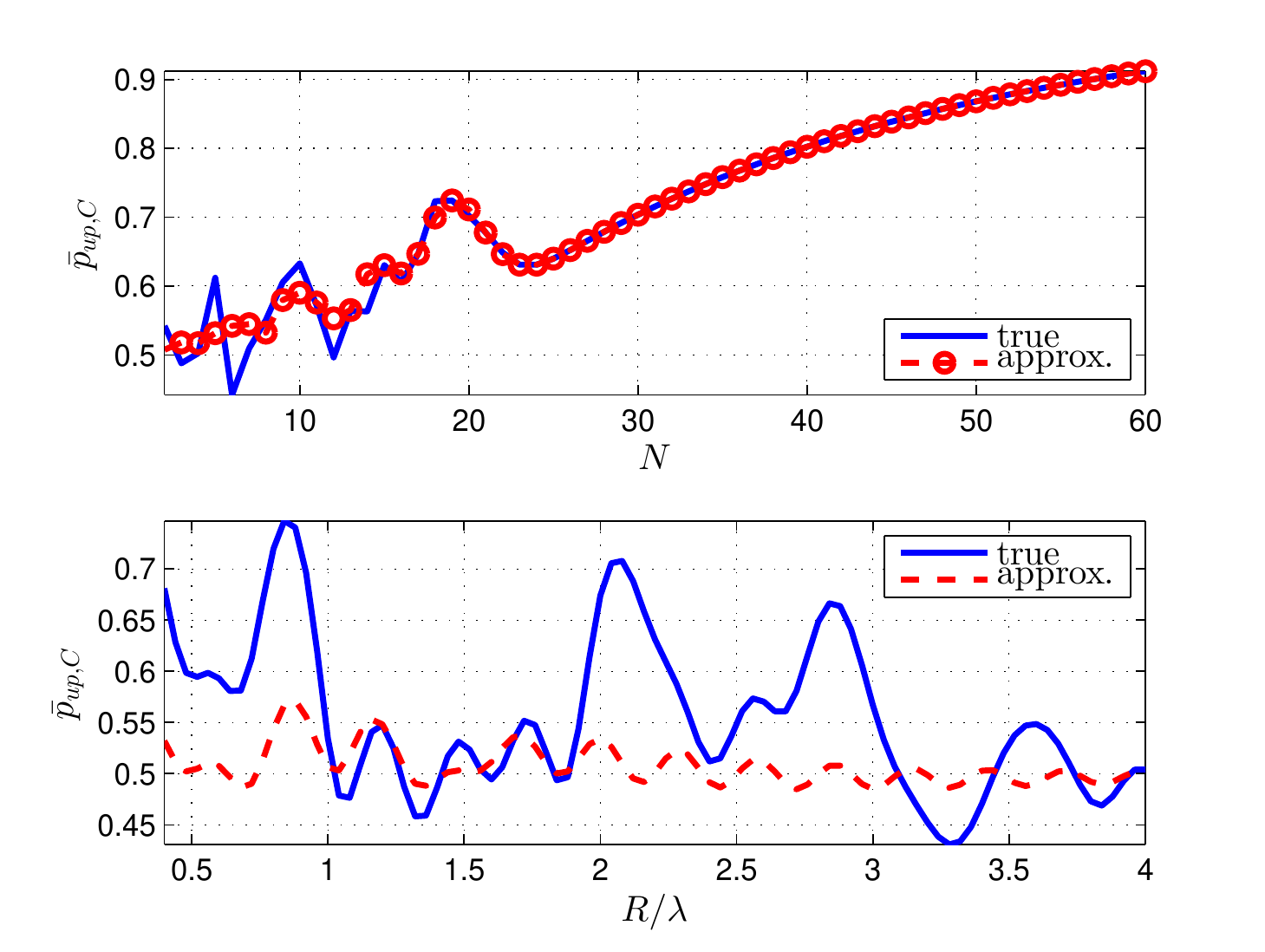}
\caption{Upper plot: $\bar{p}_{up,C}$ and the approximation versus $N$. $R=1.75\lambda$, $\theta_B=0^{\circ}$. Lower plot: $\bar{p}_{up,C}$ and the approximation versus $R$. $N=8$, $\theta_B=0^{\circ}$. $P_t/\sigma_n^2=40$\,dB, $R_B=3.4594$\,bps/Hz, $R_s=1$\,bps/Hz, $\lambda_e=1\times10^{-4}$}
\label{fig:chp4_p_N_and_R_De}
\end{figure}

In the lower region of $N$, besides $J_0(kRW_n)$, other orders of $J_{lN}(kRW_n)$ still contribute to the summation of $A_{0,C}$ in (\ref{eq:chp4_A0C_even}), which leading to the fluctuating behavior.
After $N\geq 19$, in the range of $(0,2kR]$, the summation of other orders of $J_{lN}(kRW_n)$ becomes less significant and the approximation in (\ref{eq:appdx_bessel_A0C_asymptotic}) is very close to the true value.
After $N>25$, the asymptotic behavior of $\bar{p}_{up,C}$ is almost linearly increasing with $N$.

Compared with $\bar{p}_{up,L}$, $\bar{p}_{up,C}$ in general increases with $N$ and there is no upper limit in theory, whereas
$\bar{p}_{up,L}$ in general decreases with $N$ and approaches to certain values depending on $\theta_B$.

\subsubsection{Impact of Array Dimension}

The impact of $R$ can be analyzed from (\ref{eq:chp4_A_0C_owjeonwp}) and (\ref{eq:chp4_A_0C_n}) without any approximation.
For $n=1,...,N-1$ and $l\geq 0$, the envelopes of $J_{lN}(kRW_n)$ decreases and approaches zero with different speed as $R$ increases.
Thus, the summation of $A_{0,C}$ also in general decreases and approaches certain value as $R$ increases. 
But due to the difference in the converging speed of $J_{lN}(kRW_n)$, there will be some fluctuations.

From (\ref{eq:chp4_A_0C_n}), it can be seen that, if $R$ is sufficiently large, all $J_{lN}(kRW_n)$ approach zero.
As a result, the value of $A_{0,C}$ approaches the value $2\pi$, which gives the limit of $\bar{p}_{up,C}$ by $1-\text{exp}(-\lambda_ec_0\pi)$.
However, this is only a theoretical limit. 
Because, in fact, when $R$ approaches infinity, the expression of $G_C(\theta,\theta_B)$ in (\ref{eq:chp4_AF_UCA}) no longer holds true for the far-field condition in Section\,\ref{chp2:antennas:UCA}.

In the lower plot in Fig.\,\ref{fig:chp4_p_N_and_R_De}, $\bar{p}_{up,C}$ versus $R$ is shown for fixed $N=8$.
It can be seen that the true value fluctuates as $R$ increases, because the curve is a superposition of $J_{lN}(kRW_n)$ with different orders $lN$.
In general, $\bar{p}_{up,C}$ decreases and approaches the value of $1-\text{exp}(-\lambda_ec_0\pi )$, which is 0.5025 in this case.
However, in the low region of $R$, e.g., $R<2\lambda$, the decreasing behavior is not very obvious.

The approximation in (\ref{eq:chp4_A0C_approx}) is also plotted as a comparison in the lower plot in Fig.\,\ref{fig:chp4_p_N_and_R_De}.
It can be seen that in general, the approximation is inaccurate.
This is because the approximation in (\ref{eq:chp4_A0C_approx}) is obtained for $N=8$ and $R=1.75\lambda$, which works the best only for that particular array configuration.
This can be verified by the proximity of the true value and the approximation at $R=1.75\lambda$ in the lower plot in Fig.\,\ref{fig:chp4_p_N_and_R_De}.

\subsection{Impact of Array Parameters on Array Pattern}
\label{chp4:sec3:vnweiwo}

The same as the ULA, the behavior of $A_{0,C}$ with respect to $N$, $R$ and $\theta_B$ can be explained from the spatial aspect by looking into the array pattern of $G_C(\theta,\theta_B)$, which determines the shape of $\Theta$ in (\ref{eq:chp3_ERboundary}).
Similar to the ULA, the mainbeam of the UCA is investigated first.

The mainbeam for the UCA can also be characterized by $G_{\text{max}}$ and $\Delta\theta_{HP,C}$.
$G_{\text{max}}$ in (\ref{eq:chp2_max_gain}) is regardless of the array geometry.
For the UCA, $\Delta\theta_{HP,C}$ is directly given here.
\begin{proposition}\label{th:chp4_HPBW_C}
\begin{align}\label{eq:chp4_HPBW_C}
	\Delta\theta_{HP,C}=4\arcsin\frac{1.1264}{2kR}=4\arcsin\frac{1.1264}{kl_a}.
\end{align}
\end{proposition}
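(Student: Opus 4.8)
The plan is to characterize the half-power beamwidth exactly as in the ULA case: as the angular separation $\Delta\theta_{HP,C}$ between the two directions, symmetric about $\theta_B$, at which the power gain falls to one half of its peak. Since $G_{\text{max}}=\sqrt{N}$ by (\ref{eq:chp2_max_gain}), these half-power points are where $|G_C(\theta,\theta_B)|=\sqrt{N/2}$, i.e. where the normalized pattern equals $1/\sqrt{2}$. So the task reduces to finding a closed-form expression for the mainbeam shape of $G_C$ near $\theta=\theta_B$.

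First I would pass from the discrete sum in (\ref{eq:chp4_AF_UCA}) to its large-$N$ continuous-aperture (ring) limit, replacing $\frac{1}{N}\sum_{i=1}^N(\cdot)$ by $\frac{1}{2\pi}\int_0^{2\pi}(\cdot)\,\mathrm{d}\psi$, so that
\begin{align}
G_C(\theta,\theta_B)\approx\frac{\sqrt{N}}{2\pi}\int_0^{2\pi} e^{jkR[\cos(\theta_B-\psi)-\cos(\theta-\psi)]}\,\mathrm{d}\psi.
\end{align}
Next I would simplify the exponent using $\cos A-\cos B=-2\sin\frac{A+B}{2}\sin\frac{A-B}{2}$, which collapses the two cosines into the single sinusoid $-2kR\sin\big(\frac{\theta_B-\theta}{2}\big)\sin\big(\frac{\theta_B+\theta}{2}-\psi\big)$ in the integration variable $\psi$. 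The integral then matches the standard Bessel representation $J_0(z)=\frac{1}{2\pi}\int_0^{2\pi}e^{-jz\sin\phi}\,\mathrm{d}\phi$ (after the shift $\phi=\frac{\theta_B+\theta}{2}-\psi$), and because $J_0$ is even I would obtain the compact mainbeam expression
\begin{align}
G_C(\theta,\theta_B)\approx\sqrt{N}\,J_0\!\Big(2kR\sin\tfrac{\theta-\theta_B}{2}\Big).
\end{align}

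The half-power condition $|G_C|=\sqrt{N/2}$ then reads $J_0\big(2kR\sin\frac{\theta-\theta_B}{2}\big)=1/\sqrt{2}$. Letting $z^\star$ be the first positive root of $J_0(z^\star)=1/\sqrt{2}$ (numerically $z^\star\approx1.1264$), and writing the two half-power directions as $\theta=\theta_B\pm\Delta\theta_{HP,C}/2$, the symmetry of $J_0$ gives $2kR\sin\big(\Delta\theta_{HP,C}/4\big)=z^\star$, whence $\Delta\theta_{HP,C}=4\arcsin\frac{z^\star}{2kR}=4\arcsin\frac{1.1264}{2kR}$; substituting $l_a=l_{a,C}=2R$ yields the second form $4\arcsin\frac{1.1264}{kl_a}$, completing the argument.

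I expect the main obstacle to be justifying the continuous-aperture approximation: the statement is written as an equality, but the clean $J_0$ mainbeam only emerges in the large-$N$ regime where element sampling is fine enough that no grating contributions distort the mainbeam. I would therefore argue that for $\theta$ close to $\theta_B$ the higher-order Bessel terms that appear in the exact expansion (the $J_{lN}$ terms in (\ref{eq:chp4_A_0C_2})) are negligible, consistent with the analysis in Section\,\ref{chp4:sec3:opbwec541}. A minor but easily mishandled point is the factor bookkeeping: the argument $\sin\frac{\theta-\theta_B}{2}$ inside $J_0$ combined with the half-width $\Delta\theta_{HP,C}/2$ produces $\Delta\theta_{HP,C}/4$, which is exactly what yields the leading constant $4$.
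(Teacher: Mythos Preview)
Your proposal is correct and reaches the same key approximation $G_C(\theta,\theta_B)\approx\sqrt{N}\,J_0\!\big(2kR\sin\tfrac{\theta-\theta_B}{2}\big)$ and the same half-power computation as the paper, but the route to that approximation differs. You pass to the continuous ring aperture by replacing $\tfrac{1}{N}\sum_i$ with $\tfrac{1}{2\pi}\int_0^{2\pi}\mathrm{d}\psi$ and then invoke the integral representation of $J_0$. The paper instead keeps the exact finite sum, applies the Jacobi--Anger expansion $e^{j\alpha\sin\gamma}=\sum_n J_n(\alpha)e^{jn\gamma}$, and exploits the identity $\sum_{i=0}^{N-1}e^{-j2\pi n i/N}=N$ for $n\in N\mathbb{Z}$ (and $0$ otherwise) to obtain the \emph{exact} representation
\[
G_C(\theta,\theta_B)=\sqrt{N}\sum_{l=-\infty}^{\infty}J_{lN}\!\Big(2kR\sin\tfrac{\theta-\theta_B}{2}\Big)\,e^{jlN(\theta+\theta_B)/2}.
\]
The $J_0$ mainbeam is then the $l=0$ term, and the paper justifies dropping $l\neq0$ within the mainbeam by the concrete numerical observation that $J_N(2.4048)$ is already negligible for $N\ge4$ (since the first null of $J_0$ sits at $2.4048$). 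This buys an explicit, finite-$N$ criterion for validity rather than an asymptotic large-$N$ statement; conversely, your continuous-aperture argument is shorter and sidesteps the expansion machinery. Your closing remark that the neglected contributions are exactly the higher-order $J_{lN}$ terms is precisely the bridge between the two derivations.
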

The proof of Proposition\,\ref{th:chp4_HPBW_C} is in Appendix\,\ref{appdx:bessel:yter}.
$\Delta\theta_{HP,C}$ is only determined by $R$ (i.e., $l_a$).
Since $\arcsin(\cdot)$ is monotonically increasing, $\Delta\theta_{HP,C}$ is reversely proportional to $R$.

Both $\Delta\theta_{HP,L}$ in (\ref{eq:chp3_HPBW_L}) and $\Delta\theta_{HP,C}$ in (\ref{eq:chp4_HPBW_C}) are reversely proportional to the array dimension $l_a$, since (\ref{eq:chp3_HPBW_L}) can be rewritten with respect to $l_a$ using (\ref{eq:chp4_spacing_and_dim_ULA}).
\begin{align}\label{eq:chp4_HPBW_L}
	\Delta\theta_{HP,L}= 2\Big[\theta_B-\arcsin\Big(\sin\theta_B-\frac{N-1}{N}\frac{2.782}{kl_a}\Big)\Big].
\end{align}

\begin{proposition}\label{th:chp4_HPBW_L_C}
When $\theta_B=0$, $\Delta\theta_{HP,L}$ and $\Delta\theta_{HP,C}$ is approximately the same given the same $l_a$. 
\end{proposition}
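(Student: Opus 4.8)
The plan is to evaluate both beamwidth formulas at $\theta_B=0$ and to show that, for the common aperture $l_a$, they reduce to the same functional dependence on $kl_a$ with nearly identical leading constants. First I would substitute $\theta_B=0$ into the ULA expression (\ref{eq:chp4_HPBW_L}). The bracket collapses to $-\arcsin\!\left(-\frac{N-1}{N}\frac{2.782}{kl_a}\right)$, and since $\arcsin(\cdot)$ is an odd function this equals $\arcsin\!\left(\frac{N-1}{N}\frac{2.782}{kl_a}\right)$. Hence at broadside $\Delta\theta_{HP,L}=2\arcsin\!\left(\frac{N-1}{N}\frac{2.782}{kl_a}\right)$, which---like $\Delta\theta_{HP,C}$ in (\ref{eq:chp4_HPBW_C})---no longer depends on $\theta_B$ and is governed solely by $N$ and $l_a$.

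Next I would place the two expressions side by side: $\Delta\theta_{HP,L}=2\arcsin\!\left(\frac{N-1}{N}\frac{2.782}{kl_a}\right)$ against $\Delta\theta_{HP,C}=4\arcsin\!\left(\frac{1.1264}{kl_a}\right)$. For any array of practical size the aperture satisfies $kl_a\gg 1$, so both arcsine arguments are small and the approximation $\arcsin x\approx x$ applies. This yields $\Delta\theta_{HP,L}\approx \frac{2\cdot 2.782}{kl_a}\frac{N-1}{N}=\frac{5.564(N-1)}{Nkl_a}$ and $\Delta\theta_{HP,C}\approx\frac{4\cdot 1.1264}{kl_a}=\frac{4.5056}{kl_a}$. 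Both beamwidths are therefore inversely proportional to $kl_a$, confirming that the shared aperture dictates the same scaling law.

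The remaining---and genuinely delicate---step is to argue that the two proportionality constants are close enough to call the beamwidths \emph{approximately the same}. The naive large-$N$ limit gives $5.564$ versus $4.5056$, a gap of about 23\%, so the key observation I would emphasise is the role of the factor $\frac{N-1}{N}$ inherited from the spacing--dimension relation (\ref{eq:chp4_spacing_and_dim_ULA}): for the moderate element counts of interest it pulls the ULA constant down toward the UCA value. For instance at $N=8$ the effective ULA constant is $5.564\times\frac{7}{8}\approx 4.87$, within roughly 8\% of $4.5056$, and for smaller $N$ the two cross and agree even more closely. I would close by quantifying the ratio $\frac{5.564(N-1)}{4.5056\,N}$ over the operating range of $N$ and verifying numerically (against the exact arcsine values, not merely the small-angle surrogate) that the discrepancy stays small. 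The main obstacle is thus not any hard identity but making precise what ``approximately'' means: the statement is an asymptotic and numerical closeness claim rather than an exact equality, and its strength depends on the regime of $N$ and $l_a$.
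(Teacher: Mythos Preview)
Your argument is correct and rests on the same mechanism as the paper's: evaluate $\Delta\theta_{HP,L}$ at $\theta_B=0$, note that both beamwidths are arcsines of small arguments proportional to $1/(kl_a)$, and compare the leading constants. The packaging differs slightly. The paper first approximates $\frac{N-1}{N}\approx 1$ and $1.391\approx 1.1264$, so that the ULA argument becomes exactly twice the UCA argument, and then invokes the dedicated Lemma~\ref{le:chp4_arcsin} ($2\arcsin(x/2)\approx\arcsin(x)$ for $x\ll 1$) to match $2\arcsin(\cdot)$ against $4\arcsin(\cdot/2)$. You instead linearise both arcsines directly and compare the resulting numerical prefactors $5.564\,\frac{N-1}{N}$ and $4.5056$. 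Your route is arguably more transparent about where the residual discrepancy lives (you quantify it as roughly $8\%$ at $N=8$ and observe the crossing for smaller $N$), whereas the paper absorbs the same mismatch into the step $1.391\approx 1.1264$ without comment. Either way the content is the same small-angle comparison, and your explicit error accounting is a welcome addition rather than a gap.
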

\begin{lemma}\label{le:chp4_arcsin}
For $0<x\ll 1$, $2\arcsin(\frac{x}{2})\approx\arcsin(x)$
\end{lemma}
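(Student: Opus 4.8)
The plan is to prove the approximation by comparing the Maclaurin expansions of the two sides and showing that they agree to leading order, so that the discrepancy is confined to cubic and higher terms that are negligible when $x$ is small. First I would recall the standard series $\arcsin(t)=t+\frac{t^3}{6}+O(t^5)$, which is valid for $|t|<1$; this is legitimate here because under the hypothesis $0<x\ll 1$ both arguments, $x$ and $x/2$, lie well inside the interval of convergence.

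Next I would substitute into each side. For the right-hand side this gives $\arcsin(x)=x+\frac{x^3}{6}+O(x^5)$, while for the left-hand side, setting $t=x/2$, I obtain $2\arcsin(\frac{x}{2})=2\bigl[\frac{x}{2}+\frac{1}{6}\bigl(\frac{x}{2}\bigr)^3+O(x^5)\bigr]=x+\frac{x^3}{24}+O(x^5)$. Both expansions therefore share the same linear term $x$, which is the dominant contribution for small $x$, establishing the claimed first-order agreement. The simplest form of this observation is just that $\arcsin(t)\approx t$ near $0$, so the right-hand side is $\approx x$ and the left-hand side is $\approx 2\cdot\frac{x}{2}=x$.

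To pin down the sense of the informal symbol ``$\approx$'', I would subtract the two expansions, obtaining the error term $\arcsin(x)-2\arcsin(\frac{x}{2})=\frac{x^3}{6}-\frac{x^3}{24}+O(x^5)=\frac{x^3}{8}+O(x^5)$. Since this is $O(x^3)$, it is negligible relative to the common leading term $x$ as $x\to 0$, so the relative error vanishes and the approximation is justified throughout the regime $0<x\ll 1$.

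The main point requiring care here is not a genuine analytic obstacle but rather the need to make precise in what sense the stated approximation holds, since ``$\approx$'' is used informally. The error estimate above resolves this by exhibiting the discrepancy as a cubic correction. In the intended application, namely the proof of Proposition~\ref{th:chp4_HPBW_L_C}, this lemma is invoked to rewrite the factor $4\arcsin(\cdot)$ appearing in $\Delta\theta_{HP,C}$ from (\ref{eq:chp4_HPBW_C}) into a $2\arcsin(\cdot)$ form directly comparable with $\Delta\theta_{HP,L}$ from (\ref{eq:chp4_HPBW_L}); this is valid precisely because the relevant arguments are small for the array dimensions $l_a$ of interest, so only the linear term contributes appreciably.
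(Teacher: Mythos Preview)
Your proof is correct and follows essentially the same idea as the paper: both arguments rest on the small-angle linearization $\arcsin(t)\approx t$ (the paper phrases it via $\sin(y)\approx y$ and then inverts, supplementing with a numerical plot), so that each side reduces to $x$ to leading order. Your version is in fact more quantitative, since the Maclaurin expansion yields the explicit discrepancy $\arcsin(x)-2\arcsin(\tfrac{x}{2})=\tfrac{x^3}{8}+O(x^5)$, which the paper does not compute.
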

The following proof of Proposition\,\ref{th:chp4_HPBW_L_C} requires Lemma\,\ref{le:chp4_arcsin}, the proof of which is given in Appendix\,\ref{appdx:bessel:nvoeor}.
\begin{proof}
When $\theta_B=0$, $\Delta\theta_{HP,L}$ in (\ref{eq:chp4_HPBW_L}) is given by
\begin{align}\label{eq:chp4_HPBW_L_boresight}
	\Delta\theta_{HP,L}= 2\arcsin\Big(\frac{N-1}{N}\frac{2.782}{kl_a}\Big).
\end{align}
According to (\ref{eq:chp4_HPBW_L_boresight}), the following approximation holds true,
\begin{align}\label{eq:chp4_owenvpwoaga}
	\frac{N-1}{N}\frac{2.782}{kl_a}\approx\frac{2.782}{kl_a}=2\frac{ 1.391}{kl_a}\approx 2\frac{1.1264}{kl_a}.
\end{align}

Notice that the right side in (\ref{eq:chp4_owenvpwoaga}) is twice the input of $\arcsin(\cdot)$ in (\ref{eq:chp4_HPBW_C}).
Usually, both $\frac{N-1}{N}\frac{2.782}{kl_a}$ and $\frac{1.1264}{kl_a}$ are far less than 1.
For example, when $N=8$, $\Delta d=0.5\lambda$ and $R=\frac{(N-1)\Delta d}{2}$, $\frac{N-1}{N}\frac{2.782}{kl_a}=0.1107$ and $\frac{1.1264}{kl_a}=0.0512$.
According to (\ref{eq:chp4_HPBW_C}), (\ref{eq:chp4_HPBW_L_boresight}), (\ref{eq:chp4_owenvpwoaga}) and Lemma\,\ref{le:chp4_arcsin}, $\Delta\theta_{HP,L}\approx\Delta\theta_{HP,C}$ given the same $l_a$ for the ULA and the UCA.
Thus, the proof is completed.
\end{proof}

Two groups of array patterns for the UCA with different $N$ and $\theta_B$ are shown in Fig.\,\ref{fig:chp4_patterns_UCA}. 
The radius $R$ is fixed to $0.6533\lambda$, which gives the same $\Delta\theta_{HP,C}$ for all patterns. 
The first group is for fixed $N=8$ and different $\theta_B$.
Discrete values of $\theta_B$ are chosen in the range $(0,\frac{\pi}{N})$, as mentioned in Proposition\,\ref{prop:chp4_theta_B_range}.
The second group is for fixed $\theta_B$ and different $N$.

\begin{figure}
\centering
\includegraphics[scale=0.9]{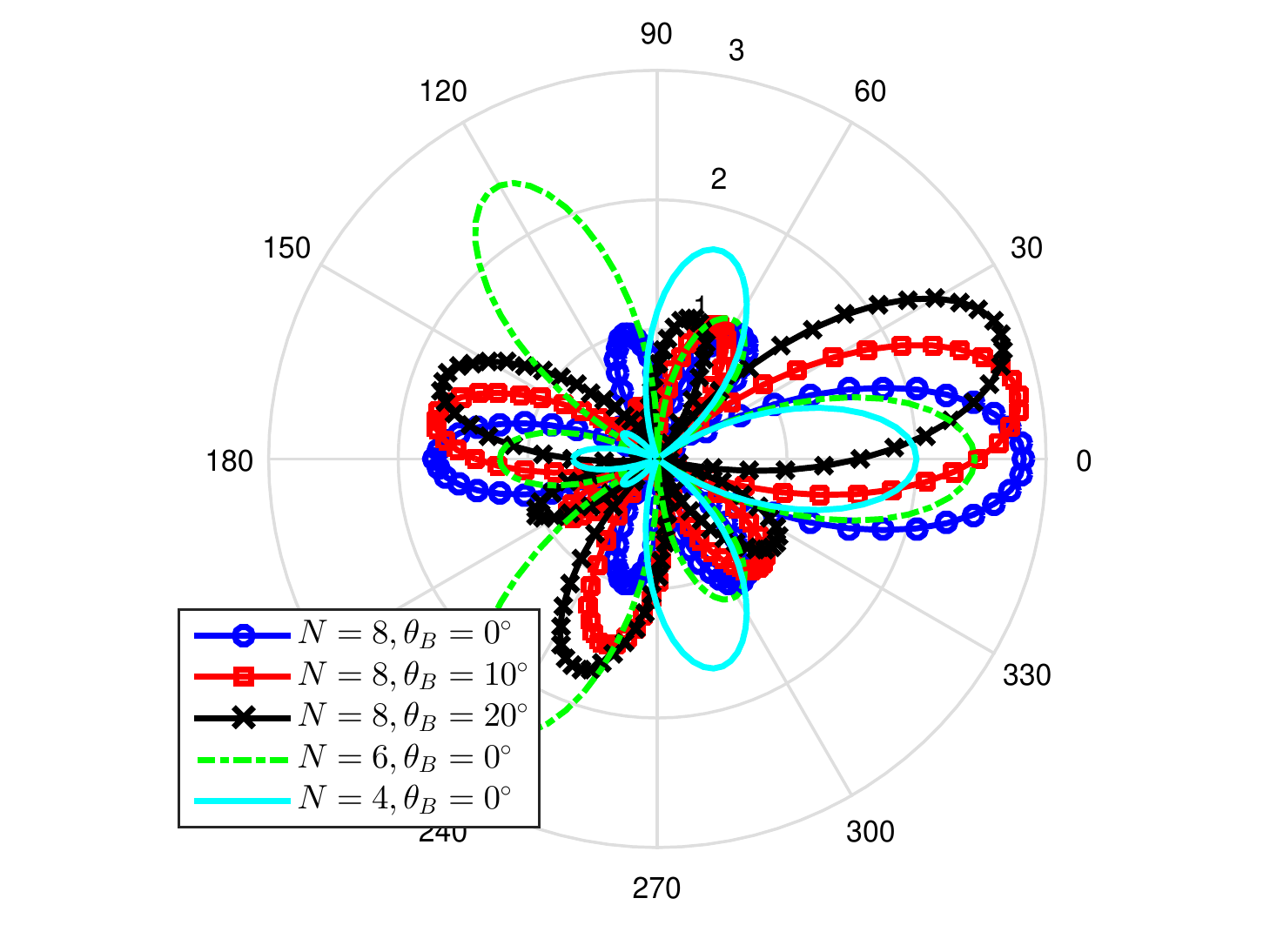}
\caption{Array patterns of UCA for different $N$ and $\theta_B$, $R=0.6533\lambda$}
\label{fig:chp4_patterns_UCA}
\end{figure}

By observing the first group of patterns, it can be seen that the mainbeam for different patterns is of the same length due to the same $N$, and the mainbeam widths are of the same due to fixed $R$.
Thus, it can be deduced that the area of the mainbeam stays more or less the same.
It can also be observed that the sidelobe level (SSL) does not change much.
Thus, the total area $A_{0,C}$ is rather constant over the range of $\theta_B$.

\nomenclature{SSL}{sidelobe level}

By observing the second group of the array patterns, it can be seen that while the mainbeam width stays the same due to fixed $R$, the mainbeam length increases with $N$. 
Thus, the area of the mainbeam increases along with $N$.
In the same time, the SSL changes dramatically as $N$ changes.
For $N=8$, the sidelobes are comparable to the mainbeam, which leads to a complex relationship between the total area $A_{0,C}$ and $N$.

In Fig.\,\ref{fig:chp4_patterns_UCA_2}, the patterns of the ULA and the UCA with the same $(N,l_a)$ are shown. 
It can be seen that when $\theta_B=0^{\circ}$, the mainbeams of the ULA and the UCA are approximately the same.
The SSL for the ULA is very low compared to the mainbeam, leading to the fact that the pattern area of the ULA is mainly contributed by the mainbeam;
for the UCA, there are larger sidelobes, resulting in a bigger pattern area than the ULA.

\begin{figure}
\centering
\includegraphics[scale=0.9]{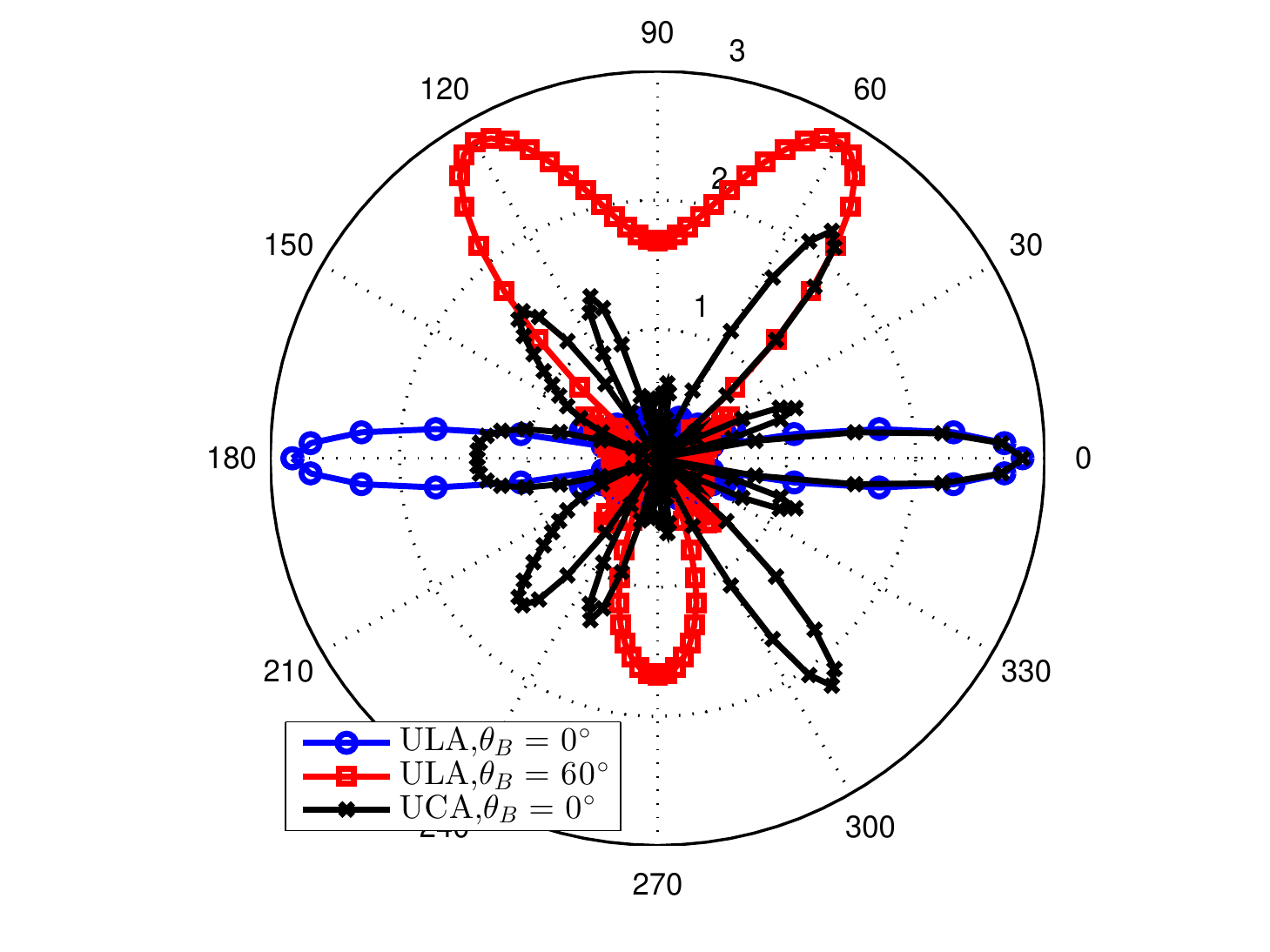}
\caption{Array patterns of ULA and UCA for $N=8$ and $l_a=\frac{N-1}{2}\Delta d$, $\Delta d=0.5\lambda$}
\label{fig:chp4_patterns_UCA_2}
\end{figure}

When $\theta_B=60^{\circ}$, $\Delta\theta_{HP,L}$ is larger than when $\theta_B=0^{\circ}$, according to Fig.\,\ref{fig:chp3_HPBW_DoE_ULA} and Fig.\,\ref{fig:chp4_patterns_UCA_2}, which results in a larger $A_{0,L}$.
On the contrary, $A_{0,C}$ stays more or less constant in $\theta_B\in[0,90^{\circ}]$.
Thus, as $\theta_B$ increases, $A_{0,L}$ grows bigger than $A_{0,C}$.
The above analysis explains the comparison of $\bar{p}_{up,L}$ and $\bar{p}_{up,C}$ in Fig.\,\ref{fig:chp4_p_DoE_De}.

For the UCA, when $R$ increases and $(N,\theta_B)$ are fixed, the mainbeam becomes narrower and smaller.
However, the SSL changes in a complex way, which makes it hard to get useful conclusions from the patterns. 
Thus, the patterns are not shown here.

\section{Numerical Results for Generalized Rician Channel Model}
\label{chp4:sec4}
\subsection{SSOP and Its Upper Bound for UCA}
\label{chp4:sec4:pwoepw}

In Section\,\ref{chp4:sec2:ozix}, $\bar{p}_C$ and $\bar{p}_{up,C}$ have been derived for the generalized Rician channel.
In Section\,\ref{chp4:sec3}, the special case when $K\to\infty$ and $\beta=2$ is analyzed, where $\bar{p}_C=\bar{p}_{up,C}$.
With the aid of the same method used for the ULA, in this section, the behaviors of $\bar{p}_C$ and $\bar{p}_{up,C}$  with respect to $N$, $R$ and $\theta_B$ are analyzed for the generalized Rician channel.
In addition, the tightness of $\bar{p}_{up,C}$ is studied via $\eta_C$ according to the general definition in (\ref{eq:chp3_eta}).

In fact, the relationship between $\bar{p}_{up,C}$ and $A_{0,C}$ is exactly the same as that for the ULA, because the general expression of $\bar{p}_{up}$ in (\ref{eq:chp3_meanSSOP_up_2}) applies to any array type.
Thus, the properties of $\bar{p}_{up,C}$ with respect to $N$, $R$ and $\theta_B$ is similar to those of $A_{0,C}$.
In the same way, it is natural to conjecture that the properties of $\bar{p}_C$ with respect to $N$, $R$ and $\theta_B$ are also similar to $A_{0,C}$, but with some deviation, depending on the particular channel parameter and array parameter.

To avoid repetition, the detailed analysis is referred to in Section\,\ref{chp3:result:wier}.
Here, some examples of $\bar{p}_C$ and $\bar{p}_{up,C}$ are used to verify the previous conclusions.
In Fig.\,\ref{fig:chp4_p_and_bounds_R_beta_3_C}, $\bar{p}_C$ and $\bar{p}_{up,C}$ versus $R$ for a typical value $\beta=3$ are shown, where $N=8$ and $\theta_B=0^{\circ}$.

\begin{figure}
\centering
\includegraphics[scale=0.9]{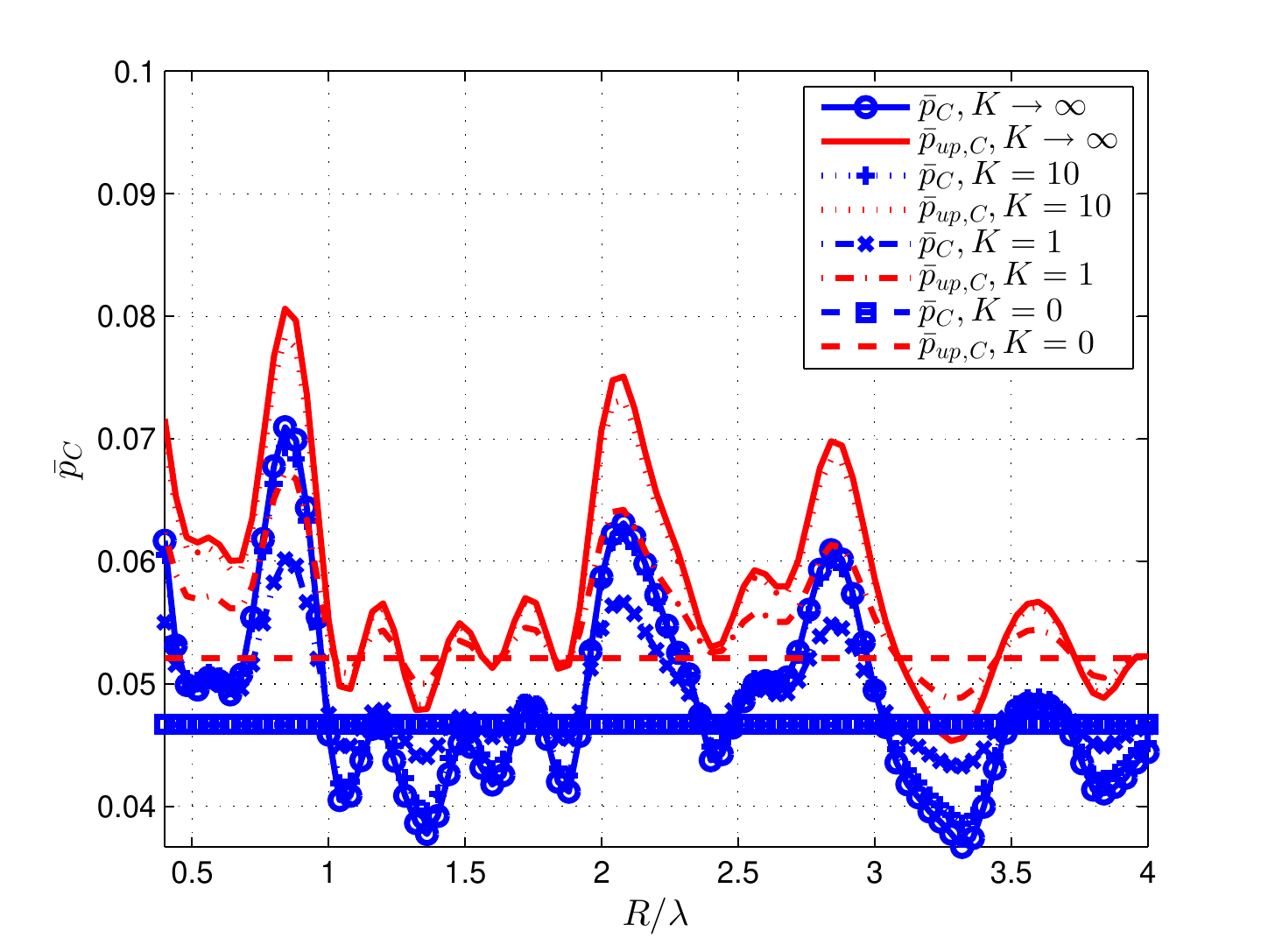}
\caption{$\bar{p}_C$ and $\bar{p}_{up,C}$ versus $R$ for different $K$. $\beta=3$, $N=8$, $\theta_B=0^{\circ}$. $P_t/\sigma_n^2=40$\,dB, $R_B=3.4594$\,bps/Hz, $R_s=1$\,bps/Hz, $\lambda_e=1\times10^{-4}$}
\label{fig:chp4_p_and_bounds_R_beta_3_C}
\end{figure}

It can be seen that for the Rayleigh channel (i.e., $K=0$), both curves for $\bar{p}_C$ and $\bar{p}_{up,C}$ are flat and are identical to those of the ULA in Fig.\,\ref{fig:chp3_p_and_bounds_DoE_beta_3_L}, because they do not rely on $G(\theta,\theta_B)$, according to (\ref{eq:chp3_meanSSOP_Ra}) and (\ref{eq:chp3_SSOP_Ra_up}).

Comparing the curves for $K\to\infty$ and $\beta=3$ in Fig.\,\ref{fig:chp4_p_and_bounds_R_beta_3_C} with the lower plot in Fig.\,\ref{fig:chp4_p_N_and_R_De} when $\beta=2$, it can be seen that the two curves have very similar fluctuating behavior with respect to $R$.
Furthermore, comparing the curves with different $K$ (except for $K=0$) in Fig.\,\ref{fig:chp4_p_and_bounds_R_beta_3_C}, it can be seen that all curves have very similar behavior with respect to $R$, which verifies that the properties of $\bar{p}_C$ and $\bar{p}_{up,C}$ with respect to $N$, $R$ and $\theta_B$ is similar to those of $A_{0,C}$.
The examples of $\bar{p}_C$ and $\bar{p}_{up,C}$ versus $\theta_B$ and $N$ are in Fig.\,\ref{fig:appdx_fig_p_and_bounds_DoE_beta_3_C} and\,\ref{fig:appdx_fig_p_and_bounds_N_beta_3_C} in Appendix\,\ref{appdx:fig:cmvs}. 
Similar conclusions can be concluded from those two figures.

The same as the ULA, when $K=10$, both curves of $\bar{p}_C$ and $\bar{p}_{up,C}$ are close to those when $K\to\infty$; 
when $K=1$, both curves of $\bar{p}_C$ and $\bar{p}_{up,C}$ are close to those when $K=0$, which can be observed in Fig.\,\ref{fig:chp4_p_and_bounds_R_beta_3_C}.
However, for different $(K,\beta)$ and $(N,R,\theta_B)$, the tightness of upper bound $\eta_C$ is different.

In summary, it can be seen from the numerical results that the properties of $\bar{p}_C$ and $\bar{p}_{up,c}$ with respect to $N$, $R$ and $\theta_B$ are in general consistent with those of $A_{0,C}$.

\subsection{Tightness of Upper Bound for UCA}
\label{chp4:sec4:psodpv}

In Section\,\ref{chp3:result:mnbv}, it has been concluded that $\eta_C$ decreases with $K$ when $\beta=2$; and $\eta_C$ increases with $K$ when $\beta>2$.
For fixed $\beta$ and $(N,R,\theta_B)$, $\eta_C$ is bounded by two extreme cases, i.e, $K=0$ and $K\to\infty$.
Thus, in the following, the examples of $\eta_C$ for $K=0$ and $K\to\infty$ are given.

In Fig.\,\ref{fig:chp4_eta_R_bounds_UCA}, $\eta_C$ versus $R$ is shown for all $\beta$.
It can be seen that when $\beta=2$, $\eta_C=1$, because this is the simple case, in which $\bar{p}_C=\bar{p}_{up,C}$.
When $\beta>2$, $\eta_C$ for different $\beta$ is located in a cluster for given $R$ and has no monotonic relationship with $\beta$, which is the same as $\eta_L$.

\begin{figure}
\centering
\includegraphics[scale=0.9]{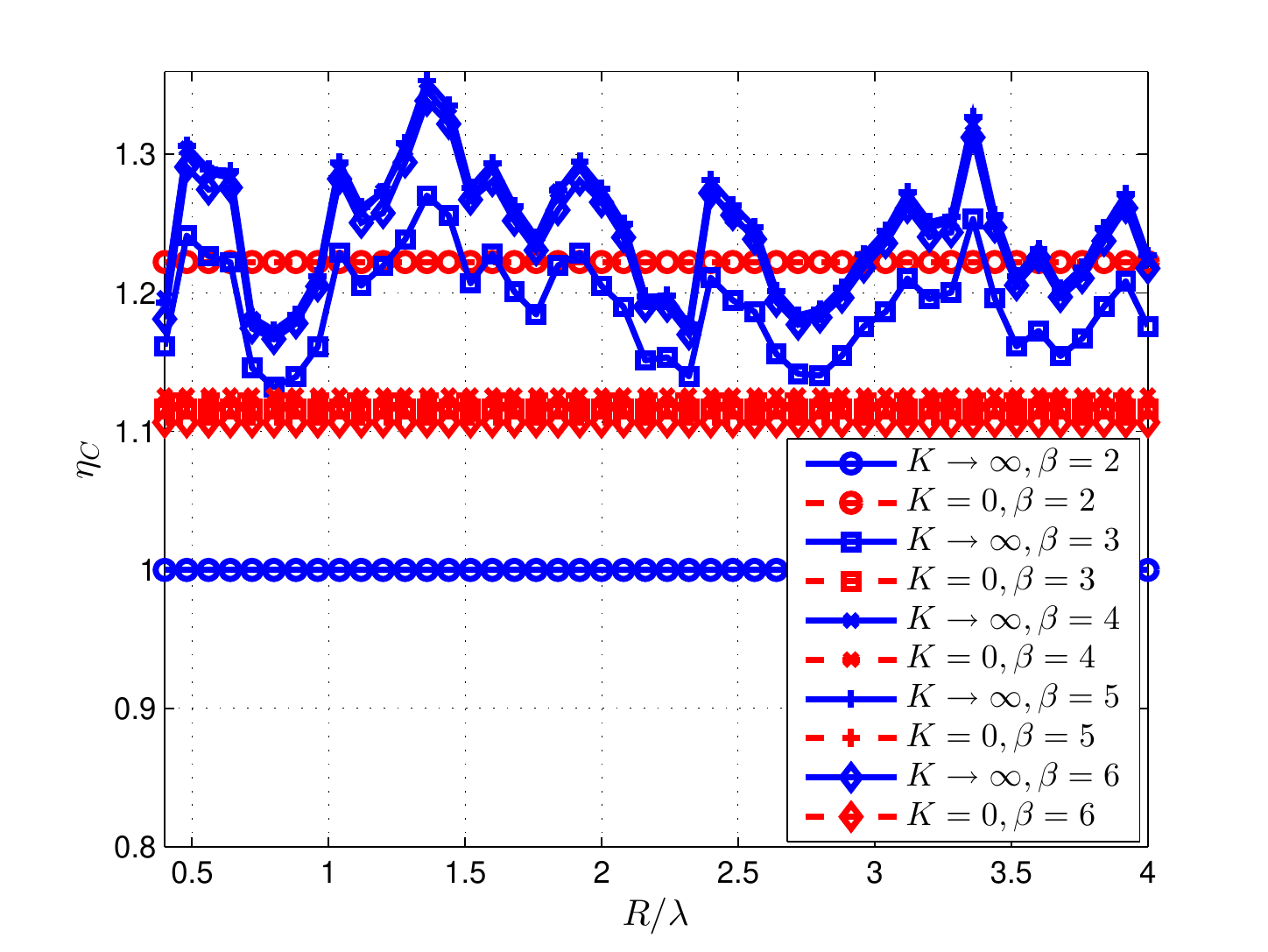}
\caption{$\eta_C$ versus $R$ for the deterministic for all $\beta$, $N=8$, $\theta_B=0^{\circ}$}
\label{fig:chp4_eta_R_bounds_UCA}
\end{figure}

In Fig.\,\ref{fig:chp4_eta_R_bounds_UCA}, as $R$ changes, there are some fluctuations for $\eta_C$. 
However, there is no obvious increasing or decreasing in the whole range of $R$.
More results of $\eta_C$ for different $\theta_B$ and $N$ are in Fig.\,\ref{fig:appdx_fig_eta_DoE_bounds_UCA} and\,\ref{fig:appdx_fig_eta_N_bounds_UCA} in Appendix\,\ref{appdx:fig:snoa}.

Take $\beta=3$ as an example to compare $\eta_C$ with $\eta_L$, and the results are shown in Fig.\,\ref{fig:chp4_eta_N_and_DoE_bounds}.
Since $\eta$ is always a constant when $K=0$, only the results $\eta$ for $K\to\infty$ are plotted.
In the upper plot in Fig.\,\ref{fig:chp4_eta_N_and_DoE_bounds}, $\eta$ versus $\theta_B$ is shown.
While $\eta_L$ in general decreases with $\theta_B\in[0,\frac{\pi}{2}]$, $\eta_C$ is more constant and smaller than $\eta_L$.
In the lower plot in Fig.\,\ref{fig:chp4_eta_N_and_DoE_bounds}, $\eta$ versus $N$ is shown.
Both $\eta_L$ and $\eta_C$ increases with $N$.
However, $\eta_C$ is smaller than $\eta_L$ for any $N$, except for $N=2$. 
Furthermore, $\eta_C$ converges to certain value, while $\eta_L$ keeps increasing.

\begin{figure}
\centering
\includegraphics[scale=0.9]{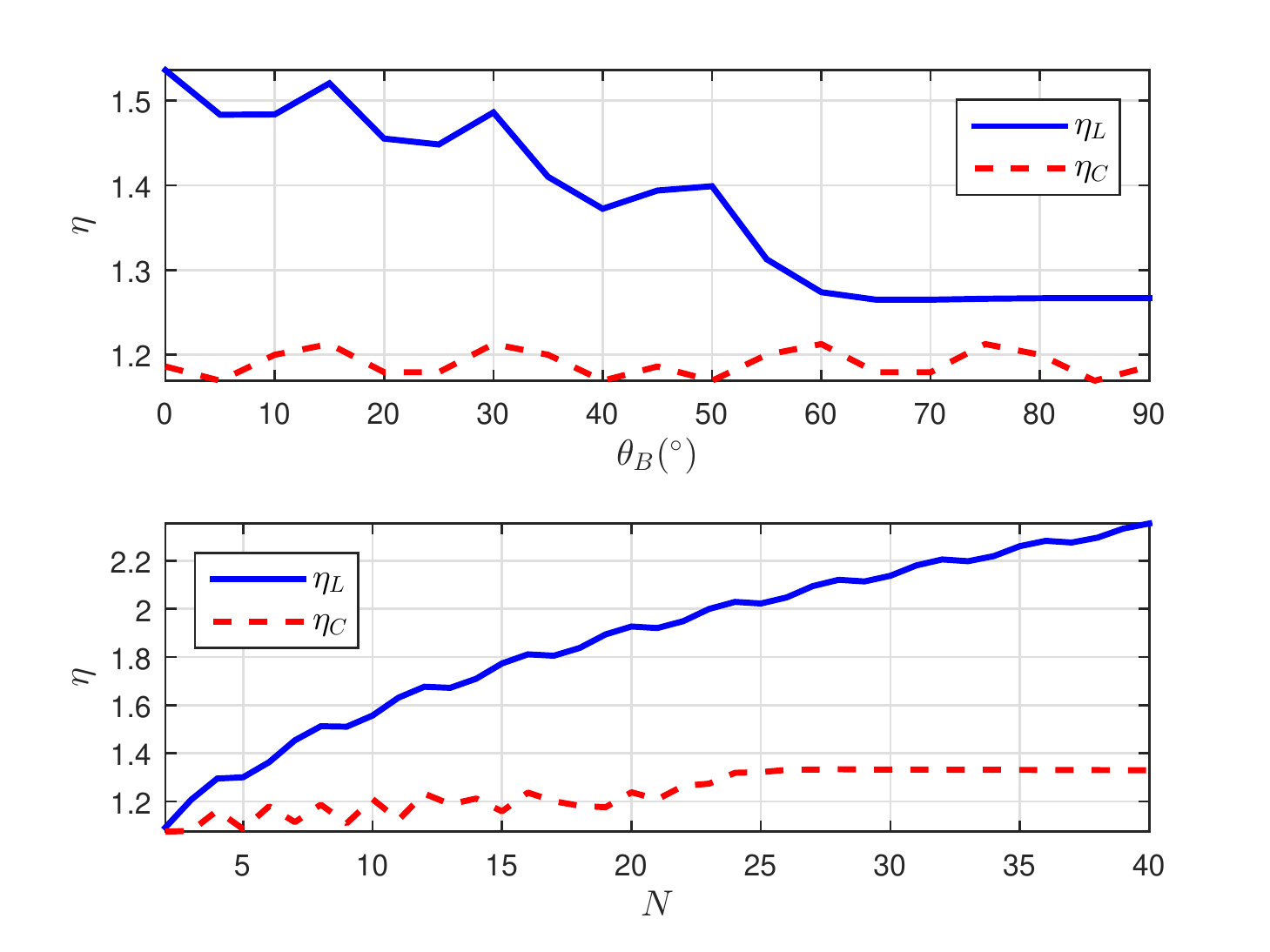}
\caption{Upper plot: $\eta_C$ versus $\theta_B$ for $N=8$ and $R=1.75\lambda$; lower plot: $\eta_C$ versus $N$ for $R=1.75\lambda$ and $\theta_B=0^{\circ}$. $K\to\infty$ and $\beta=3$}
\label{fig:chp4_eta_N_and_DoE_bounds}
\end{figure}

Based on the previous analysis, it can be concluded that not only is $\eta_C$ smaller than $\eta_L$ in general, but it is also more constant for changing $N$, $R$ and $\theta_B$, which indicates that the upper bound is tighter for the UCA.
It is worth noticing that the range of $\eta_C$ is mostly within $[1,1.4]$, which is very tight.
Thus, $\bar{p}_{up,C}$ can provide a good approximation for $\bar{p}_C$.

\section{Impact of Mutual Coupling}
\label{chp4:sec5}
\subsection{Array Factor and Mutual Coupling}
\label{chp4:sec5:nviewoe}

This section primarily focuses on the impact of the mutual coupling on the maximum gain $G_{\text{max}}$ in (\ref{eq:chp2_max_gain}), because both $P_{rB}$ and $C_B$ depend on $\tilde{h}_B$ in (\ref{eq:chp3_h_tilde_Bob}), which is calculated based on $G_{\text{max}}$.
In addition, the distortion to the array pattern is numerically measured.

As mentioned in Section\,\ref{chp2:antennas}, $G(\theta,\theta_B)$ is calculated based on the assumption that each element in the array is omni-directional, and $G_{\text{max}}=\sqrt{N}$, which is independent from $\theta_B$.
However, due to the mutual coupling, the pattern of each element in the array is not omni-directional.
Thus, the maximum gain in practice becomes angle dependent.
Let $G_{\text{max,mc}}$ denote the distorted maximum gain.
\begin{align}
	G_{\text{max,mc}}=G_{\text{max}}f(\theta_B)=\sqrt{N}f(\theta_B),
\end{align}
where $f(\theta_B)$ is the attenuation function for $G_{\text{max}}$, $0<f(\theta_B)\leq 1$.
The subscript $_{\text{mc}}$ is short for `mutual coupling'.

\nomenclature{$_{\text{mc}}$}{mutual coupling}
\nomenclature{$f(\theta_B)$}{maximum gain attenuation depending on $\theta_B$}

According to (\ref{eq:chp3_channelcapacity}), the message transmission condition (i.e., $C_B\geq R_B$) used for the secrecy outage formulation in Section\,\ref{chp3:metric:ER} can be converted into
\begin{align}\label{eq:chp4_C_B_R_B}
	P_{rB}=\frac{P_t}{d_B^{\beta}}|\tilde{h_B}|^2\geq\sigma_n^2(2^{R_B}-1).
\end{align}
Using $G_{\text{max,mc}}$ in (\ref{eq:chp3_h_tilde_square}), $|\tilde{h_B}|^2$ that is subjected to the mutual coupling can be given by
\begin{align}\label{eq:chp4_h_tilde_B_square}
	|\tilde{h_B}|^2 =\frac{KG_{\text{max,mc}}^2}{K+1}+\frac{1}{K+1}g_{Re}^2+\frac{1}{K+1}g_{Im}^2+\frac{2\sqrt{K}G_{\text{max,mc}}}{K+1}g_{Re}.
\end{align}
Because $G_{\text{max,mc}}$ changes with $\theta_B$, $|\tilde{h_B}|^2$ also changes with $\theta_B$.
To guarantee (\ref{eq:chp4_C_B_R_B}), $P_t$ need to be adjusted according to $d_B$ for certain channel parameters and $R_B$, which causes complexity in adjusting $P_t$ based on $G_{\text{max,mc}}$, i.e., $f(\theta_B)$. 
If $P_t$ is not adjustable, e.g., the current AP in Wi-Fi networks, (\ref{eq:chp4_C_B_R_B}) may not be guaranteed.

Pearson's correlation coefficient, denoted by $\rho$, is used to measure the correlation between two random variables $X$ and $Y$.
\begin{align}\label{chp3_eq:corrcoef}
  \rho=\frac{cov(X,Y)}{std(X)\cdot std(Y)},
\end{align}
where $cov(\cdot,\cdot)$ stands for the covariance and $std(\cdot)$ the standard deviation. 
$\rho$ takes value from -1 to 1, where 1 means total positive correlation, 0 means no dependence at all, and -1 means total negative correlation. 
The array patterns obtained from experiments and simulations can be regarded as samples of variables. 
Thus, the correlation coefficient $\rho$ between different patterns can be calculated to indicate how close they are.
The larger $\rho$, the more alike two patterns are.
Notice that in the expression of $\rho$ in (\ref{chp3_eq:corrcoef}), $cov(X,Y)$ is normalized against their the average values of $X$ and $Y$.
Thus, the value of $\rho$ between two patterns only indicates the closeness of shapes, or the likeness, between two patterns, but does not reflect the maximum gain attenuation.

\nomenclature{$\rho$}{Pearson's correlation coefficient}
\nomenclature{$P_rB$}{Bob's received signal power}

The mutual coupling effect was observed when building a practical transmit beamformer on WARP\,\cite{warpProject}.
First, the WARP experiment set-up is introduced before demonstrating the results.
Then the experiment results are shown and compared with NEC results.
The mutual coupling is difficult to analytically calculate because of the complex electromagnetic boundaries in the near field.
Therefore, both WARP experiments and NEC simulations are used to study the impact of the mutual coupling.

\subsection{WARP Experiments}
\label{chp4:sec5:pzoxsp}

In this section, WARPLab is used to build the transmit beamformer with antenna arrays.
An introduction to the WARP hardware and how to build a MISO   communications system on WARPLab is in Section\,\ref{chp2:mc}.
This section explains how to build a practical beamformer with calibratied phase and how to measure the beam pattern in an anechoic chamber, in order to observe the mutual coupling effect.

\subsubsection{Beamformer Set-Up}

WARP node can hold up to four RF interfaces, each of which can be regarded as a transceiver in the WARPLab design.
In this section, one/two nodes with multiple antennas, i.e., RF interfaces, are used to act as the AP, and another node with a single RF interface as a general user, in order to measure the pattern.
To simulate transmissions in Wi-Fi network, such as 802.11n, the carrier frequency is set to $f_0=2.484$\,GHz, which is the center frequency of Wi-Fi channel 14.
This is to avoid co-channel interferences from other wireless devices.

In WARPLab, data packets are generated and processed in MATLAB. 
So the data source and sink are both in MATLAB.
The transmit and receive RF interfaces are also controlled by MATLAB. 
A system diagram that describes the transmission from the AP to the user is shown in Fig.\,\ref{fig:chp4_WARP_commsyst}.
This diagram is based on the one in Fig.\,\ref{fig:appdx_WARP_commsyst2}, but with the focus on the RF end, in order to introduce the phase calibration later.
The packet that contains the preamble and the payload is first generated and configured in MATLAB; 
then it is passed into the buffer on WARP board and later sent over the air via the transmit RF interfaces.
The receive RF interface captures and stores the packet in the buffer again before sending it back to MATLAB for post-processing.
More details of the system diagram and the structure of the RF interface are shown in Fig.\,\ref{fig:appdx_WARP_transceiver}, \ref{fig:appdx_WARP_commsyst} and \ref{fig:appdx_WARP_commsyst2}.

\begin{figure}
\centering
\includegraphics[scale=1]{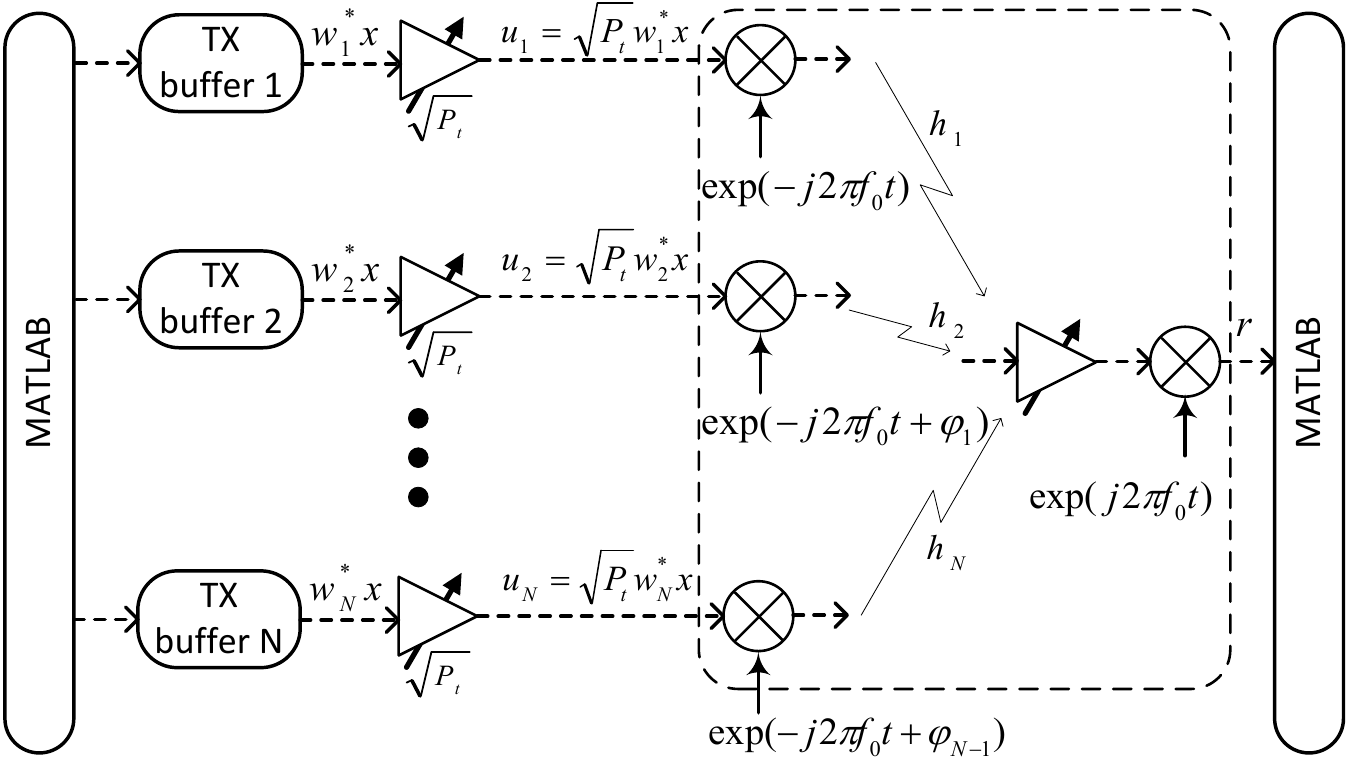}
\caption{MISO system on WARPLab}
\label{fig:chp4_WARP_commsyst}
\end{figure}

\nomenclature{$f_0$}{carrier frequency}

The transmit beamforming is explained combined with the system model in Section\,\ref{chp3:syst:model}.
As shown in Fig.\,\ref{fig:chp4_WARP_commsyst}, 
$x$ is an arbitrary symbol in the packet and the same packet is transmitted over all channels with different weights.
In MATLAB, the packet is pre-coded by $\mathbf{w}$.
Each weighted symbol, i.e., $w_i^*x$, $i=1,...,N$, is delivered to the corresponding buffer and waits to be transmitted. 
In the RF interface, an appropriate level of transmit power $P_t$ is chosen.
The discrete symbols are converted into analog signals, up-converted and transmitted via the antenna.
On the receiver side, the signal is captured, down-converted and converted into discrete signal.
Note that all arrows are in dashed lines, because it only represents the data flow and some not-so-relevant blocks/processing are omitted.
The block in the dashed diagram includes the analog processing and is considered as a discrete channel to the transmitter and the receiver.
The channel input is $u_i=\sqrt{P_t}w_i^*x$, $i=1,...,N$ and the channel output is $r$.

In this thesis, the beamformer is built in the baseband. 
Thus, pure sinusoidal signals of an intermediate frequency 5\,MHz are used as payload.
Each packet has a preamble that consists of a long-training symbol (LTS), a guard interval and a pilot.
The LTS is known by both transmitter and receiver and is used for sample-level synchronization by correlation at the receiver.
The payloads are pre-coded by $\mathbf{w}$.
The transmitted signals are superimposed in the air and form beam patterns.
The figure of the data structure is shown in Fig.\,\ref{fig:chp4_WARP_datapacket}.

\nomenclature{LTS}{long-training symbol}

\begin{figure}
\centering
\includegraphics[scale=1]{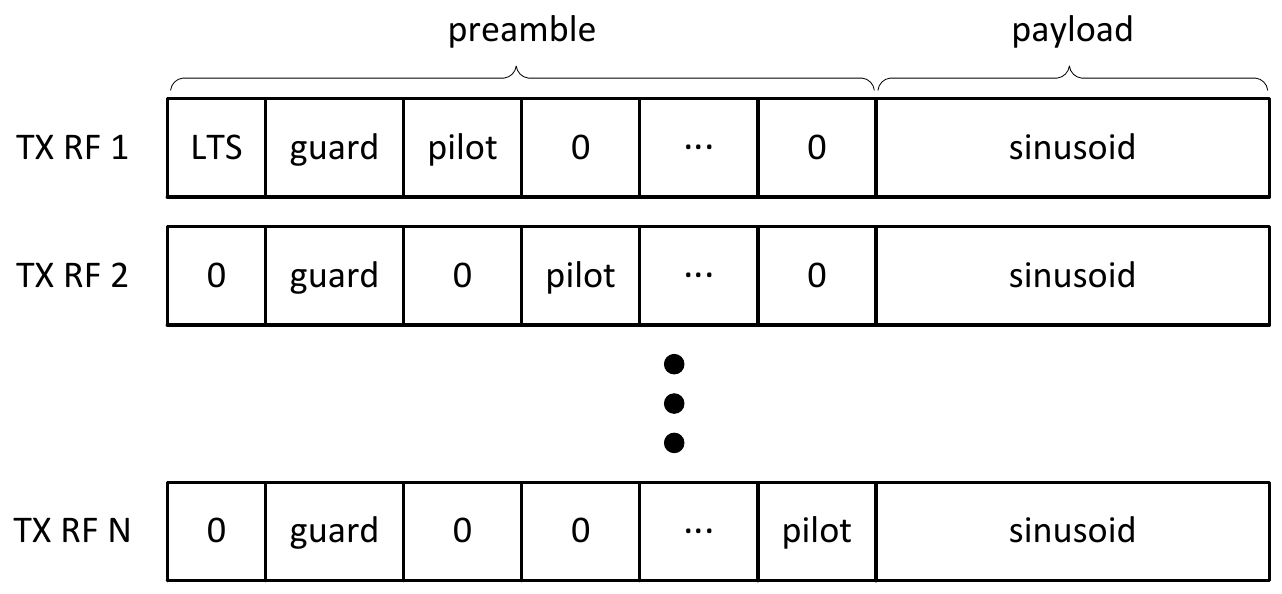}
\caption{Data packet structure}
\label{fig:chp4_WARP_datapacket}
\end{figure}

At the receiver side, the packets via different channels are superimposed.
The received signal is down-converted and sampled.
The received packet that contains received symbol $r$ is stored in the buffer.
The RSSI is also recorded.
RSSI can be converted into dBm values.
However, in our case, the absolute power level is not of interest.
The transmit power is fixed at an appropriate level, so that the patterns can be measured by appropriately normalized values at different $\theta$, i.e., $r(\theta)$.
In addition, the amplifiers at the receiver side is set to minimum levels, to avoid the thermal noise being amplified.

There are two main challenges in building the transmit beamformer that forms actual patterns over the air, i.e., carrier frequency offset (CFO) and random initial phase in each radio interface.
The details of the solution to the CFO problem and the calibration process for the random initial phase are in Appendix\,\ref{appdx:warp}.
.

\subsubsection{Beamformer Measurements}

The following explains how to measure the pattern $G(\theta,\theta_{\text{doe}})$ in an anechoic chamber, where the channel between the transmit and receive antennas is similar to a free-space path loss channel.
As shown in Fig.\,\ref{fig:chp4_WARP_chamber}, the transmit antenna array is composed of commercial 2.4\,GHz dipole antennas and is put on a rotating platform; the receive antenna is put on a fixed platform at the other end of the chamber.
The two WARP nodes are set up as described in Section\,\ref{appdx:warpnec:wioo} and \ref{appdx:warpnec:rqe} (notice that the receive WARP node is hidden from this view).

\begin{figure}
\centering
\includegraphics[width=0.8\textwidth]{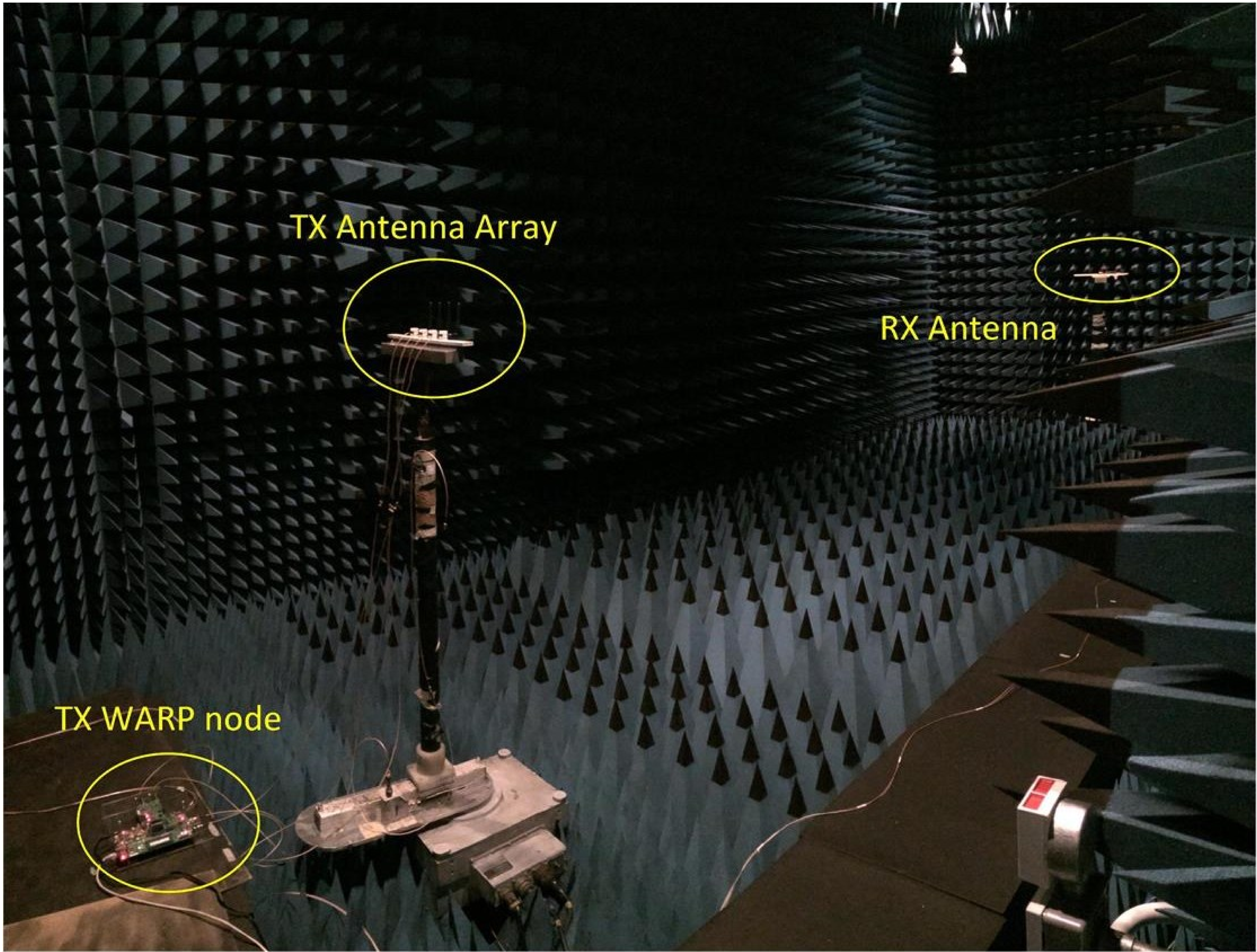}
\caption{Set-up in the anechoic chamber}
\label{fig:chp4_WARP_chamber}
\end{figure}

At the transmitter side, the sinusoid signal that is pre-coded by $\mathbf{w}$ in (\ref{eq:ch3_BF_Weights}) is sent.
For both the ULA and the UCA, it is sufficient to measure the DoE angle $\theta_{\text{doe}}\in[0,\frac{\pi}{2}]$.
For all measurements, the total transmit power is set to a fixed value that is large enough so that the background noise can be ignored.
For example, $P_t=0$\,dBm and a free-space path loss is 54.07\,dB for 5\,m, which gives a receive SNR roughly at 45.93\,dB if the background noise power is assumed to be $-100$\,dBm.

Since the distance between the array and the receive antenna is fixed, the array factor $G(\theta,\theta_{\text{doe}})$ can be directly measured by $P_r=r^2$, according to (\ref{eq:chp3_receivedpower}).
For each $\theta_{\text{doe}}$, the received signal $P_r$ is measured at a discrete step in the range of $\theta\in[-\frac{\pi}{2},\frac{\pi}{2}]$.
Then $G(\theta,\theta_{\text{doe}})$ can be calculated as the average value of all data samples that are received at angle $\theta$.

\subsection{Experiment Results}
\label{chp4:sec5:nvweiwewvfe}

As stated in Section\,\ref{chp2:antennas:UCA}, the mutual coupling is the coupling effect between two neighbor antenna elements.
The first experiment measures how the pattern of dipole (i.e., omnidirectional pattern) is changed when it is in a 4-element ULA with $\Delta d=0.5\lambda$.
This ULA is placed along y-axis as shown in Fig.\,\ref{fig:chp2_ULA} and the pattern of the first element is measured. 
For this purpose, only the first element is activated, while keeping the other elements inactive.
The measurement is taken every $5^{\circ}$ in the range of $\theta\in[-90^{\circ},90^{\circ}]$.
The pattern of the first element is plotted in Fig.\,\ref{fig:chp4_WARP_pattern_DoE_ULA_1stelement}.
For ease of comparison, the pattern is normalized with regard to its own maximum gain, i.e., 1.

\begin{figure}
\centering
\includegraphics[scale=0.9]{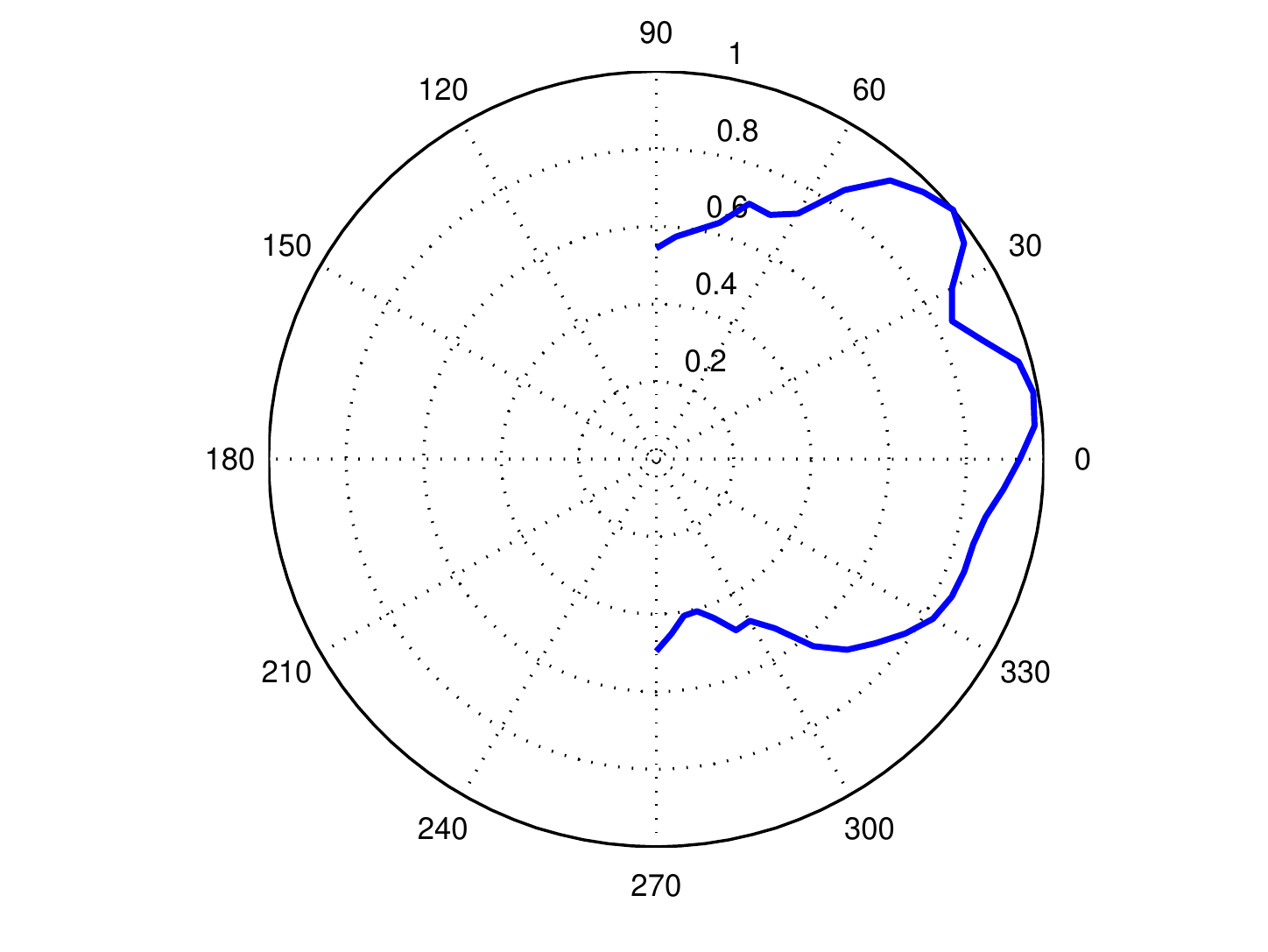}
\caption{1st WARP experiment: Pattern of the 1st element in ULA}
\label{fig:chp4_WARP_pattern_DoE_ULA_1stelement}
\end{figure}

It can be seen that the patterns shown in Fig.\,\ref{fig:chp4_WARP_pattern_DoE_ULA_1stelement} is not circular, because the radiation power from the first element induces electromagnetic fields in the nearby antennas, which interfere with its own radiation pattern.
To be more specific, the maximum value is a bit off $\theta=0^{\circ}$; at the angle $\theta=\pm90^{\circ}$, the array gain is approximately half of the maximum gain. 
In addition, the imperfection in WARP experiments causes some fluctuations to the pattern, which makes it less smooth.

The second experiment measures the pattern of the whole ULA. 
For comparison, 7 independent measurements are carried out for $\theta_B\in[0^{\circ},90^{\circ}]$ in steps of $15^{\circ}$. 
For each measurement, the same method is used to measure $G(\theta,\theta_B)$ as the first experiment.
All 7 patterns are normalized with respect to the maximum value of all patterns and are plotted in Fig.\,\ref{fig:chp4_WARP_pattern_DoE_ULA}.

\begin{figure}
\centering
\includegraphics[scale=0.9]{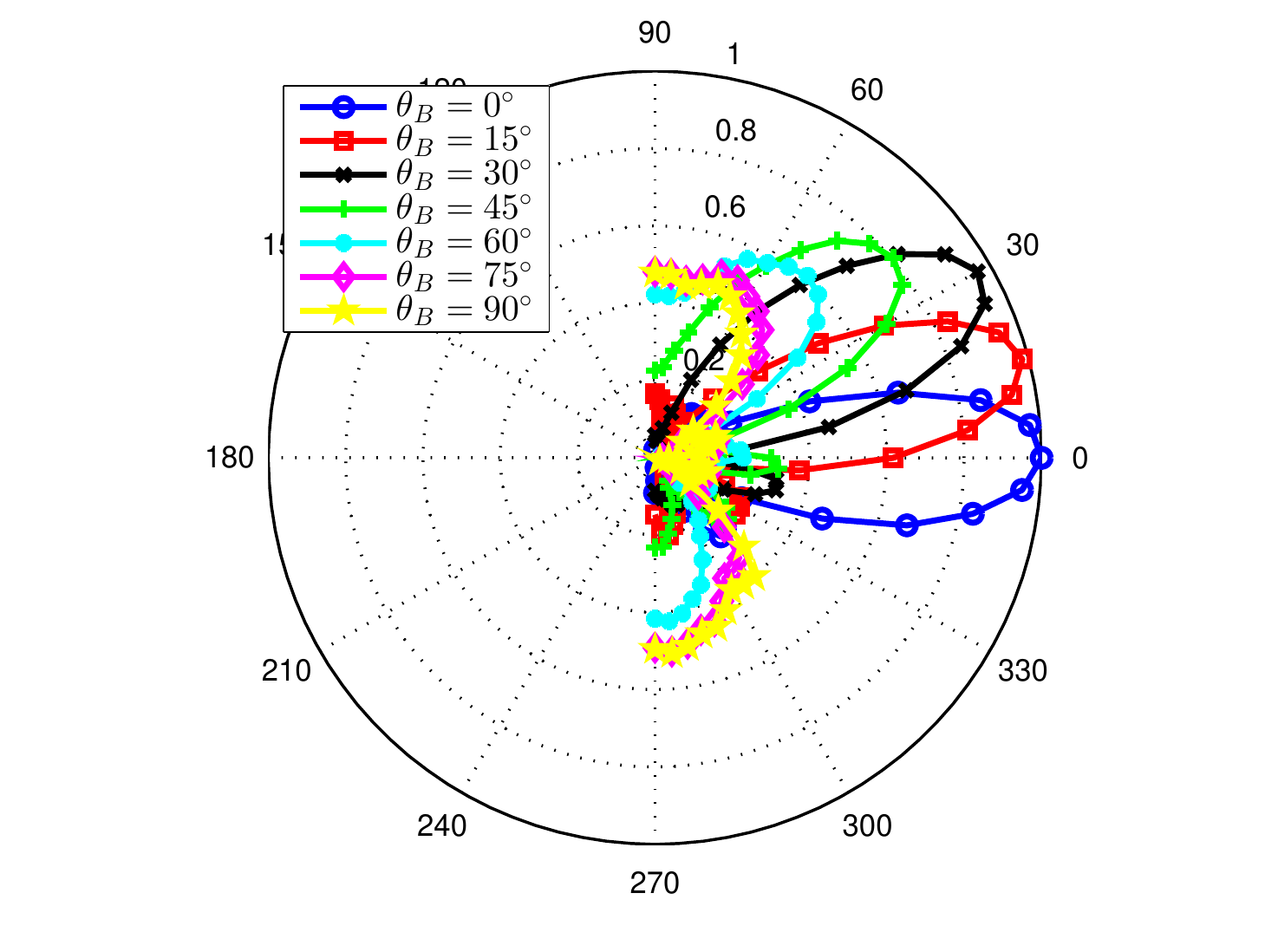}
\caption{2nd WARP experiment: $G_L(\theta,\theta_B)$}
\label{fig:chp4_WARP_pattern_DoE_ULA}
\end{figure}

As can be seen in Fig.\,\ref{fig:chp4_WARP_pattern_DoE_ULA}, the maximum gain stays more or less the same when $\theta_B<30^{\circ}$, however it drops as $\theta_B$ approaches $90^{\circ}$, which is not observed from the theoretical patterns in Fig.\,\ref{fig:chp3_patterns_ULA} when $N$ is fixed.
This is because the overall pattern is the superposition of individual patterns of every active element, which are distorted as shown in Fig.\,\ref{fig:chp4_WARP_pattern_DoE_ULA_1stelement}, where the pattern has larger gain near $\theta_B=0^{\circ}$ and smaller gain near $\theta_B=90^{\circ}$. 
Furthermore, the pointing becomes worse as $\theta_B$ increases, because when $\theta_B>30^{\circ}$, $G(\theta_B,\theta_B)$ is no longer the maximum value.

To compare with the ULA, the pattern of the UCA with $N=8$ and $\Delta d=0.5\lambda$ is measured.
The antenna elements are placed as shown in Fig.\,\ref{fig:chp2_UCA}.
Similar to the second experiment, several values $\theta_B\in\{0^{\circ},5^{\circ},10^{\circ},15^{\circ},20^{\circ}\}$ are chosen.
All patterns are normalized and plotted in Fig.\,\ref{fig:chp4_WARP_pattern_DoE_UCA}.

\begin{figure}
\centering
\includegraphics[scale=0.9]{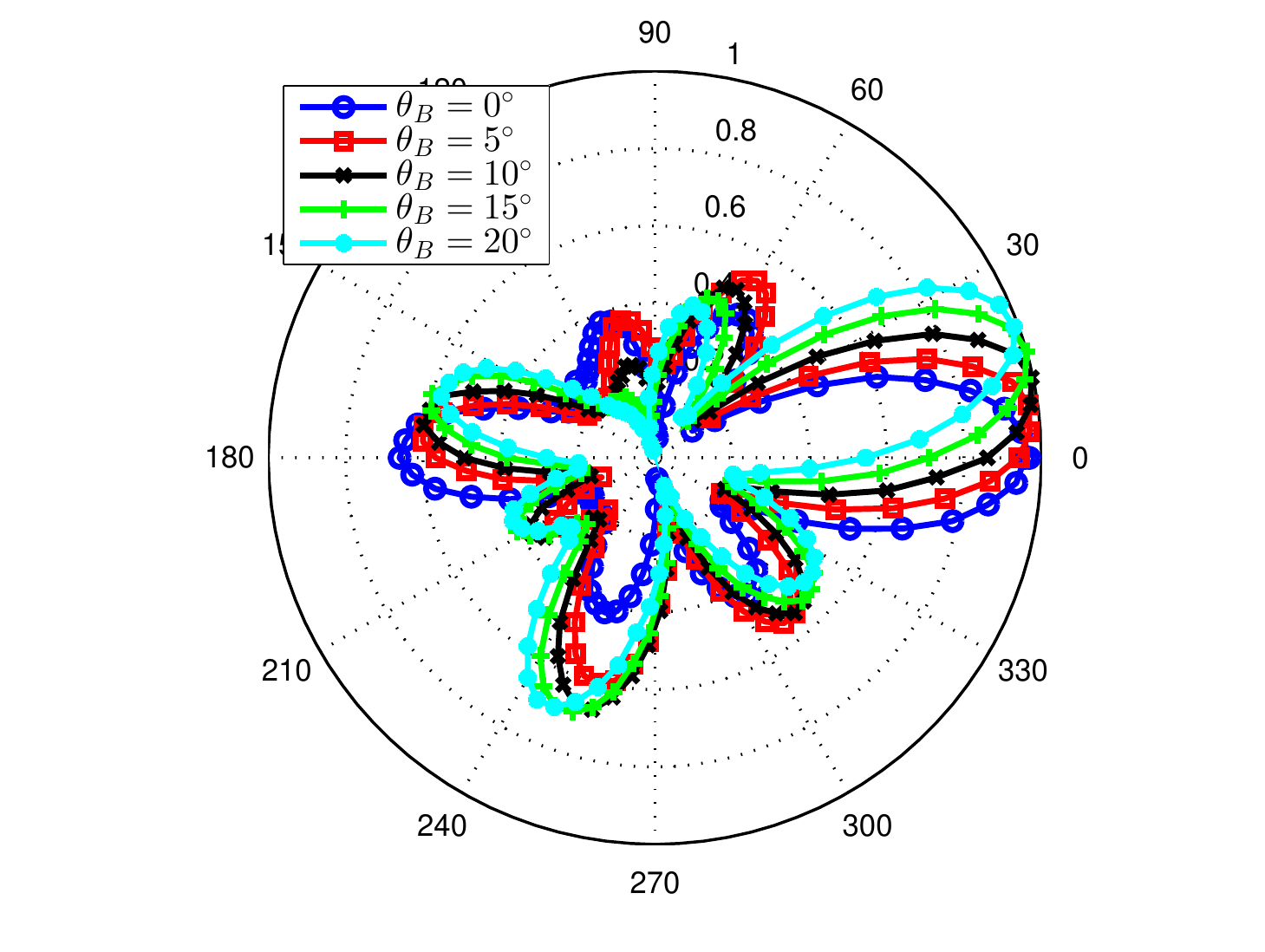}
\caption{3rd WARP experiment: $G_C(\theta,\theta_B)$}
\label{fig:chp4_WARP_pattern_DoE_UCA}
\end{figure}

As can be seen in Fig.\,\ref{fig:chp4_WARP_pattern_DoE_UCA}, the maximum gain stays nearly the same for all $\theta_B$, which is similar to the patterns shown in Fig.\,\ref{fig:chp4_patterns_UCA} for fixed $N$. 
The reason is that although each element's pattern is distorted, the distortion is symmetric.
Thus, the overall array pattern is less affected, which makes the UCA less sensitive to the mutual coupling.

\subsection{NEC Simulations}
\label{chp4:sec5:ieurhf}

It suffices to observe the mutual coupling effect from the measurements in WARP experiment.
However, it is not suitable for numerical analysis of the mutual coupling, because the observed phenomenon in WARP experiment is a result of many combined factors, such as the mutual coupling, noise and other imperfections in the real experiments.
In addition, it takes considerable time to complete the measurements for different array configuration $(N,l)$.

Unlike WARP experiments, $G(\theta,\theta_B)$ is directly calculated based on the input model instead of measuring the received signal $r$. 
To verify the observation of the mutual coupling effect on the ULA and the UCA in the WARP experiments, the NEC simulations are run corresponding to the three WARP experiments in Section\,\ref{chp4:sec5:nvweiwewvfe}. 

The input file of the specifications in the NEC simulation has been explained in Section\,\ref{appdx:warpnec:vbjhbie}.
For a fair comparison to the WARP experiments in Anechoic chamber, the free-space environment without a ground is chosen. 
For easy comparison, the patterns are uniformly normalized with regard to their maximum value.
The simulation results are shown in Fig.\,\ref{fig:chp4_NEC_pattern_DoE_ULA_1stelement}-\ref{fig:chp4_NEC_pattern_DoE_UCA}.

\begin{figure}
\centering
\includegraphics[scale=0.9]{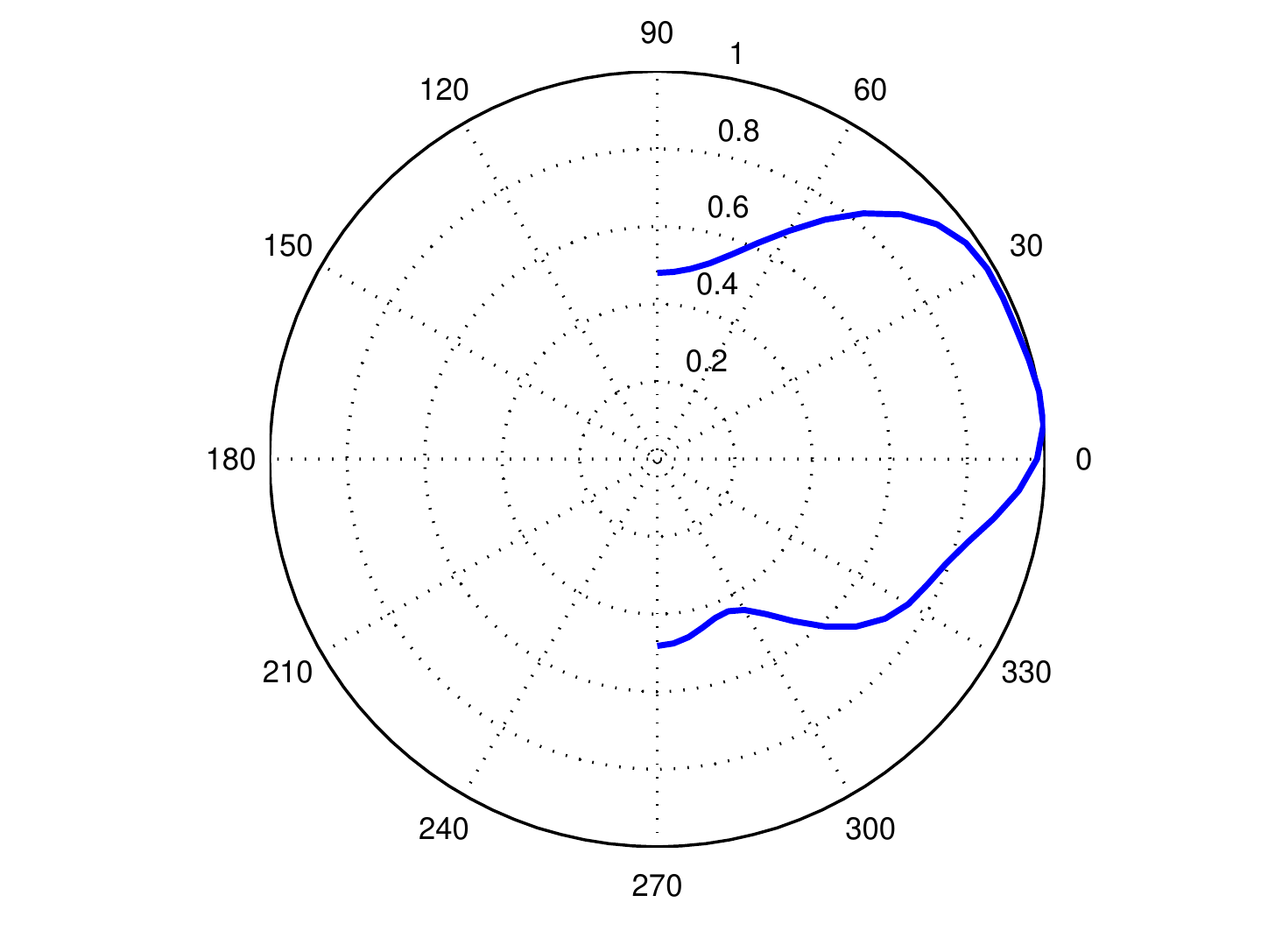}
\caption{1st NEC simulation: Pattern of the 1st element in ULA}
\label{fig:chp4_NEC_pattern_DoE_ULA_1stelement}
\end{figure}

\begin{figure}
\centering
\includegraphics[scale=0.9]{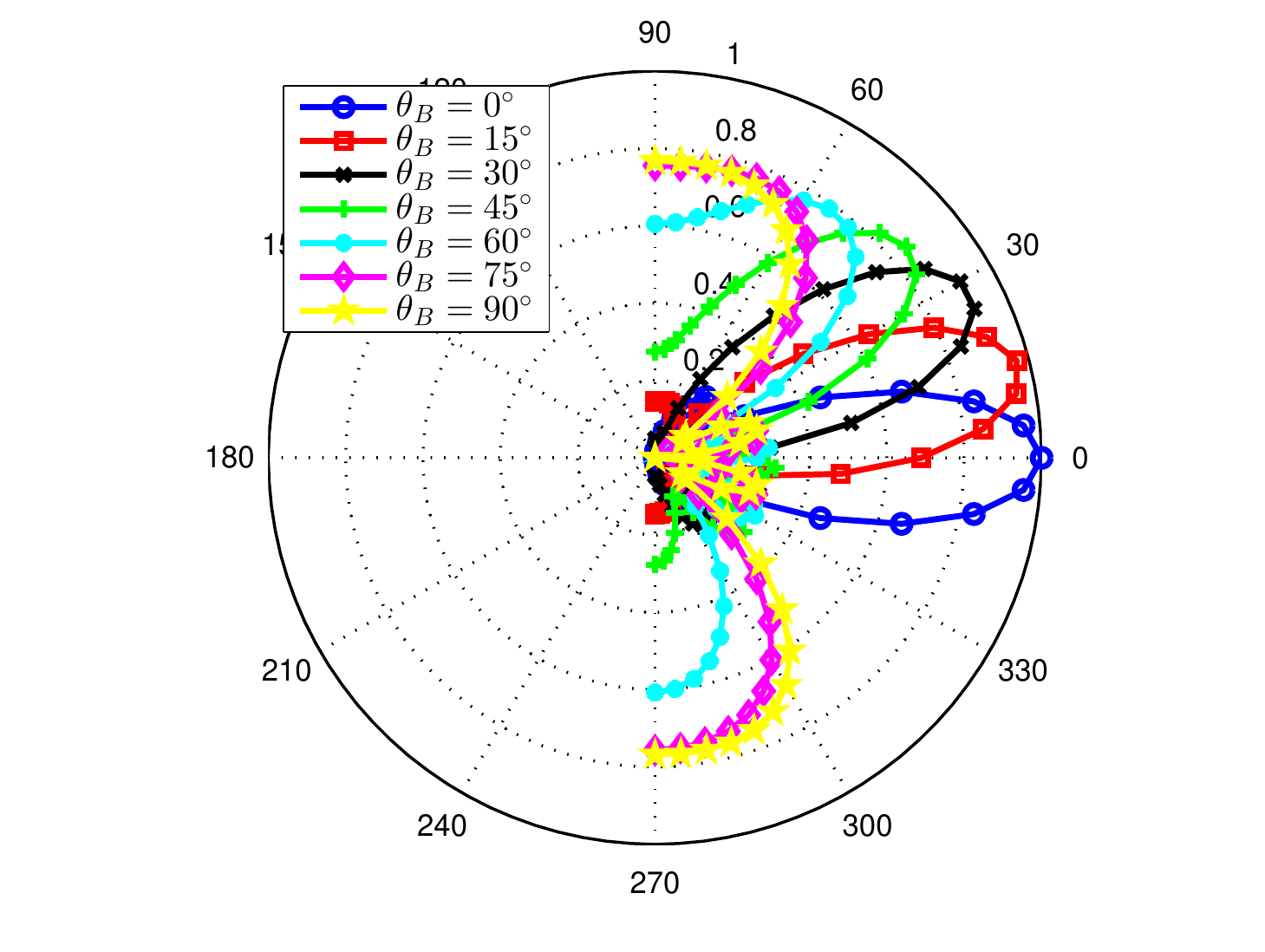}
\caption{2nd NEC simulation: $G_L(\theta,\theta_B)$}
\label{fig:chp4_NEC_pattern_DoE_ULA}
\end{figure}

\begin{figure}
\centering
\includegraphics[scale=0.9]{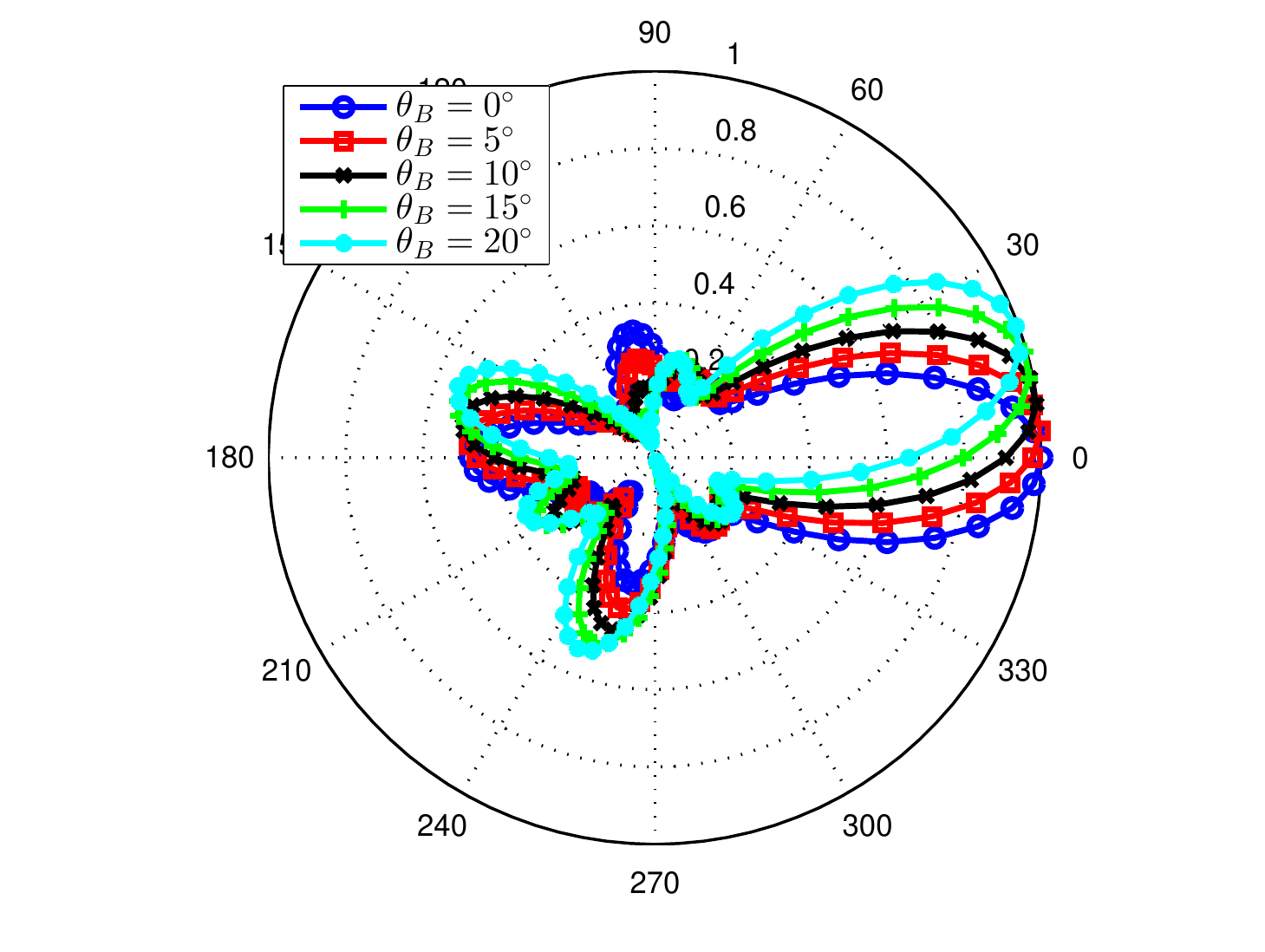}
\caption{3rd NEC simulation: $G_C(\theta,\theta_B)$}
\label{fig:chp4_NEC_pattern_DoE_UCA}
\end{figure}

Comparing Fig.\,\ref{fig:chp4_NEC_pattern_DoE_ULA_1stelement} with Fig.\,\ref{fig:chp4_WARP_pattern_DoE_ULA_1stelement}, it can be seen that both patterns are distorted due to mutual coupling effect. 
More importantly, the patterns resemble each other.
However, it is also obvious that there is more variation in the WAPR pattern.
This is because NEC simulation only includes the mutual coupling and excludes other realistic factors, such as technical malfunctions on fabrication, inaccurate alignment of array geometry and so on.

Comparing Fig.\,\ref{fig:chp4_NEC_pattern_DoE_ULA} with Fig.\,\ref{fig:chp4_WARP_pattern_DoE_ULA}, the same phenomenon of the maximum gain attenuation for the ULA can be found.
It is worth noticing that for the NEC and WARP patterns, the amount of maximum gain attenuation is different. 
This is mainly caused by the power calibration for each RF in the WARP experiments.
Comparing Fig.\,\ref{fig:chp4_NEC_pattern_DoE_UCA} with Fig.\,\ref{fig:chp4_WARP_pattern_DoE_UCA}, it can be seen that for the UCA, there is almost no maximum gain attenuation.

\subsection{Result Analysis for SSOP}
\label{chp4:sec5:vnekweiaaa}

The array patterns measured from the WARP experiments and NEC simulations for both ULA and UCA are shown in Section\,\ref{chp4:sec5:nvweiwewvfe} and\,\ref{chp4:sec5:ieurhf}.
This section examines these results more closely and analyzes them in terms of the security performance.

Through the comparison of the NEC and WAPR results, it can be seen they are very similar.
In Fig.\,\ref{fig:chp4_pattern_DoE_WARP_NEC_correlation}, the correlation coefficients between the WARP and NEC results for the ULA and the UCA are shown.
The angle range for the ULA is $\theta_B\in[0^{\circ},90^{\circ}]$, and for the UCA, it is $\theta_B\in[0^{\circ},20^{\circ}]$.
It can be seen that for both ULA and UCA, the NEC and WAPR patterns are highly correlated with $\rho>0.85$.
The value of $\rho$ ranges from -1 to 1.
Unfortunately there is no rigorous threshold value for $\rho$ that yields high correlation.
Fig.\,\ref{fig:chp4_WARP_NEC_pattern_UCA_example} shows the WARP and NEC patterns when $\rho=0.8567$ as marked in Fig.\,\ref{fig:chp4_pattern_DoE_WARP_NEC_correlation}.
It can be seen that the main beams of the two patterns are almost identical.
The shapes of sidelobes of the two patterns are very alike.
In this case, $\rho=0.85$ is considered as a high correlation.

\begin{figure}
\centering
\includegraphics[scale=0.9]{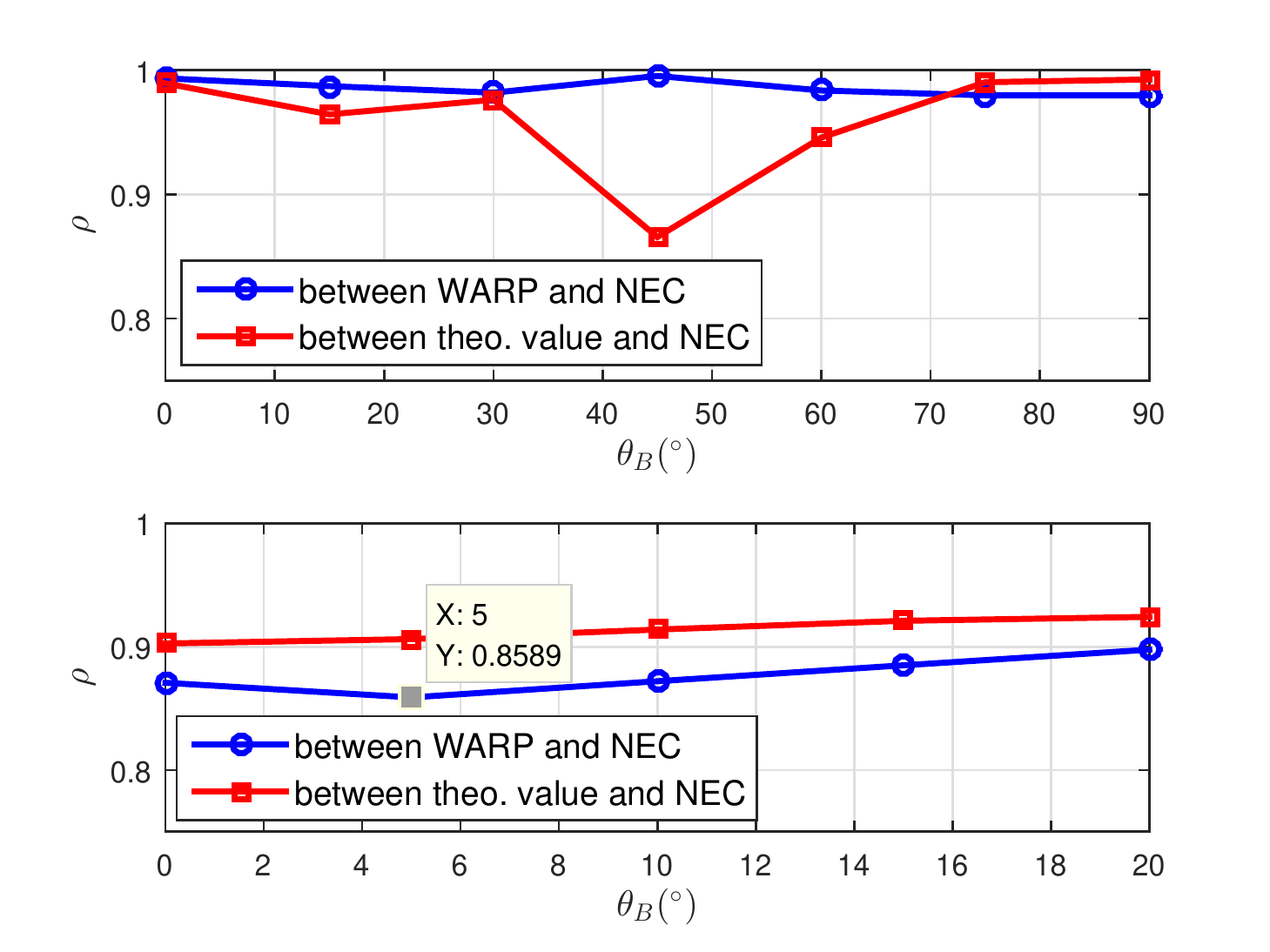}
\caption{$\rho$ between different patterns for ULA (upper plot) and UCA (lower plot)}
\label{fig:chp4_pattern_DoE_WARP_NEC_correlation}
\end{figure}

\begin{figure}
\centering
\includegraphics[scale=0.9]{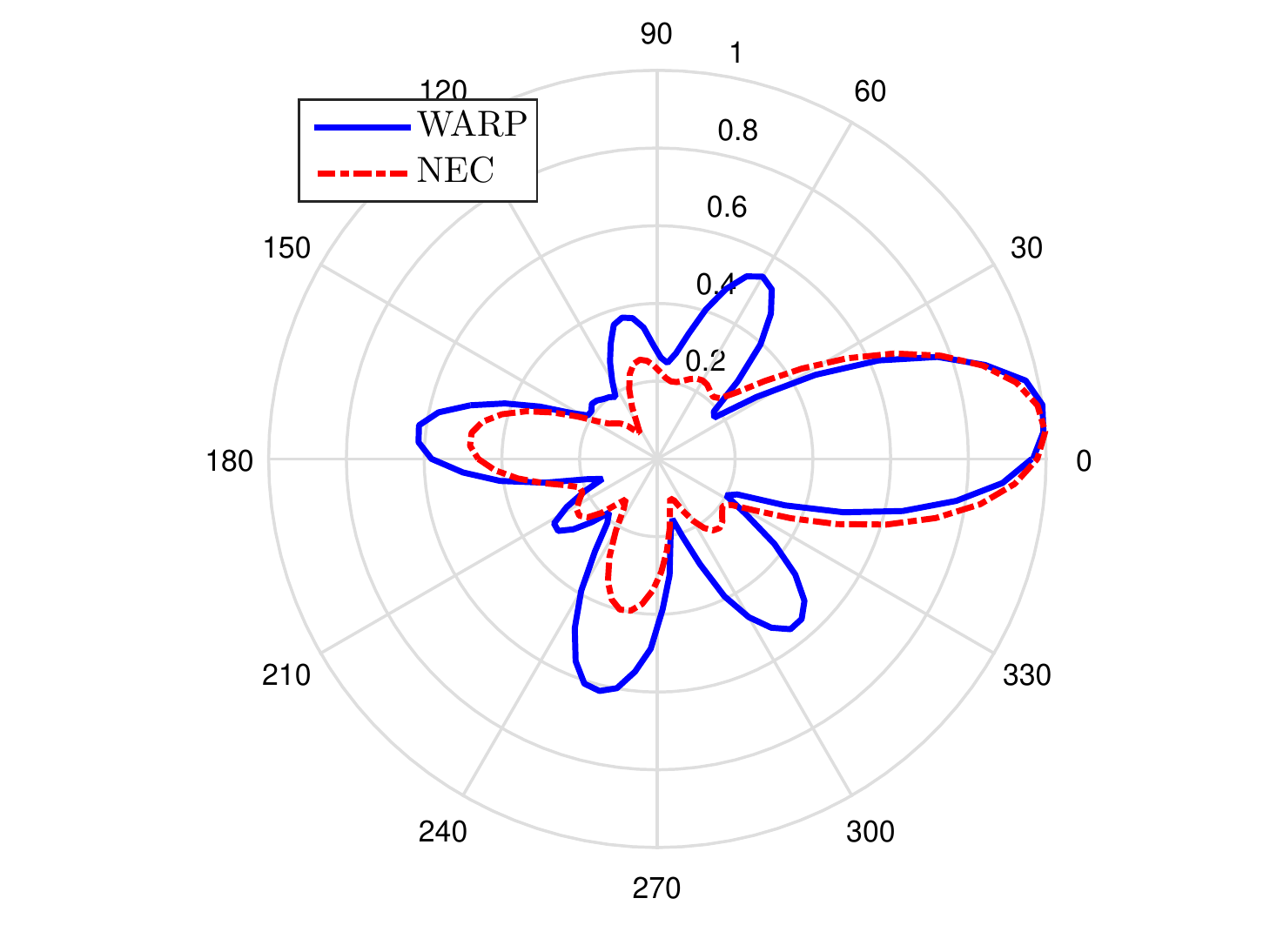}
\caption{WARP and NEC patterns for the same UCA, $\theta_B=5^{\circ}$}
\label{fig:chp4_WARP_NEC_pattern_UCA_example}
\end{figure}

As stated in Section\,\ref{chp4:sec5:ieurhf}, due to the absence of many practical factors, NEC simulations reflects purely the mutual coupling effect.
Given the high correlation between the WARP and NEC results, the NEC results can be used to study the mutual coupling effect.
Therefore, from now on, the mutual coupling effect is studied via NEC simulations.

In Fig.\,\ref{fig:chp4_pattern_DoE_WARP_NEC_correlation}, the correlation coefficients between the NEC results and the theoretical patterns for both ULA and UCA are shown as well.
For the ULA, most values of $\rho$ are larger than $0.94$, except for $\rho\approx0.85$ at $\theta_B=45^{\circ}$.
For the UCA, all values of $\rho$ are larger than $0.9$.

The high correlation between the NEC and the theoretical patterns suggests that the mutual coupling does not alter the shape of the pattern very much, although $G_{\text{max},L}$ is affected.
This means that for the UCA, the conclusions reached about the impact of $(N,R,\theta_B)$ on $A_0$ are still valid.
For the ULA, the power loss of $P_r(z_B)$ in (\ref{eq:chp4_C_B_R_B}) can be compensated by increasing $P_t$, which, however, changes $\bar{p}$.

In Section\,\ref{chp4:sec5:nvweiwewvfe} and\,\ref{chp4:sec5:ieurhf}, the patterns of ULA with $N=4$ and UCA with $N=8$ have been shown.
To fairly compare the ULA and the UCA, the ULA and the UCA are set with the same $N=8$ and $\Delta d=0.5\lambda$.
The NEC results are generated in the same way as in Section\,\ref{chp4:sec5:ieurhf}.
For the ULA, choose $\theta_B\in[0^{\circ},90^{\circ}]$ in a step of $15^{\circ}$; and for UCA, choose $\theta_B\in[0^{\circ},20^{\circ}]$ in a step of $5^{\circ}$.
Then, the results of the UCA are expanded into the range $\theta_B\in[0^{\circ},90^{\circ}]$ according to Proposition\,\ref{prop:chp4_theta_B_range}.

From the simulated patterns, the maximum gain can be recorded for each pattern, from which $f(\theta_B)$ can be calculated.
The results of $f(\theta_B)$ are shown in the upper plot in Fig.\,\ref{fig:chp4_p_out_DoE_MC}.
It can be seen that $f_L(\theta_B)$ decreases with $\theta_B$ and the minimum value of $f_L(\theta_B)$ is about 0.7; while $f_C(\theta_B)$ is more or less flat.
It is worth noticing that for the ULA, the attenuation is not very big when $\theta_B<30^{\circ}$.

\begin{figure}
\centering
\includegraphics[scale=0.9]{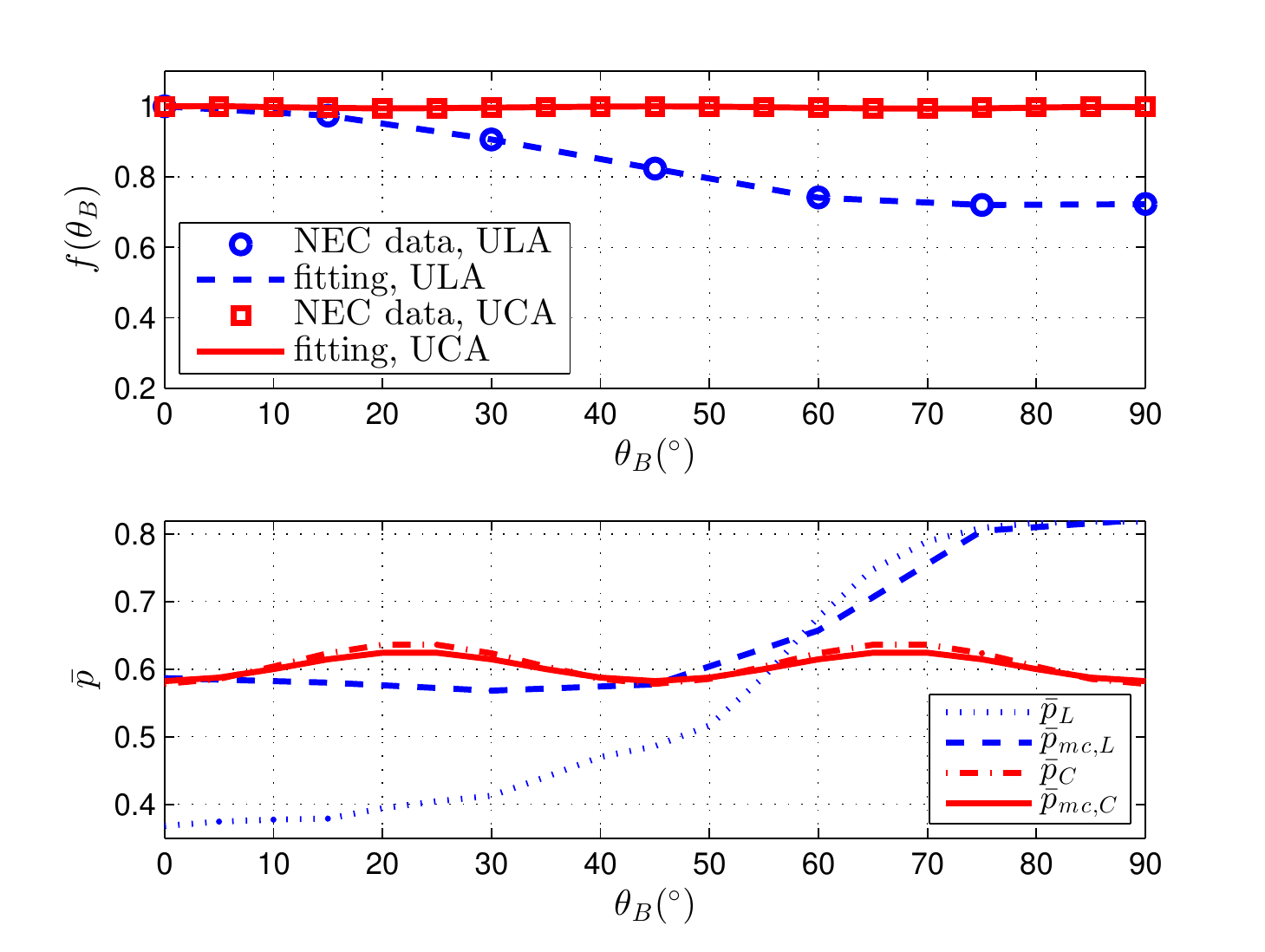}
\caption{Upper figure: mathematical fitting for $f(\theta_B)$; Lower figure: $\bar{p}$ versus $\theta_B$ with or without mutual coupling effect}
\label{fig:chp4_p_out_DoE_MC}
\end{figure}

Based on the results, $f(\theta_B)$ can be numerically fitted via MATLAB toolbox.
For example, two fitting functions are given below and the results are shown in the upper figure in Fig.~\ref{fig:chp4_p_out_DoE_MC}.
\begin{align}
	f_L(x)=&-0.8639x^6+3.783x^5-6.166x^4+4.855x^3-2.054x^2+0.199x+1, \\
	f_C(x)=&-0.2517x^6+1.186x^5-2.058x^4+1.587x^3-0.5048x^2+0.0386x+0.9999.
\end{align}

To illustrate the impact of the mutual coupling on $\bar{p}$, a special case when $K=\infty$ and $\beta=2$ is used.
Then (\ref{eq:chp4_C_B_R_B}) is reduced to
\begin{align}
	P_{rB}=\frac{P_t}{d_B^2}Nf^2(\theta_B)\geq\sigma_n^2(2^{R_B}-1).
\end{align}
When $P_t$ is adjustable, in order to compensate the attenuation $f(\theta_B)$, $P_t$ is increased by $\frac{1}{f^2(\theta_B)}$.
$\bar{p}_{mc}$ is used to denote the averaged SSOP subject to the mutual coupling.

In the lower plot in Fig.\,\ref{fig:chp4_p_out_DoE_MC}, $\bar{p}$ and $\bar{p}_{mc}$ versus $\theta_B$ are plotted for both ULA and UCA.
The curves for $\bar{p}$ are calculated based on (\ref{eq:chp3_SSOP_De}); while the curves for $\bar{p}_{mc}$ are based on the NEC simulation results.
The results show that for the UCA, $\bar{p}_{mc,C}$ is approximately equal to $\bar{p}_C$, which indicates that the UCA is less sensitive to the mutual coupling.
For the ULA, because the attenuation at $\theta_B=0^{\circ}$ is the least, the power increase affects the most.
Thus, there is bigger increase at lower region of $\theta_B$.

In summary, the mutual coupling affects the patterns of both ULA and UCA.
The pattern shape of both array are little affected.
The maximum gain of the ULA attenuates more as $\theta_B$ increases;  in the low region of $\theta_B$, the ULA suffers less from the mutual coupling.
Compared to the ULA, the UCA is less insensitive to the mutual coupling for all $\theta_B$.

\section{Conclusions}
\label{chp4:sec6}

In this chapter, the secure transmission to Bob with ER based beamforming in presence of PPP distributed is investigated with a UCA, which is compared in parallel with a ULA in the previous chapter via both analytic expressions and numerical results.
The analysis shows that for a UCA, the variation of the SSOP with the DoE angle (i.e., Bob's angle) is much smaller than that for a ULA; but the SSOP asymptotically increases with the number of elements for a UCA rather than converging to certain values;
as the radius increases, the SSOP gradually decreases with some fluctuations and approaches a fixed value.
The behaviors of the SSOP with respect to the array parameters can also be verified by the array pattern.

In complement to the theoretical analysis, the mutual coupling is investigated with the focus on the impact on the array pattern.
The experiments on WARP and the NEC simulations show that while the shape of the patterns is not severely affected, the maximum gain attenuation for the ULA is severe.
This means that the properties of the SSOP are still valid to a large extent;
however, the mutual coupling degrades the security performance for the ULA, especially in the large region of Bob's angle.

Compared to the ULA, the SSOP of a UCA is more constant in the whole range of Bob's angle, although the SSOP of a ULA is smaller at the bore-sight direction.
For the UCA, the tightness of the upper bound does not change much even for a large number of elements.
From the practical point of view, the UCA is less sensitive to the mutual coupling over Bob's angle range, and is more flexible on the choices of array configurations.
Thus, the UCA is a better choice in creating and optimizing the the SSOP.
In the following chapter, the UCA is chosen as an example to develop optimization algorithms.

\chapter{Array Configuration Optimization of Uniform Circular Arrays}
\label{chp5}

\section{Introduction}
\label{chp5:intro}

In this chapter, the security performance of the ER-based beamforming with the adjustable UCA is enhanced.
To this end, the system performance metric, i.e., the SSOP, is to be minimized.
Two numerical optimization algorithms are developed for different transmit power constraints, and are examined against the mutual coupling effect in practice.

The conclusions from Chapter\,3 and\,4, where the security performance of the ER-based beamforming with the ULA and the UCA are studied and compared with respect to the channel parameters and the array parameters, provide some insights on the possibility to optimize the array parameters to achieve higher level of security.
The goal of this chapter is to enhance the security by designing the optimization algorithms based on the previous observations and conclusions.

To achieve this goal, the array parameters need to be jointly analyzed via numerical methods, which provides more accurate results than the pure theoretical analysis.
First, the key parameters that affect the security performance are identified, and are used to formulate the optimization problem.
Then, based on the analysis on the key parameters, two numerical algorithms are developed as solutions to the optimization problem for both adjustable transmit power and fixed transmit power.
The algorithms can be generalized for any channel parameter.
In addition, the mutual coupling is investigated for a wider range of parameters via NEC simulations than in Chapter\,4, and its impact on the optimization algorithms is revealed.

This chapter is organized as follows.
In Section\,\ref{chp5:syst}, the system model for the UCA as well as the basic concepts that are used in this chapter are introduced.
In Section\,\ref{chp5:prob_form}, the optimization problem is formulated.
In Section\,\ref{chp5:analysis}, the key parameters in the optimization problem are jointly analyzed.
In Section\,\ref{chp5:opt_alg}, two numerical optimization algorithms are developed and the error analysis for the configurable beamforming technique is given.
In Section\,\ref{chp5:mc_err}, more detailed analysis of the mutual coupling on the UCA is given and its impact on the optimization algorithms is investigated.
In Section\,\ref{chp5:concl}, the conclusions of this chapter are given.

\section{System Model}
\label{chp5:syst}
\subsection{System Model with Adjustable UCA}
\label{chp5:syst:rqewvz}

Consider a dense wireless communications system with a large number of users that are distributed by a homogeneous PPP with density $\lambda_e$.
The AP wishes to transmit to Bob in presence of Eves.
The system model is similar to Section\,\ref{chp3:syst:model} and\,\ref{chp4:sec2:ownvw}, except that the AP is equipped with an adjustable UCA.
Assume that Bob's CSI, or coordinates, is available at the AP, while the knowledge of Eves' CSI or coordinates are not known.
In addition, assume that the channel does not vary between the current transmission and the next transmission.
Thus, the array configuration can be adjusted according to Bob's CSI or coordinates for the next transmission.

An example of the adjustable UCA is shown in Fig.\,\ref{fig:chp5_UCA_systemmodel}.
Unlike in Chapter\,3 and\,4, the total number of elements and the number of active elements are distinguished in this chapter.
The active elements are the elements that are used during transmission, while others remain silent or unused.
To avoid ambiguity, $N_{\text{max}}$ is used for the total number of elements, while $N$ denotes the number of active elements.
In addition, $P_{t,\text{max}}$ is used to denote the maximum available transmit power, while $P_t$ refers to the actual transmit power, which is also adjustable.
Therefore, for certain UCA, $N_{\text{max}}$ and $P_{t,\text{max}}$ are fixed values, while $N$ and $P_t$ are adjustable and do not exceed $N_{\text{max}}$ and $P_{t,\text{max}}$, respectively.

\nomenclature{$N_{\text{max}}$}{total number of elements in the array}
\nomenclature{$P_{t,\text{max}}$}{maximum available transmit power}

\begin{figure}
\centering
\includegraphics[scale=1]{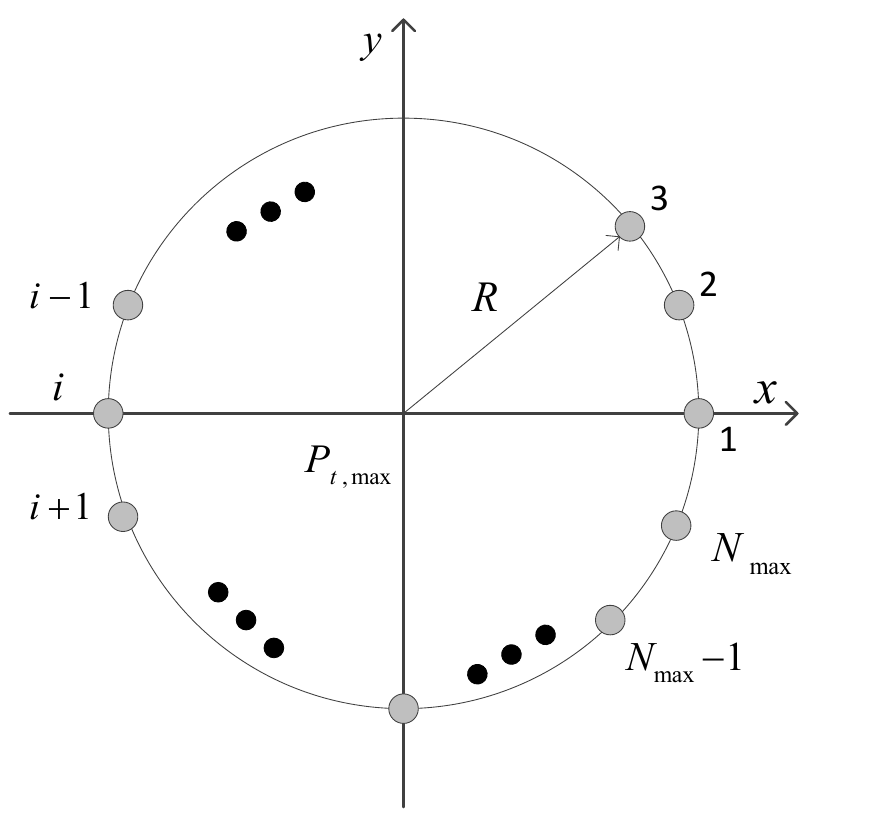}
\caption{UCA with $N_{\text{max}}$, $R$ and $P_{t,\text{max}}$}
\label{fig:chp5_UCA_systemmodel}
\end{figure}

Although there is no limit for $R$ in theory, it is usually not very large in practice.
For example, a 1\,m-diameter array is considered very large for an indoor AP.
The commercial circular WLAN phased array terminal FCI-3710 developed by Fidelity Comtech has 15.24\,cm radius.
In this chapter, the range is chosen as $R\in[5\text{\,cm},25\text{\,cm}]$, i.e., $[0.4\lambda,2\lambda]$ for the $2.4$\,GHz carrier frequency.
In addition, without the need to compare with the ULA, $(N,R)$ is used as the array configuration instead of $(N,l_a)$.

In this chapter, it is assumed that $N_{\text{max}}$ is fixed for certain value, e.g., $8$ elements, because it is usually fixed for certain device.
For example, the aforementioned FCI-3710 has 8 elements.
Another example is that a single WARP node can hold 4 RF interfaces.
On the other hand, although the radius $R$ is difficult to change during transmission, it can be chosen as certain optimum value against the security performance.
Thus, the array configuration $(N,R)$ as well as the transmit power $P_t$ can be adapted or optimized according to Bob's dynamic location. 

The channel gain vector $\mathbf{h}(z)$ for the generalized Rician channel in (\ref{eq:chp3_CH_Gain}) is used.
Thus, the received signal, channel capacity and etc. can be calculated according to the general expressions in Section\,\ref{chp3:syst:model}.

\subsection{Array Mode}
\label{chp5:syst:bnmo}

$N_{\text{max}}$ is usually unadjustable once the array is installed.
On the contrary, it is relatively simple to control $N$.
For example, the antenna elements in the array could be electronically switched on and off, or it can be done from the baseband by not generating data packets or being weighted with zero.
Thus, the AP can decide $N$ for the next transmission.

For the UCA with $N_{\text{max}}$ elements, the range of $N$ is from $1$ to $N_{\text{max}}$.
However, only the values of $N$ that give a UCA are considered.
In other words, a $N$-element sub-array is picked.
The discussion of the non-uniform circular array is beyond the scope of this thesis.

For convenience, the term `array mode' is used to refer to the sub-array. 
Let $M_{ij}$ denote a particular array mode.
The set $\{M_{ij}\}$ refers to all possible array configurations for certain UCA with $N_{\text{max}}$.
The first index $_i$ is associated with the number of active elements $N$.
$\{M_{i}\}$ is a subset of $\{M_{ij}\}$, which contains all $M_{ij}$ with the same $N$, but different angles.
The second index $_j$ is associated with the angle of $M_{ij}$ in $\{M_{i}\}$.
Take the 8-element UCA in Fig.\,\ref{fig:chp2_UCA} as an example, there are 2, 4, 8 elements that can form the sub-arrays. 
When $N=4$, there are two array modes with $45^{\circ}$ angle difference; one has elements (1,3,5,7) and the other has elements (2,4,6,8).

\nomenclature{$M_{ij}$}{array mode}

In this chapter, the index $_i$ is in descending order of $N$.
$i=1$ is assigned to the $N_{\text{max}}$-element UCA, and there is only one array mode in $\{M_1\}$, i.e., $M_1$. 
For the same example in Fig.\,\ref{fig:chp2_UCA}, $\{M_2\}$ has 4 elements and $\{M_3\}$ has 2 elements.
The index $_j$ is in ascending order of first element in the sub-arrays.
For the same example in Fig.\,\ref{fig:chp2_UCA}, the first element in the sub-array for $M_{21}$ is element 1, and the first element in the sub-array for $M_{22}$ is element 2.
In Fig.\,\ref{fig:chp5_arraymode}, $\{M_{ij}\}$ for the UCA with $N_{\text{max}}=8$ is shown.
There are 7 array modes in total.
In particular, $\{M_3\}$ has $2$ elements, which is equivalent to a 2-element ULA.

\begin{figure}
\centering
\includegraphics[scale=1]{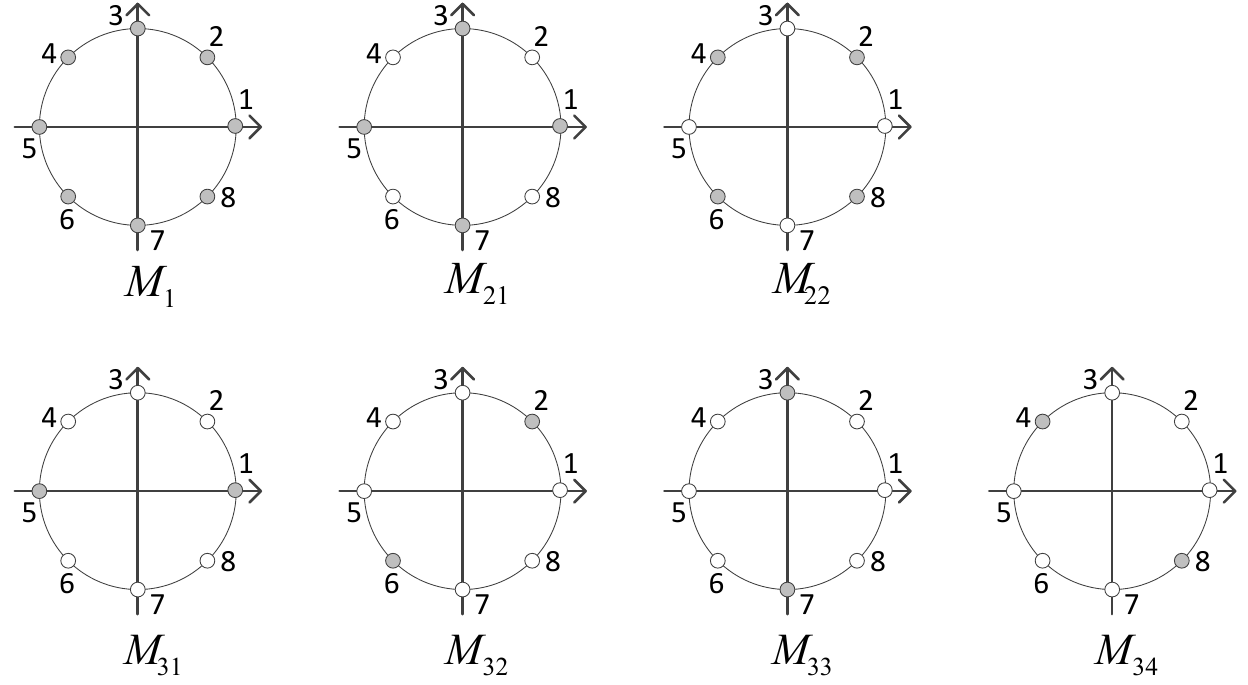}
\caption{$\{M_{ij}\}$ for an UCA with $N_{\text{max}}=8$; gray elements are active.}
\label{fig:chp5_arraymode}
\end{figure}

The purpose of distinguishing $M_{ij}\in\{M_i\}$ is that they have different $\theta_{\text{doe}}$ given the same $\theta_B$ and different $G(\theta,\theta_{\text{doe}})$ generates different SSOP.
For example, when $\theta_B=5^{\circ}$, the DoE angle for $M_{21}$ is $\theta_{\text{doe}}=\theta_B=5^{\circ}$; 
however, for $M_{22}$, it is $\theta_{\text{doe}}=\theta_B-45^{\circ}=-40^{\circ}$ due to the $45^{\circ}$ rotation between $M_{21}$ and $M_{22}$.

\subsection{Coverage Zone for Bob}
\label{chp5:syst:vndkawia}

The intuition brought by the adjustable UCA is that $N$ can be adjusted to achieve a lower SSOP; in the mean time, $C_B\geq R_B$ needs to be guaranteed.
In (\ref{eq:chp4_C_B_R_B}), $C_B\geq R_B$ is converted into
\begin{align}\label{eq:chp5_constraint_ownevow}
	P_{rB}\geq \sigma_n^2(2^{R_B}-1).
\end{align}
According to (\ref{eq:chp3_receivedpower}) and (\ref{eq:chp3_h_tilde_square}), $P_{rB}$ is given by,
\begin{align}\label{eq:chp5_P_rB}
	P_{rB} = \frac{P_t}{d_B^{\beta}}\Big( \frac{KN}{K+1}+\frac{1}{K+1}g_{Re}^2+\frac{1}{K+1}g_{Im}^2+\frac{2\sqrt{KN}}{K+1}g_{Re} \Big).
\end{align}
$P_{rB}$ depends on $N$.
In other words, $N$ be adjusted based on the current Bob's CSI (e.g., $P_{rB}$) for the next transmission.

For certain $P_t$ and $N$, $P_{rB}$ is a random variable due to random $d_B$ and the Rician fading (i.e., $g_{Re}$ and $g_{Im}$).
If a certain channel realization (i.e., $g_{Re}$ and $g_{Im}$) is known, $d_B$ can be estimated, which can be used to guide the adjustment of $P_t$ or $N$ for the next transmission.
However, due to the unknown fading realization, it is impossible to estimate $d_B$ by $P_{rB}$ alone.

In order to make progress, the mean value $\mathbb{E}[P_r(z_B)]$ is used to provide guidance on the adjustment of $P_t$ or $N$.
Using (\ref{eq:chp5_P_rB}), $\mathbb{E}[P_r(z_B)]$ is obtained by
\begin{align}\label{eq:chp5_meanP_rB}
	\mathbb{E}[P_r(z_B)]=\frac{P_t}{d_B^{\beta}}\frac{KN+1}{K+1}.
\end{align}
In this way, the small-scale fading is averaged out.
Replacing $P_{rB}$ with $\mathbb{E}[P_r(z_B)]$ in (\ref{eq:chp5_constraint_ownevow}), it can be derived that
\begin{align}\label{eq:chp5_constraint}
	\frac{P_t}{d_B^{\beta}}\frac{KN+1}{K+1}\geq \sigma_n^2(2^{R_B}-1).
\end{align}
For fixed $P_t$, $N$ and ($K,\beta$), $d_B$ should not exceed certain threshold to satisfy (\ref{eq:chp5_constraint}), which can be expressed by
\begin{align}
	d_B\leq\Big[ \frac{P_t}{\sigma_n^2(2^{R_B}-1)}\frac{KN+1}{K+1} \Big]^{\frac{1}{\beta}}.
\end{align}
Let $d_{th,N}$ denote the threshold for $d_B$ for certain channel condition (i.e., $K$ and $\beta$) and certain $P_t$,
\begin{align}\label{eq:chp5_d_th}
	d_{th,N}=\Big[\frac{P_t}{\sigma_n^2(2^{R_B}-1)}\frac{KN+1}{K+1}.\Big]^{\frac{1}{\beta}}.
\end{align}

\nomenclature{$d_{th,N}$}{coverage distance for $N$ active antennas}

For certain $P_t$, $K$ and $\beta$, upon acquisition of $P_{rB}$, the AP will calculate $d_B$ using $P_{rB}$ to replace $\mathbb{E}[P_r(z_B)]$ in (\ref{eq:chp5_meanP_rB}) and assumes that the calculated $d_B$ is the true distance.
Then, this value is compared with $d_{th,N}$ for all available $N$.
All the values of $N$ that satisfy $d_B\leq d_{th,N}$ are eligible for the next transmission, from which the optimum value of $N$ that gives the minimum SSOP will be chosen.
Since the channel realization for the next transmission is assumed to be unchanged, the estimated $d_B$, even though not the true value, suffices to provide guidance for the adjustment of $N$.
For convenience, in this chapter, Bob's distance is still used to refer to the assumed Bob's distance by the AP, unless otherwise stated.

The coverage zone is defined based on $d_{th,N}$ for different $N$.
Fig.\,\ref{fig:chp5_N_zone} shows an example of $d_{th,N}$ for the UCA with $N_{\text{max}}=8$. 
$d_{th,N}$ in (\ref{eq:chp5_d_th}) is proportional to $N$.
The concentric circles shows the coverage distance for different $N$.
For the maximum available transmit power $P_{t,\text{max}}$, $d_{th,N_{\text{max}}}$ is the maximum coverage distance, which is denoted by $d_{\text{max}}$,
\begin{align}\label{eq:chp5_d_max}
	d_{\text{max}}=\Big[\frac{P_{t,\text{max}}}{\sigma_n^2(2^{R_B}-1)}\frac{KN_{\text{max}}+1}{K+1}\Big]^{\frac{1}{\beta}}.
\end{align}

\nomenclature{$d_{\text{max}}$}{maximum coverage distance}

\begin{figure}
\centering
\includegraphics[scale=1]{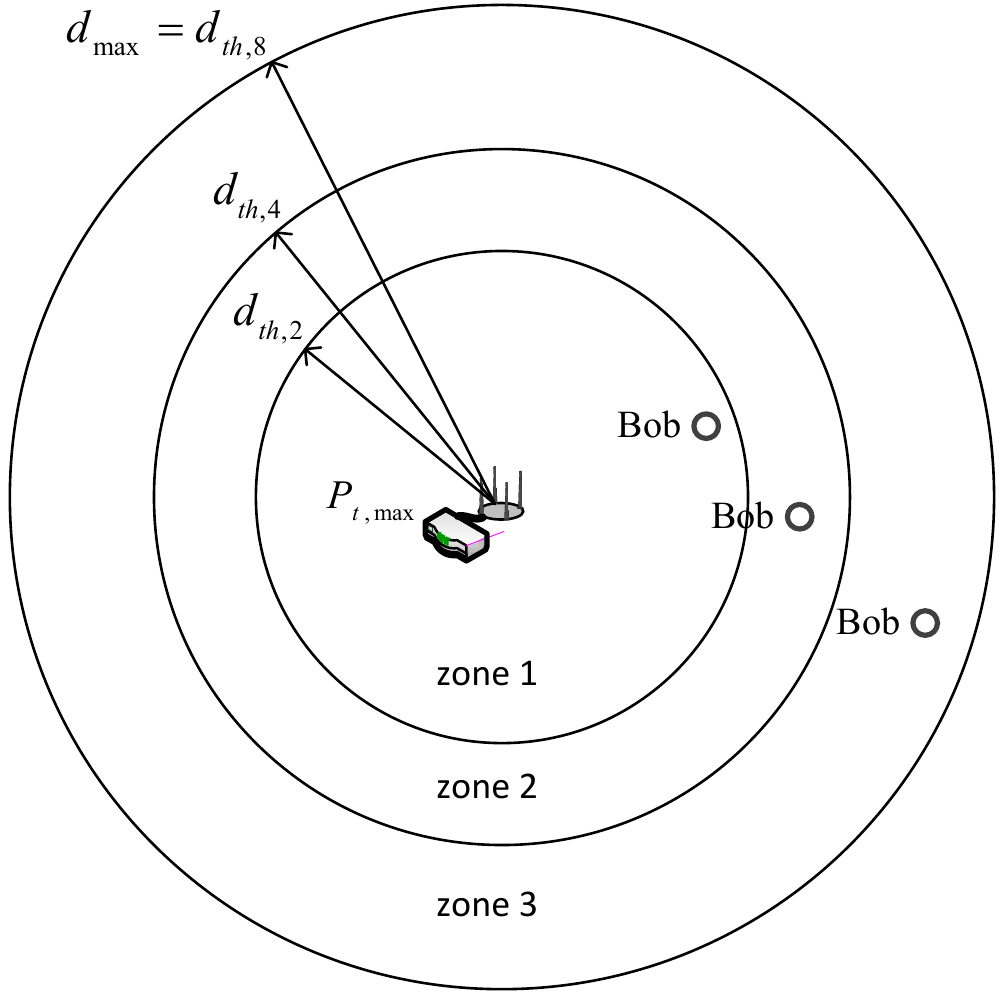}
\caption{$d_{th,N}$ for the UCA with $N_{\text{max}}=8$ and $P_{t,\text{max}}$}
\label{fig:chp5_N_zone}
\end{figure}

The term 'zone' is used to describe the annular/circular area divided by $d_{th,N}$ with different $N$.
In Fig.\,\ref{fig:chp5_N_zone}, there are 3 zones for the UCA with 8 elements and $P_{t,\text{max}}$.
Zone 1 is the area within the $d_{th,2}$ curve;
zone 2 is the annular area between the $d_{th,2}$ and $d_{th,4}$ curves;
zone 3 is the annular area between the $d_{th,4}$ and $d_{th,8}$ curves.

For Bob being in different zone, an appropriate number of active elements $N$ should be used.
For example, when Bob is in zone 2, $N$ can be either $4$ or $8$, but not $2$.
For the UCA with $N_{\text{max}}$ elements and $P_{t,\text{max}}$, Bob's reliable transmission can be guaranteed only when $d_B\leq d_{\text{max}}$.

Since Bob is uniformly distributed in the coverage zone, it is interesting to know the area of these zones. 
Let $S$ be the total coverage area and $S^{(k)}$ be the area of zone $k$, $k=1,2,3$.
For example, the zone area in Fig.\,\ref{fig:chp5_N_zone} can be calculated by
\begin{align}
	&S^{(1)}=\pi d_{th,2}^2, \label{eq:chp5_zone1_area} \\
	&S^{(2)}=\pi (d_{th,4}^2-d_{th,2}^2), \label{eq:chp5_zone2_area} \\
	&S^{(3)}=\pi (d_{\text{max}}^2-d_{th,4}^2), \label{eq:chp5_zone3_area} \\
	&S=\sum_k S^{(k)}= \pi d_{\text{max}}^2 \label{eq:chp5_zone_area}.
\end{align}
Thus, the probability that Bob is in zone $k$, denoted by $q^{(k)}$, is
\begin{align}\label{eq:chp5_zone_prob}
	q^{(k)}=\frac{S^{(k)}}{S}.
\end{align}


\nomenclature{$^{(k)}$}{variable related to zone $k$}
\nomenclature{$S$}{coverage area}
\nomenclature{$q^{(k)}$}{probability that Bob is in zone $k$}

\section{Problem Formulation}
\label{chp5:prob_form}

The aim of this chapter is to enhance the security level of the transmission from the AP to Bob with the ER-based beamforming on the adjustable UCA.
As the system performance metric, the SSOP is to be minimized.
Given the difference from the system models in Section\,\ref{chp5:syst:rqewvz} and in Section\,\ref{chp4:sec2:ownvw}, the expressions of the SSOP and its upper bound for the adjustable UCA  are slightly different.

In Section\,\ref{chp4:sec2:ownvw}, $c_0=\frac{P_t}{\sigma_n^2(2^{R_B-R_S}-1)}$ is regarded as a constant.
However, the impact of $P_t$ is considered in this chapter.
For convenience, a new constant that excludes $P_t$ is defined.
Denoted by $c_1$, it is given by
\begin{align}
	c_1=\frac{1}{\sigma_n^2(2^{R_B-R_S}-1)}.
\end{align}
Therefore, $\bar{p}_C$ in (\ref{eq:chp4_meanSSOP_Ri}) and $\bar{p}_{up,C}$ in (\ref{eq:chp4_meanSSOP_up}) can be re-written by
\begin{align}
	&\bar{p}_C=1-\int_{-\infty}^{\infty}\int_{-\infty}^{\infty} \text{exp}\Big\{-\frac{\lambda_e}{2}(c_1P_t)^{\frac{2}{\beta}}\int_0^{2\pi}\Big[\frac{KG_C^2(\theta,\theta_B)}{K+1} \nonumber \\
	& \qquad +\frac{x^2+y^2}{K+1}+\frac{2\sqrt{K}G_C(\theta,\theta_B)}{K+1}x\Big]^{\frac{2}{\beta}}\,\mathrm{d}\theta\Big\} \frac{e^{-(x^2+y^2)}}{\pi} \,\mathrm{d}x\,\mathrm{d}y, \label{eq:chp5_meanSSOP_Ri} \\
	&\bar{p}_{up,C}= 1-\text{exp}\Big\{-\lambda_e\pi(c_1P_t)^{\frac{2}{\beta}}\Big[\frac{KA_{0,C}}{2\pi(K+1)}+\frac{1}{K+1}\Big]^{\frac{2}{\beta}}\Big\}, \label{eq:chp5_meanSSOP_up}
\end{align}
where different forms of $A_{0,C}$ are given in (\ref{eq:chp4_A0C}), (\ref{eq:chp4_A_0C_2}), (\ref{eq:chp4_A0C_even}) and (\ref{eq:chp4_A0C_odd}).

There are four groups of parameters that affect $\bar{p}_C$ in (\ref{eq:chp5_meanSSOP_Ri}) and $\bar{p}_{up,C}$ in (\ref{eq:chp5_meanSSOP_up}).
The first group includes some constants, i.e., the system requirements parameters $(R_B,R_s)$, the noise variance $\sigma_n^2$ and the density $\lambda_e$.
The second group includes the channel parameters, i.e.,  $K$ and $\beta$.
The third group includes $P_t$ and $(N,R)$, which can be controlled by the AP and forms the basis of the optimization problem.
The fourth group includes Bob's location, $(d_B,\theta_B)$.

As concluded in Section\,\ref{chp4:sec4:psodpv}, $\bar{p}_{up,C}$ is tight to $\bar{p}_C$, and can be used to predict the behaviors of $\bar{p}_C$ with respect to the $(N,R,\theta_B)$.
Furthermore, according to Proposition\,\ref{prop:chp3_SSOP_up_analysis}, for the deterministic channel when $\beta=2$, $\bar{p}_{up,C}=\bar{p}_C$.
$\bar{p}_{up,C}$ is also positively correlated to $A_{0,C}$ which is tractable to analytically analyze.
Therefore, in this chapter, the optimization of $\bar{p}_C$ starts from the deterministic channel when $\beta=2$.
Then, the developed optimization algorithms are extended to the generalized Rician channel.

In Section\,\ref{chp4:sec3}, $(N,R)$ in the third group of parameters together with $\theta_B$ have been separately studied.
For the purpose of designing optimization algorithms, they will be jointly investigated for the adjustable UCA with $P_t$ and $M_{ij}$.

In this chapter, two scenarios are considered together.
For Bob being at a particular location, the most appropriate $M_{ij}$ and $P_t$ can be chosen according to $(d_B,\theta_B)$.
For Bob being randomly located in the coverage area, while $N_{\text{max}}$ is assumed to be fixed as mentioned in Section\,\ref{chp5:syst:rqewvz}, $R$ can be designed for all possible Bob's locations to achiever higher security level.
In other words, the average security performance over all possible Bob's locations should to be evaluated.

Based on the previous analysis, the optimization of $\bar{p}_C$ can be stated as follows.
In order to enhance the security level of the system, $\bar{p}_C$ is to be minimized by adjusting $P_t$ and $M_{ij}$ (i.e., $N$)  and designing $R$ according to Bob's dynamic location $(d_B,\theta_B)$ for given $(K,\beta)$ and constants $\lambda_e$ and $c_1$.
In addition, to guarantee a reliable transmission to Bob, as mentioned in Section\,\ref{chp5:syst:vndkawia}, the constraint $C_B\geq R_B$ needs to satisfied by adjusting $P_t$ and $N$ according to $d_B$.
The optimization problem can be formulated by
\begin{align}
	&\min \bar{p}_C(P_t,N,R,d_B,\theta_B)\;\forall (d_B,\theta_B), \label{eq:chp5_problem} \\
	&\;\text{s.t.}\; C_B\geq R_B. \label{eq:chp5_problem_constraint}
\end{align}

As previously discussed, first, the optimization problem will  be analyzed for $K\to\infty$ and $\beta=2$, in which case $\bar{p}_C=\bar{p}_{up,C}$, and is given by
\begin{align}\label{eq:chp5_p_De_beta_is_2}
	\bar{p}_C=1-\text{exp}\Big(-\frac{\lambda_ec_1P_tA_{0,C}}{2} \Big).
\end{align}
According to (\ref{eq:chp5_constraint_ownevow}) and (\ref{eq:chp5_P_rB}), the constraint in (\ref{eq:chp5_problem_constraint}) can be written as
\begin{align}\label{eq:chp5_problem_constraint2}
	P_tN\geq \sigma_n^2(2^{R_B}-1)d_B^{\beta}.
\end{align}

In practice, it is not always applicable or desirable to implement the transmit power control (TPC) on the downlink transmission.
For example, typical Wi-Fi systems do not implement TPC.
Most WLAN APs, such as Cisco WLAN controller, only provide a few power levels, and the dynamic TPC is not supported during transmission.
Thus, all transmissions are sent at the same power, e.g., 35\,dBm\,\cite{1400008}.
This can be partly attributed to the fact that energy consumption is not a critical issue because the AP is normally connected to the power line and the implementation of TPC is not at no cost.
Furthermore, there is also doubt about the effectiveness of a fine-grained TPC in the indoor environment\,\cite{shrivastava2007understanding}. 
On the other hand, there is no doubt about the usefulness of TPC in wireless communications.
Therefore, in this chapter, the two transmit power constraints, i.e., with/without adjusting $P_t$ are studied separately.

\nomenclature{TPC}{transmit power control}

In Chapter\,4, approximations of $\bar{p}_{up,C}$ have been used to analyze the behavior of $A_{0,C}$ and $\bar{p}_C$.
However, for the purpose of optimizing $\bar{p}_C$, the numerical results, which are more accurate, will be mainly relied on in this chapter.
In the next section, the third and fourth groups of parameters, i.e., $(P_t,N,R)$ and $(d_B,\theta_B)$, will be jointly studied.

\section{Problem Analysis}
\label{chp5:analysis}
\subsection{Array Dimension and Averaged SSOP over Bob's Locations}
\label{chp5:analysis:vneiwenve}

In Section\,\ref{chp4:sec3:njowq}, the impact of $R$ has been analyzed.
In the low region of $R$, e.g., $[0.4\lambda,2\lambda]$, $\bar{p}_C$ fluctuates without obvious decreasing as $R$ changes.
Thus, it is likely that the minimum value of $\bar{p}_C$ is not given by the largest or smallest value of $R$.

Examples of $\bar{p}_C$ versus $R$ for different $\theta_B$ and $N$ are shown in Fig.\,\ref{fig:chp5_p_R_All}.
For the purpose of MATLAB simulation, $R$ takes value every 1\,cm in $[0.4\lambda,2\lambda]$.
In the upper plot, some typical values of $\theta_B$, i.e., $\theta_B=0^{\circ}$, $10^{\circ}$, $20^{\circ}$, are taken for the UCA with $N_{\text{max}}=8$. 
In the lower plot, all possible $N$ are taken for UCA with $N_{\text{max}}=8$, i.e., $N=2,4,8$.

\begin{figure}
\centering
\includegraphics[scale=0.9]{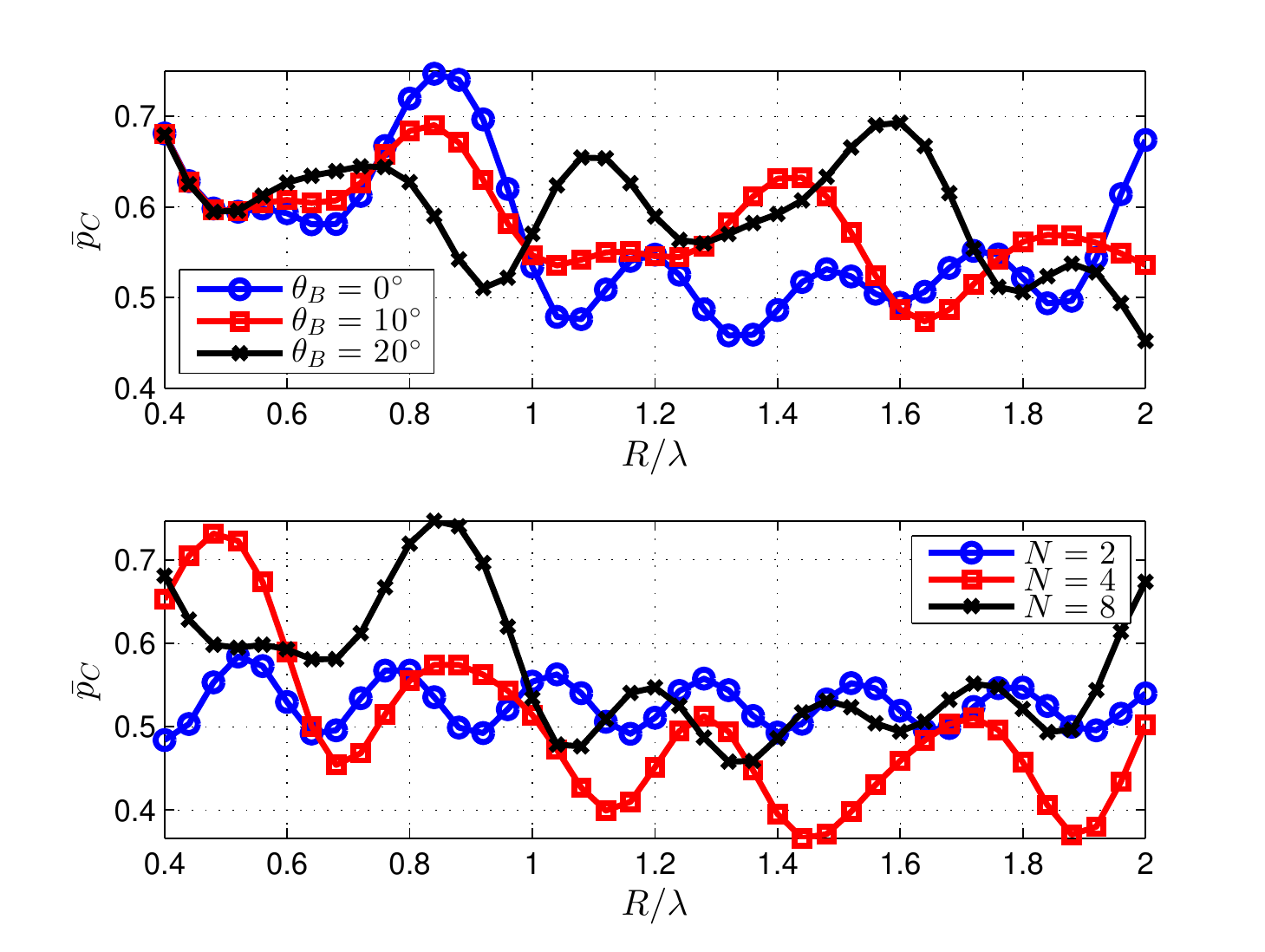}
\caption{$\bar{p}_C$ versus $R$. Upper plot: $N=8$; Lower plot $\theta_B=0^{\circ}$}
\label{fig:chp5_p_R_All}
\end{figure}

The fluctuating behavior of $\bar{p}_C$ with respect to $R$ can be observed for different $\theta_B$ and $N$.
For different $\theta_B$ and $N$, the local minimum of $\bar{p}_C$ is given by different value of $R$.
It suggests that different $R$ is required to minimize $\bar{p}_C$ for different $\theta_B$ and different $d_B$ (i.e., $N$).
However, $R$ can only be a particular value.
Therefore, $R_{opt}$ for Bob's locations needs to be found.

MMSE method is used to find $R_{opt}$ in a certain range of $R$ that produces the minimum $\bar{p}_C$ for all $(d_B,\theta_B)$.
First, $R_{opt}$ for $\theta_B\sim \mathcal{U}(0,2\pi)$ is to be found, based on which $R_{opt}$ for all $(d_B,\theta_B)$ is found.

To establish the cost function, imagine that $R$ is adjustable, which provides the hypothetical function of $\bar{p}_{C,\text{min}}$ with respect to $\theta_B$.
Notice that the value of $\bar{p}_{C,\text{min}}$ for each $\theta_B$ is in fact given by a different value of $R$, which is not practical.
To find $R_{opt}$, let the mean-square error, denoted by $\text{err}(R)$, be the mean square of the difference between $\bar{p}_C$ and $\bar{p}_{C,\text{min}}$ over the range $\theta_B\in[0,2\pi]$,
\begin{align}
	\text{err}(R)=\mathbb{E}_{\theta_B}[(\bar{p}_C-\bar{p}_{C,\text{min}})^2].
\end{align}
Thus, $R_{opt}$ can be found by 
\begin{align}\label{eq:chp5_mmse1}
	R_{opt}=\arg\min_R \text{err}(R).
\end{align}
(\ref{eq:chp5_mmse1}) can be converted into the following expression, the derivation of which is in Appendix\,\ref{appdx:bessel:mnkeorie}.
\begin{align}\label{eq:chp5_mmse2}
	R_{opt}=\arg\min_R \bar{\bar{p}}_C,
\end{align}
where $\bar{\bar{p}}_C$ is the averaged SSOP over Bob's angles and is defined by
\begin{align}\label{eq:chp5_p2bar_1}
	\bar{\bar{p}}_C=\frac{1}{2\pi}\int_{0}^{2\pi}\bar{p}_C\,\mathrm{d}\theta_B.
\end{align}

\nomenclature{$\text{err}(\cdot)$}{one-dimension mean square error}
\nomenclature{$_{opt}$}{optimum value}
\nomenclature{$\bar{\bar{p}}_C$}{averaged spatial secrecy outage probability over all possible Bob's angles or locations}

Next, $R_{opt}$ for all $(d_B,\theta_B)$ will be found.
In polar coordinates, the mean square error over two-dimension is given by 
\begin{align}\label{eq:chp5_eiruel}
	\text{err}_2(R)=\frac{1}{S}\int_{0}^{2\pi}\int_{0}^{d_{\text{max}}}d_B(\bar{p}_C-\bar{p}_{C,\text{min}})^2\,\mathrm{d}d_B\,\mathrm{d}\theta_B.
\end{align}
where $\frac{1}{S}$ is the probability that Bob is at a certain location $(d_B,\theta_B)$.
Thus, $R_{opt}$ can be found by 
\begin{align}\label{eq:chp5_mmse3}
	R_{opt}=\arg\min_R \text{err}_2(R).
\end{align}
(\ref{eq:chp5_mmse3}) can be converted into the following expression. The derivation is referred to in Appendix\,\ref{appdx:bessel:owgvow}.
\begin{align}\label{eq:chp5_mmse4}
	R_{opt}=\arg\min_R \Big(q^{(1)}\bar{\bar{p}}_C^{(1)}+q^{(2)}\bar{\bar{p}}_C^{(2)}+q^{(3)}\bar{\bar{p}}_C^{(3)}\Big),
\end{align}
where $\bar{\bar{p}}_C^{(k)}$ is the averaged SSOP over Bob's angles in zone $k$ and is defined by
\begin{align}\label{eq:chp5_p2bar_2}
	\bar{\bar{p}}_C^{(k)}=\frac{1}{2\pi}\int_{0}^{2\pi}\bar{p}_C^{(k)}\,\mathrm{d}\theta_B.
\end{align}
It can be seen that (\ref{eq:chp5_p2bar_2}) is based on (\ref{eq:chp5_p2bar_1}).
In the following, the property of the averaged SSOP over Bob's angles $\bar{\bar{p}}_C$ will be studied with respect to $R$ for fixed $N$, in order to find $R_{opt}$.

\nomenclature{$\text{err}_2(\cdot)$}{two-dimension mean square error}

Substituting the expression of $\bar{p}_C$ in (\ref{eq:chp5_meanSSOP_Ri}) into (\ref{eq:chp5_p2bar_1}), the expression of $\bar{\bar{p}}_C$ can be obtained,
\begin{align}\label{eq:chp5_p2bar_3}
	\bar{\bar{p}}_C&=1-\frac{1}{2\pi}\int_{-\infty}^{\infty}\int_{-\infty}^{\infty} \int_0^{2\pi}\text{exp}\Big\{-\frac{\lambda_e}{2}(c_1P_t)^{\frac{2}{\beta}}\int_0^{2\pi}\Big[\frac{KG_C^2(\theta,\theta_B)}{K+1} \nonumber \\
	& +\frac{x^2+y^2}{K+1}+\frac{2\sqrt{K}G_C(\theta,\theta_B)}{K+1}x\Big]^{\frac{2}{\beta}}\,\mathrm{d}\theta\Big\} \frac{e^{-(x^2+y^2)}}{\pi} \,\mathrm{d}\theta_B\,\mathrm{d}x\,\mathrm{d}y.
\end{align}
Although (\ref{eq:chp5_p2bar_3}) can be numerically calculated, it is untraceable to analytically analyze.
Thus, the upper bound, denoted by $\bar{\bar{p}}_{up,C}$, is required for theoretical analysis.

\begin{theorem}\label{th:chp5_p2bar_up}
\begin{align}\label{eq:chp5_p2bar_up}
	\bar{\bar{p}}_{up,C} = 1-{\exp}\Big\{-\lambda_e \pi \Big[ \frac{c_0K\bar{A}_{0,C}}{2\pi(K+1)}+\frac{c_0}{K+1}  \Big]^{\frac{2}{\beta}}  \Big\},
\end{align}
where $\bar{A}_{0,C}$ is the expectation of $A_{0,C}$ over $\theta_B$ and is given by
\begin{align}\label{eq:chp5_meanA_0}
	\bar{A}_{0,C}=\mathbb{E}_{\theta_B}[A_{0,C}]=\frac{1}{2\pi}\int_0^{2\pi} A_{0,C} \,\mathrm{d}\theta_B.
\end{align}
\end{theorem}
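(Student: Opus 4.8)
The plan is to realise $\bar{\bar{p}}_{up,C}$ as a genuine upper bound for $\bar{\bar{p}}_C$ by composing two inequalities in the right order: the pointwise (fixed-$\theta_B$) bound already established in Chapter~4, followed by a single application of Jensen's inequality in the angular variable $\theta_B$. First I would start from the definition (\ref{eq:chp5_p2bar_1}) and invoke the pointwise estimate $\bar{p}_C\le\bar{p}_{up,C}$, which holds for every $\theta_B$ by (\ref{eq:chp4_meanSSOP_up}). Integrating this bound over $\theta_B\in[0,2\pi]$ and dividing by $2\pi$ preserves the inequality, giving
\[
\bar{\bar{p}}_C\le\frac{1}{2\pi}\int_0^{2\pi}\bar{p}_{up,C}\,\mathrm{d}\theta_B,
\]
so it remains only to bound this averaged fixed-angle upper bound by the closed form in (\ref{eq:chp5_p2bar_up}).

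For the second step I would collect the entire $\theta_B$-dependence of $\bar{p}_{up,C}$ into a single scalar. Writing
\[
B(\theta_B)=\frac{c_0K}{2\pi(K+1)}A_{0,C}(\theta_B)+\frac{c_0}{K+1},
\]
the fixed-angle bound reads $\bar{p}_{up,C}=1-\exp\{-\lambda_e\pi B(\theta_B)^{2/\beta}\}$, so $B$ enters only through the scalar map $f(B)=\exp\{-\lambda_e\pi B^{2/\beta}\}$. The crucial point is that $f$ is convex on $(0,\infty)$ for every $\beta\ge2$: a direct second-derivative computation yields
\[
f''(B)=\lambda_e\pi\,p\,e^{-\lambda_e\pi B^{p}}B^{p-2}\big[(1-p)+\lambda_e\pi p\,B^{p}\big],\qquad p=\tfrac{2}{\beta}\in(0,1],
\]
and every factor is non-negative precisely because $p\le1$ and $B>0$. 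Jensen's inequality (Lemma~\ref{le:chp3_jensensinequality}) then gives $\mathbb{E}_{\theta_B}[f(B)]\ge f(\mathbb{E}_{\theta_B}[B])$, which after taking $1-(\cdot)$ turns the averaged bound into $\frac{1}{2\pi}\int_0^{2\pi}\bar{p}_{up,C}\,\mathrm{d}\theta_B\le 1-\exp\{-\lambda_e\pi\bar B^{\,2/\beta}\}$ with $\bar B=\mathbb{E}_{\theta_B}[B]$.

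Finally I would exploit that $B$ is affine in $A_{0,C}$: averaging over $\theta_B$ passes through the affine coefficients, so $\bar B=\frac{c_0K}{2\pi(K+1)}\bar A_{0,C}+\frac{c_0}{K+1}$ with $\bar A_{0,C}$ as defined in (\ref{eq:chp5_meanA_0}). Substituting this expression for $\bar B$ reproduces exactly the argument appearing inside the exponential in (\ref{eq:chp5_p2bar_up}), and chaining with the first-step inequality completes the proof that $\bar{\bar{p}}_C\le\bar{\bar{p}}_{up,C}$.

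The step I expect to be the main obstacle is the convexity verification of $f(B)=\exp\{-\lambda_e\pi B^{2/\beta}\}$, since the sign of $f''$ hinges entirely on $p=2/\beta\le1$; one must be careful to restrict to the regime $\beta\ge2$ (outside which $f$ need not be convex and the bound would fail) and to note that $B>0$ is automatic because $A_{0,C}\ge0$ and $c_0>0$. The two remaining ingredients — monotonicity of the integral in the first step and the affine bookkeeping that identifies $\bar B$ with the stated closed form — are routine.
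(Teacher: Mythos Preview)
Your proof is correct and follows essentially the same architecture as the paper's first method in Appendix~\ref{appdx:bessel:oieor}: start from the pointwise inequality $\bar{p}_C\le\bar{p}_{up,C}$, average over $\theta_B$, and then push the expectation inside via Jensen. The only technical difference is in how the Jensen step is executed. The paper applies Jensen twice in succession, once to the convex map $x\mapsto e^{x}$ (equation~(\ref{eq:chp3_JI_1})) and once to the concave map $x\mapsto x^{2/\beta}$ (equation~(\ref{eq:chp3_JI_2})), passing through the intermediate quantity $1-\exp\{-\lambda_e\pi\,\mathbb{E}_{\theta_B}[B^{2/\beta}]\}$. You instead verify directly that the composite $f(B)=\exp\{-\lambda_e\pi B^{2/\beta}\}$ is convex on $(0,\infty)$ for $p=2/\beta\le1$ and apply Jensen once. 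Your second-derivative computation is correct, and the single-step route is slightly cleaner; the paper's two-step route has the minor advantage of reusing the already-stated Lemma~\ref{le:chp3_jensensinequality} verbatim without an additional convexity check. Both arrive at the identical closed form~(\ref{eq:chp5_p2bar_up}).
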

The proof of Theorem\,\ref{th:chp5_p2bar_up} is in Appendix\,\ref{appdx:bessel:oieor}.
The tightness of $\bar{\bar{p}}_{up,C}$ can be analyzed with the help of the following proposition.
\begin{proposition}\label{prop:chp5_tightness}
For random variable $X\sim\mathcal{U}(a,b)$, the smaller $(b-a)$ is, the tighter the two inequalities in Lemma\,\ref{le:chp3_jensensinequality} are.
\end{proposition}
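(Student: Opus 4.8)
The plan is to quantify the slack in each Jensen inequality by its \emph{gap} and to prove this gap is a monotone function of the support width $b-a$. I would reparametrise the uniform law by its centre $m=\frac{a+b}{2}$ and half-width $h=\frac{b-a}{2}$, so that $X\sim\mathcal{U}(m-h,m+h)$ with $\mathbb{E}[X]=m$, and hold $m$ fixed while letting $h$ vary; this is the natural reading, since shrinking $b-a$ corresponds to concentrating $X$ about its mean (the degenerate case $h=0$ being the deterministic $X$ for which Lemma\,\ref{le:chp3_jensensinequality} already gives equality). For any twice-differentiable $\phi$ the Jensen gap is
\begin{align}\label{eq:chp5_tight_gap}
  G_\phi(h)=\mathbb{E}[\phi(X)]-\phi(\mathbb{E}[X])=A_\phi(h)-\phi(m),
\end{align}
where $A_\phi(h)=\frac{1}{2h}\int_{m-h}^{m+h}\phi(x)\,\mathrm{d}x$ is the average of $\phi$ over the support. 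Both inequalities (\ref{eq:chp3_JI_1}) and (\ref{eq:chp3_JI_2}) are statements about the sign and size of $G_\phi(h)$, so it suffices to control $|G_\phi(h)|$.

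The core step is a symmetrisation followed by a running-average argument. I would write $A_\phi(h)=\frac{1}{h}\int_0^h s(t)\,\mathrm{d}t$ with $s(t)=\frac{1}{2}[\phi(m+t)+\phi(m-t)]$, whose derivative $s'(t)=\frac{1}{2}[\phi'(m+t)-\phi'(m-t)]$ is nonnegative when $\phi$ is convex and nonpositive when $\phi$ is concave. Differentiating the running average gives
\begin{align}\label{eq:chp5_tight_deriv}
  A_\phi'(h)=\frac{s(h)-A_\phi(h)}{h},
\end{align}
and since $A_\phi(h)$ is the mean of $s$ over $[0,h]$ while $s$ is monotone, the sign of $s(h)-A_\phi(h)$ matches that of $s'$. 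Hence $A_\phi(h)$ is increasing for convex $\phi$ and decreasing for concave $\phi$, so by (\ref{eq:chp5_tight_gap}) the magnitude $|G_\phi(h)|$ is nondecreasing in $h$ and vanishes exactly at $h=0$. Applying this to $\phi=\exp$ (convex) recovers (\ref{eq:chp3_JI_1}); here the explicit evaluation $A_{\exp}(h)=e^m\sinh(h)/h$ even yields the ratio $\mathbb{E}[e^X]/e^{\mathbb{E}[X]}=\sinh(h)/h$, a function of the width alone, monotone increasing and tending to $1$ as $h\to 0$. Applying it to $\phi(x)=x^{2/\beta}$ with $\beta>2$ (concave on $x>0$) recovers (\ref{eq:chp3_JI_2}). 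In both cases a smaller $b-a=2h$ produces a smaller gap, i.e. a tighter inequality, as claimed.

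The main obstacle I anticipate is establishing the sign in the running-average identity (\ref{eq:chp5_tight_deriv}) cleanly, that is, arguing rigorously that the running average of a monotone function inherits its monotonicity; this is the one place where convexity/concavity is genuinely used rather than merely invoked. A secondary technical point is the domain restriction for the concave case: $x^{2/\beta}$ is smooth and strictly concave only on $x>0$, so I would note that $a>0$ is assumed, which is automatic because the random variable to which (\ref{eq:chp3_JI_2}) is applied in this thesis is a positive, area-type quantity. Finally, I would make explicit that holding the centre $m$ fixed is what gives the clean monotone statement, since otherwise the difference gap in (\ref{eq:chp5_tight_gap}) would also drift with $m$; this clarification removes the only ambiguity in interpreting ``the smaller $(b-a)$ is, the tighter.''
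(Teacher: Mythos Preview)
Your proof is correct and takes a genuinely different route from the paper's. The paper computes the \emph{ratio} $\eta_1=\mathbb{E}[e^X]/e^{\mathbb{E}[X]}$ and $\eta_2=(\mathbb{E}[X])^{2/\beta}/\mathbb{E}[X^{2/\beta}]$ explicitly for the uniform law, writes each as a function of $y=b-a$ (holding the left endpoint $a$ fixed in the second case), and then checks by a brute-force derivative calculation that $\eta_1$ and $\eta_2$ are increasing in $y$. You instead give a single unified argument valid for any convex or concave $\phi$: the symmetrisation $s(t)=\tfrac12[\phi(m+t)+\phi(m-t)]$ together with the running-average identity $A_\phi'(h)=(s(h)-A_\phi(h))/h$ reduces monotonicity of the Jensen gap to monotonicity of $s$, which follows immediately from the sign of $\phi''$. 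Your approach is cleaner and more general, and makes transparent why equality holds at $h=0$. The trade-offs are: (i) you track the \emph{gap} rather than the ratio, whereas the ratio is the quantity $\eta$ actually used in the numerical sections of the thesis---though for $\exp$ you recover the ratio $\sinh(h)/h$ directly, and in the concave case your monotonicity of $A_\phi(h)$ with $m$ fixed immediately yields monotonicity of $m^{2/\beta}/A_\phi(h)$ as well; and (ii) you hold the centre $m$ fixed while the paper holds $a$ fixed, a distinction you already flag. Both readings support the only downstream use of the proposition, namely that a smaller spread of $A_{0,C}$ over $\theta_B$ yields a tighter upper bound $\bar{\bar{p}}_{up,C}$.
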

The proof of Proposition\,\ref{prop:chp5_tightness} is in Appendix\,\ref{appdx:bessel:uirtnvb}. 
In the derivation to obtain $\bar{\bar{p}}_{up,C}$, $A_{0,C}$ is the random variable with respect to $\theta_B$.
According to Proposition\,\ref{prop:chp5_tightness}, the less variation of $A_{0,C}$, the tighter $\bar{\bar{p}}_{up,C}$ is.
As discussed in Section\,\ref{chp4:sec3:opbwec541}, the variation of $A_{0,C}$ in the range $\theta_B\in[0,\frac{\pi}{2}]$ is not big compared to the ULA.
Thus, $\bar{\bar{p}}_{up,C}$ can be used to analytically analyze $\bar{\bar{p}}_C$.

\nomenclature{$\bar{\bar{p}}_{up,C}$}{upper bound of $\bar{\bar{p}}_C$}

To obtain the analytic expression of $\bar{\bar{p}}_{up,C}$, $\bar{A}_{0,C}$ is directly given here.
\begin{theorem}\label{th:chp5_meanA0}
\begin{align}\label{eq:chp5_A0C_average}
	\bar{A}_{0,C}=2\pi+2\pi\sum_{n=1}^{N-1} J_0^2(2kR\sin\frac{n\pi}{N}).
\end{align}
\end{theorem}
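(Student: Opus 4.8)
The plan is to substitute the series representation of $A_{0,C}$ from Theorem~\ref{th:chp4_A_0C} directly into the definition of the angular average $\bar{A}_{0,C}$ in (\ref{eq:chp5_meanA_0}). Since the constant term $2\pi$ and every factor $J_0(kRW_n)$, $J_{lN}(kRW_n)$, and $(-1)^{ln+lN}$ is independent of $\theta_B$, the entire $\theta_B$-dependence is carried by the single complex exponential $e^{jlN\theta_B}$. The computation therefore collapses to evaluating $\frac{1}{2\pi}\int_0^{2\pi} e^{jlN\theta_B}\,\mathrm{d}\theta_B$ for each integer $l$, after which the remaining sums over $n$ and $l$ are purely algebraic.

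The key step is the orthogonality relation for complex exponentials over a full period. Because $N\geq 2$ is a positive integer, $lN$ is a nonzero integer for every $l\neq 0$, so $\frac{1}{2\pi}\int_0^{2\pi} e^{jlN\theta_B}\,\mathrm{d}\theta_B$ equals $0$ for $l\neq 0$ and $1$ for $l=0$. Hence only the $l=0$ term of the inner sum survives the averaging. Evaluating that term, $(-1)^{0}J_{0}(kRW_n)e^{0}=J_0(kRW_n)$, leaves each summand over $n$ equal to $J_0(kRW_n)\cdot J_0(kRW_n)=J_0^2(kRW_n)$. Substituting $W_n=2\sin(n\pi/N)$ gives $kRW_n=2kR\sin(n\pi/N)$ and yields the claimed expression
\[
\bar{A}_{0,C}=2\pi+2\pi\sum_{n=1}^{N-1}J_0^2\!\left(2kR\sin\tfrac{n\pi}{N}\right).
\]

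I expect essentially no conceptual obstacle; the main technical point to address is the justification for interchanging the infinite sum over $l$ with the integral over $\theta_B$. I would handle this by invoking the rapid decay of $J_{lN}(kRW_n)$ in the order $lN$ (Bessel functions of the first kind decay faster than any inverse power as their order grows for fixed argument), which makes the series over $l$ absolutely and uniformly convergent so that termwise integration is valid. As a cross-check, I would alternatively run the argument on the real-valued forms (\ref{eq:chp4_A0C_even}) and (\ref{eq:chp4_A0C_odd}), where the averaging uses $\frac{1}{2\pi}\int_0^{2\pi}\cos(lN\theta_B)\,\mathrm{d}\theta_B=0$ for $l\geq 1$ in the even-$N$ case and $\frac{1}{2\pi}\int_0^{2\pi}\cos(2lN\theta_B)\,\mathrm{d}\theta_B=0$ for $l\geq 1$ in the odd-$N$ case. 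In both parities the oscillatory terms vanish and only the constant $2\pi$ together with the term $2\pi\sum_{n=1}^{N-1} J_0^2(kRW_n)$ remain, confirming the result independently of whether $N$ is even or odd.
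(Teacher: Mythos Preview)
Your argument is correct. Starting from the expansion in Theorem~\ref{th:chp4_A_0C} and killing every $l\neq 0$ term via $\frac{1}{2\pi}\int_0^{2\pi}e^{jlN\theta_B}\,\mathrm{d}\theta_B=0$ is clean and immediately yields the stated form; your remark on absolute convergence of the $l$-sum is the right way to justify the interchange.

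The paper's primary derivation takes a different route. Rather than working with the Jacobi--Anger expanded form of Theorem~\ref{th:chp4_A_0C}, it returns to the earlier double-sum representation $A_{0,C}=\frac{2\pi}{N}\sum_{i,j}J_0(kRW_{i,j})e^{jkRW_{i,j}\sin(\theta_B-Z_{i,j})}$ and integrates each exponential over $\theta_B$ using the Bessel integral identity $\frac{1}{2\pi}\int_0^{2\pi}e^{-jx\sin\tau}\,\mathrm{d}\tau=J_0(x)$, producing $\frac{2\pi}{N}\sum_{i,j}J_0^2(kRW_{i,j})$ directly. It then reduces the double sum to $2\pi+\frac{4\pi}{N}\sum_{n=1}^{N-1}(N-n)J_0^2(2kR\sin\frac{n\pi}{N})$ and finally pairs $n$ with $N-n$ via $\sin\frac{n\pi}{N}=\sin\frac{(N-n)\pi}{N}$ to reach the claimed expression. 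Your approach is shorter because it leverages the algebraic simplification already performed in Theorem~\ref{th:chp4_A_0C}; the paper's route is more self-contained but requires redoing the diagonal-summation and symmetry steps. The paper does note, as a closing remark, that the result can alternatively be read off from the real-valued forms (\ref{eq:chp4_A0C_even}) and (\ref{eq:chp4_A0C_odd}) using $\int_0^{2\pi}\cos(lN\theta_B)\,\mathrm{d}\theta_B=0$, which is exactly your cross-check.
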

The proof of Theorem\,\ref{th:chp5_meanA0} is in Appendix\,\ref{appdx:bessel:twqwrq}.
It can be seen that (\ref{eq:chp5_A0C_average}) is the same as the approximation for $A_{0,C}$ in (\ref{eq:appdx_bessel_A0C_asymptotic}).
Therefore, $\bar{A}_{0,C}$ in general decreases with some fluctuations as $R$ increases.

\nomenclature{$\bar{A}_{0,C}$}{averaged $A_{0,C}$}

According to (\ref{eq:chp5_p2bar_up}), $\bar{\bar{p}}_{up,C}$ is positively correlated with $\bar{A}_{0,C}$.
Thus, the behavior of $\bar{\bar{p}}_{up,C}$ and $A_0$ is consistent with respect to $R$.
As previously stated, $\bar{\bar{p}}_{up,C}$ is tight to $\bar{\bar{p}}_C$. 
Thus, it can be conjectured that $\bar{\bar{p}}_C$ decreases in general with some fluctuations as $R$ increases.
This can be verified by the numerical results shown in Appendix\,\ref{appdx:fig:ldfsjo}.

Because $\bar{\bar{p}}_C$ fluctuates in a certain range of $R$, there must exist at least one local minimum.
Numerical results are used to find $R_{opt}$ in (\ref{eq:chp5_mmse2}) and (\ref{eq:chp5_mmse4}).
For example, chose $N_{\text{max}}=8$ and $R\in[0.4\lambda,2\lambda]$.
The results of $\bar{\bar{p}}_C$ for all possible $\theta_B$ and $\big(q^{(1)}\bar{\bar{p}}_C^{(1)}+q^{(2)}\bar{\bar{p}}_C^{(2)}+q^{(3)}\bar{\bar{p}}_C^{(3)}\big)$ for all possible $(d_B,\theta_B)$ are shown in Fig.\,\ref{fig:chp5_p2bar_R}. 
It can be seen that for both curves, there are more than one local minima.
In the range $R\in[0.4\lambda,2\lambda]$, $R_{opt}$ can be chosen by the smallest local minimum.
Coincidentally, for both curves $R_{opt}$ is $1.76\lambda$ in the range $R\in[0.4\lambda,2\lambda]$.

\begin{figure}
\centering
\includegraphics[scale=0.9]{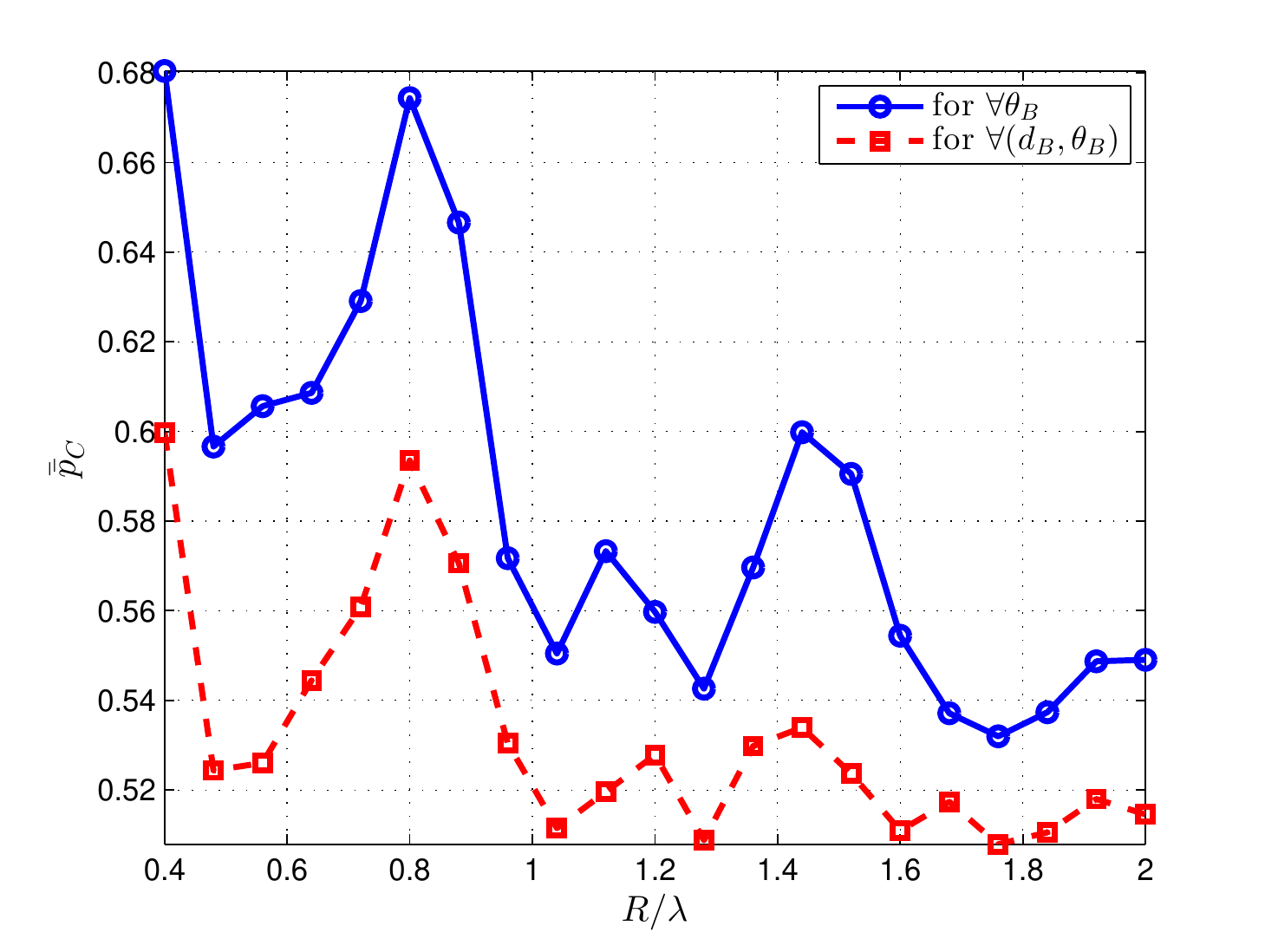}
\caption{$\bar{\bar{p}}_C$ versus $R$ for all $\theta_B$ and $(d_B,\theta_B)$, $N=8$}
\label{fig:chp5_p2bar_R}
\end{figure}

\subsection{Transmit Power and Number of Elements}
\label{chp5:analysis:vneiwnv}

In this section, the relationship between $P_t$ and $N$ is studied.
Both $P_t$ and $N$ are included by $\bar{p}_C$ in (\ref{eq:chp5_p_De_beta_is_2}) and the constraint in (\ref{eq:chp5_problem_constraint2}).
According to (\ref{eq:chp5_p_De_beta_is_2}), there is a monotonically increasing relationship between $P_t$ and $\bar{p}_C$; according to (\ref{eq:chp5_problem_constraint2}), the larger $P_t$, the easier to satisfy the constraint for fixed $d_B$.
However, $\bar{p}_C$ is determined by the product of $P_tA_{0,C}$, while the constraint has the product of $P_tN$.
Therefore, the trade-off between $P_t$ and $N$ is studied.

The key to the problem lies in the relationship between $A_{0,C}$ and $N$.
In Section\,\ref{chp4:sec3:njowq}, the asymptotic behavior of $\bar{p}_C$ is almost linear when $N$ is very large.
However, when $N$ is not very large, there does not exist a simple monotonic relationship between $A_{0,C}$ and $N$; instead, the change is non-linear, e.g. $N\leq 8$, which is a common setting for indoor devices.
This means that when $N$ increases, $P_tN$ increases, but $P_tA_{0,C}$ (i.e., $\bar{p}_C$) could increase or decrease.

The product of $P_tN$ should be adapted to Bob's distance $d_B$, according to (\ref{eq:chp5_problem_constraint2}).
When Bob moves farther away form the AP, i.e., $d_B$ increases, the AP should respond by increasing $P_t$ or $N$, in order to satisfy $C_B\geq R_B$.
Alternatively, if $d_B$ does not change, the AP can double $P_t$ while halving $N$, to seek a lower $\bar{p}_C$.
Next, the trade-off between adjusting $P_t$ or $N$ is studied for the previous two cases.

For the first case, imagine a situation where Bob's distance $d_B$ changes. Assume that the benchmark of Bob's distance is $d_{B0}$ and the AP uses $N_0$ active elements with transmit power $P_{t0}$. As Bob moves closer to or farther away from the AP, the AP needs to adjust either $P_t$ or $N$ to keep $P_{t}N/d_{B}^\beta$ unchanged,  in order to satisfy $C_B\geq R_B$.

\nomenclature{$d_{B0}$}{benchmark value for $d_B$}
\nomenclature{$P_{t0}$}{benchmark value for $P_t$}
\nomenclature{$N_0$}{benchmark value for $N$}

The results are plotted in Fig.\,\ref{fig:chp5_P_t_and_N_movingd_B}, where $N_0=8$, $R=1.6\lambda$ and $\theta_B=0^{\circ}$.
Bob's distance $d_B$ is normalized against the benchmark $d_{B0}$.
In the upper plot, it shows how $P_t$ is changed with $d_B$ when $N$ is fixed, i.e., $N=N_0$
In the middle plot, it shows how $N$ is changed with $d_B$ when $P_t$ is fixed, i.e., $P_t=P_{t0}$.
In order to satisfy $C_B\geq R_B$, when fixing either $P_t$ or $N$, the other parameter increases along with $d_B^{\beta}$.

\begin{figure}
\centering
\includegraphics[scale=0.9]{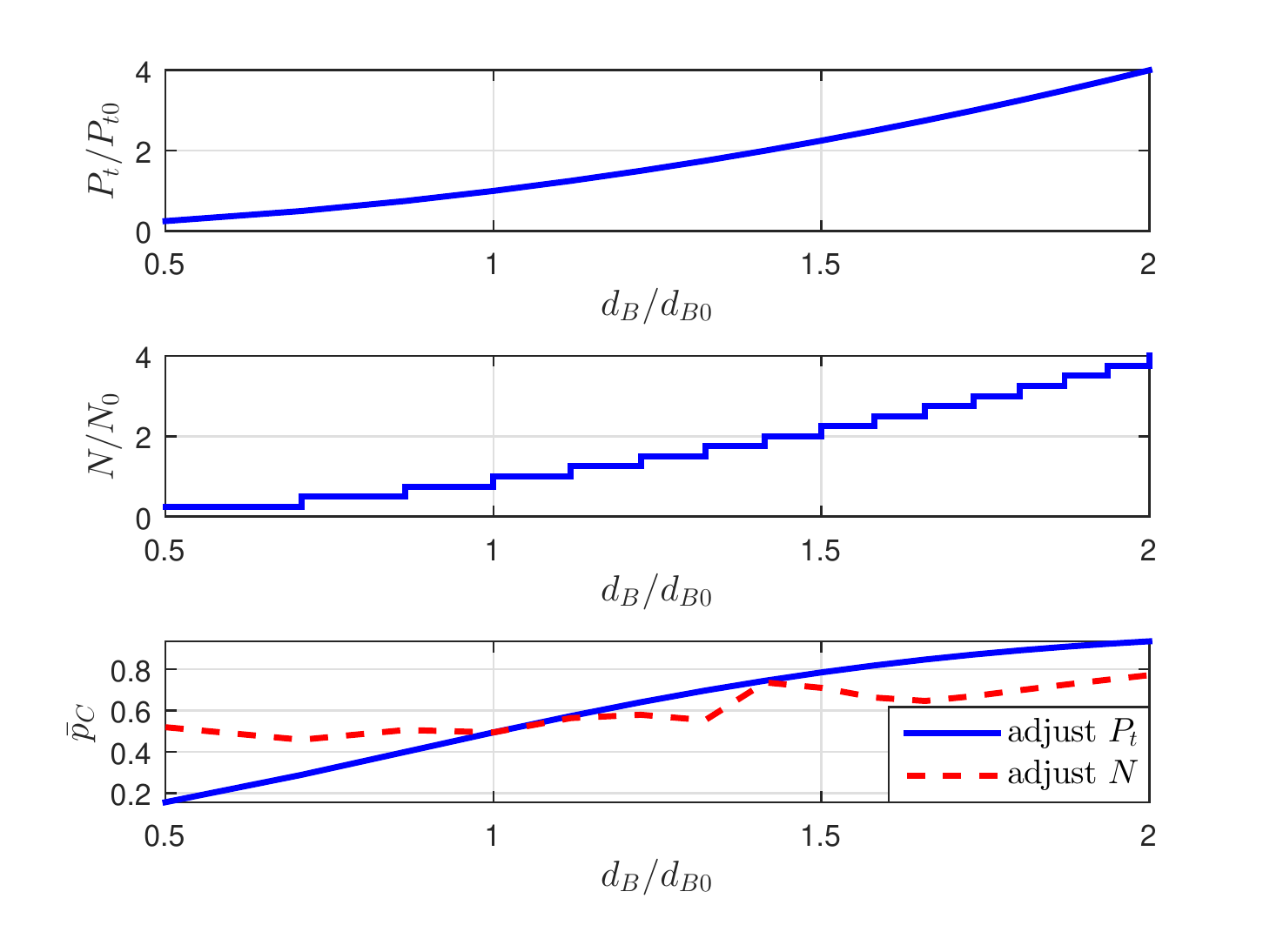}
\caption{Plots for moving $d_B$. Upper plot: $P_t/P_{t0}$ versus $d_B/d_{B0}$ for fixed $N$; middle plot: $N/N_{0}$ versus $d_B/d_{B0}$ for fixed $P_t$; lower plot: $\bar{p}_C$ versus $d_B/d_{B0}$ for adjustable $P_t$ with fixed $N$, and adjustable $N$ with fixed $P_t$. $\beta=2$}
\label{fig:chp5_P_t_and_N_movingd_B}
\end{figure}

The lower plot in Fig.\,\ref{fig:chp5_P_t_and_N_movingd_B} shows the change of $\bar{p}_C$ when changing $P_t$ or $N$. 
As discussed previously, $\bar{p}_C$ increases along with $P_t$, whereas $\bar{p}_C$ increases in general with $N$ with some fluctuations. 
Compared the two curves, it can be seen that the variation of $\bar{p}_C$ when changing $P_t$ is bigger than that of $N$, which suggests that $P_t$ has a more dominant role in $\bar{p}_C$.

For the second case, the trade-off between changing $P_t$ and changing $N$ is studied when $d_B$ is fixed.
For fixed $d_B$, the product of $P_t$ and $N$ should be kept constant, i.e., $P_tN=P_{t0}N_0$.
The AP starts with $N=N_0$ elements and $P_t=P_{t0}$, then $P_t$ and $N$ are adjusted.
The numerical results are shown in Fig.\,\ref{fig:chp5_P_t_and_N_fixd_B}, where $N_0=8$, $R=1.6\lambda$ and $\theta_B=0^{\circ}$.
The upper plot shows the change of $P_t$ versus $N$.
The lower plot shows how $\bar{p}_C$ changes with $N$.

\begin{figure}
\centering
\includegraphics[scale=0.9]{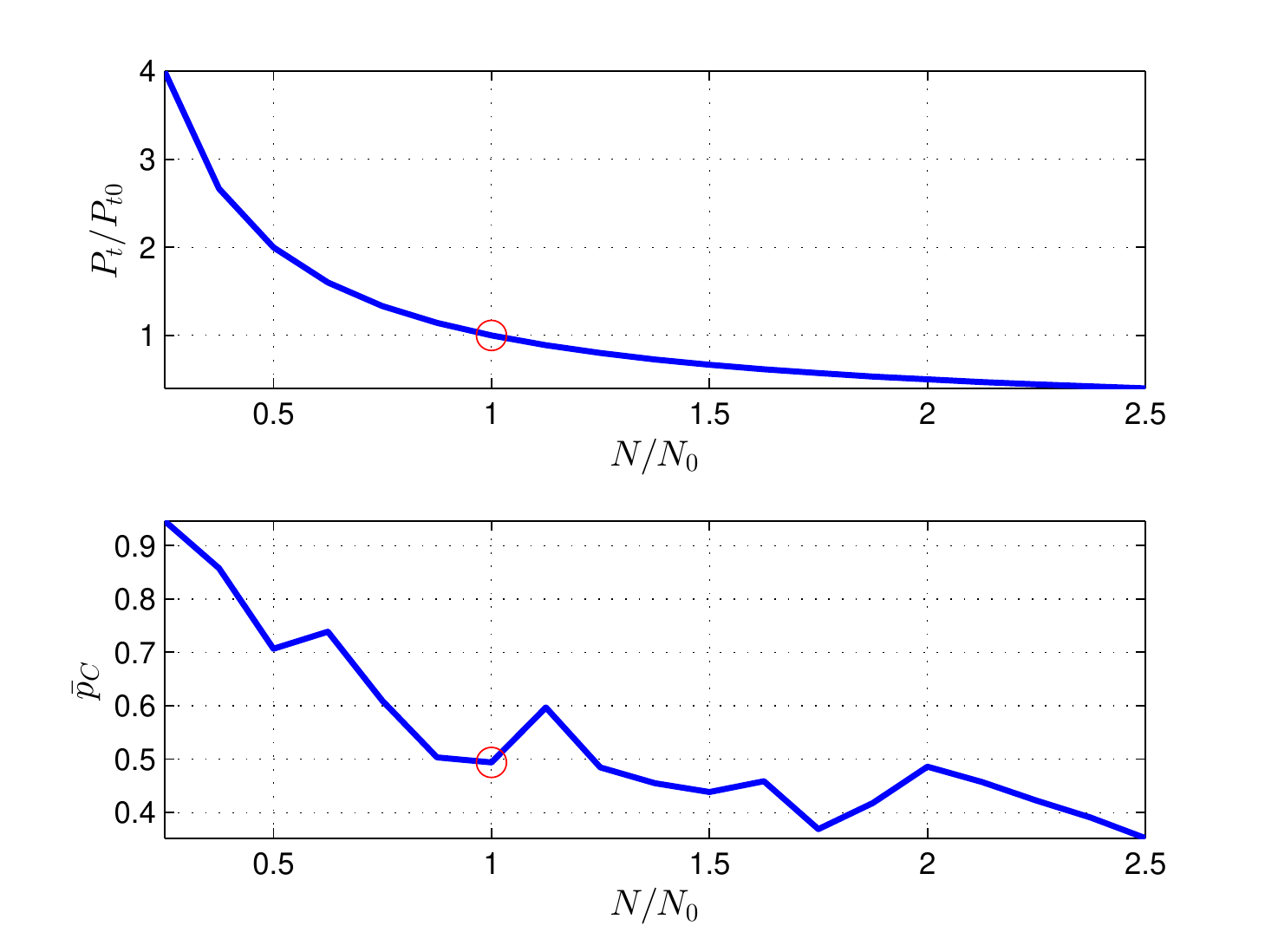}
\caption{Plots for fixed $d_B$. Upper plot: $P_t/P_{t0}$ versus $N/N_{0}$; lower plot: $\bar{p}_C$ versus $N/N_{0}$. $\beta=2$}
\label{fig:chp5_P_t_and_N_fixd_B}
\end{figure}

It can be seen from the upper plot that for fixed $d_B$, when $N$ increases, $P_t$ decreases to keep the product of $P_tN$ unchanged.
From the lower plot, it can be seen that $\bar{p}_C$ generally decreases along with $P_t$, although there is some fluctuations due to the increase of $N$.
Thus, the same conclusion can be drawn as in the first case, which is that $P_t$ plays the dominant role in the pair $(P_t,N)$ in terms of $\bar{p}_C$.

In the previous cases, $\theta_B$ is chosen as a particular angle, i.e., $\theta_B=0^{\circ}$.
To eliminate the impact of $\theta_B$, the averaged value $\bar{A}_{0,C}$ is studied versus $N$ for a fixed radius $R=1.6\lambda$.
The results are shown in Fig.\,\ref{fig:chp5_p_out_P_t_N_A0}.
An auxiliary line of $y=x$ is drawn.
It can be seen that in general, the change of $\bar{A}_{0,C}$ versus $N$ is smaller than the gradient of $y=x$.
Thus, $P_t$ is the more dominant factor in $P_tA_{0,C}$.

\begin{figure}
\centering
\includegraphics[scale=0.9]{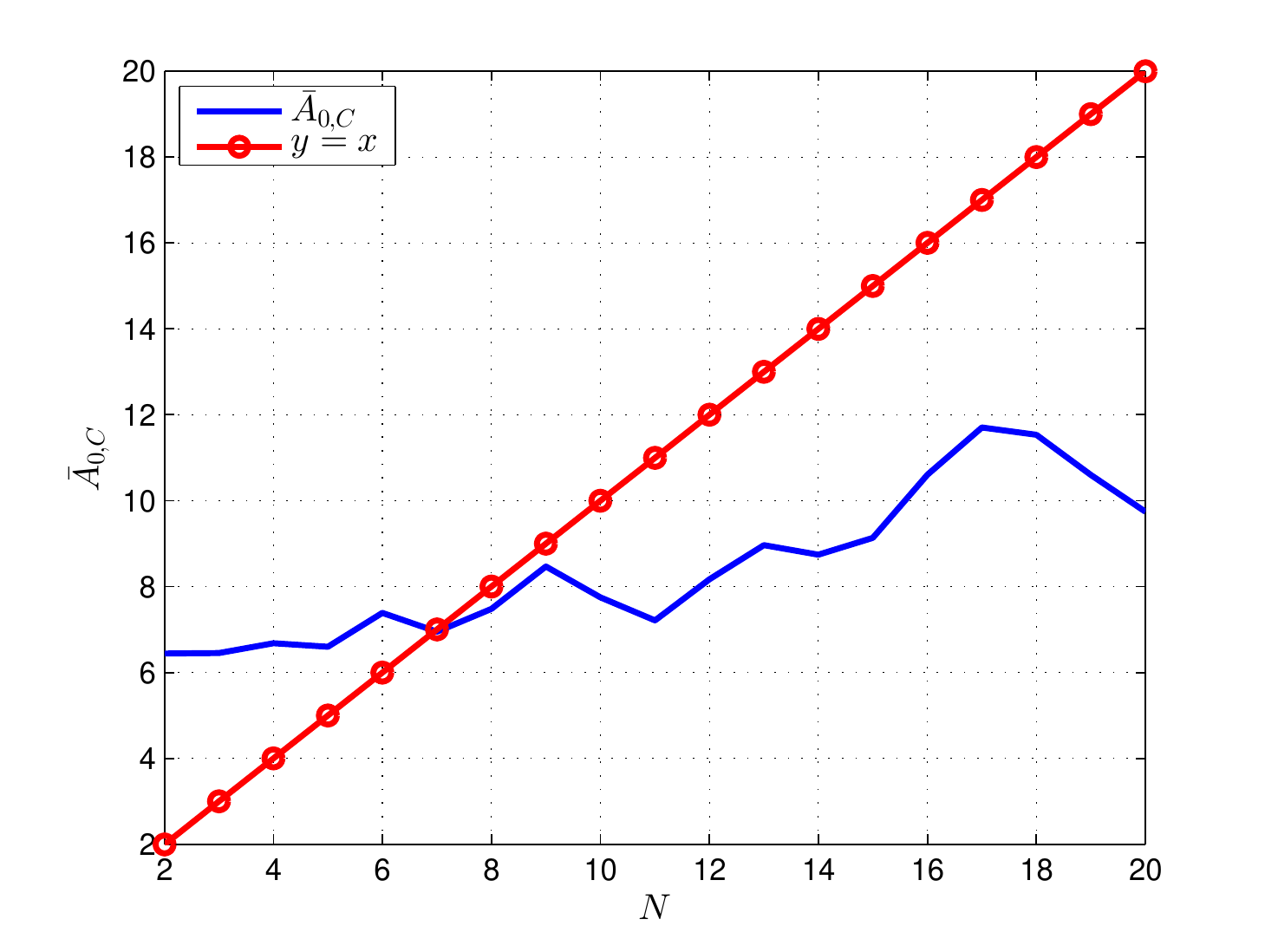}
\caption{$\bar{A}_{0,C}$ versus $N$. $R=1.5\lambda$}
\label{fig:chp5_p_out_P_t_N_A0}
\end{figure}

When Bob moves closer to the AP, i.e., $d_B/d_{B0}<1$, reducing $P_t$ leads to a smaller $\bar{p}_C$ than reducing $N$, as shown by the lower plot in Fig.\,\ref{fig:chp5_P_t_and_N_movingd_B}.
When Bob stays at the same distance, reducing $N$ leads to the increase of $P_t$ and the increase of $\bar{p}_C$, as shown by the lower plot in Fig.\,\ref{fig:chp5_P_t_and_N_fixd_B}.
When Bob moves further beyond the the maximum coverage distance is $d_{\text{max}}$, although it is preferable to increase $N$ rather than $P_t$, it is unlikely to do so because it is hard to change $N_{\text{max}}$ once the array is installed.

Based on the above analysis, it can be concluded that for given UCA with $N_{\text{max}}$ elements, when $(P_t,N)$ are adjustable, it is preferable to adjust $P_t$ than $N$. 
In other words, it is preferable not to change $N$ when $P_t$ is adjustable.

\subsection{Array Mode and Bob's Location }
\label{chp5:analysis:nbvpi}

For adjustable $P_t$, there is no need to change $N$, i.e., $M_{ij}$.
However, when $P_t$ is fixed, $N$ can be adjusted according to $d_B$. 
In addition, different $M_{ij}$ in $\{M_i\}$ gives different $\bar{p}_C$ for the same $\theta_B$.
Thus, in this section, the task is to find the optimum $M_{ij}$ that gives the minimum $\bar{p}_C$ for certain $(d_B,\theta_B)$ and satisfies $C_B\geq R_B$, when $P_t$ is fixed.

$\bar{p}_C$ for different $M_{ij}$ of UCA with 8-elements and $R=1.6\lambda$ is shown in Fig.\,\ref{fig:chp5_p_DoE_arraymode}.
The angle range is chosen to be $\theta_B\in[0^{\circ},90^{\circ}]$, because it is the smallest range where the patterns of $\{M_3\}$ do not repeat themselves.
$\bar{p}_C$ of $\{M_2\}$ and $\{M_3\}$ is plotted for $\theta_{B}\in[0^{\circ},90^{\circ}]$ in the upper and lower plots, respectively.
It can be seen that, different $M_{ij}$ in $\{M_i\}$ has different $\bar{p}_C$.
The minimum curve $\bar{p}_{C,\text{min}}$ is shown by the dotted curve.
For each $\theta_B$, one array mode in $\{M_i\}$ gives $\bar{p}_{C,\text{min}}$.

\begin{figure}
\centering
\includegraphics[scale=0.9]{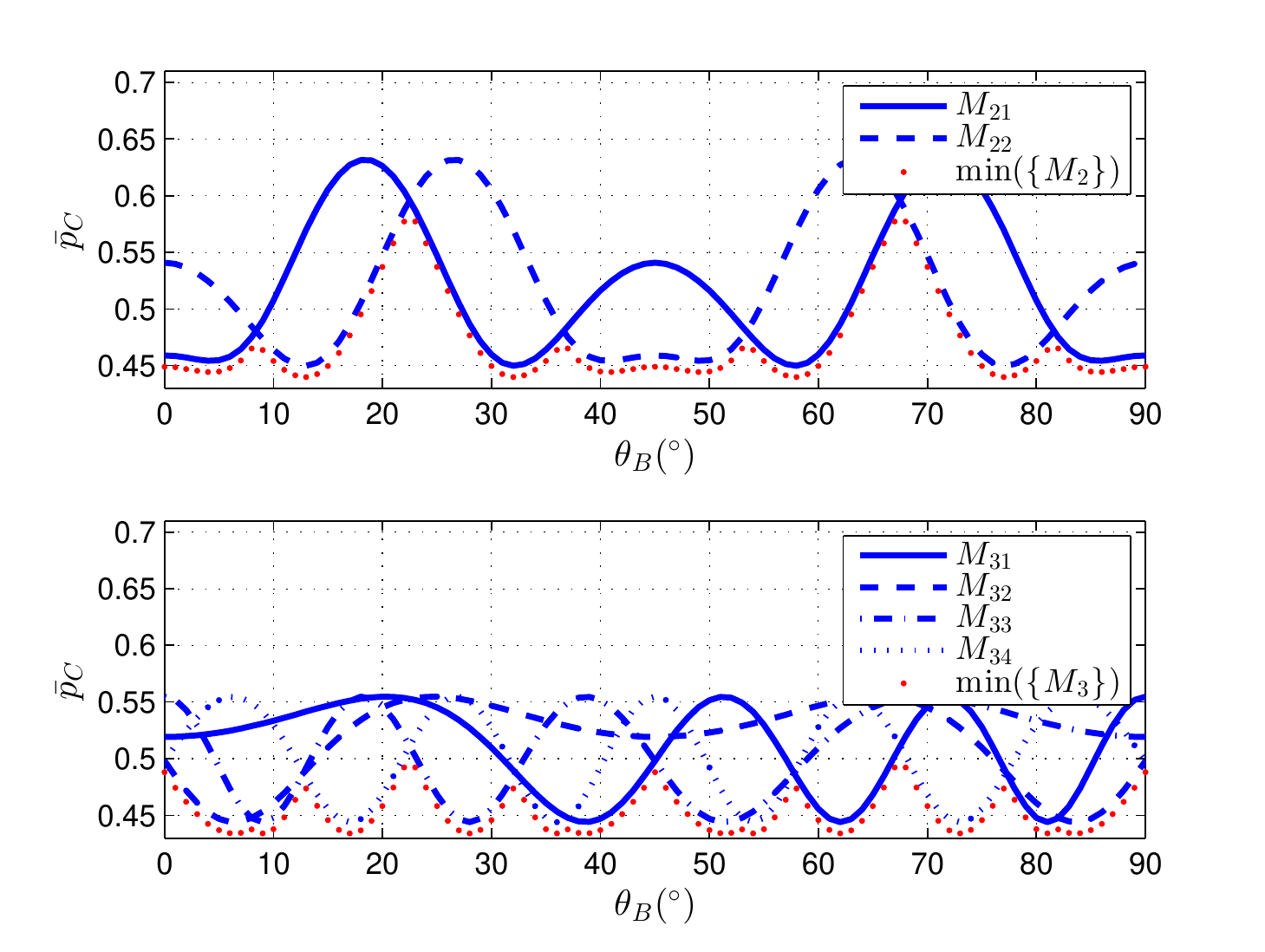}
\caption{$\bar{p}_C$ for different $M_{ij}$ over $\theta_B$; upper plot: $\{M_{2}\}$; lower plot: $\{M_{3}\}$. For better vision, the minimum value curve is drawn with an offset by -0.01.}
\label{fig:chp5_p_DoE_arraymode}
\end{figure}

Take the example in Section\,\ref{chp5:syst:bnmo} for instance.
When $\theta_B=5^{\circ}$, the AP can choose $\bar{p}_{C,\text{min}}$ from $M_{21}$ and $M_{22}$.
This is especially true for $\{M_3\}$, where $N=2$.
Because $\{M_3\}$ can also be regarded as linear array, which has larger SSOP near $\theta_{\text{doe}}=90^{\circ}$.
However, this can be avoided by choosing an appropriate array mode that turns $\theta_{\text{doe}}=90^{\circ}$ for one array mode to $\theta_{\text{doe}}=0^{\circ}$ for another array mode.

Because there is no monotonic relationship between $A_{0,C}$ and $N$, as shown in Section\,\ref{chp4:sec3:njowq}, larger $N$ does not necessarily mean smaller or lager $\bar{p}_C$.
In Fig.\,\ref{fig:chp5_p_DoE_differentN}, $\bar{p}_C$ for different $N$ is shown for the UCA with 8 elements and $R=1.6\lambda$.
For example, when $\theta_B=0^{\circ}$, $N=4$ produces smaller $\bar{p}_C$ than that of $N=8$.

\begin{figure}
\centering
\includegraphics[scale=0.9]{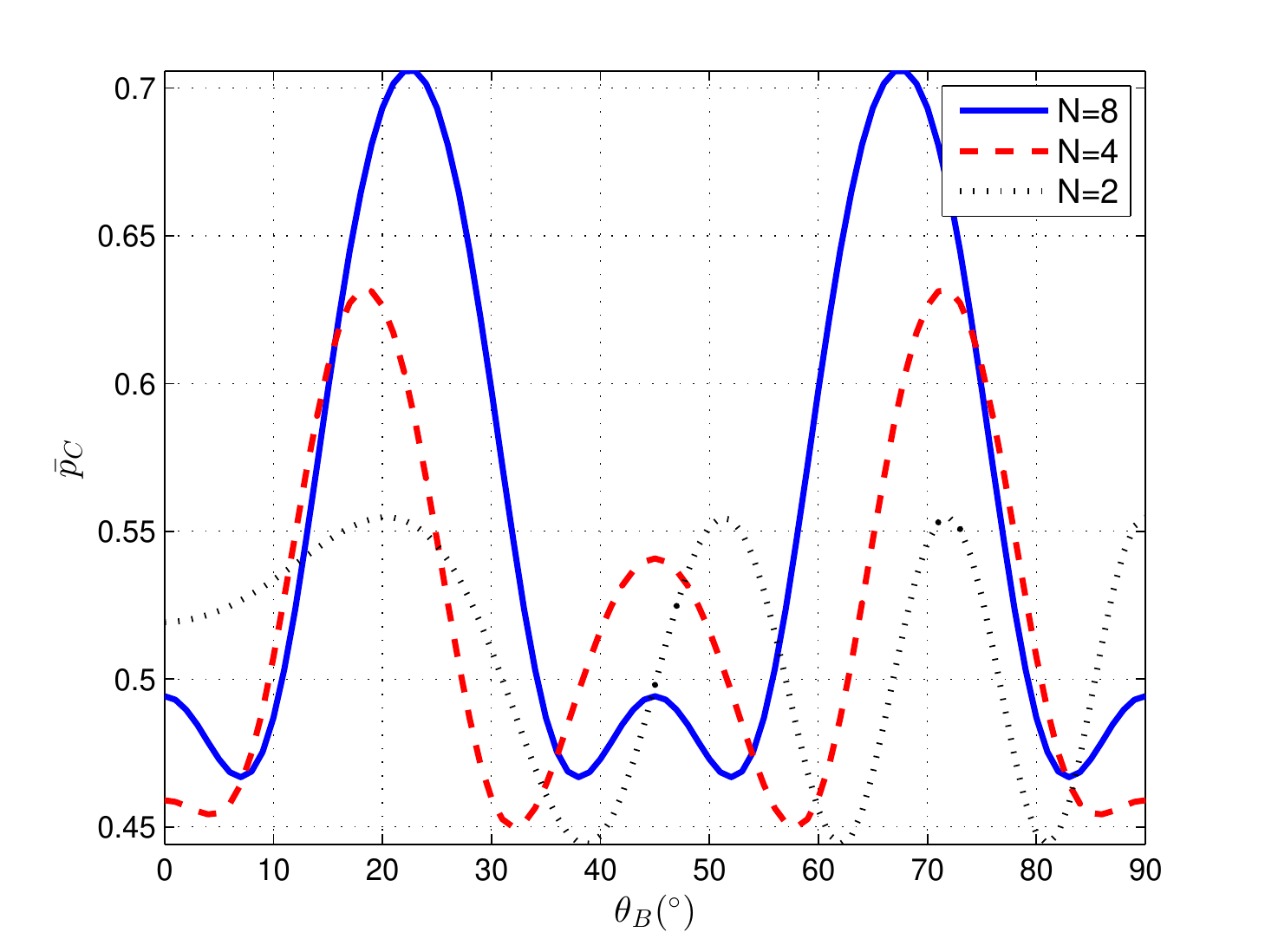}
\caption{$\bar{p}_C$ versus $\theta_B$ for different $N$}
\label{fig:chp5_p_DoE_differentN}
\end{figure}

An important issue when choosing $M_{ij}$ is that the constraint $C_B\geq R_B$ must be satisfied.
As introduced in Section\,\ref{chp5:syst:vndkawia}, the choice of $M_{ij}$ is limited by Bob being in different zone.
For example, when Bob is zone 3 and $\theta_B=0^{\circ}$, although $M_{21}$ gives $\bar{p}_{C,\text{min}}$, only $M_1$ can be used, so that $C_B\geq R_B$ can be satisfied.
Combining the observations from Fig.\,\ref{fig:chp5_p_DoE_arraymode} and\,\ref{fig:chp5_p_DoE_differentN}, the optimum $M_{ij}$ can be chosen according to $(d_B,\theta_B)$ to achieve $\bar{p}_{C,\text{min}}$.

For the generalized Rician channel with any $\beta$ and $K$, the conclusions regarding to $P_t$, $N$, $R$ and $M_{ij}$ are still valid, because the behavior of $\bar{p}_C$ and $\bar{\bar{p}}_C$ with respect to $N$, $R$ (and $\theta_B$ for $\bar{p}_C$) is consistent with that of $A_{0,C}$ and $\bar{A}_{0,C}$.
In summary, $R_{opt}$ can be found by numerically searching the minimum value of $\bar{\bar{p}}_C$ for either all possible Bob's angles or all possible Bob's locations in a local range of $R$; when $P_t$ is adjustable, there is no need to adjust $M_{ij}$; when $P_t$ is fixed, the optimum $M_{ij}$ can be numerically found according to Bob's location.

\section{Optimization Algorithms}
\label{chp5:opt_alg}


Based on the empirical results in Section\,\ref{chp5:analysis}, two numerical algorithms for two different constraints, i.e., adjustable and fixed transmit power are developed.
For each constraint, the optimum $(N,R)$ (i.e., $M_{ij}$) is found according to Bob's dynamic location $(d_B,\theta_B)$.

When $P_t$ is adjustable, it is preferable to adapt $P_t$ rather than $N$ according to $d_B$.
Thus, the optimization problem degrades to finding $R_{opt}$ that minimizes $\bar{\bar{p}}_C$ for all $\theta_B\in[0,2\pi]$, which is shown in (\ref{eq:chp5_mmse2}).
When $P_t$ is fixed, adjusting $M_{ij}$ according to $(d_B,\theta_B)$, which is called configurable beamforming technique, can produce $\bar{\bar{p}}_{C,min}$.
In the meantime, $R_{opt}$ that gives the minimum $\bar{\bar{p}}_C$ for all $(d_B,\theta_B)$ should be found in (\ref{eq:chp5_mmse4}).
For both constraints, assume that the total number of elements $N_{\text{max}}$ is fixed.

The numerical optimization algorithms are provided because there exist the analytical expressions only for the upper bounds $\bar{p}_{up,C}$ and $\bar{\bar{p}}_{up,C}$, which cannot provide accurate solutions for $R_{opt}$ and the optimum $M_{ij}$.
In addition, the developed numerical methods can be applied to arbitrary value of any parameters, such as $\beta$ and $K$.

\subsection{Numerical Optimization for Radius}
\label{chp5:opt_alg:nviewowe}

When $P_t$ is adjustable, the optimization of $R$ is for fixed $N_{\text{max}}$, which is straightforward according to (\ref{eq:chp5_mmse2}).
When $P_t$ is fixed, according to (\ref{eq:chp5_mmse4}), the optimization of radius $R$ is for a combination of different $N$, which is based on (\ref{eq:chp5_mmse2}). 
Thus, only the numerical optimization algorithm for adjustable $P_t$ (i.e., fixed $N$) is introduced.

The numerical implementation of the algorithm, referred to as Algorithm\,1, is illustrated by flowcharts in Fig.\,\ref{fig:chp5_algorithm1} to Fig.\,\ref{fig:chp5_algorithm1b}.
The detailed algorithm is available in Appendix\,\ref{appdx:opt:one}.
As shown in Fig.\,\ref{fig:chp5_algorithm1}, the algorithm first takes in some parameters and computes the iteration numbers.
Then, $\bar{\bar{p}}_C$ is calculated, which is used to find $R_{opt}$.

\begin{figure}
\centering
\includegraphics[scale=1]{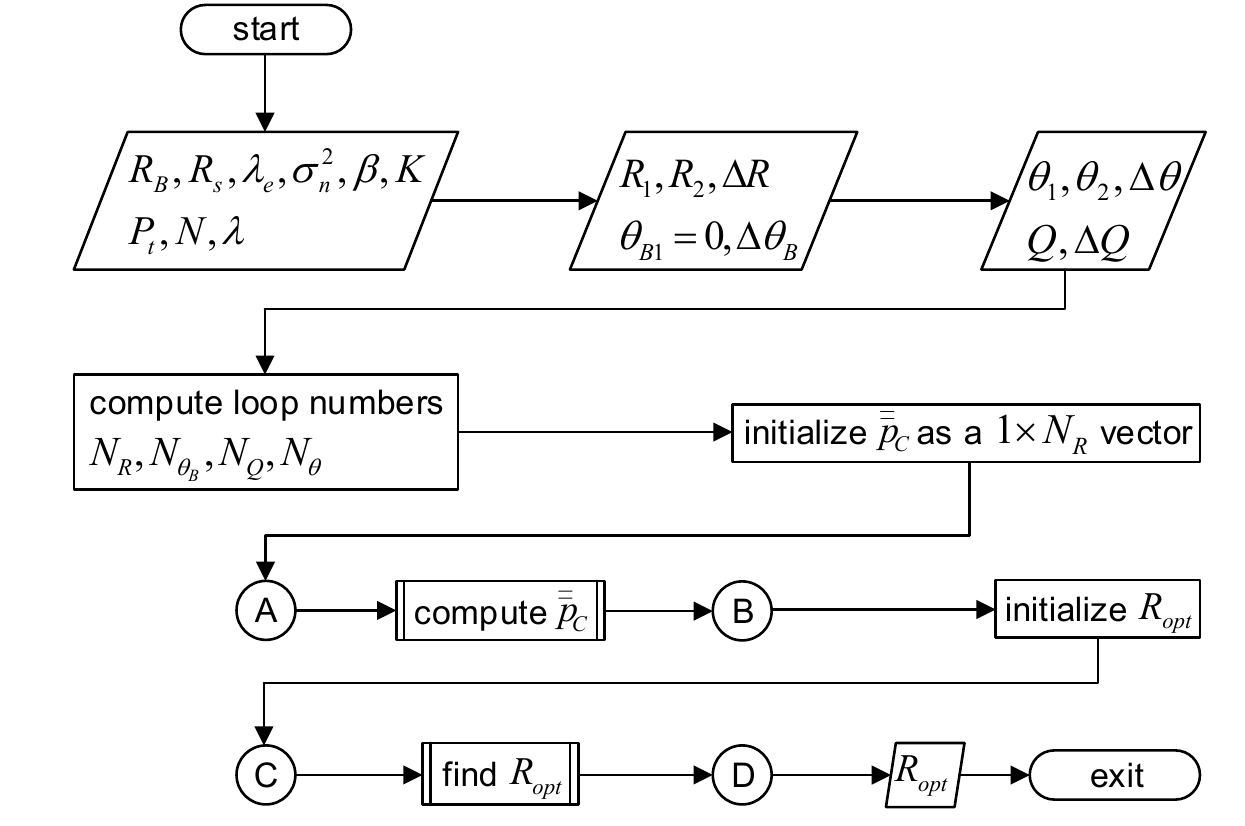}
\caption{Flowchart for Algorithm\,1}
\label{fig:chp5_algorithm1}
\end{figure}

\begin{figure}[t]
\centering
\includegraphics[scale=1]{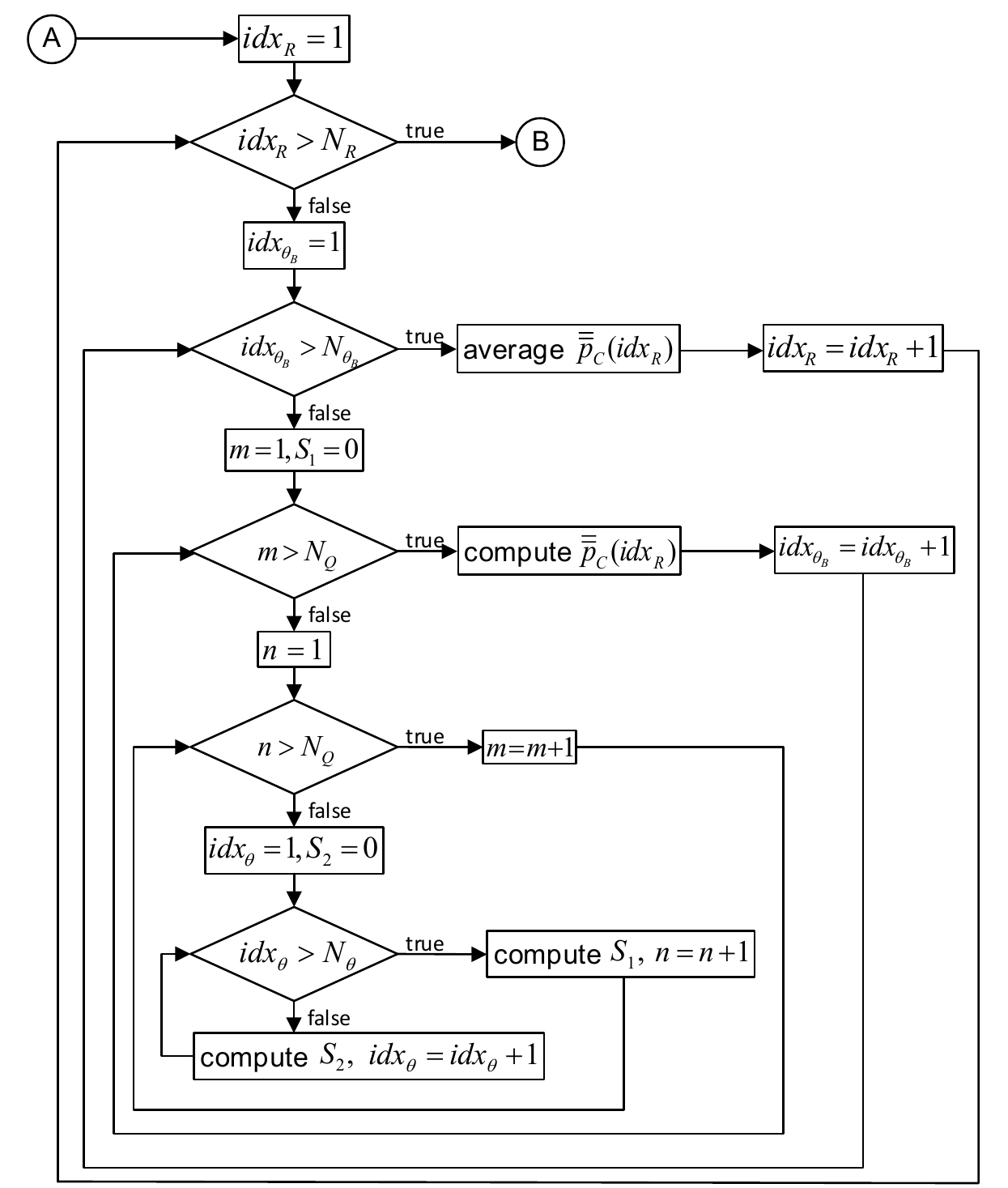}
\caption{Flowchart for `compute $\bar{\bar{p}}_C$' in Algorithm\,1}
\label{fig:chp5_algorithm1a}
\end{figure}

\begin{figure}[t]
\centering
\includegraphics[scale=1]{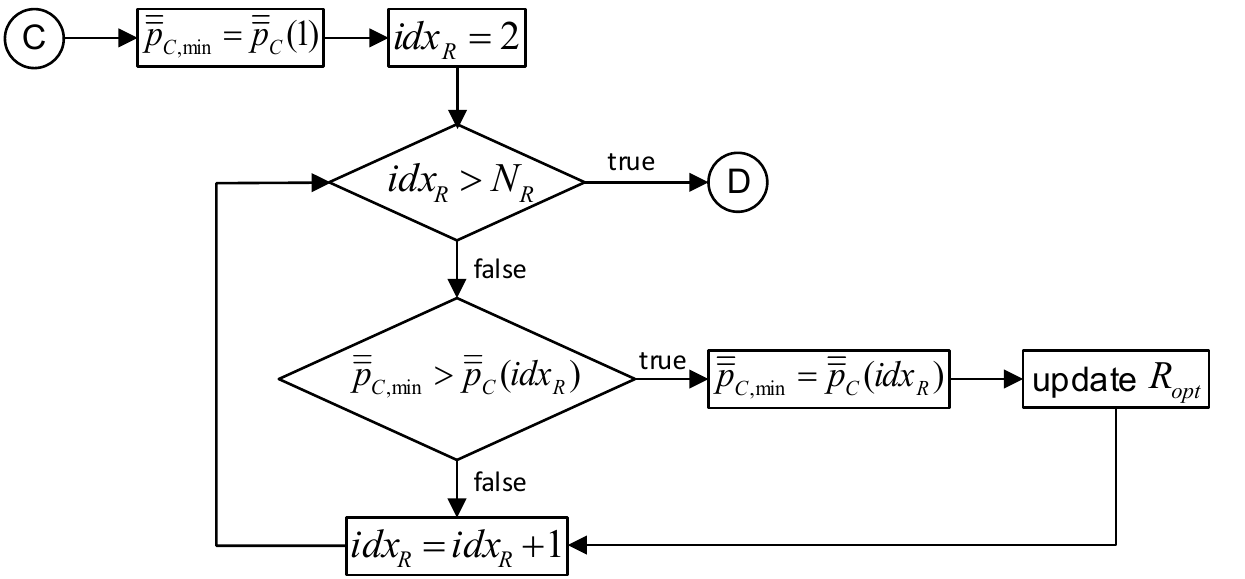}
\caption{Flowchart for `find $R_{opt}$' in Algorithm\,1}
\label{fig:chp5_algorithm1b}
\end{figure}

The continuous ranges of $R$, $\theta_B$, $\theta$ are sampled and become discrete.
The more samples are taken, the more accurate the result is; however, the computing complexity also increases.
It can be seen from Fig.\,\ref{fig:chp5_algorithm1} to Fig.\,\ref{fig:chp5_algorithm1b}  that the running time of Algorithm\,1 main depends on the `compute $\bar{\bar{p}}_C$' function shown in Fig.\,\ref{fig:chp5_algorithm1a}.
For convenience, assume that the basic computing unit, i.e., `compute $S_2$' takes 1 unit time length.
The asymptotic running time of Algorithm\,1 is $\mathcal{O}(N_R N_{\theta_B} N_Q^2 N_{\theta})$.
There is no specific restriction on the sampling interval as long as the chosen resolution generates a reasonable value.
Table\,\ref{tab:alg1} shows an example of the runninig time of Algorithm\,1 in MATLAB for different sampling interval for $R\in[0.4\lambda,2\lambda]$.

\begin{table}
\renewcommand{\arraystretch}{1.3}
\caption{Running time of Algorithm\,1}
\label{tab:alg1}
\centering
\begin{tabular}{|l||c|c|c|}
\hline
 $R$(cm)    & 1 & 0.5 & 0.1  \\ \hline
 $N_R$      & 21     & 41      & 201      \\ \hline
 time(sec)  & 2.6913 & 5.0563  & 24.5662  \\ \hline
\end{tabular}
\end{table}

\nomenclature{$\mathcal{O}(\cdot)$}{Big O notation}

One of the practical issues mentioned in Section\,\ref{chp5:syst:rqewvz} is the range of $R$.
As can be seen in Algorithm\,1, the range of $R$ is an input of the algorithm.
Thus, $R_{opt}$ is only the optimum value in this range.
For a better understanding, $\bar{\bar{p}}_C$ is shown in a larger range $R\in[0.4\lambda,4\lambda]$ in Fig.~\ref{fig:chp5_p2bar_R_longrange}.

Previously in Fig.\,\ref{fig:chp5_p2bar_R}, $R_{opt}$ is $1.76\lambda$ in the range of $[0.4\lambda,2\lambda]$, which gives $\bar{\bar{p}}_{C,min}=0.532$.
For the increased radius range, there are three more local minimums at $2.4\lambda$, $3.12\lambda$ and $3.76\lambda$, which give $\bar{\bar{p}}_{C,min}$ as $0.5086$, $0.5107$ and $0.5086$, respectively.

Take $R=2.4\lambda$ for example, there is a big increase in the radius (i.e., by $0.64\lambda$), compared to $R=1.76\lambda$, which increases the difficulty in the deployment of the UCA.
However, the improvement of $\bar{\bar{p}}_C$ (i.e., 0.0234) is not significant. Thus, the trade-off can be decided according to the specific applications.

\begin{figure}
\centering
\includegraphics[scale=0.9]{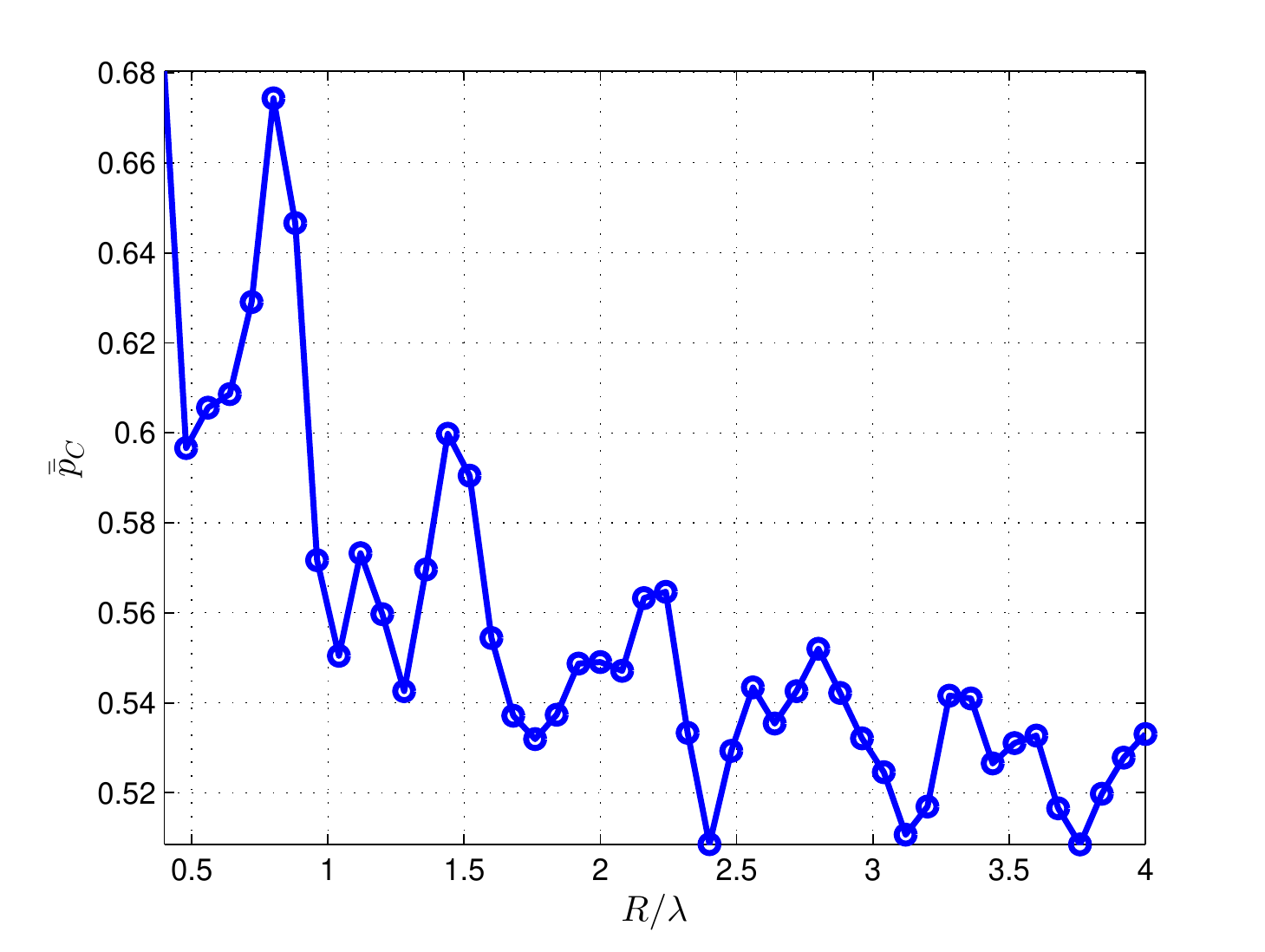}
\caption{$\bar{\bar{p}}_C$ versus $R$, $N=8$}
\label{fig:chp5_p2bar_R_longrange}
\end{figure}

The process to find $R_{opt}$ for all $(d_B,\theta_B)$ is similar to Algorithm\,1.
The difference is that instead of $\bar{\bar{p}}_C$, $\sum_k q^{(k)}\bar{\bar{p}}_C^{(k)}$ is calculated.
Thus, Algorithm\,1 is repeated for all $k$, and in each iteration, $q^{(k)}$ needs to be calculated.

To illustrate the security enhancement of Algorithm\,1, the same example as in Fig.\,\ref{fig:chp5_p2bar_R} is used. 
Let $\rho_1$ define the ratio of the difference between the value of $\bar{\bar{p}}_C$ at certain $R$ and the value of $\bar{\bar{p}}_C$ at $R_{opt}$ to the value of $\bar{\bar{p}}_C$ at that $R$.
The results of $\rho_1$ versus $R$ are shown in Fig.\,\ref{fig:chp5_performance_measure1}.
It can be seen that up to more than $20\%$ improvement can be achieved by choosing $R_{opt}$.

\begin{figure}
\centering
\includegraphics[scale=0.9]{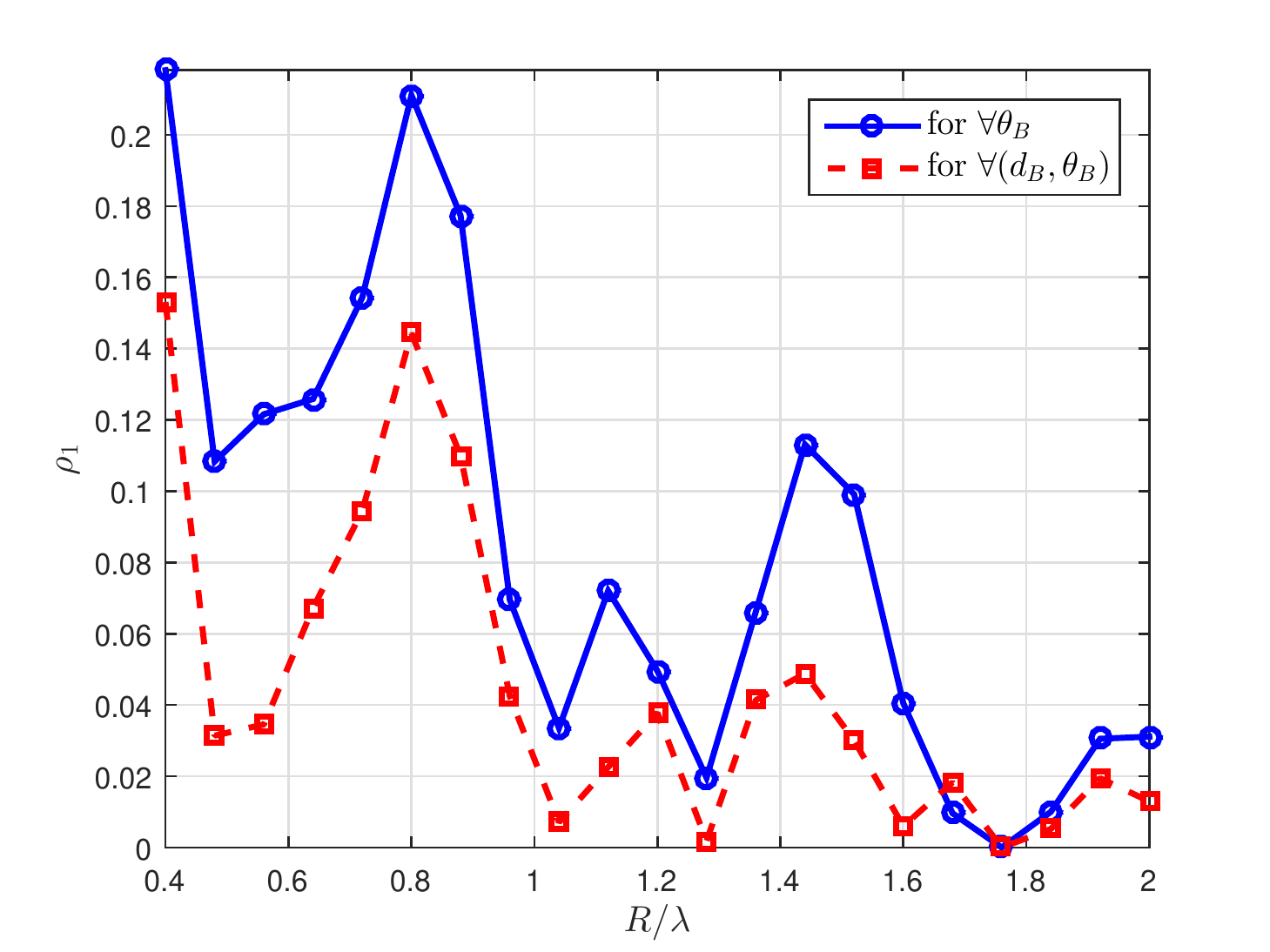}
\caption{$\rho_1$ versus $R$, $N=8$}
\label{fig:chp5_performance_measure1}
\end{figure}


\subsection{Configurable Beamforming Technique}
\label{chp5:opt_alg:nvnnnnc}

When $P_t$ is fixed, $M_{ij}$ can be adjusted to minimize the SSOP according to $(d_B,\theta_B)$.
The first step is to determine which zone Bob is in, according to $d_B$.
This is to determine the available array modes for Bob.
The second step is then to choose the optimum $M_{ij}$ from the available array modes according to $\theta_B$.
The previous process can be transformed into searching the optimum $M_{ij}$ according to $(d_B,\theta_B)$ in look-up tables.
This section shows how to create the look-up tables that store the optimum $M_{ij}$ for $(d_B,\theta_B)$.

For this purpose, the same example shown in Fig.\,\ref{fig:chp5_p_DoE_arraymode} is used.
$\bar{p}_{C,\text{min}}$ for $\{M_2\}$ and $\{M_3\}$ are plotted together with the curve for $M_1$ in Fig.\,\ref{fig:chp5_p_DoE_arraymode_2}.
It can be seen that in general, less number of active elements $N$ gives smaller $\bar{p}_{C,\text{min}}$.
But it is also noticed that for certain value of $\theta_B$, larger $N$ generates smaller $\bar{p}_{C,\text{min}}$, e.g., $\theta_B=0^{\circ}$.

\begin{figure}
\centering
\includegraphics[scale=0.9]{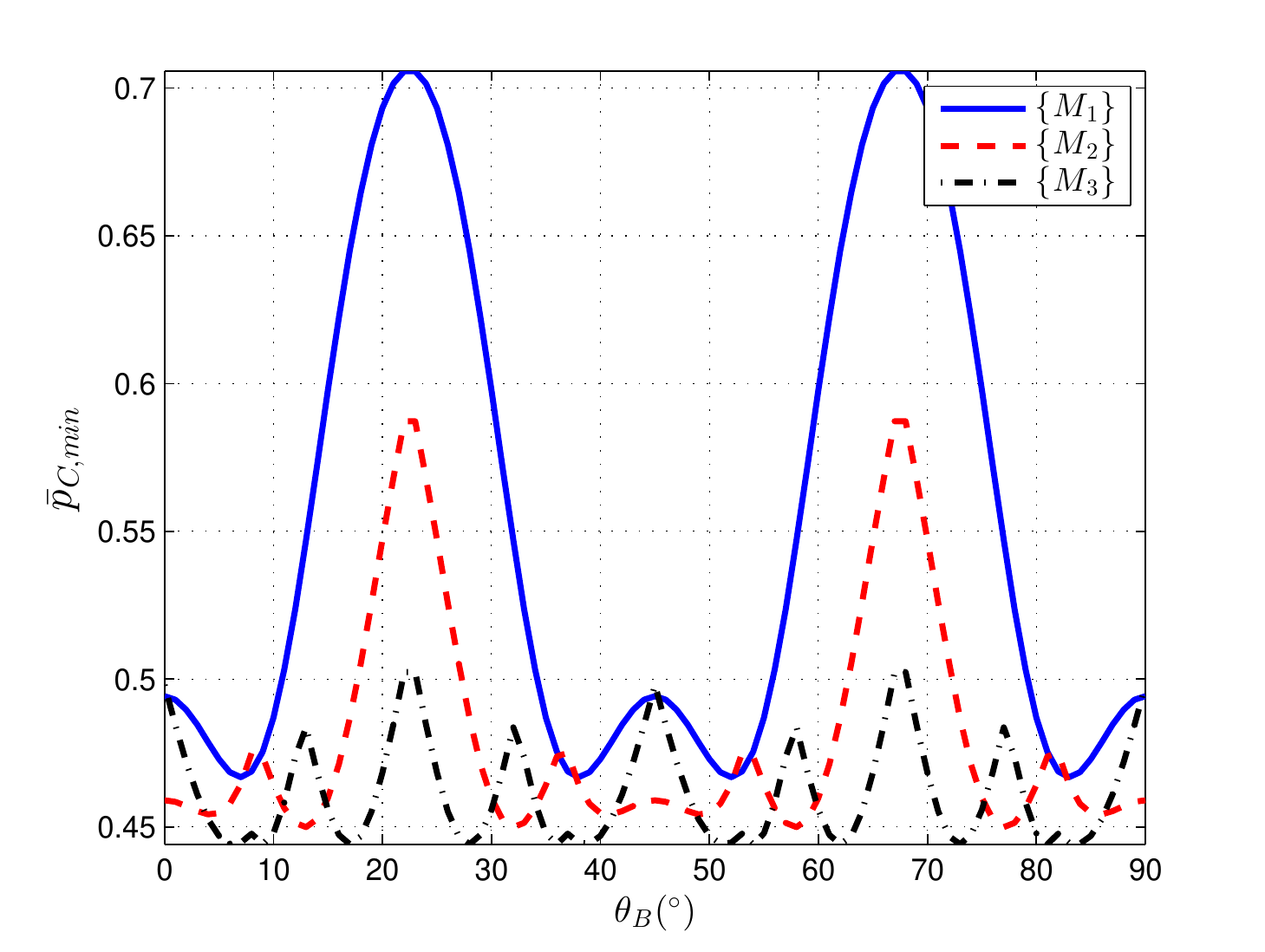}
\caption{$\bar{p}_{C,\text{min}}$ for $\{M_1\}$, $\{M_2\}$ and $\{M_3\}$}
\label{fig:chp5_p_DoE_arraymode_2}
\end{figure}

For certain $\theta_B$, once the minimum curve is picked, the optimum $M_{ij}$ can be subsequently decided by using Fig.\,\ref{fig:chp5_p_DoE_arraymode}.
For example, when $\theta_B=45^{\circ}$, it can be seen from Fig.\,\ref{fig:chp5_p_DoE_arraymode_2} that the dashed curve, i.e., $\{M_2\}$, gives $\bar{p}_{C,\text{min}}$.
Then, in Fig.\,\ref{fig:chp5_p_DoE_arraymode}, it can be seen in the upper plot that the dashed curve, i.e., $M_{22}$, gives the smaller value of $\bar{p}_C$ at $\theta_B=45^{\circ}$.
Therefore, $M_{22}$ is chosen for $\theta_B=45^{\circ}$.

So far, only $\theta_B$ is taken into consideration. 
According to the analysis in Section\,\ref{chp5:syst:vndkawia}, when Bob is in different zone, there is a lower bound of $N$ that should be used to guarantee $C_B\geq R_B$.
For the previous example when $\theta_B=45^{\circ}$, if Bob is in zone 3 in Fig.\,\ref{fig:chp5_N_zone}, only $M_1$, i.e., $N=8$, can be used, because using $M_{22}$ leads to $C_B<R_B$.

For Bob being in different zones, there are limited number of array modes that can be chosen to guarantee $C_B\geq R_B$. 
When Bob is in zone 1, all array modes are available. 
When Bob is in zone 2, $\{M_1\}$ and $\{M_2\}$ are available.
When Bob is in zone 3, only $M_1$ is available.
In Fig.\,\ref{fig:chp5_arraymode_DoE}, the optimum $M_{ij}$ for Bob being in different zone is plotted for $\theta_B\in[0^{\circ},90^{\circ}]$.
The y-axis shows the index of $M_{ij}$.
For convenience, $M_1$, $M_{21}$, $M_{22}$, $M_{31}$, $M_{32}$, $M_{33}$, $M_{34}$ are indexed from 1 to 7.

\begin{figure}
\centering
\includegraphics[scale=0.9]{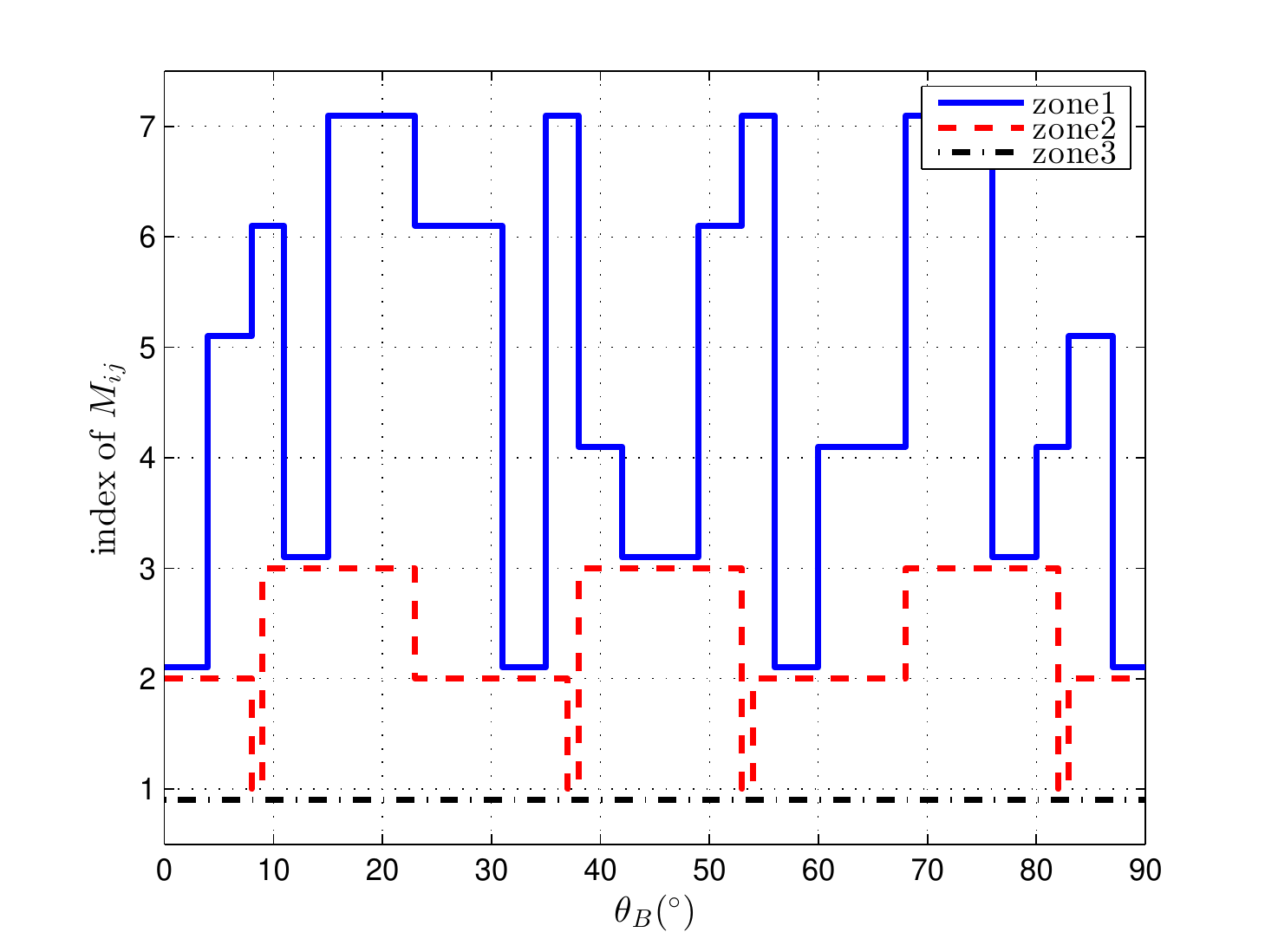}
\caption{Index of optimum $M_{ij}$ versus $\theta_B$ for Bob in zone 1 to zone 3. For better vision, the curves for zone 1 and zone 3 are drawn with offsets of 0.1 and -0.1, respectively.}
\label{fig:chp5_arraymode_DoE}
\end{figure}

The plots in Fig.\,\ref{fig:chp5_arraymode_DoE} can be converted into look-up tables.
Denoted by $T^{(k)}$, it stores the $M_(ij)$ (or its index) that generates $\bar{p}_{C,\text{min}}$.
In Fig.\,\ref{fig:chp5_arraymode_DoE}, the plot for Bob in zone $k$ is converted into table $T^{(k)}$.
$T^{(1)}$ and $T^{(2)}$ are shown in Table\,\ref{tab:one} and\,\ref{tab:two}, respectively.
Here, the angle resolution is taken as $0.5^{\circ}$.
Nevertheless, the resolution can be any practical value.
For the outer zone 3, $T^{(3)}$ has only one entry, i.e., $M_1$.
Therefore $T^{(3)}$ is not explicitly shown here.

\nomenclature{$T^{(k)}$}{look-up table for Bob in zone $k$}

\begin{table}
\renewcommand{\arraystretch}{1.3}
\caption{Look-up table $T^{(1)}$}
\label{tab:one}
\centering
\begin{tabular}{|l|c||l|c||l|c|}
\hline
 $\theta_B(^{\circ})$ & $M_{ij}$ & $\theta_B(^{\circ})$ & $M_{ij}$ & $\theta_B(^{\circ})$ & $M_{ij}$ \\ \hline
 $[0,3.5]$            & $M_{21}$ & $[34.5,37.5]$        & $M_{34}$ & [67.5,75.5]          & $M_{34}$ \\ \hline
 $[3.5,7.5]$          & $M_{32}$ & $[37.5,41.5]$        & $M_{31}$ & [75.5,79.5]          & $M_{22}$ \\ \hline
 $[7.5,10.5]$         & $M_{33}$ & $[41.5,48.5]$        & $M_{22}$ & [79.5,82.5]          & $M_{31}$ \\ \hline
 $[10.5,14.5]$        & $M_{22}$ & $[48.5,52.5]$        & $M_{33}$ & [82.5,86.5]          & $M_{32}$ \\ \hline
 $[14.5,22.5]$        & $M_{34}$ & $[52.5,55.5]$        & $M_{34}$ & [86.5,90]            & $M_{21}$ \\ \hline 
 $[22.5,30.5]$        & $M_{33}$ & $[55.5,59.5]$        & $M_{21}$ &                      &          \\ \hline
 $[30.5,34.5]$        & $M_{21}$ & $[59.5,67.5]$        & $M_{31}$ &                      &          \\ \hline
\end{tabular}
\end{table}

\begin{table}
\renewcommand{\arraystretch}{1.3}
\caption{Look-up table $T^{(2)}$}
\label{tab:two}
\centering
\begin{tabular}{|l|c||l|c||l|c|}
\hline
 $\theta_B(^{\circ})$ & $M_{ij}$ & $\theta_B(^{\circ})$ & $M_{ij}$ & $\theta_B(^{\circ})$ & $M_{ij}$ \\ \hline
 $[0,7.5]$            & $M_{21}$ & $[36.5,37.5]$        & $M_{1}$  & [67.5,81.5]          & $M_{22}$ \\ \hline
 $[7.5,8.5]$          & $M_{1}$ & $[37.5,52.5]$         & $M_{22}$ & [81.5,82.5]          & $M_{1}$ \\ \hline
 $[8.5,22.5]$         & $M_{22}$ & $[52.5,53.5]$        & $M_{1}$  & [82.5,90]            & $M_{21}$ \\ \hline
 $[22.5,36.5]$        & $M_{21}$ & $[53.5,67.5]$        & $M_{21}$ &                      &         \\ \hline
\end{tabular}
\end{table}

The look-up table $T^{(k)}$ can be generated and stored ready in the AP. 
After the AP acquired Bob's location $(d_B,\theta_B)$, the corresponding $T^{(k)}$ is chosen according to $d_B$; then the optimum $M_{ij}$ is decided according to $\theta_B$ in $T^{(k)}$.
The procedure of creating the look-up table $T^{(k)}$ is illustrated by flowcharts in Fig.\,\ref{fig:chp5_algorithm2} to Fig.\,\ref{fig:chp5_algorithm2b}, which is referred to as Algorithm\,2.
The details of Algorithm\,2 are available in Appendix\,\ref{appdx:opt:two}.

Similar to Algorithm\,1, the continuous ranges of $\theta_B$ and $\theta$ are sampled.
The running time of the algorithm depends on the resolution of the sampling.
It can be seen from Fig.\,\ref{fig:chp5_algorithm2} to Fig.\,\ref{fig:chp5_algorithm2b} that the running time mainly depends on the `compute $\bar{p}_C$' function shown in Fig.\,\ref{fig:chp5_algorithm2a}.
For convenience, assume that the basic computing unit, i.e., `compute $S_2$' takes 1 unit time length.
The asymptotic running time of Algorithm\,2 is $\mathcal{O}(N_{M_{ij}} N_{\theta_B} N_Q^2 N_{\theta})$.
Table\,\ref{tab:alg2} shows an example of the runninig time of Algorithm\,2 in MATLAB for different sampling interval for $\theta_B\in[0,90^{\circ}]$.

\begin{table}
\renewcommand{\arraystretch}{1.3}
\caption{Running time of Algorithm\,2}
\label{tab:alg2}
\centering
\begin{tabular}{|l||c|c|c|}
\hline
 $\theta_B$($^{\circ}$)    & 1 & 0.5 & 0.1  \\ \hline
 $N_{\theta_B}$      & 91     & 181      & 901      \\ \hline
 time(sec)  & 0.4105 & 0.7900  & 3.7975  \\ \hline
\end{tabular}
\end{table}

\begin{figure}
\centering
\includegraphics[scale=1]{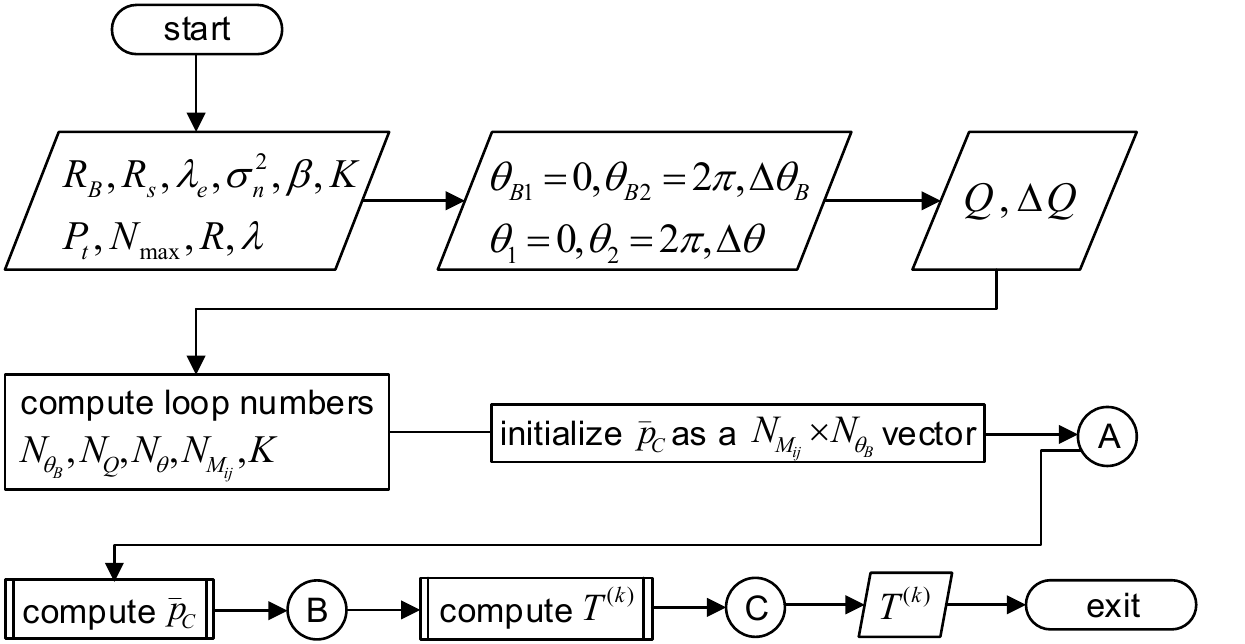}
\caption{Flowchart for Algorithm\,2}
\label{fig:chp5_algorithm2}
\end{figure}

\begin{figure}[t]
\centering
\includegraphics[scale=1]{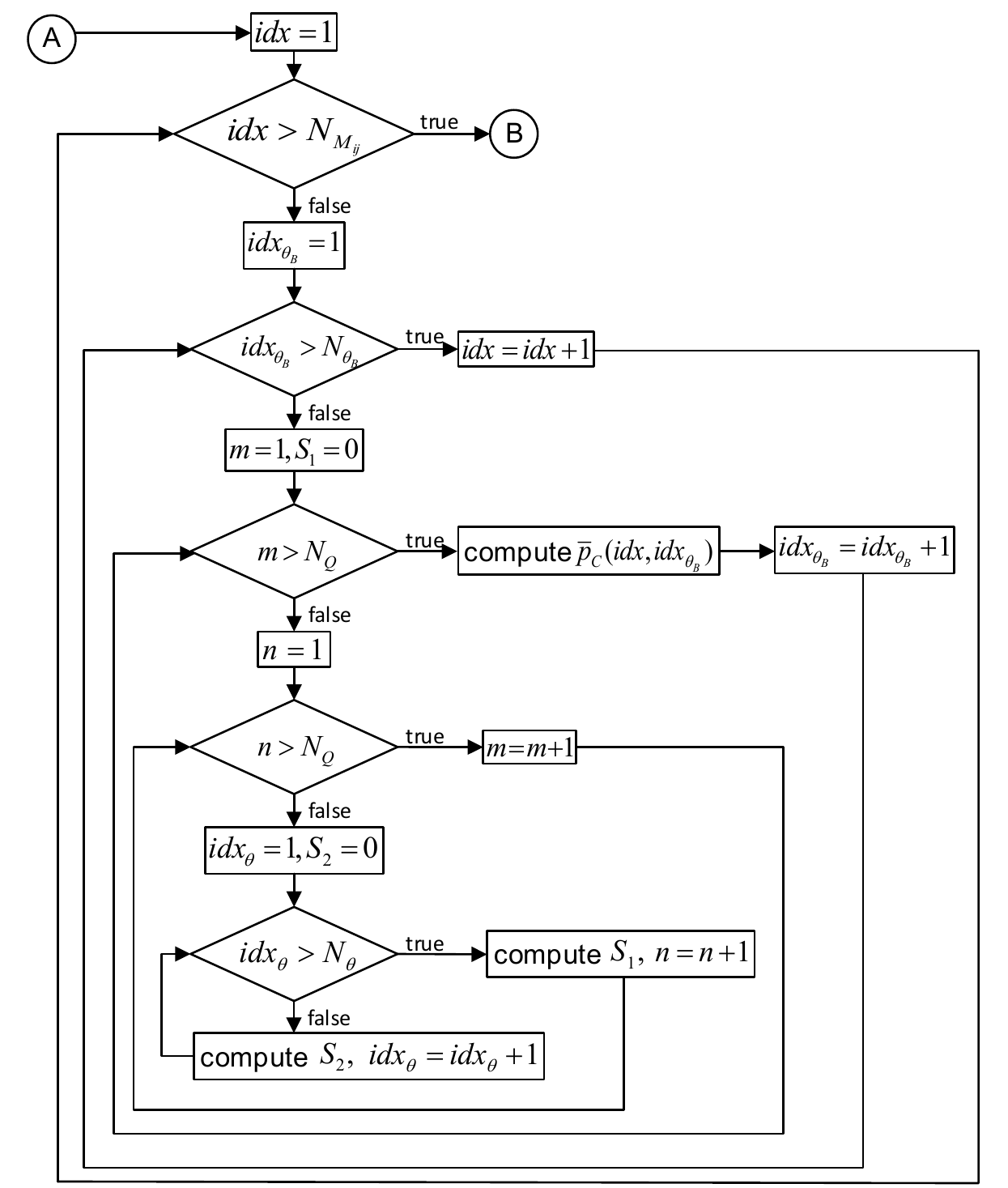}
\caption{Flowchart for `compute $\bar{p}_C$' in Algorithm\,2}
\label{fig:chp5_algorithm2a}
\end{figure}

\begin{figure}[t]
\centering
\includegraphics[scale=1]{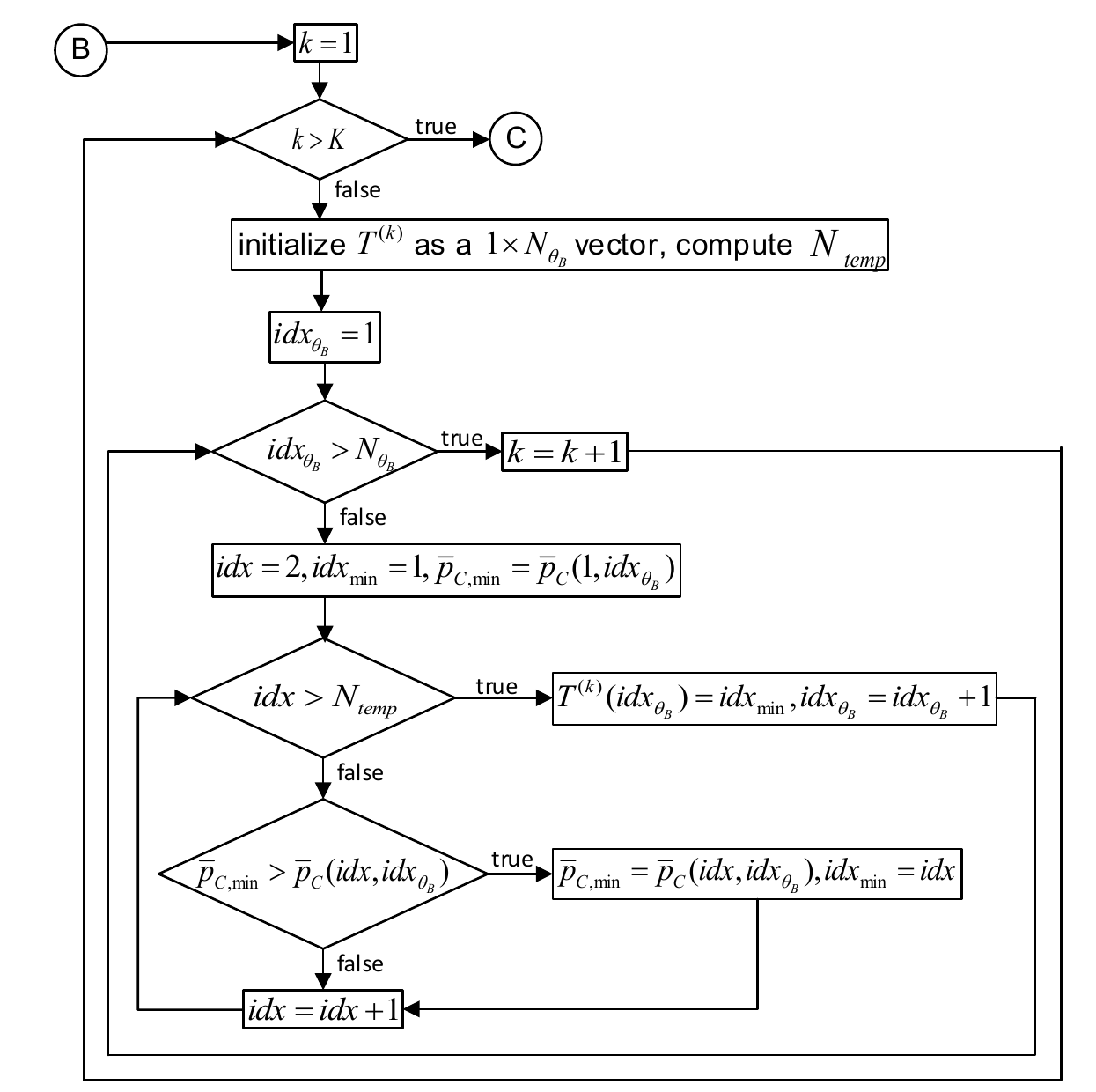}
\caption{Flowchart for `compute $T^(k)$' in Algorithm\,2}
\label{fig:chp5_algorithm2b}
\end{figure}

To illustrate the security enhancement of Algorithm\,2, the same example as in Fig.\,\ref{fig:chp5_p_DoE_arraymode_2} is used. 
Let $\rho_2$ define the ratio of the difference between $\bar{p}_{C,\text{min}}$ for $\{M_1\}$ and the minimum value $\bar{p}_{C,\text{min}}$ for $\{M_1,M_2,M_3\}$ to $\bar{p}_{C,\text{min}}$ for $\{M_1\}$.
The results of $\rho_2$ versus $\theta_B$ are shown in Fig.\,\ref{fig:chp5_performance_measure2}.
It can be seen that an improvement ranging from about $5\%$ to $47\%$ can be achieved by Algorithm\,2.

\begin{figure}
\centering
\includegraphics[scale=0.9]{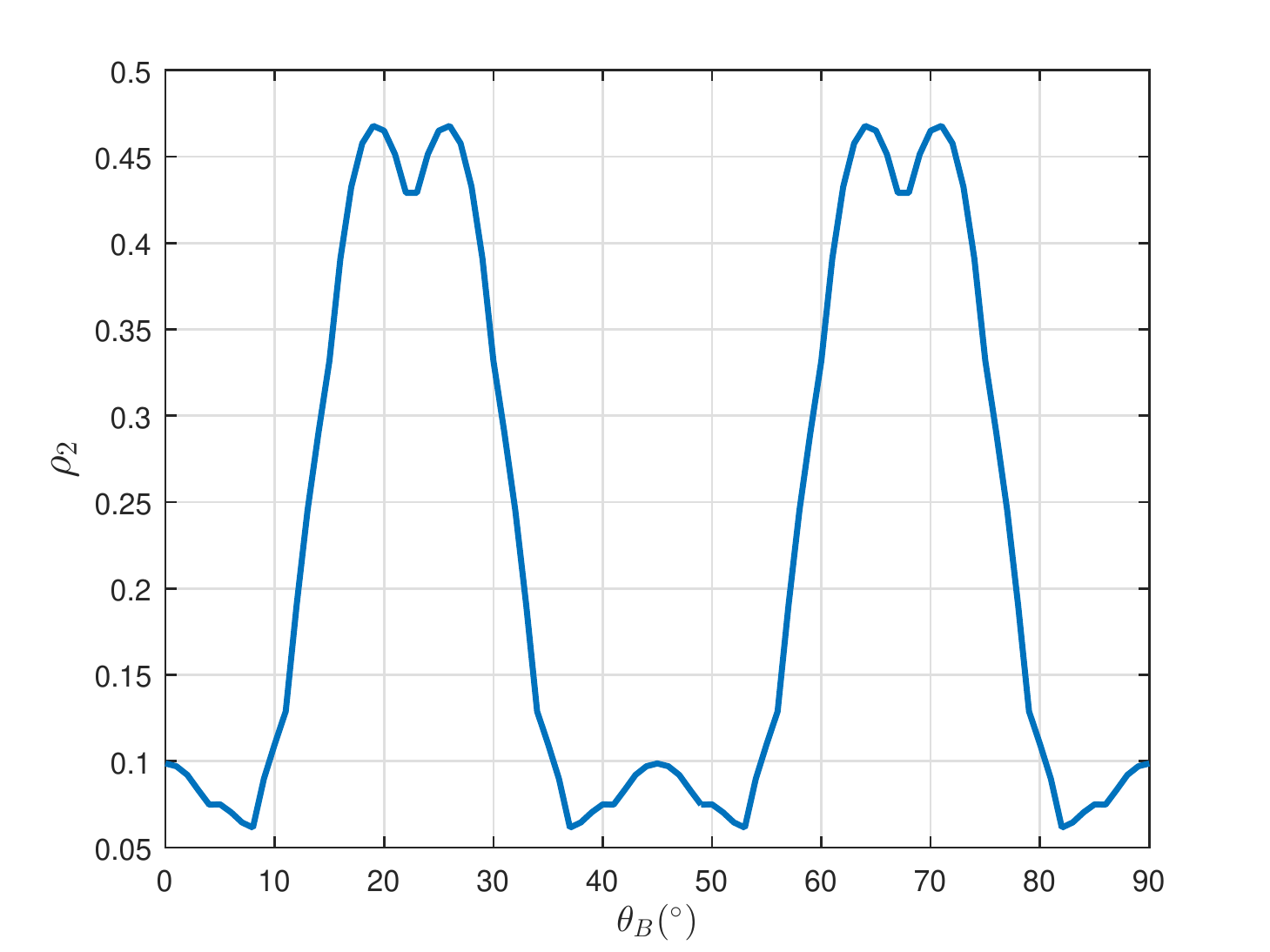}
\caption{$\rho_2$ versus $\theta_B$, $N=8$}
\label{fig:chp5_performance_measure2}
\end{figure}

\subsection{Error Analysis on Configurable Beamforming Technique}
\label{chp5:opt_alg:oeioqwjf}

In Section\,\ref{chp5:opt_alg:nvnnnnc}, the numerical optimization process to find the optimum $M_{ij}$ according to $(d_B,\theta_B)$ is shown.
It is assumed in Section\,\ref{chp5:syst:rqewvz} that Bob's CSI or $(d_B,\theta_B)$ is known by the AP, e.g., via channel estimation and feedback channel.
However, the estimated CSI could be erroneous, which means that AP's knowledge of $(d_B,\theta_B)$ could be erroneous.
In section\,\ref{chp5:syst:vndkawia}, it has been stated that the true value of $d_B$ is not vital; thus the error on $d_B$ is not considered here.
In this section, the impact of the angle error on the  configurable beamforming technique is studied.
Furthermore, the impact of $R$ on the error performance is evaluated.

Let $\hat{\theta}_B$ denote the erroneous estimation of $\theta_B$.
There are different types of errors that can lead to an erroneous estimation of $\theta_B$, e.g., imperfect feedback from Bob to the AP.
In this section, without the discussion of the detailed types of errors, a generalized error on $\theta_B$, i.e., the uniform distribution $\hat{\theta}_B\sim \mathcal{U}(0,2\pi)$, is assumed.

\nomenclature{$\hat{\theta}_B$}{erroneous estimation of $\theta_B$}

For the configurable beamforming technique, the look-up tables $T^{(k)}$ are created and stored in the AP.
The erroneous $\hat{\theta}_B$ could lead to a wrong decision of the optimum $M_{ij}$, thus leading to an increase in $\bar{p}_C$.
Take $T^{(2)}$ in Table\,\ref{tab:two} as an example to illustrate the impact of $\hat{\theta}_B$. 
For example, assume that Bob is in zone 2 and the true angle is $\theta_B=30^{\circ}$.
According to $T^{(2)}$, the optimum array mode is $M_{21}$, which corresponds to the SSOP $\bar{p}_C=0.4597$, as it can be found in the upper plot in Fig.\,\ref{fig:chp5_p_DoE_arraymode}.
If $\hat{\theta}_B=50^{\circ}$, the array mode would be $M_{22}$, accoding to $T^{(2)}$.
The SSOP given by $M_{22}$ at $\theta_B=30^{\circ}$ is $0.6055$, which is larger than that of $M_{21}$.

While $\hat{\theta}_B$ could lead to a wrong array mode, thus an increased $\bar{p}_C$, it is also possible that $\hat{\theta}_B$ does not lead to a wrong array mode.
For the same example that $\theta_B=30^{\circ}$, if $\hat{\theta}_B=55^{\circ}$, the chosen array mode would still be $M_{21}$ according to $T^{(2)}$, which happens to be the optimum array mode. 
Therefore, how $\bar{p}_C$ is affected depends on both $\hat{\theta}_B$ and the particular look-up table $T^{(k)}$.

Let $M'_{ij}$ be the chosen array mode based on $\hat{\theta}_B$ at $(d_B,\theta_B)$.
Let $\Delta \bar{p}_C$ denote the difference between $\bar{p}_C$ using $M'_{ij}$ and $\bar{p}_C$ using $M_{ij}$.
Therefore, $\Delta \bar{p}_C$ depends on both $(d_B,\theta_B)$ and $\hat{\theta}_B\sim\mathcal{U}(0,2\pi)$.
The mean value of $\Delta \bar{p}_C$ at $(d_B,\theta_B)$ is then $\mathbb{E}_{\hat{\theta}_B}[\Delta \bar{p}_C]$.
Since Bob is randomly distributed in the coverage zone, the mean value of the increased SSOP over all possible $(d_B,\theta_B)$, denoted by $\text{err}_{\hat{\theta}_B}$, can be calculated by
\begin{align}\label{eq:chp5_erroranalysis}
	\text{err}_{\hat{\theta}_B}=\mathbb{E}_{\hat{\theta}_B,z_B}[\Delta \bar{p}_C]=\frac{1}{S}\int_{0}^{2\pi}\int_{0}^{d_{\text{max}}}d_B\mathbb{E}_{\hat{\theta}_B}[\Delta \bar{p}_C]\,\mathrm{d}d_B\,\mathrm{d}\theta_B.
\end{align}
$\text{err}_{\hat{\theta}_B}$ in (\ref{eq:chp5_erroranalysis}) can be numerically calculated in the similar way to (\ref{eq:chp5_eiruel}).

\nomenclature{$\Delta \bar{p}_C$}{SSOP increase caused by $\hat{\theta}_B$}

In Fig.\,\ref{fig:chp5_ErrAnalysis}, the result of $\text{err}_{\hat{\theta}_B}$ versus $R$ is shown, where $N_{\text{max}}=8$.
It can be seen that the error performance varies with $R$.
At $R=1.68\lambda$, $\text{err}_{\hat{\theta}_B}$ is the smallest, which means that in terms of error performance, $R=1.68\lambda$ is the optimum value in the range of $R\in[0.4\lambda,2\lambda]$.
Notice that in Fig.\,\ref{fig:chp5_p2bar_R}, $R=1.76\lambda$ gives the smallest value of $\bar{\bar{p}}_C$ for all $(d_B,\theta_B)$.
Thus, the minimum averaged SSOP and the minimum error are not given by the same radius.

\begin{figure}
\centering
\includegraphics[scale=0.9]{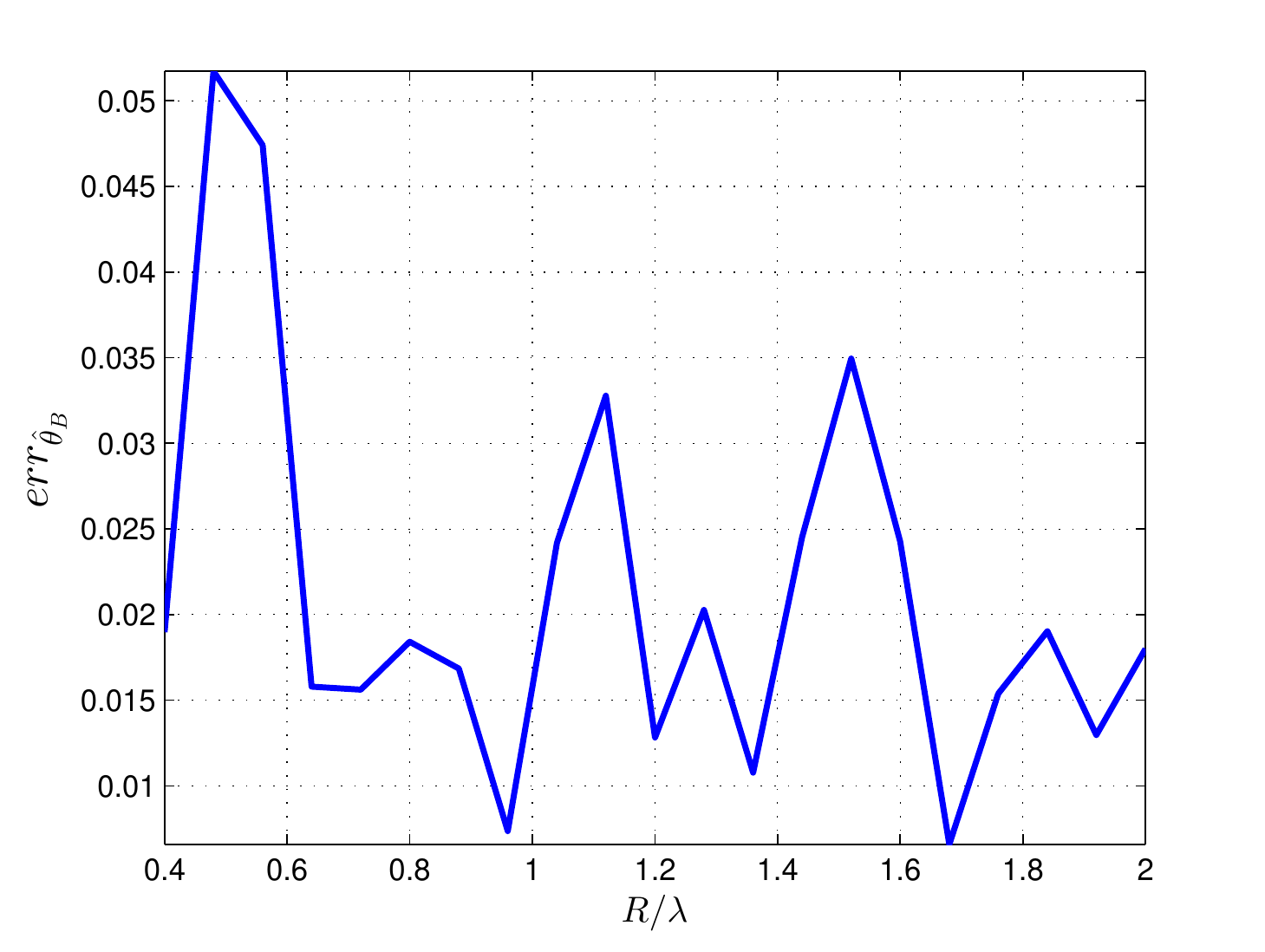}
\caption{$\text{err}_{\hat{\theta}_B}$ versus $R$}
\label{fig:chp5_ErrAnalysis}
\end{figure}

Comparing Fig.\,\ref{fig:chp5_ErrAnalysis} and Fig.\,\ref{fig:chp5_p2bar_R} jointly, it is easily noticed that there exists no one-to-one mapping between them.
While $\bar{\bar{p}}_C$ is the security metric, $\text{err}_{\hat{\theta}_B}$ refers to the reliability or the resistance to $\hat{\theta}_B$.
Thus, there is a trade-off between 'security' and 'reliability' when choosing the value of $R$.
It is also worth noticing that the method used in this section applies to a generalized error distribution, since $\text{err}_{\hat{\theta}_B}$ in (\ref{eq:chp5_erroranalysis}) does not limit for the uniform distribution.

\section{Impact of Mutual Coupling}
\label{chp5:mc_err}
\subsection{NEC Results for a Range of Radius}
\label{chp5:mc_err:euyr}

The analysis of $\bar{p}_C$ in Section\,\ref{chp5:analysis} and the optimization algorithms in Section\,\ref{chp5:opt_alg} are based on the numerical results that do not include the mutual coupling.
Section\,\ref{chp4:sec5} mainly investigated the mutual coupling for the UCA  in terms of $\rho$ between the true and NEC patterns as well as $G_{\text{max}}$.
In this section, further results of the mutual coupling is given for a range of radius, in order to investigate its impact on the optimization algorithms in Section\,\ref{chp5:opt_alg}.

To have a comprehensive understanding of the maximum gain attenuation, the patterns of the UCA in the range $R\in[0.4\lambda,2\lambda]$ and several typical values of $\theta_B$ are simulated in the NEC tool.
Then, $G_{\text{max}}$ of each pattern is recorded and plotted in Fig.\,\ref{fig:chp5_Gmax_R_MC}.
It can be seen that for different value of $R$, there exists an different extent of attenuation of $G_{\text{max}}$.
For example, the UCA shown in Fig.\,\ref{fig:chp4_NEC_pattern_DoE_UCA} has a radius of $R=0.6533\lambda$, which does has a very small variation of $G_{\text{max}}$.
For other values of $R$ in this range, there exists a relatively larger variation.
The minimum value of $G_{\text{max}}$ for UCA in the range $R\in[0.4\lambda,2\lambda]$ is higher than 0.8, which is larger than that of the ULA with the same number of elements as shown in Fig.\,\ref{fig:chp4_p_out_DoE_MC}.

\begin{figure}
\centering
\includegraphics[scale=0.9]{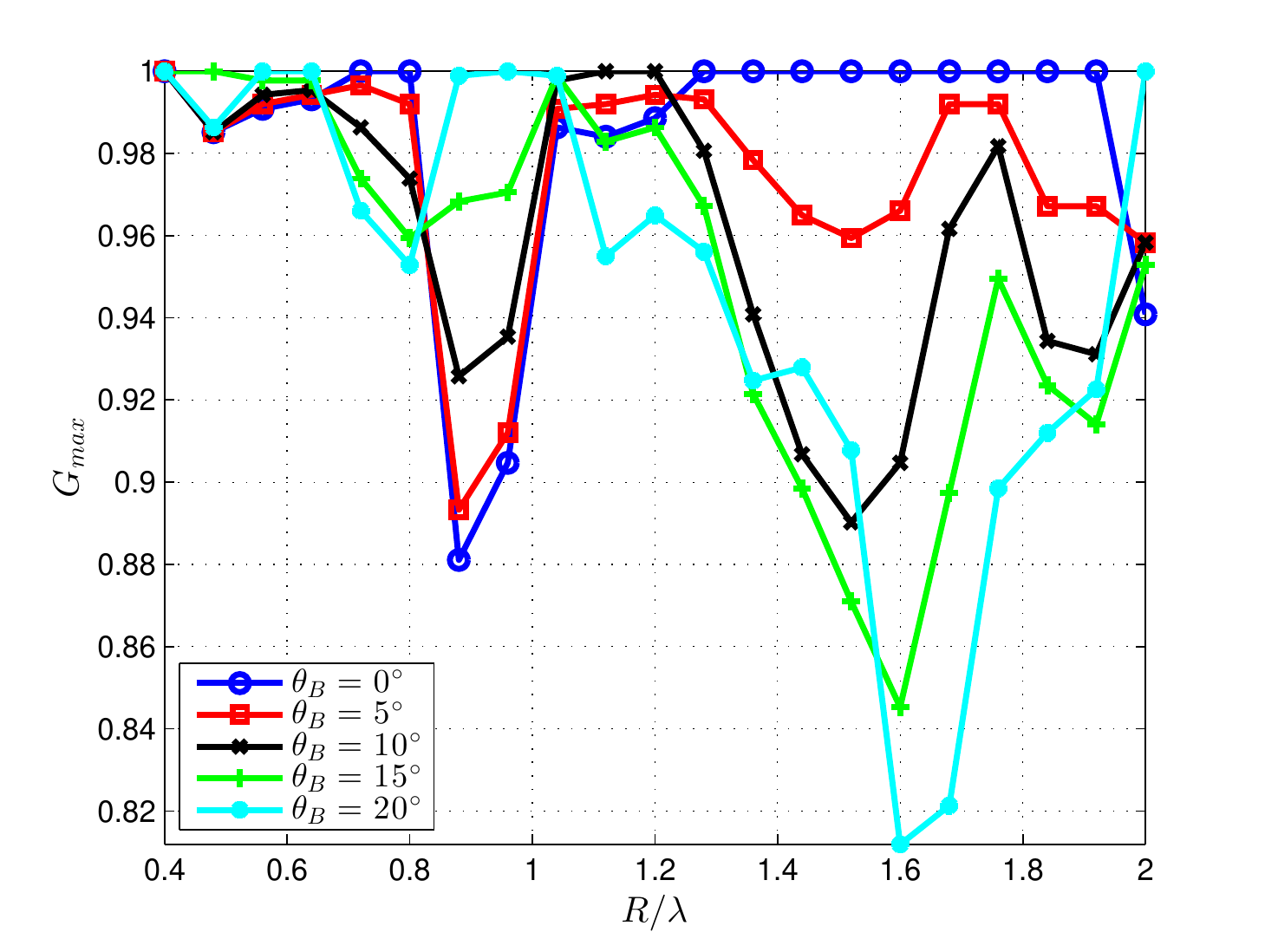}
\caption{$G_{\text{max}}$ versus $R$ for different $\theta_B$, $N_{max}=8$}
\label{fig:chp5_Gmax_R_MC}
\end{figure}

Beside $G_{\text{max}}$, the patterns of $G_C(\theta,\theta_B)$ are altered to a different extent for different $R$.
First, Fig.\,\ref{fig:chp5_MC_pattern_example} shows an example to illustrate the difference caused by the mutual coupling.
$N_{max}=8$, $R=0.8\lambda$, $\theta_B=0^{\circ}$.
The pattern without the mutual coupling is numerically calculated according to (\ref{eq:chp4_AF_UCA}).
The pattern with the mutual coupling is calculated by the NEC simulation.
Since $G_{\text{max}}=\sqrt{8}$, the NEC pattern is normalized to $\sqrt{8}$ before compared with the theoretical pattern. 
It can be seen that there is not much difference in the mainbeam. 
However, there is a bigger difference in the sidelobes due to the mutual coupling. 

\begin{figure}
\centering
\includegraphics[scale=0.9]{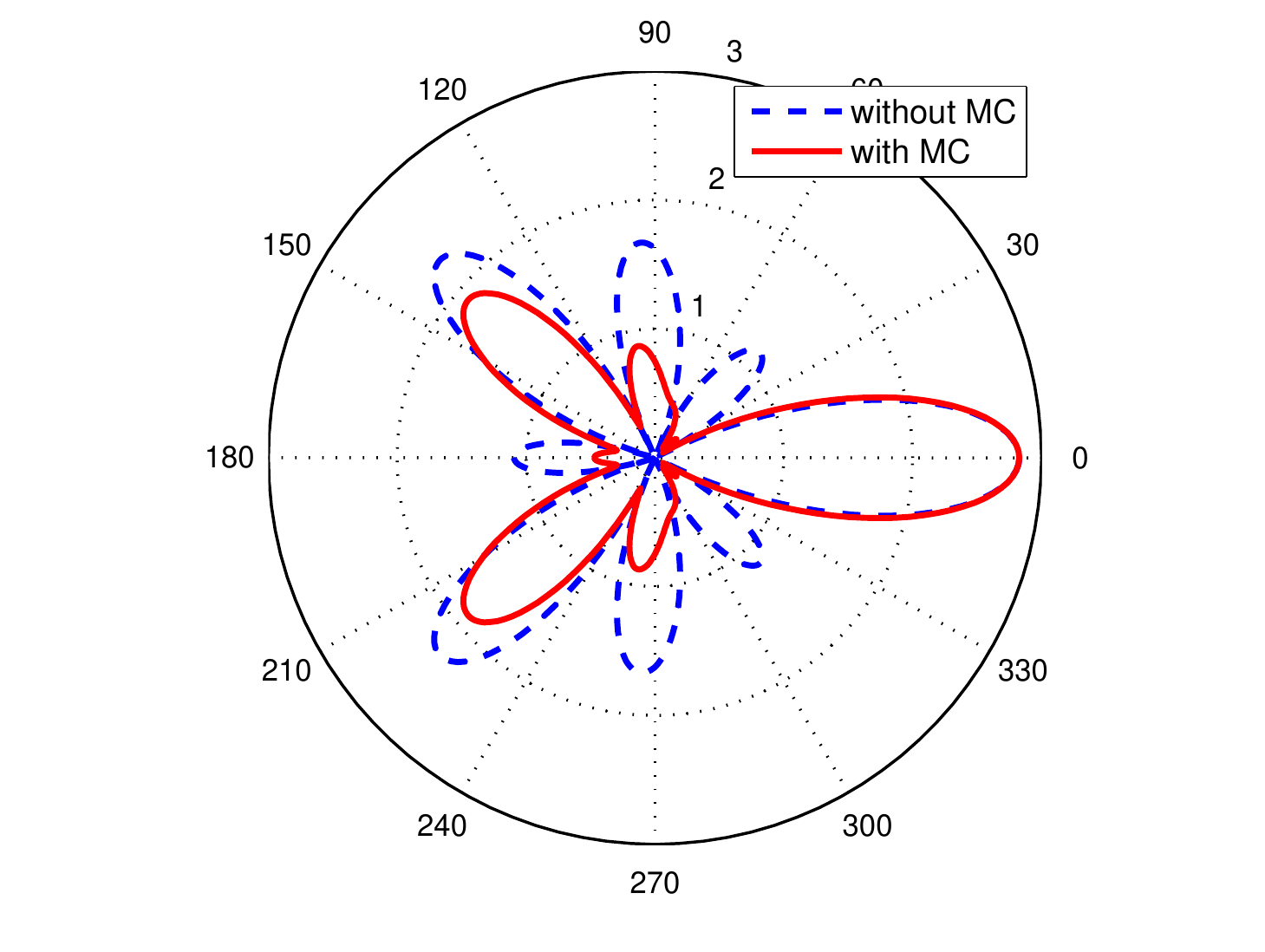}
\caption{Example of theoretical and NEC simulated patterns, $N_{max}=8$, $R=0.8\lambda$, $\theta_B=0^{\circ}$}
\label{fig:chp5_MC_pattern_example}
\end{figure}

Next, the patterns of the UCA with a wide range of configurations are simulated in NEC.
For $N_{max}=8$, typical values are chosen, i.e., $\theta_B=0^{\circ},5^{\circ},10^{\circ},15^{\circ},20^{\circ}$ in the range $R=[0.4\lambda,2\lambda]$.
The correlation coefficient $\rho$ between the theoretical and NEC patterns is calculated.
The results are shown in Fig.\,\ref{fig:chp5_MC_pattern_correlations}.

\begin{figure}
\centering
\includegraphics[scale=0.9]{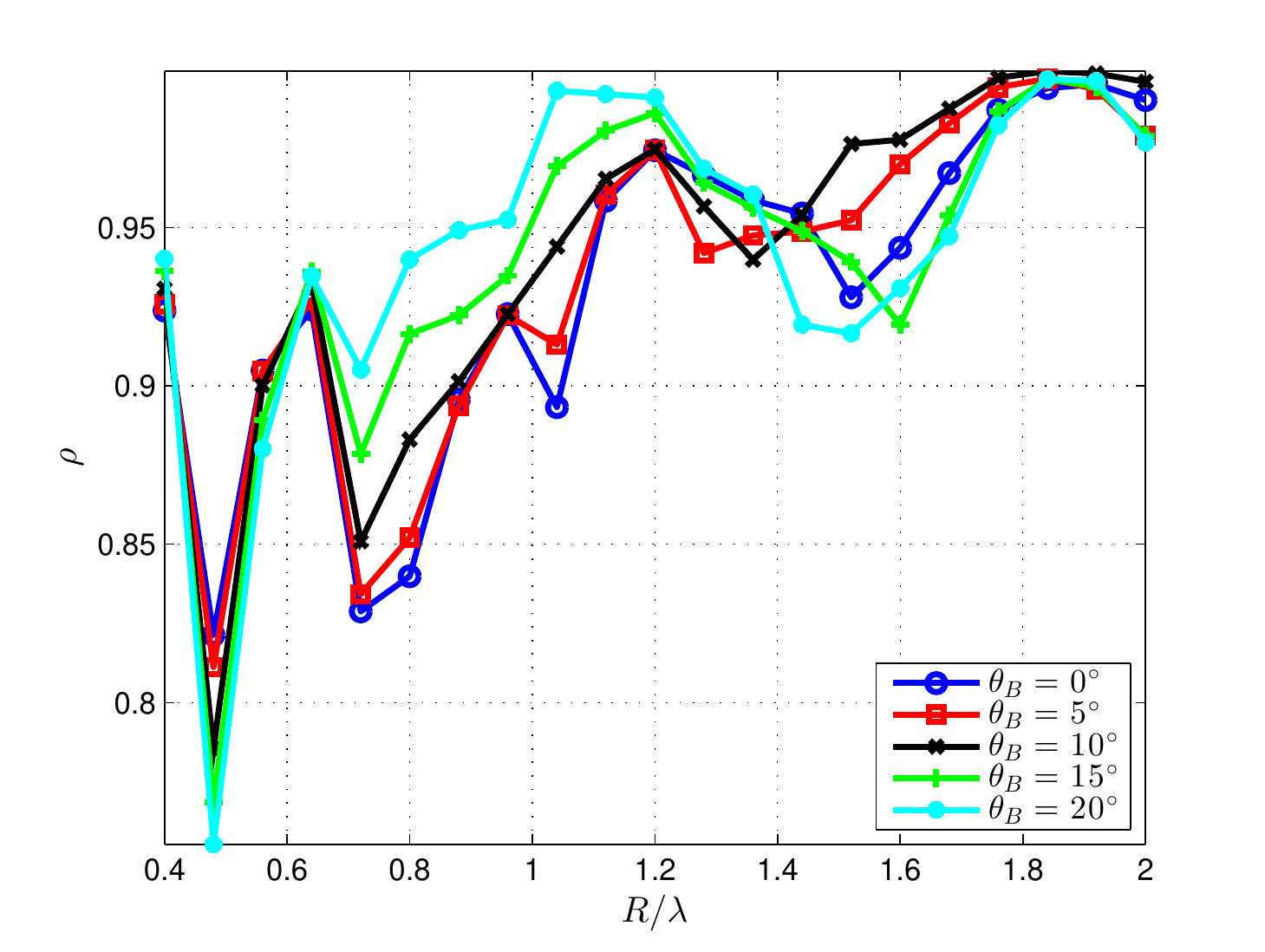}
\caption{Correlation coefficient of theoretical and NEC simulated patterns, $N_{max}=8$}
\label{fig:chp5_MC_pattern_correlations}
\end{figure}

It can be seen that $\rho$ is generally above $0.8$ in the range of $R=[0.4\lambda,2\lambda]$, except for $R=0.48\lambda$.
This shows that the mutual coupling does not cause a significant distortion to the pattern of UCA.
The high correlation between the theoretical and NEC patterns indicates that the optimization algorithm, which is based on empirical results on the theoretical patterns, can still work when considering the mutual coupling.
On the other hand, there do exist some differences between the theoretical and NEC patterns, which means that when calculating $R_{opt}$ or $T^{(k)}$ in line\,\ref{alg1:line19} in Algorithm\,1 and line\,\ref{alg2:line24} in Algorithm\,2, the NEC data instead of the theoretical data should be used. 
In addition, it is interesting to see that as $R$ increases, $\rho$ also increases in general.
Because the mutual coupling originates from the proximity of neighbor antennas, the closer the neighbor antennas are, the stronger the mutual coupling effect is.

\subsection{Mutual Coupling Effect for Adjustable Transmit Power}
\label{chp5:mc_err:iweownl}

To find $R_{opt}$ in (\ref{eq:chp5_mmse2}) when $P_t$ is adjustable, the optimization algorithm should be based on the NEC simulation data instead of the theoretical calculation.

$\bar{p}_C$ with the mutual coupling is calculated according to in Algorithm\,1, and the results are shown in the upper plot in Fig.\,\ref{fig:chp5_MC_p_p2bar_R}.
To compare with the upper plot in Fig.\,\ref{fig:chp5_p_R_All}, the same array configuration is adopted, i.e., $N_{max}=8$ and $R\in[0.4\lambda,2\lambda]$. 
In addition, the same values of $\theta_B$ are chosen.

\begin{figure}
\centering
\includegraphics[scale=0.9]{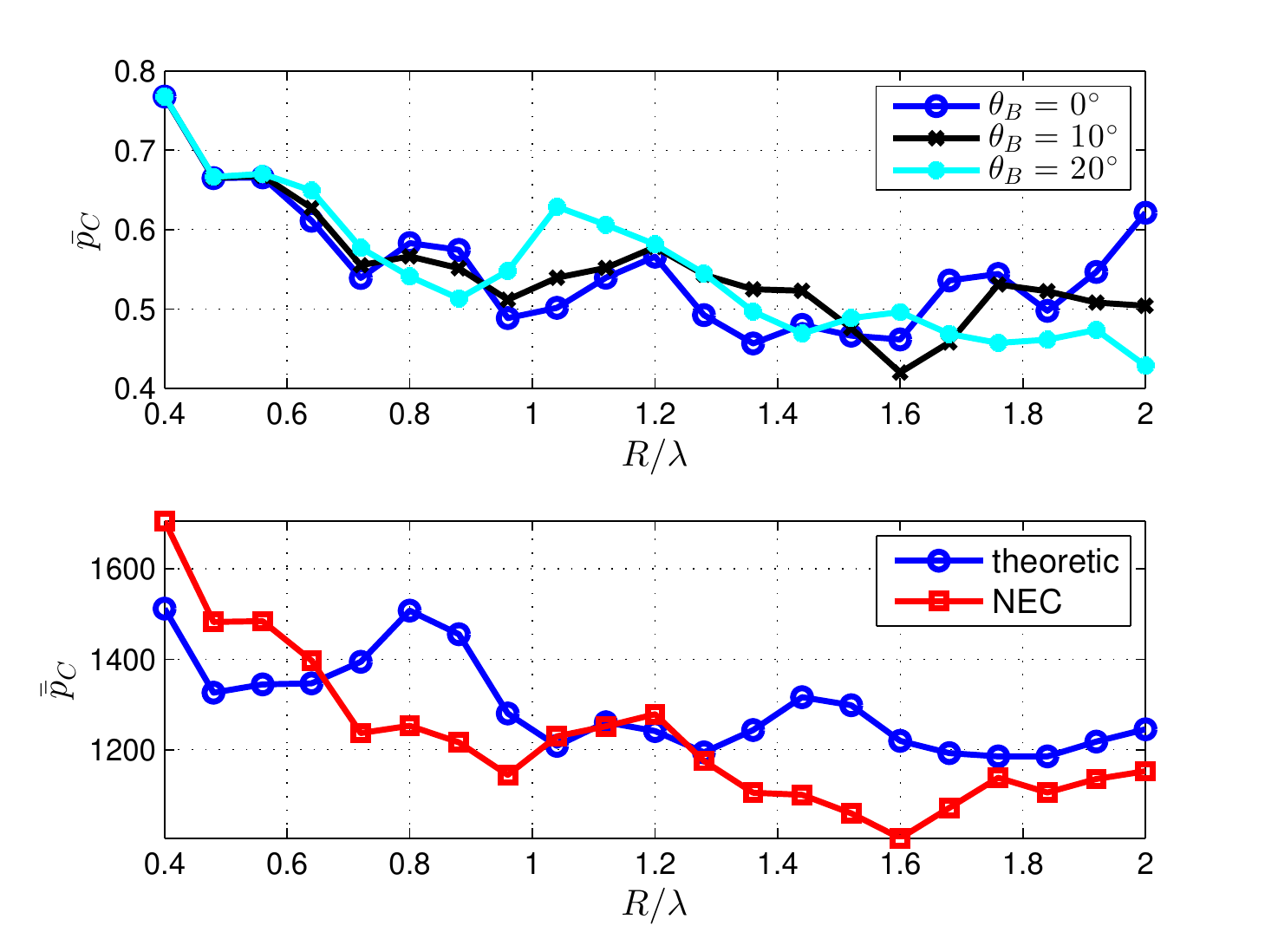}
\caption{Upper plot: $\bar{p}_C$ versus $R$. Lower plot: $\bar{\bar{p}}_C$ versus $N_{max}=8$}
\label{fig:chp5_MC_p_p2bar_R}
\end{figure}

Compared to Fig.\,\ref{fig:chp5_p_R_All}, it is not hard to notice the similarity between the theoretical and NEC simulated curves for the same $\theta_B$, which can be explained by the high correlation between them, as shown in Fig.\,\ref{fig:chp5_MC_pattern_correlations}.
However, some differences can be observed.

Because of the differences between the theoretical and NEC simulated results, $\bar{\bar{p}}_C$ in Fig.\,\ref{fig:chp5_p2bar_R} needs to be re-calculated based on the NEC results, in order to find $R_{opt}$.
The lower plot in Fig.\,\ref{fig:chp5_MC_p_p2bar_R} shows $\bar{\bar{p}}_C$ based on the NEC results in comparison with the theoretical curve.
It can be seen that the optimum value for the NEC result is $R_{opt}=1.6\lambda$ compared to $R_{opt}=1.76\lambda$ for the theoretical result.

\subsection{Mutual Coupling Effect for Fixed Transmit Power}
\label{chp5:mc_err:ioewovn}

In this section, the impact of the mutual coupling on the configurable beamforming technique is studied.
The same as in Section\,\ref{chp5:mc_err:iweownl}, the NEC data that includes the mutual coupling should also be used in the calculation of the optimum $M_{ij}$ as well as $R_{opt}$.

As introduced in Section\,\ref{chp5:opt_alg:nvnnnnc}, the first step is to find out which zone Bob is in according to $d_B$.
Due to the maximum gain attenuation, the coverage zones, e.g. in Fig.\,\ref{fig:chp5_N_zone}, are distorted, because the zone boundary $d_{th,N}$ in (\ref{eq:chp5_d_th}) is proportional to $G_{\text{max}}$.

In Fig.\,\ref{fig:chp5_Gmax_DoE_MC}, $G_{\text{max}}$ of three different array modes for an 8-element UCA with $R=1.6\lambda$ is shown.
Recall that in Fig.\,\ref{fig:chp5_Gmax_R_MC}, it has already been shown that there exists attenuation at $R=1.6\lambda$ for $N=8$.
It can be seen that for sub-arrays with $N=4$ and $N=2$, there is also attenuation over $\theta_B$.

\begin{figure}
\centering
\includegraphics[scale=0.9]{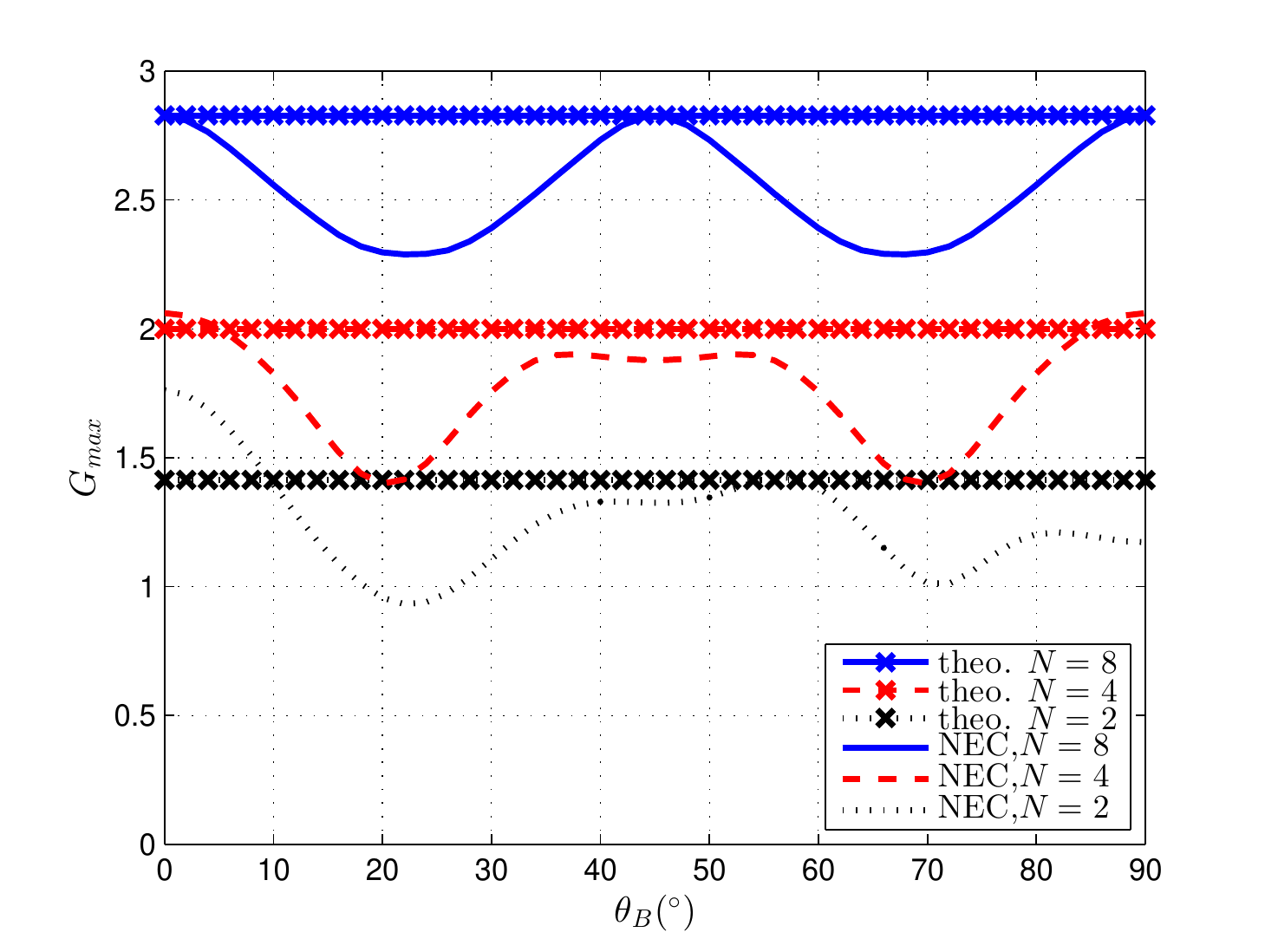}
\caption{Comparison of $G_{\text{max}}$ between theoretical and NEC results for $\theta_B\in[0^{\circ},90^{\circ}]$}
\label{fig:chp5_Gmax_DoE_MC}
\end{figure}

The attenuation of $G_{\text{max}}$ leads to the changes of the zone boundary, i.e., $d_{th,N}$, which in turn affects the choice of available array modes to certain Bob's location. 
For example, for Bob at $\theta_B=20^{\circ}$ and $d_B=d_{th,2}$, because the actual transmit gain at this angle is smaller than the theoretical value, $C_B\geq R_B$ cannot be guaranteed if the AP chooses $N=2$.
Thus, in practice, the AP need to keep track of the actual zone boundary, which can be done by converting Fig.\,\ref{fig:chp5_Gmax_DoE_MC} into a look-up table.

Besides the impact on the zone boundary, the mutual coupling also affects the configurable beamforming technique in the sense of creating the look-up tables shown in Algorithm\,2.
In practice, the computation in line\,\ref{alg2:line24} should be based on NEC simulation data instead of theoretical calculation.
To illustrate the difference, Fig.\,\ref{fig:chp5_p_DoE_differentN} is re-generated based on the NEC simulation results.
The results are shown in Fig.\,\ref{fig:chp5_MC_p_DoE_differentN}.

\begin{figure}
\centering
\includegraphics[scale=0.9]{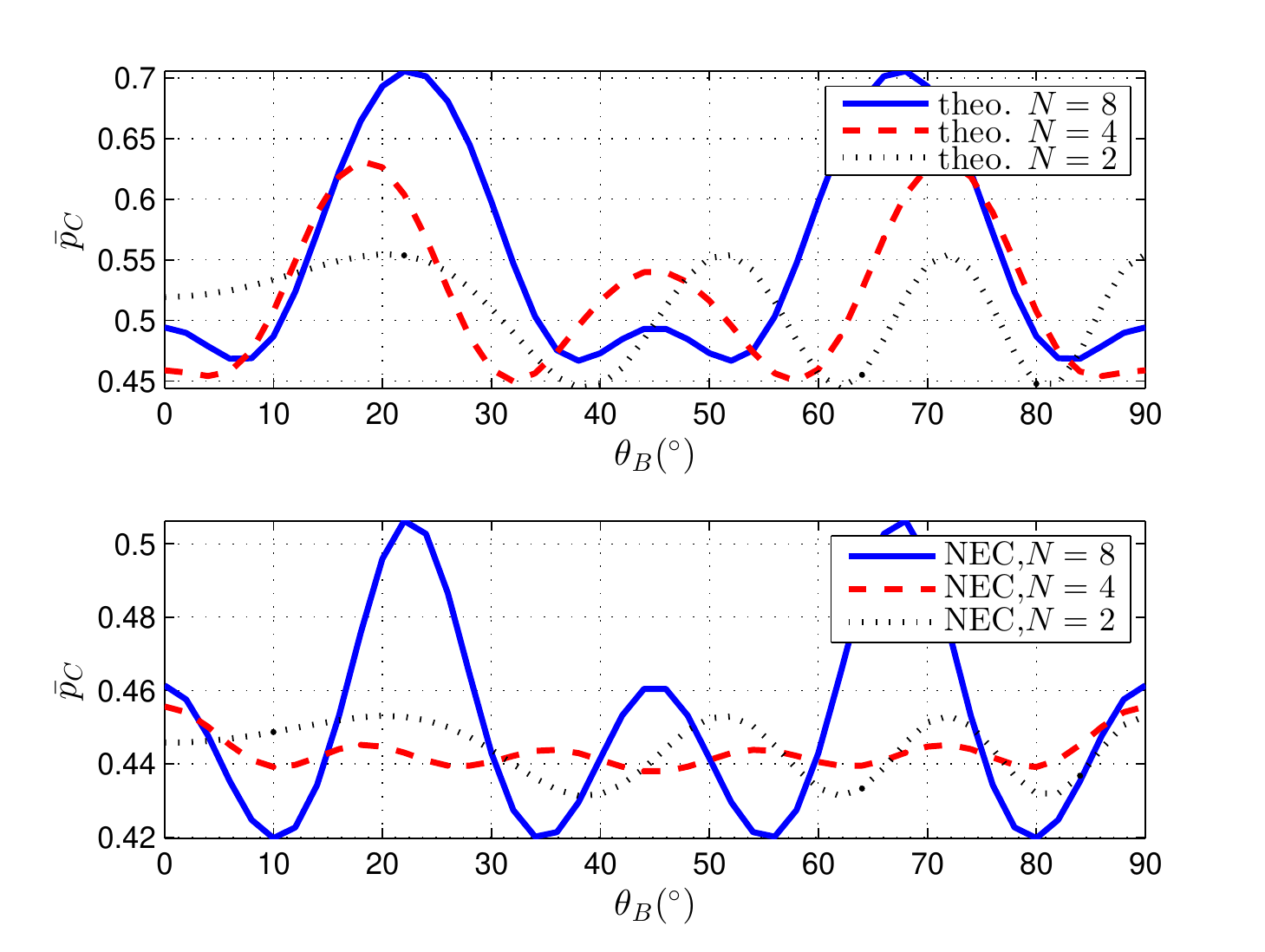}
\caption{$\bar{p}_C$ versus $\theta_B$ for different $N$. Upper plot: theoretical data without the mutual coupling; lower plot: NEC simulated data with the mutual coupling}
\label{fig:chp5_MC_p_DoE_differentN}
\end{figure}

For comparison, the theoretical results and the NEC simulation results are given in the upper and lower plots, respectively.
It can be seen that for the same $N$, $\bar{p}_C$ changes differently with $\theta_B$.
In addition, the relative position of the three curves, i.e., $N=8,4,2$, is also changed. 
Thus, the generation of look-up tables should be based on the practical data with the mutual coupling.

Since the mutual coupling affects $d_{th,N}$ and the generation of look-up tables, it is natural to conjecture that $R_{opt}$ in (\ref{eq:chp5_mmse4}) is also affected by the mutual coupling.
To solve (\ref{eq:chp5_mmse4}), $\bar{\bar{p}}_C^{(k)}$ and $q^{(k)}$ must be re-calculated based on the NEC data.
According to (\ref{eq:chp5_zone_prob}), $q^{(k)}$ is calculated from the zone area $S^{(k)}$.
Because $d_{th,N}$ is now angle dependent, $S^{(k)}$ in (\ref{eq:chp5_zone1_area})-(\ref{eq:chp5_zone_area}) can now be calculated by
\begin{align}
	&S^{(1)}=\frac{1}{2}\int_0^{2\pi}d_{th,2}^2(\theta_B)\,\mathrm{d}\theta_B, \label{eq:chp5_zone1_area_MC} \\
	&S^{(2)}=\frac{1}{2}\int_0^{2\pi}d_{th,4}^2(\theta_B)-d_{th,2}^2(\theta_B)\,\mathrm{d}\theta_B,  \label{eq:chp5_zone2_area_MC} \\
	&S^{(3)}=\frac{1}{2}\int_0^{2\pi}d_{\text{max}}^2(\theta_B)-d_{th,4}^2(\theta_B)\,\mathrm{d}\theta_B,  \label{eq:chp5_zone3_area_MC} \\
	&S=\sum_k S^{(k)}= \frac{1}{2}\int_0^{2\pi}d_{\text{max}}^2(\theta_B)\,\mathrm{d}\theta_B.  \label{eq:chp5_zone_area_MC}
\end{align}
The numerical result of $\bar{\bar{p}}_C$ versus $R$ is shown in Fig.\,\ref{fig:chp5_MC_p2bar_R}.
For comparison, $\bar{\bar{p}}_C$ that is based on the theoretical data is also shown.

\begin{figure}
\centering
\includegraphics[scale=0.9]{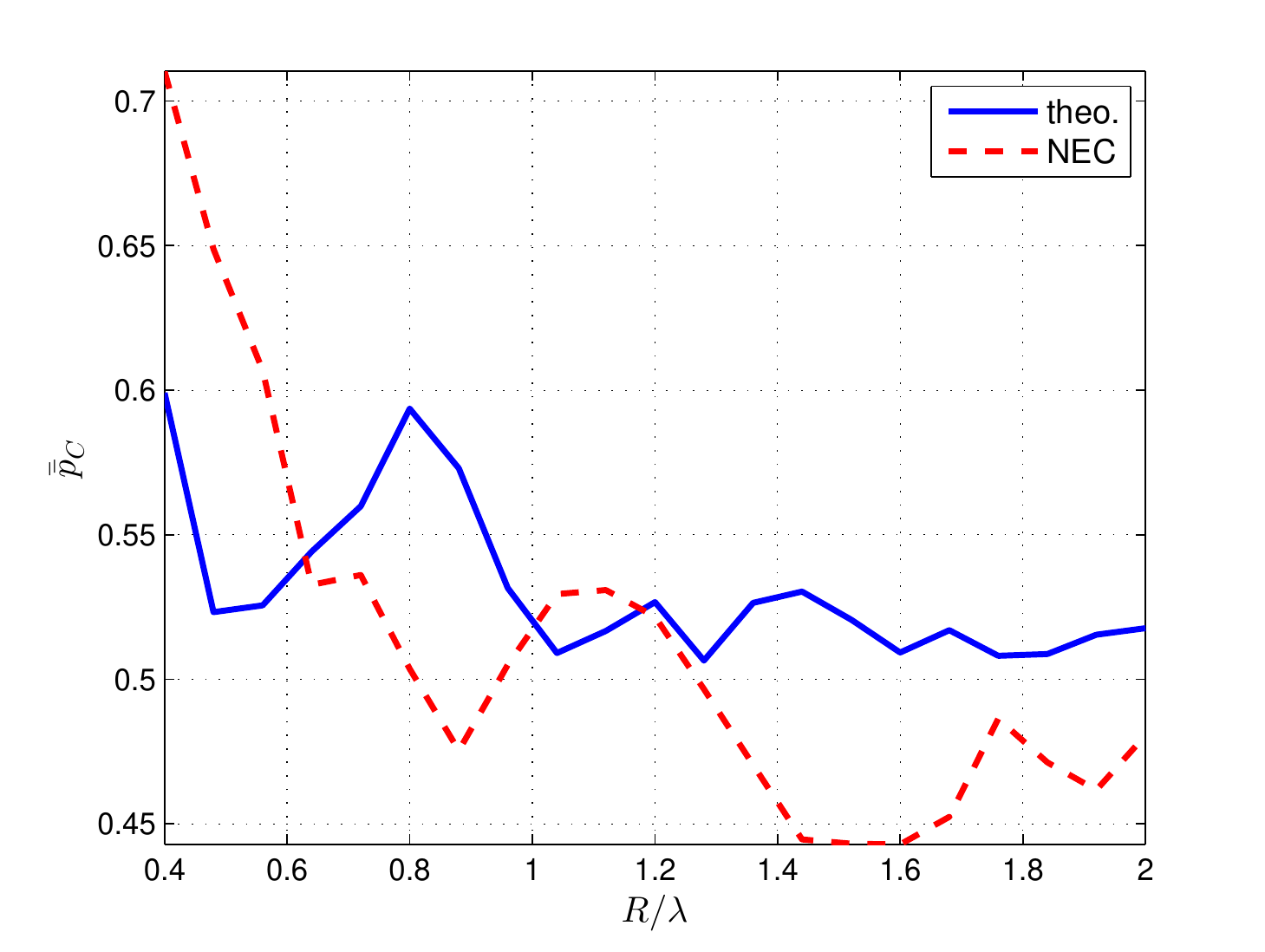}
\caption{$\bar{\bar{p}}_C$ versus $R$ with and without the mutual coupling, $N_{\text{max}}=8$}
\label{fig:chp5_MC_p2bar_R}
\end{figure}

It can be seen that the mutual coupling dramatically changes $\bar{\bar{p}}_C$ over the range of $R$. 
The minimum value of $\bar{\bar{p}}_C$ is 0.4429 at $R=1.6\lambda$ for the NEC result, compared to 0.5082 at $R=1.76\lambda$ for the theoretical result.
From the above analysis, it can be concluded that even though the configurable beamforming still works with the mutual coupling effect, its security performance and optimization are affected by the mutual coupling.
The generation of $T^{(k)}$ and the searching of $R_{opt}$ need to be based on the NEC data.

\section{Conclusions}
\label{chp5:concl}
In this chapter, the secure transmission to Bob with ER based beamforming in presence of PPP distributed is enhanced by optimizing the array configuration of the adjustable UCA according to Bob's location.
Based on the empirical results, two different optimization algorithms have been developed for the two scenarios, the fixed and adjustable transmit power.
For the adjustable transmit power scenario, the transmit power is adjusted according to Bob's distance and the radius is optimized for all Bob's angles to minimize the averaged SSOP.
For the fixed transmit power scenario, the optimum array mode is chosen according to Bob's location, which is called the configurable beamforming technique that generates look-up tables that stores the optimum array mode for Bob's specific location.
The two optimization algorithms have the potential to be generalized for any channel type and any array type.
It is worth noticing that the former scenario is more practical of relevance, e.g., base stations in cellular network, while the latter scenario is only applicable in applications with fixed transmit power.
In addition, the impact of the mutual coupling is evaluated for the optimization algorithms.
The high correlation coefficient between the theoretical and NEC data suggests that the mutual coupling in general does not harm the algorithms; 
however, the optimum values are changed and should be calculated based on the NEC data in the two numerical algorithms.

\chapter{Conclusions}
\label{chp6}

\section{Concluding Remarks}
\label{chp6:part1}

The goal of this thesis is to enhance the wireless security in the physical layer from the spatial aspect. 
For this purpose, the ER-based beamforming is investigated in the context of indoor wireless networks, e.g., the 802.11 WLAN, where the AP is equipped with antenna array.
The ER is created to protect the transmission to Bob in the presence of the PPP located Eves.
Various parameters, especially the array configuration, are examined towards security and numerical optimization algorithms are developed to enhance the security.

Although there is abundant research work that aims to enhance the wireless security from the physical layer, not many work considers the problem from the spatial aspect, i.e., creating physical region to enhance the security level.
The existing approaches in the field of the physical space security lack of the support from the information-theoretic secrecy and there is rarely work that considers the optimization of the array itself for further enhancement of security.
With the aid of the stochastic geometry tool (i.e., the PPP distribution) and the mathematical derivation of the analytic expression of the upper bound for SSOP, this work solves the former problem by characterizing the ER based on the information-theoretic parameter, which brings many practical work surveyed in the field of the physical space security close to the information-theoretic security, and the latter problem is investigated by studying the impact of the array parameters on the SSOP and optimizing the array configuration according to Bob's dynamic location(s).

The research work mainly focuses on two commonly used array geometries, the ULA and the UCA. 
However, the SSOP related expressions are generally applicable to any array geometry that has either analytic expression or numerical representations. 
Furthermore, although the Rician channel is used in the system model, the method also applies to other type of fading channel, e.g., Nakagami fading channel.
In addition to the theoretical analysis, a practical beamformer is built on WAPR hardware.
Observations from the experimental results in the anechoic chamber reveal the mutual coupling effect on the ER, which is also intensively simulated in NEC simulation tool.
The conclusions of each chapter are summarized in the following.

In Chapter\,3, the concept of the ER, the SSOP and the upper bound of the SSOP are established, which is useful to a general array geometry and a general fading channel.
Then, the commonly used ULA is examined in terms of the SSOP.
The analysis shows that in general the SSOP and its upper bound increase with the DoE angle and asymptotically approach certain values depending on the DoE angle when the number of elements increases. 
The ratio of the SSOP upper bound to the SSOP has a relatively small value for a small value of number of elements;
however, for a large value of number of elements, the ratio grows very large, thus cannot be used to predict the behavior of SSOP.

In Chapter\,4, the SSOP analysis for a UCA is conducted and compared with a ULA in addition with the mutual coupling study with WARP and NEC results.
It shows that the SSOP of a UCA is more constant in the whole range of Bob's angle, although the SSOP of a ULA is smaller at the bore-sight direction.
For the UCA, the tightness of the upper bound does not change much even for a large number of elements.
Furthermore,  the UCA is less sensitive to the mutual coupling over Bob's angle range, and is more flexible on the choices of array configurations.
Thus, the UCA is a better choice in creating and optimizing the the SSOP.

In Chapter\,5, based on the SSOP analysis on the UCA in Chapter\,4,  the array configuration of the adjustable UCA is optimized  according to Bob's location.
Two optimization algorithms are developed for the fixed and adjustable transmit power scenarios.
For the adjustable transmit power scenario, the transmit power is adjusted according to Bob's distance and the radius is optimized for all Bob's angles to minimize the averaged SSOP.
For the fixed transmit power scenario, the configurable beamforming technique generates look-up tables that stores the optimum array mode for Bob's specific location.
While the former scenario makes piratical sense, the latter scenario is only applicable in certain applications with fixed transmit power.
The mutual coupling in general does not harm the algorithms; 
however, the optimum values are changed and should be calculated based on the NEC data in the two numerical algorithms.

Overall, it is clear that the ER formulation and the performance metric SSOP are crucial to the study of the wireless security in the physical layer.
By studying the ULA and the UCA in the Rician fading channel, some valuable insights are obtained, which provides the basis of the optimization for the array configuration for user with dynamic location in a dynamic environment.
The mathematical derivation of analytic expressions, theoretical and numerical results as well as the experimental and simulation results show that this work has meet the objective mentioned in the introduction of the thesis.

\section{Future Work}
\label{chp6:part2}

The above conclusions demonstrated the contributions obtained in this research work and indicates that wireless security can be further enhanced by the ER-based beamforming technique in the physical layer.
In this section, the improvements on the current work is discussed first. 
Then, the possibilities to extend the ER-based beamforming and combine with other techniques are addressed.

\subsection{Improvements to Current Work}

In this research, the spatial distribution of Bob and Eves are modeled as the homogeneous PPP.
In practice, the system varies from a small local network with a dozen of users to an integrated network with hundreds of users.
The locations of the users may not fit in the defined distribution.
For example, the potential Eves may be largely located in the public area, such as the hall or lobby in the office building other than uniformly scattered in the whole floor.
In this case, the optimization of the radius of the UCA based on the defined distribution may be not accurate.
To solve this problem, a more realistic spatial distribution should be studied or measured.

Another assumption on the system is that the accurate knowledge of Bob's instantaneous CSI at the AP, which could be very difficult, because the acquisition of the CSI requires feedback mechanism, which is vulnerable to the impairments of the wireless channel as well as active attacks.
In fact, there is a lot of ongoing research about the imperfect CSI and different approaches are developed for different system requirements.
One direction of the improvement is to study the impact of imperfect CSI on the characterization of the SSOP given different level knowledge of Bob's CSI.
The other direction could be the development of ER-based optimizations to reduce such impact.
In addition, while in theory Eves stay silent and are assumed to be known only by the distribution, in practice, this constraint could be loosened if some sort of knowledge of Eves can be exploited, which will give more advantage to optimize the array configuration.

Besides the above aspects from the system model, some theoretical aspect can be further improved.
The upper bound of the SSOP is vital for the theoretical analysis.
It is derived based on Jensen's inequality.
However, in this thesis, the tightness of the upper bound is analytically examined for limited cases.
Accurate analysis relies on numerical results.
On one hand, the upper bound should be analyzed quantitatively for different parameters to gain better understanding;
on the other hand, a tighter bound should be found if possible.

Finally, the configuration beamforming technique can be further improved.
In this thesis, the UCA is considered as an example to demonstrate the potential of adjusting array configuration to achieve higher level of security.
This technique can also be applied to the ULA and even some irregular form of arrays, e.g., non-uniform circular array, which has been studied intensively from the pattern synthesis perspective.
Another improvement can be made on error analysis for the configuration beamforming technique.
Currently, the radius is optimized with the uniformly distributed angle error.
In practice, there could be a different kind of distribution, e.g., if more advanced DoA estimation technique is used, the error may be located in a small region with high probability.

\subsection{Extensions to Other Techniques}

In this research, the ER-based beamforming is developed based on MISO system in the context of indoor WLAN where the AP is equipped with multiple antennas.
The link between the physical region and the array factor can be carried over to other systems either for beamforming or for transmission of the AN.
Thus, this research can serve as the foundation to further improve some existing work in the area of physical space security.
In this section, some initial ideas combining the ER-based beamforming with other existing work are presented.

In\,\cite{li2012secure} the antenna array is installed on Bob that generates the AN.
However, no optimization is provided based on the array configuration.
The ER-based beamforming can be applied to formalize the OSR and guides the design for further enhancement of the system security level.
Similar gap exists for many practical work that exploits the advantage of beamforming, such as\,\cite{4595864,5357443,sheth2009geo}.
The ER-based beamforming is applicable no matter it is single-AP or multiple-AP systems.
For example, the UCA with optimized radius can replace the UCA used in\,\cite{4595864} to get a better control on the jointly created region.
In\,\cite{6502515}, the UCA alternatively transmit with two pre-defined patterns that has an overlapped region which is defined as the ER.
Through the ER-based beamforming, the two pre-defined patterns can be further optimized.

Inspired by the work in\,\cite{6502515} where the fast fading effect is created by the alternating two overlapped patterns in the time domain, similar idea could be achieved in the frequency domain, for example, in the OFDM-based system.
Instead of transmitting the packet in a time-division manner, different sub-carriers bear different segment of the packet and transmit with two (or even more) pre-defined patterns.
These patterns can be optimized by the ER-based beamforming in this thesis.
An even wilder idea could be combining the spatial and code division methods and apply the idea with the ER-based beamforming in CDMA systems.

\begin{spacing}{0.9}
\bibliographystyle{IEEEtran}
\cleardoublepage
\bibliography{IEEEabrv,bibliography}
\end{spacing}

\begin{appendices} 

\chapter{Mathematical Derivations and Proofs}
\label{appdx:bessel}

\section{Proof of Lemma\,\ref{le:chp3_h_tilde_squre}}
\label{appdx:bessel:nmve}

According to (\ref{eq:chp3_h_tilde_Ri}),
\begin{align}
	\tilde{h}&=\sqrt{\frac{K}{K+1}}G(\theta,\theta_B)+\sqrt{\frac{1}{K+1}}\frac{\mathbf{s}^H(\theta_B)\mathbf{g}}{\sqrt{N}} \nonumber \\
	&=\sqrt{\frac{K}{K+1}}G(\theta,\theta_B)+\sqrt{\frac{1}{K+1}}g,
\end{align}
where 
\begin{align}
	g=\frac{\mathbf{s}^H(\theta_B)\mathbf{g}}{\sqrt{N}}=\frac{\sum_{i=1}^N e^{j\phi_i(\theta_B)}g_i}{\sqrt{N}}.
\end{align}
Because $\phi_i(\theta_B)$ is deterministic and $g_i\sim{CN}(0,1)$, so $\phi_i(\theta_B)g_i\sim{CN}(0,1)$. 
Thus, the sum of $N$ i.i.d. complex Gaussian random variables $g_i$ is also a  complex Gaussian random variable with zero mean and variance $N$.
Therefore, $g$ is a complex Gaussian variable, $g\sim{CN}(0,1)$.

Let $g_{Re}$ and $g_{Im}$ denote the real and imaginary part of $g$, where $g_{Re}$ and $g_{Im}$ are joint normal variables, i.e., $g_{Re}, g_{Im}\sim N(0,\frac{1}{2})$.
Thus, 
\begin{align}
	\tilde{h}=\sqrt{\frac{K}{K+1}}G(\theta,\theta_B)+\sqrt{\frac{1}{K+1}}g_{Re}+j\sqrt{\frac{1}{K+1}}g_{Im}.
\end{align}
Then the square of the amplitude of $\tilde{h}$ is obtained by
\begin{align}
	|\tilde{h}|^2 &= \Big[\sqrt{\frac{K}{K+1}}G(\theta,\theta_B)+\sqrt{\frac{1}{K+1}}g_{Re}\Big]^2+ \frac{1}{K+1}g_{Im}^2 \nonumber \\
	&=\frac{K}{K+1}G^2(\theta,\theta_B)+\frac{1}{K+1}g_{Re}^2+\frac{1}{K+1}g_{Im}^2+\frac{2\sqrt{K}}{K+1}G(\theta,\theta_B)g_{Re}.
\end{align}

\section{Proof of Theorem\,\ref{th:chp3_A_0L}}
\label{appdx:bessel:veaien}

(\ref{eq:chp3_A_0}) can be written as
\begin{align}\label{eq:appdx_bessel_vniewoa}
	A_0=\int_0^{2\pi} G^2(\theta,\theta_B)\,\mathrm{d}\theta.
\end{align}
According to (\ref{eq:chp2_AF_ULA}), it can be derived that
\begin{align}
	G(\theta,\theta_B)=\frac{1}{\sqrt{N}}\sum_{i=1}^{N} e^{jk\Delta d(\sin\theta_B-\sin\theta)(i-1)},
\end{align}
where $k=\frac{2\pi}{\lambda}$.
Thus, (\ref{eq:appdx_bessel_vniewoa}) can be further derived,
\begin{align}
	A_0&=\int_0^{2\pi} G(\theta,\theta_B)\cdot G^*(\theta,\theta_B) \,\mathrm{d}\theta \nonumber \\
	   &=\int_0^{2\pi} \frac{1}{N}\sum_{i=1}^{N} e^{jk\Delta d(\sin\theta_B-\sin\theta)(i-1)}
	  \sum_{j=1}^{N} e^{-jk\Delta d(\sin\theta_B-\sin\theta)(j-1)} \,\mathrm{d}\theta \nonumber \\
	   &=\int_0^{2\pi} \frac{1}{N}\sum_{i,j} e^{jk\Delta d(\sin\theta_B-\sin\theta)(i-j)} \,\mathrm{d}\theta \nonumber \\
	   &=\int_0^{2\pi} \frac{1}{N}\sum_{i,j} e^{jk\Delta d\sin\theta_B(i-j)}
	             e^{-jk\Delta d\sin\theta(i-j)} \,\mathrm{d}\theta \nonumber \\
		 &=\frac{1}{N}\sum_{i,j} e^{jk\Delta d\sin\theta_B(i-j)}
	  \int_0^{2\pi} e^{-jk\Delta d\sin\theta(i-j)} \mathrm{d}\theta. \label{eq:appdx_bessel_aieow}
\end{align}
According to the integral representation of the Bessel function of the first kind, $J_n(x)=\frac{1}{2\pi}\int_{-\pi}^{\pi}e^{j(n\tau-x\sin\tau)}\mathrm{d}\tau$\,\cite{zwillinger2007table}, (\ref{eq:appdx_bessel_aieow}) can be further derived,
\begin{align}\label{eq:appdx_bessel_euirue}
  A_0=\frac{2\pi}{N}\sum_{i,j} J_0(k\Delta d(i-j))e^{jk\Delta d(i-j)\sin\theta_B},
\end{align}
where $A_0$ is the finite double summation. Next $A_0$ will be simplified from a double summation to a single summation.

$A_0$ is the summation of $N\times N$ terms, each of which is denoted by $A_{0,i,j}$,
\begin{align}
	A_{0,i,j}=\frac{2\pi}{N} J_0(k\Delta d(i-j))e^{jk\Delta d(i-j)\sin\theta_B}.
\end{align}
Notice that the only variable across all $A_{0,i,j}$ is the difference $i-j$.
So let $n=i-j$ and it can be derived that
\begin{align}
	A_{0,n}=\frac{2\pi}{N} J_0(k\Delta dn)e^{jk\Delta dn\sin\theta_B}.
\end{align}
Then, all the values of $n$ that are associated with $A_{0,n}$ are mapped into a table shown in Fig.\,\ref{fig:appdx_bessel_table_ULA}.

\begin{figure}
\centering
\includegraphics[scale=1]{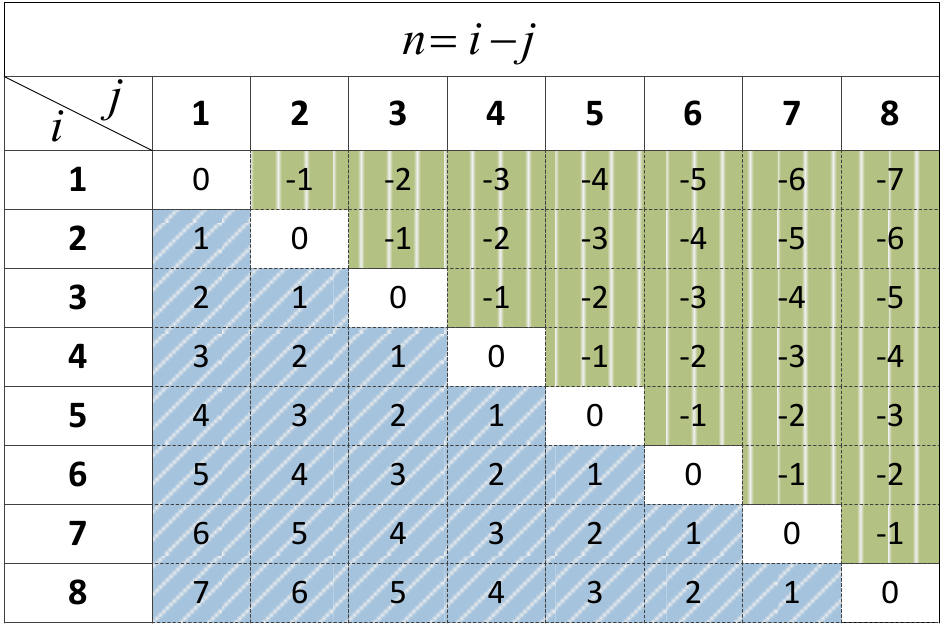}
\caption{Table for $A_{0,i,j}$}
\label{fig:appdx_bessel_table_ULA}
\end{figure}

Observing the table in Fig.\,\ref{fig:appdx_bessel_table_ULA}, it is noticed that i) the terms of $A_{0,n}$ on the diagonal lines can be combined, because they are the same; 
ii) becuase $J_m(-x)=(-1)^mJ_m(x)$, the terms of $A_{0,n}$ that have the same absolute value of $n$ can be added
\begin{align}
	A_{0,n}+A_{0,-n}&=\frac{2\pi}{N} [J_0(k\Delta dn)e^{jk\Delta dn\sin\theta_B}+J_0(-k\Delta dn)e^{-jk\Delta dn\sin\theta_B}] \nonumber \\
	&=\frac{4\pi}{N}J_0(k\Delta dn)\cos(k\Delta dn\sin\theta_B).
\end{align}
In addition, when $n=0$, $J_0(0)=1$ and $e^{j0}=1$.
Thus, $A_{0,0}=\frac{2\pi}{N}$.
Now sum up the terms of $A_{0,n}$ on each diagonal lines from $n=0$ to $p=N-1$ and obtain
\begin{align}\label{eq:appdx_bessel_ieur}
	A_0=2\pi+4\pi\sum_{n=1}^{N-1} \frac{N-n}{N}J_0(k\Delta dn)\cos(k\Delta dn\sin\theta_B).
\end{align}

\section{Proof of Theorem\,\ref{th:chp3_HPBW_L}}
\label{appdx:bessel:oitor}

In the range $\theta_B\in[0,\frac{\pi}{2}]$, both $\sin\theta_B$ and $\cos\theta_B$ are in the range $[0,1]$.
First, the zeros of $\frac{\partial}{\partial\theta_B}\Delta\theta_{HP}$ are to be found.
\begin{align} 
	&\frac{\partial}{\partial\theta_B}\Delta\theta_{HP}=0 \nonumber \\
	\Leftrightarrow & \sqrt{1-(\sin\theta_B-\frac{2.782}{Nk\Delta d})^2}=\cos\theta_B.
\end{align}
Then, take the power of two on both sides,
\begin{align}
	& 1-(\sin\theta_B-\frac{2.782}{Nk\Delta d})^2=\cos^2\theta_B  \nonumber \\
	\Leftrightarrow & 1-\cos^2\theta_B = (\sin\theta_B-\frac{2.782}{Nk\Delta d})^2 \nonumber \\
	\Leftrightarrow & \sin^2\theta_B = (\sin\theta_B-\frac{2.782}{Nk\Delta d})^2. \nonumber \\
\end{align}
Because $\sin\theta_B$ and $\frac{2.782}{Nk\Delta d}$ are both non-negative values, the solution of the following equation is
\begin{align}
	& \sin\theta_B = -(\sin\theta_B-\frac{2.782}{Nk\Delta d}) \nonumber \\
	\Leftrightarrow & \sin\theta_B = \frac{1.391}{Nk\Delta d}.
\end{align}

For $\Delta d=0.5\lambda$, it can be derived that
\begin{align}
	\sin\theta_B = \frac{1.391}{N\pi}.
\end{align}
Since $\frac{1.391}{N\pi}<1$, in the range $\theta_B\in[0,\frac{\pi}{2}]$, there is one zero of $\frac{\partial}{\partial\theta_B}\Delta\theta_{HP}$, which is at $\theta_B=\arcsin\frac{1.391}{N\pi}$

When $\theta_B\in[0,\arcsin\frac{1.391}{N\pi})$, $\frac{\partial}{\partial\theta_B}\Delta\theta_{HP}<0$;
when $\theta_B\in(\arcsin\frac{1.391}{N\pi},\frac{\pi}{2}]$, $\frac{\partial}{\partial\theta_B}\Delta\theta_{HP}>0$.
The proof is in the following.

Proof:
Because the square-root expression in the denominator is positive, it can be derived that
\begin{align} 
	&\frac{\partial}{\partial\theta_B}\Delta\theta_{HP}\lessgtr 0 \nonumber \\
	\Leftrightarrow & \sqrt{1-(\sin\theta_B-\frac{2.782}{N\pi})^2}\lessgtr \cos\theta_B \nonumber \\
	\Leftrightarrow & 1-(\sin\theta_B-\frac{2.782}{N\pi})^2 \lessgtr \cos^2\theta_B  \nonumber \\
	\Leftrightarrow & 1-\cos^2\theta_B -(\sin\theta_B-\frac{2.782}{N\pi})^2\lessgtr 0 \nonumber \\
	\Leftrightarrow & \sin^2\theta_B -(\sin\theta_B-\frac{2.782}{N\pi})^2\lessgtr 0 \nonumber \\
	\Leftrightarrow & (\sin\theta_B+\sin\theta_B-\frac{2.782}{N\pi})(\sin\theta_B-\sin\theta_B+\frac{2.782}{N\pi}) \lessgtr 0 \nonumber \\
	\Leftrightarrow & (2\sin\theta_B-\frac{2.782}{N\pi})(\frac{2.782}{N\pi}) \lessgtr 0 \nonumber \\
	\Leftrightarrow & \sin\theta_B \lessgtr \frac{1.391}{N\pi}.
\end{align}
The proof is completed.

\section{Proof of Theorem\,\ref{th:chp4_A_0C}}
\label{appdx:bessel:ownbe}

A similar method to the ULA is adopted to solve $A_{0,C}$ for UCA.
For convenience, the subscript $_C$ is omitted.

First, the array factor $G(\theta,\theta_B)$ for UCA is given by,
\begin{align}
	G(\theta,\theta_B)=\frac{1}{\sqrt{N}}\sum_{i=1}^N e^{jkR[\cos(\theta_B-\psi_i)-\cos(\theta-\psi_i)]},
\end{align}
where $k=\frac{2\pi}{\lambda}$ and $\psi_i=2\pi(i-1)/N$.
Then, the following derives $G(\theta,\theta_B)\cdot G^*(\theta,\theta_B)$.
\begin{align}\label{eq:appdx_bessel_mmew}
	G(\theta,\theta_B)\cdot G^*(\theta,\theta_B)&=\frac{1}{N}\sum_{i=1}^N\sum_{j=1}^N e^{jkR[\cos(\theta_B-\psi_i)-\cos(\theta-\psi_i)-\cos(\theta_B-\psi_j)+\cos(\theta-\psi_j)]} \nonumber \\
	&= \frac{1}{N} \sum_{i,j} e^{jkR[\cos(\theta_B-\psi_i)-\cos(\theta_B-\psi_j)]}\cdot e^{-jkR[\cos(\theta-\psi_i)-\cos(\theta-\psi_j)]}.
\end{align}
So far, the exponential terms that contain $\theta$ are separated to solve the integral.
In order to use $J_n(x)=\frac{1}{2\pi}\int_{-\pi}^{\pi}e^{j(n\tau-x\sin\tau)}\mathrm{d}\tau$, the triangle identity $\cos a-\cos b=-2\sin(\frac{a+b}{2})\sin(\frac{a-b}{2})$ is used to simply $\cos(\theta-\psi_i)-\cos(\theta-\psi_j)$,
\begin{align}\label{eq:appdx_bessel_wies}
	&\cos(\theta-\psi_i)-\cos(\theta-\psi_j) \nonumber \\
	=&-2\sin(\frac{\theta-\psi_i+\theta-\psi_j}{2})\sin(\frac{\theta-\psi_i-\theta+\psi_j}{2}) \nonumber \\
	=&2\sin(\theta-\frac{\psi_i+\psi_j}{2})\sin(\frac{\psi_i-\psi_j}{2}) \nonumber \\
	=&2\sin(\theta-\frac{i+j-2}{N}\pi)\sin(\frac{i-j}{N}\pi).
\end{align}
Let $W_{i,j}=2\sin(\frac{i-j}{N}\pi)$ and $Z_{i,j}=\frac{i+j-2}{N}\pi$.
Substituting (\ref{eq:appdx_bessel_wies}) into (\ref{eq:appdx_bessel_mmew}), it can be derived that
\begin{align}
	G(\theta,\theta_B)\cdot G^*(\theta,\theta_B)=\frac{1}{N}\sum_{i,j} e^{jkRW_{i,j}\sin(\theta_B-Z_{i,j})}\cdot e^{-jkRW_{i,j}\sin(\theta-Z_{i,j})}.
\end{align}
Now, (\ref{eq:chp4_A0C}) can be written as
\begin{align}\label{eq:appdx_bessel_vbnewuioa}
	A_0&=\int_0^{2\pi} G^2(\theta,\theta_B)\,\mathrm{d}\theta \nonumber \\
	   &=\int_0^{2\pi} G(\theta,\theta_B)\cdot G^*(\theta,\theta_B) \,\mathrm{d}\theta \nonumber \\  
		 &=\frac{1}{N}\sum_{i,j} e^{jkRW_{i,j}\sin(\theta_B-Z_{i,j})}
	          \int_0^{2\pi} e^{-jkRW_{i,j}\sin(\theta-Z_{i,j})}\,\mathrm{d}\theta \nonumber \\
	&=\frac{2\pi}{N}\sum_{i,j} J_0(kRW_{i,j}) e^{jkRW_{i,j}\sin(\theta_B-Z_{i,j})}.
\end{align}
Next $A_0$ will be simplified. 
Let $A_{0,i,j}$ denote each summation term in (\ref{eq:appdx_bessel_vbnewuioa}),
\begin{align}
	A_{0,i,j}=\frac{2\pi}{N} J_0(kRW_{i,j}) e^{jkRW_{i,j}\sin(\theta_B-Z_{i,j})}.
\end{align}
It is obvious that $W_{i,j}=-W_{j,i}$ and $Z_{i,j}=Z_{j,i}$.
Consider that $J_n(-x)=(-1)^nJ_n(x)$ and $J_0(x)$ is a real number, it can be deduced that $A_{0,j,i}=A^*_{0,i,j}$.
The proof is in the following.
\begin{align}
	A_{0,j,i}&=\frac{2\pi}{N} J_0(kRW_{j,i}) e^{jkRW_{j,i}\sin(\theta_B-Z_{j,i})} \nonumber \\
	&=\frac{2\pi}{N} J_0(-kRW_{i,j}) e^{-jkRW_{i,j}\sin(\theta_B-Z_{i,j})} \nonumber \\
	&=\frac{2\pi}{N} J_0(kRW_{i,j}) [e^{jkRW_{i,j}\sin(\theta_B-Z_{i,j})}]^* \nonumber \\
	&=A^*_{0,i,j}.
\end{align}
It is not hard to see that combining $A_{0,i,j}$ and $A_{0,j,i}$ as in Appendix\,\ref{appdx:bessel:veaien} will not give a simpler solution. 
Thus, other method should be used to simply (\ref{eq:appdx_bessel_vbnewuioa}).

From the expression of $W_{i,j}$ and $Z_{i,j}$, it is noticed that $W_{i,j+N}=-W_{i,j}$.
The proof is in the following.
\begin{align}
	W_{i,j+N}&=2\sin(\frac{i-j-N}{N}\pi) \nonumber \\
	&=2\sin(\frac{i-j}{N}\pi-\pi) \nonumber \\
	&=-2\sin(\frac{i-j}{N}\pi)  \nonumber \\
	&=-W_{i,j}.
\end{align}
Similarly, $\sin(\theta_B-Z_{i,j+N})=-\sin(\theta_B-Z_{i,j})$.
Thus, $A_{0,i,j+N}=A_{0,i,j}$.
The proof is in the following.
\begin{align}\label{eq:appdx_bessel_ompuk}
	A_{0,i,j+N}&= \frac{2\pi}{N} J_0(kRW_{i,j+N}) e^{jkRW_{i,j+N}\sin(\theta_B-Z_{i,j+N})}\nonumber \\
	&=\frac{2\pi}{N} J_0(-kRW_{i,j}) e^{jkR\cdot -1\cdot W_{i,j}\cdot -1 \cdot \sin(\theta_B-Z_{i,j})}\nonumber \\
	&=\frac{2\pi}{N} J_0(kRW_{i,j}) e^{jkRW_{i,j}\sin(\theta_B-Z_{i,j})}\nonumber \\
	&=A_{0,i,j}.
\end{align}
According to (\ref{eq:appdx_bessel_ompuk}), the summation of $A_{0,i,j}$ can be formulated in another way. 
To better illustrate the new summation, an extended table is created, as shown in Fig.\,\ref{fig:appdx_bessel_table_UCA}.

\begin{figure}
\centering
\includegraphics[scale=1,angle=90]{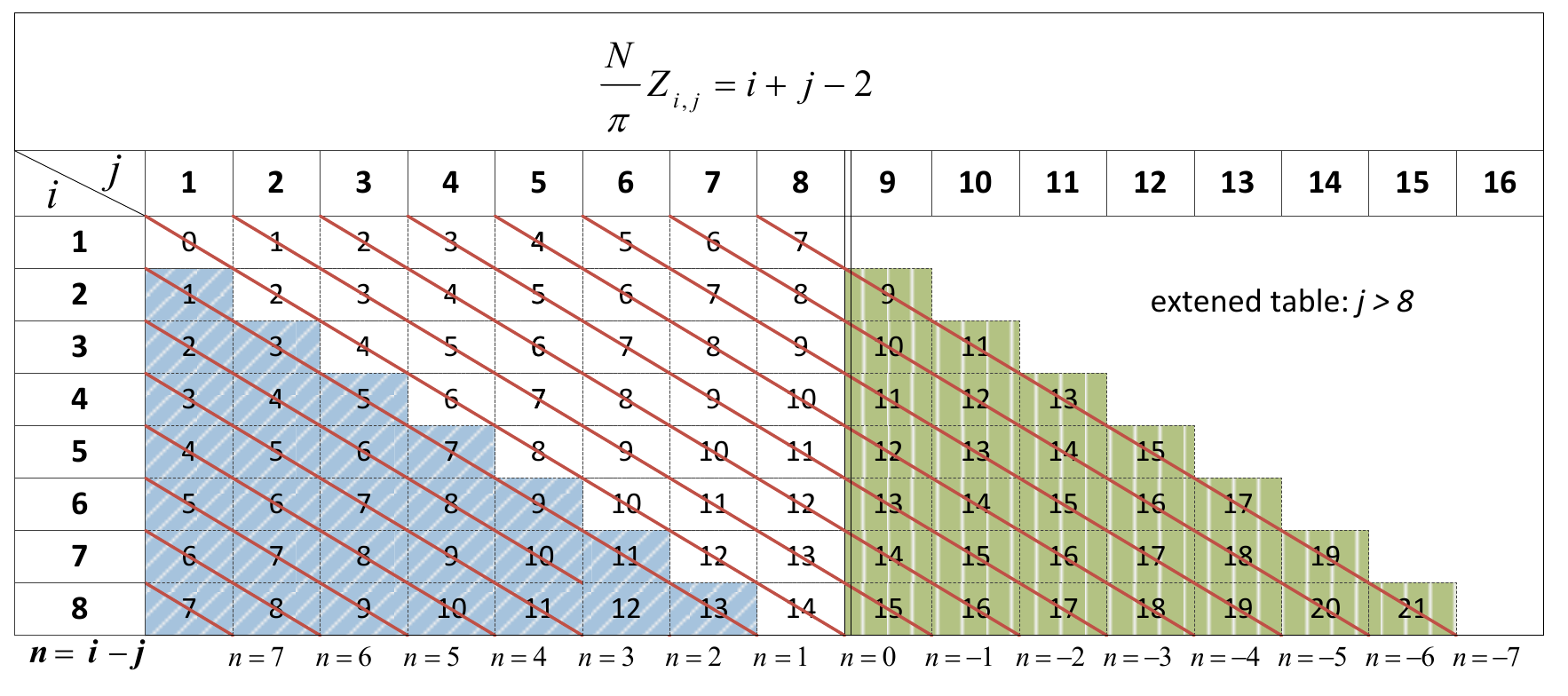}
\caption{Table for $Z_{i,j}$}
\label{fig:appdx_bessel_table_UCA}
\end{figure}

For convenience, let $n=i-j$.
Then, $W_n=W_{i,j}=2\sin(\frac{n}{N}\pi)$.
The terms $A_{0,i,j}$ on the red diagonal lines in the table have the same $W_n$.
In the table, $\frac{N}{\pi}Z_{i,j}$ is allocated according to their indice $i$ and $j$. 
Given $n=i-j$, it can be derived that
\begin{align}
	Z_{n,i}=Z_{i,j}=\frac{i+j-2}{N}\pi=\frac{2i-n-2}{N}\pi.
\end{align}
Thus, it can be derived that
\begin{align}
	A_{0,n,i}=A_{0,i,j}=\frac{2\pi}{N} J_0(kRW_n) e^{jkRW_n\sin(\theta_B-Z_{n,i})}.
\end{align}

$A_0$ is the summation of all elements in the original table (i.e., $i,j=1,...,8$).
Because $A_{0,i,j+N}=A_{0,i,j}$, the calculation of $A_0$ can be executed by replacing the lower triangle in the original table (i.e., $i>j$) with the lower triangle in the extended table (i.e., $i>j-N$).
In the new formation of $A_0$, which is a parallelogram table, the summation can be carried out along the diagonal lines from $n=0$ to $n=-(N-1)$.
For any $n$, the summation of $A_{0,n,i}$ includes $N$ terms with $Z_{n,i}$. 
Thus, (\ref{eq:appdx_bessel_vbnewuioa}) can be converted into
\begin{align}\label{eq:appdx_bessel_reqtq}
	A_0&=\sum_{i=1}^N\sum_{j=1}^N A_{0,i,j} \nonumber \\
	&=\sum_{n=0}^{-(N-1)} \sum_{i=1}^N A_{0,n,i} \nonumber \\
	&=\sum_{n=0}^{-(N-1)} \sum_{i=1}^N \frac{2\pi}{N} J_0(kRW_n) e^{jkRW_n\sin(\theta_B-Z_{n,i})} \nonumber \\
	&=\frac{2\pi}{N}\sum_{n=0}^{-(N-1)} J_0(kRW_n) \sum_{i=1}^N e^{jkRW_n\sin(\theta_B-Z_{n,i})}.
\end{align}
According to Jacobi-Anger expansion, $e^{j\alpha\sin\gamma}=\sum_{m=-\infty}^{\infty}J_m(\alpha)e^{jm\gamma}$, (\ref{eq:appdx_bessel_reqtq}) can be further derived by
\begin{align}
	A_0&=\frac{2\pi}{N}\sum_{n=0}^{-(N-1)}J_0(kRW_n)\sum_{i=1}^N \sum_{m=-\infty}^{\infty} J_m(kRW_n)e^{jm(\theta_B-Z_{n,i})} \nonumber \\
	&= \frac{2\pi}{N}\sum_{n=0}^{-(N-1)}J_0(kRW_n) \sum_{m=-\infty}^{\infty} J_m(kRW_n) e^{jm\theta_B} \sum_{i=1}^N e^{-jmZ_{n,i}} \nonumber \\
	&= \frac{2\pi}{N}\sum_{n=0}^{-(N-1)}J_0(kRW_n) \sum_{m=-\infty}^{\infty} J_m(kRW_n) e^{jm\theta_B} e^{j\pi\frac{m}{N}(n+2)}\sum_{i=1}^N e^{-j2\pi\frac{m}{N}i}.
\end{align}
When $m=lN$, $l\in\mathbb{Z}$, $e^{j\pi\frac{m}{N}(n+2)}=e^{jln\pi}e^{j2\pi l}=e^{jln\pi}$ and 
\begin{align}
	\sum_{i=1}^N e^{-j2\pi\frac{m}{N}i}=\sum_{i=1}^N e^{-j2\pi li}=N.
\end{align}
When $m\neq lN$, 
\begin{align}
	\sum_{i=1}^N e^{-j2\pi\frac{m}{N}i}=e^{-j2\pi\frac{m}{N}}\frac{1-e^{-j2\pi\frac{m}{N}N}}{1-e^{-j2\pi\frac{m}{N}}}=0.
\end{align}
Thus, it can be derived that
\begin{align}
	A_0&=\frac{2\pi}{N}\sum_{n=0}^{-(N-1)}J_0(kRW_n) \sum_{l=-\infty}^{\infty} J_{lN}(kRW_n) e^{jlN\theta_B} e^{jln\pi} N \nonumber \\
	&=2\pi\sum_{n=0}^{-(N-1)}J_0(kRW_n) \sum_{l=-\infty}^{\infty} J_{lN}(kRW_n) e^{jlN\theta_B} (-1)^{ln} \nonumber \\
	&=2\pi\sum_{n=0}^{N-1}J_0(-kRW_n) \sum_{l=-\infty}^{\infty}(-1)^{-ln} J_{lN}(-kRW_n) e^{jlN\theta_B} \nonumber \\
	&=2\pi\sum_{n=0}^{N-1}J_0(kRW_n) \sum_{l=-\infty}^{\infty}(-1)^{ln+lN} J_{lN}(kRW_n) e^{jlN\theta_B}. 
\end{align}
When $n=0$, $kRW_0=0$. 
Thus, $J_0(kRW_0)=1$ and $J_{lN}(kRW_0)=0$, $l\neq 0$.
So 
\begin{align}\label{eq:appdx_bessel_A0C}
	A_0=2\pi+2\pi\sum_{n=1}^{N-1}J_0(kRW_n) \sum_{l=-\infty}^{\infty}(-1)^{ln+lN} J_{lN}(kRW_n) e^{jlN\theta_B}. 
\end{align}

\section{Further Derivation for Pattern Area of UCA}
\label{appdx:bessel:bxv}

The expression of $A_{0,C}$ for UCA in (\ref{eq:appdx_bessel_A0C}) has exponential term $e^{jlN\theta_B}$.
Unlike the expression of $A_{0,L}$ in (\ref{eq:appdx_bessel_ieur}), it is hard to tell $A_{0,C}$ is real or complex.
Since $A_0$ is the pattern area, which should be a real number.
This section further derives the expression of $A_{0,C}$ to get a real-number expression.

The summation term in (\ref{eq:appdx_bessel_A0C}), denoted by $A_{0,C,n}$, can be written by
\begin{align}
	A_{0,C,n}&=2\pi J_0(kRW_n) \sum_{l=-\infty}^{\infty}(-1)^{ln+lN} J_{lN}(kRW_n) e^{jlN\theta_B},  
\end{align}
for $n=1,...,N-1$.
To simply $A_{0,C,n}$, the complex component $e^{jlN\theta_B}$ should be converted into some other form.
The simplest one to exploit is $e^{jx}+e^{-jx}=2\cos x$.
To this end, imagine to fold the $l$ axis at the middle point $l=0$.
One problem could be that for even or odd $N$, the term $(-1)^{lN}$ makes a big difference.
Therefore, to begin with, the simplification of $A_{0,C,n}$ for even and odd $N$ is discussed separately.

When $N$ is even, $(-1)^{lN}=1$ for any $l$.
Thus, it can be derived that
\begin{align}
	A_{0,C,n}=2\pi J_0(kRW_n) \sum_{l=-\infty}^{\infty}(-1)^{ln} J_{lN}(kRW_n) e^{jlN\theta_B}.
\end{align}
Then, combine the summation terms regarding to $+l$ and $-l$,
\begin{align}\label{eq:appdx_bessel_A0C_n_even}
	A_{0,C,n}&=2\pi J_0(kRW_n)[J_0(kRW_n)+\sum_{l=-\infty}^{-1}(-1)^{ln} J_{lN}(kRW_n) e^{jlN\theta_B} \nonumber \\
	&\qquad\qquad\qquad\qquad\qquad+\sum_{l=1}^{\infty}(-1)^{ln} J_{lN}(kRW_n) e^{jlN\theta_B}] \nonumber \\
  &=2\pi J_0(kRW_n)[J_0(kRW_n)+\sum_{l=1}^{\infty}(-1)^{-ln} J_{-lN}(kRW_n) e^{-jlN\theta_B} \nonumber \\
	&\qquad\qquad\qquad\qquad\qquad+\sum_{l=1}^{\infty}(-1)^{ln} J_{lN}(kRW_n) e^{jlN\theta_B}] \nonumber \\
	&=2\pi J_0(kRW_n)[J_0(kRW_n)+\sum_{l=1}^{\infty}(-1)^{ln} (-1)^{lN}J_{lN}(kRW_n) e^{-jlN\theta_B} \nonumber \\
	&\qquad\qquad\qquad\qquad\qquad+\sum_{l=1}^{\infty}(-1)^{ln} J_{lN}(kRW_n) e^{jlN\theta_B}] \nonumber \\
	&=2\pi J_0(kRW_n)[J_0(kRW_n)+\sum_{l=1}^{\infty}(-1)^{ln} J_{lN}(kRW_n) (e^{jlN\theta_B}+e^{-jlN\theta_B})] \nonumber \\
	&=2\pi J_0(kRW_n)[J_0(kRW_n)+2\sum_{l=1}^{\infty}(-1)^{ln} J_{lN}(kRW_n)\cos(lN\theta_B)] \nonumber \\
	&=2\pi J_0^2(kRW_n)+4\pi J_0(kRW_n)\sum_{l=1}^{\infty}(-1)^{ln} J_{lN}(kRW_n)\cos(lN\theta_B).
\end{align}
Thus, 
\begin{align}\label{eq:appdx_bessel_A0C_even}
	A_{0,C}=2\pi+2\pi \sum_{n=1}^{N-1} J_0^2(kRW_n)+4\pi\sum_{n=1}^{N-1} J_0(kRW_n)\sum_{l=1}^{\infty}(-1)^{ln} J_{lN}(kRW_n)\cos(lN\theta_B).
\end{align}

When $N$ is odd, the simplification of $A_{0,C,n}$ is more complex.
The property that $W_n=W_{N-n}$ is needed,
\begin{align}
	A_{0,C,N-n}&=2\pi J_0(kRW_{N-n}) \sum_{l=-\infty}^{\infty}(-1)^{l(N-n)+lN} J_{lN}(kRW_{N-n}) e^{jlN\theta_B} \nonumber \\
		         &=2\pi J_0(kRW_n) \sum_{l=-\infty}^{\infty}(-1)^{-ln+2lN} J_{lN}(kRW_n) e^{jlN\theta_B} \nonumber \\
						 &=2\pi J_0(kRW_n) \sum_{l=-\infty}^{\infty}(-1)^{ln} J_{lN}(kRW_n) e^{jlN\theta_B}.
\end{align}
Thus, it can be derived that
\begin{align}
	A_{0,C,n}+A_{0,C,N-n}=2\pi J_0(kRW_n) \sum_{l=-\infty}^{\infty}(-1)^{ln}[1+(-1)^{lN}] J_{lN}(kRW_n) e^{jlN\theta_B}.
\end{align}
Because $N$ is odd, when $l$ is odd, $1+(-1)^{lN}=0$; when $l$ is even, $1+(-1)^{lN}=2$. 
Then, it can be derived that
\begin{align}
	A_{0,C,n}+A_{0,C,N-n}&=4\pi J_0(kRW_n) \sum_{l=-\infty}^{\infty}(-1)^{2ln} J_{2lN}(kRW_n) e^{j2lN\theta_B} \nonumber \\
	                     &=4\pi J_0(kRW_n) \sum_{l=-\infty}^{\infty}J_{2lN}(kRW_n) e^{j2lN\theta_B} \nonumber \\
	                     &=4\pi J_0(kRW_n)[J_0(kRW_n)+\sum_{l=-\infty}^{-1} J_{2lN}(kRW_n) e^{j2lN\theta_B} \nonumber \\
											 &\qquad\qquad\qquad\qquad\qquad   +\sum_{l=1}^{\infty} J_{2lN}(kRW_n) e^{j2lN\theta_B}]  \nonumber \\
											 &=4\pi J_0(kRW_n)[J_0(kRW_n)+\sum_{l=1}^{\infty} J_{-2lN}(kRW_n) e^{-j2lN\theta_B} \nonumber \\
											 &\qquad\qquad\qquad\qquad\qquad   +\sum_{l=1}^{\infty} J_{2lN}(kRW_n) e^{j2lN\theta_B}]  \nonumber \\
											 &=4\pi J_0(kRW_n)[J_0(kRW_n)+2\sum_{l=1}^{\infty} J_{2lN}(kRW_n) \cos(2lN\theta_B)] \nonumber \\
											 &=4\pi J_0^2(kRW_n)+8\pi J_0(kRW_n)\sum_{l=1}^{\infty} J_{2lN}(kRW_n) \cos(2lN\theta_B).
\end{align}
Thus, 
\begin{align}\label{eq:appdx_bessel_A0C_odd}
	A_{0,C}&=2\pi+\sum_{n=1}^{N-1}A_{0,C,n}
	=2\pi+\sum_{n=1}^{\frac{N-1}{2}}A_{0,C,n}+\sum_{\frac{N+1}{2}}^{N-1}A_{0,C,n} \nonumber \\
	&=2\pi+4\pi\sum_{n=1}^{\frac{N-1}{2}} J_0^2(kRW_n)+8\pi\sum_{n=1}^{\frac{N-1}{2}} J_0(kRW_n)\sum_{l=1}^{\infty} J_{2lN}(kRW_n) \cos(2lN\theta_B).
\end{align}
To compare with $A_{0,C}$ in (\ref{eq:appdx_bessel_A0C_even}) when $N$ is even, (\ref{eq:appdx_bessel_A0C_odd}) can be also written by
\begin{align}\label{eq:appdx_bessel_A0C_odd2}
	A_{0,C}=2\pi+2\pi\sum_{n=1}^{N-1} J_0^2(kRW_n)+4\pi\sum_{n=1}^{N-1} J_0(kRW_n)\sum_{l=1}^{\infty} J_{2lN}(kRW_n) \cos(2lN\theta_B).
\end{align}

\section{Proof of Proposition\,\ref{th:chp4_HPBW_C}}
\label{appdx:bessel:yter}

In this section, the subscript $_C$ is omitted for convenience.
The array factor $G(\theta,\theta_B)$ for UCA is given,
\begin{align}
	G(\theta,\theta_B)=\frac{1}{\sqrt{N}}\sum_{i=1}^N e^{jkR[\cos(\theta_B-\psi_i)-\cos(\theta-\psi_i)]},
\end{align}
where $k=\frac{2\pi}{\lambda}$ and $\psi_i=2\pi(i-1)/N$.
Then, $G(\theta,\theta_B)$ is derived in form of summation of Bessel functions.

According to $\cos a-\cos b=-2\sin(\frac{a+b}{2})\sin(\frac{a-b}{2})$,
\begin{align}
	&\cos(\theta_B-\psi_i)-\cos(\theta-\psi_i) \nonumber \\
	=&-2\sin(\frac{\theta_B-\psi_i+\theta-\psi_i}{2})\sin(\frac{\theta_B-\psi_i-\theta+\psi_i}{2}) \nonumber \\
	=&2\sin(\frac{\theta_B+\theta}{2}-\psi_i)\sin(\frac{\theta-\theta_B}{2}). \nonumber \\
\end{align}
Then, according to the Jacobi-Anger expansion, $e^{j\alpha\sin\gamma}=\sum_{n=-\infty}^{\infty}J_n(\alpha)e^{jn\gamma}$, $G(\theta,\theta_B)$ can be written as
\begin{align}
	G(\theta,\theta_B)&=\frac{1}{\sqrt{N}}\sum_{i=1}^N e^{j2kR\sin(\frac{\theta_B+\theta}{2}-\psi_i)\sin(\frac{\theta-\theta_B}{2})} \nonumber \\
	&=\frac{1}{\sqrt{N}}\sum_{i=1}^N 
	\sum_{n=-\infty}^{\infty}J_n(2kR\sin\frac{\theta-\theta_B}{2})e^{jn(\frac{\theta_B+\theta}{2}-\psi_i)} \nonumber \\
	&=\frac{1}{\sqrt{N}} 
	\sum_{n=-\infty}^{\infty}J_n(2kR\sin\frac{\theta-\theta_B}{2}) e^{jn\frac{\theta_B+\theta}{2}} \sum_{i=1}^N e^{-jn\psi_i} \nonumber \\
	&=\frac{1}{\sqrt{N}} 
	\sum_{n=-\infty}^{\infty}J_n(2kR\sin\frac{\theta-\theta_B}{2}) e^{jn\frac{\theta_B+\theta}{2}} \sum_{i=0}^{N-1} e^{-j2\pi\frac{n}{N}i}.
\end{align}
When $n=lN$, $l\in\mathbb{Z}$, $\sum_{i=0}^{N-1} e^{-j2\pi\frac{n}{N}i}=N$.
When $n\neq lN$, $\sum_{i=0}^{N-1} e^{-j2\pi\frac{n}{N}i}=0$.
Thus, it can be derived that
\begin{align}\label{eq:appdx_bessel_AF_C}
	G(\theta,\theta_B)&=\sqrt{N}	\sum_{l=-\infty}^{\infty}J_{lN}(2kR\sin\frac{\theta-\theta_B}{2}) e^{jlN\frac{\theta_B+\theta}{2}}.
\end{align}
Although the above derivation to get (\ref{eq:appdx_bessel_AF_C}) is similar to \cite{balanis2005antenna}, (\ref{eq:appdx_bessel_AF_C}) is simpler than the expression given in \cite{balanis2005antenna}.

Next, a (loose) constraint will be derived for $N$ that satisfies that the mainbeam of the pattern is dominated by the $l=0$ term, i.e., 
\begin{align}
	G(\theta,\theta_B)\approx\sqrt{N}J_0(2kR\sin\frac{\theta-\theta_B}{2}), \qquad\theta_{FN,-}<\theta<\theta_{FN,+},
\end{align}
where $\theta_{FN,\pm}$ are the first nulls of the pattern.

Let $x=2kR\sin\frac{\theta-\theta_B}{2}$.
Then $G(\theta,\theta_B)$ is 
\begin{align}
	G(\theta,\theta_B)&=\sqrt{N}\sum_{l=-\infty}^{\infty}J_{lN}(x) e^{jlN\frac{\theta_B+\theta}{2}}.
\end{align}
Except $J_0(x)$, $J_N(x)$ is the dominant term among the rest terms.
It is known that the first zero of $J_0(x)$ is 2.4048.
Therefore, if $J_N(2.4048)$ is small enough, in the range $x\in[0,2.4048]$, then $G(\theta,\theta_B)\approx\sqrt{N}J_0(2kR\sin\frac{\theta-\theta_B}{2})$ is valid.

Examples of $J_N(x)$ is shown in Fig.\,\ref{fig:appdx_bessel_functions}.
It can be seen that $J_4(2.4048)$ is already much less than 1.
It can be deduced that for most $N\ge4$, $J_N(2.4048)$ is negligible.

\begin{figure}
\centering
\includegraphics[scale=0.9]{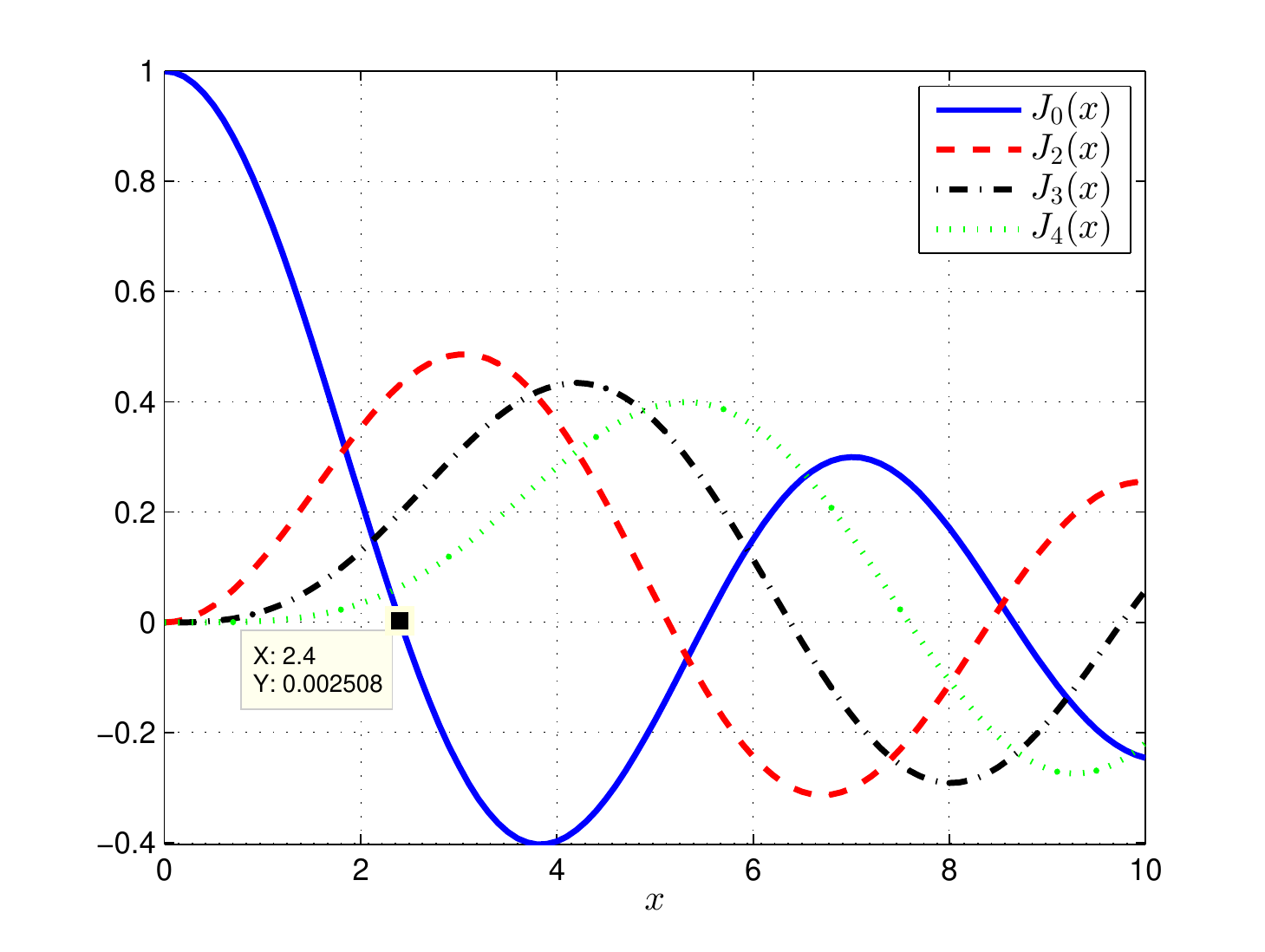}
\caption{$J_{N}(x)$, $N=0,2,4$.}
\label{fig:appdx_bessel_functions}
\end{figure}

When $N\ge4$, the first nulls $\theta_{FN,\pm}$ can be approximately calculated by
\begin{align}
	|2kR\sin\frac{\theta_{FN,\pm}-\theta_B}{2}|=2.4048.
\end{align}
Thus, 
\begin{align}
	\theta_{FN,\pm}=\theta_B\pm2\arcsin\frac{2.4048}{2kR}.
\end{align}

It is now easy to calculate the half-power points $\theta_{\pm}$, because half-power points are within the range $\theta\in[\theta_{FN,-},\theta_{FN,+}]$.
Thus, the values of $\theta_{\pm}$ can be calculated from 
\begin{align}
	J_0(|2kR\sin\frac{\theta_{\pm}-\theta_B}{2}|)=1/\sqrt{2}.
\end{align}
It is known that $J_0(1.1264)=1/\sqrt{2}$.
Therefore, $\theta_{\pm}=\theta_B\pm2\arcsin\frac{1.1264}{2kR}$ and 
\begin{align}\label{eq:appdx_bessel_irue}
	\Delta \theta_{HP,C}=\theta_{+}-\theta_{-}=4\arcsin\frac{1.1264}{2kR}.
\end{align}

\section{Proof of Lemma\,\ref{le:chp4_arcsin}}
\label{appdx:bessel:nvoeor}

First, obtain the inverse function of $y=\arcsin(x)$, i.e., $x=\sin(y)$.
It is well known that when $y\to 0$, $x=\sin(y)=y$.
Similarly, $\sin(y/2)=y/2=x/2$.
Thus, it can be derived that $x=2\sin(y/2)$ and $x=\sin(y)$.  
Now invert the previous equations, $y=2\arcsin(x/2)=\arcsin(x)$.
The comparison between these two functions are shown in Fig.\,\ref{fig:appdx_bessel_arcsin}.
It can be seen that for $|x|<0.5$, $2\arcsin(x/2)\approx\arcsin(x)$.
Thus, for $|x|\ll 1$, $2\arcsin(x/2)\approx\arcsin(x)$.

\begin{figure}
\centering
\includegraphics[scale=0.9]{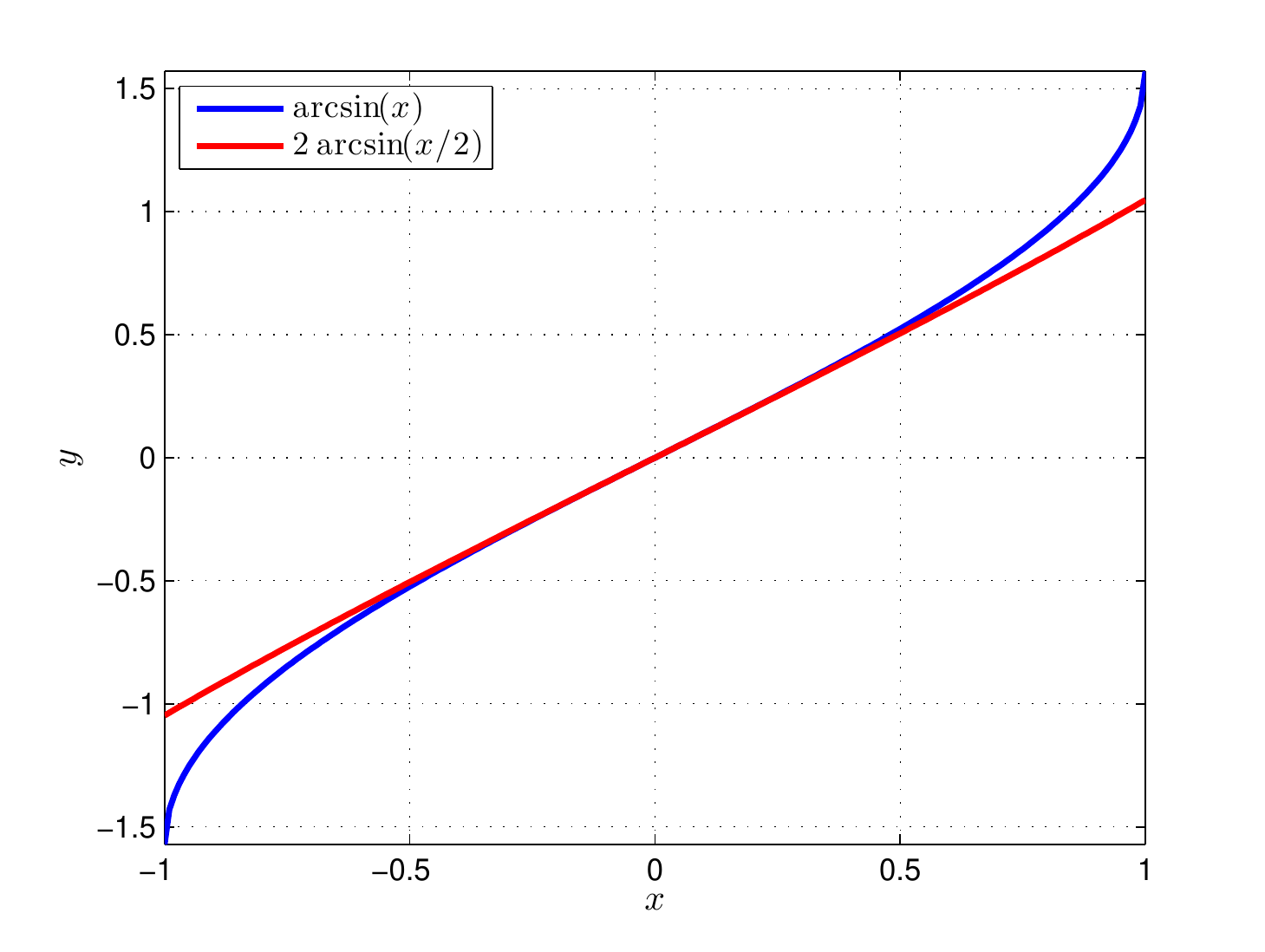}
\caption{$\arcsin(x)$ and $2\arcsin(x/2)$.}
\label{fig:appdx_bessel_arcsin}
\end{figure}

\section{One-Dimension MMSE Solution}
\label{appdx:bessel:mnkeorie}

This section solves the problem in (\ref{eq:chp5_mmse1}),
\begin{align}
	R_{opt}=\arg\min_R err(R),
\end{align}
where 
\begin{align}
	err(R)=\mathbb{E}_{\theta_B}[(\bar{p}_C-\bar{p}_{C,\text{min}})^2].
\end{align}
Because $\bar{p}_C-\bar{p}_{C,\text{min}}\geq 0$, a simplified, first-order error is used to replace $err(R)$, which is denoted by $err_1(R)$ and is given by
\begin{align}
	err_1(R)=\mathbb{E}_{\theta_B}[\bar{p}_C-\bar{p}_{C,\text{min}}].
\end{align}
Thus, the problem of finding the optimum $R$ can be written as 
\begin{align}\label{eq:appdx_bessel_mmse1}
	R_{opt}=\arg\min_R err_1(R).
\end{align}

Because $\theta_B\sim \mathcal{U}(0,2\pi)$, $err_1(R)$ can be calculated by
\begin{align}
	err_1(R)=\frac{1}{2\pi}\int_{0}^{2\pi}(\bar{p}_C-\bar{p}_{C,\text{min}})\,\mathrm{d}\theta_B.
\end{align}

To solve (\ref{eq:appdx_bessel_mmse1}), the zeros of the partial derivative of $err_1(R)$ with respect to $R$ are calculated,
\begin{align}
	&\frac{\partial}{\partial R}err_1(R)=0 \\
	\Rightarrow &\frac{\partial}{\partial R} \frac{1}{2\pi}\int_{0}^{2\pi}(\bar{p}_C-\bar{p}_{C,\text{min}})\,\mathrm{d}\theta_B=0 \\
	\Rightarrow & \frac{1}{2\pi}\int_{0}^{2\pi}(\frac{\partial}{\partial R}\bar{p}_C-\frac{\partial}{\partial R}\bar{p}_{C,\text{min}})\,\mathrm{d}\theta_B=0.
\end{align}
Because $\bar{p}_{C,\text{min}}$ is a fixed value for certain $\theta_B$ and only depends on $\theta_B$, the partial derivative $\frac{\partial}{\partial R}\bar{p}_{C,\text{min}}=0$.
Thus, it can be derived that
\begin{align}
	&\frac{\partial}{\partial R}err_1(R)=0 \\
	\Rightarrow  &\frac{1}{2\pi}\int_{0}^{2\pi}\frac{\partial}{\partial R}\bar{p}_C\,\mathrm{d}\theta_B=0 \\
  \Rightarrow  &\frac{\partial}{\partial R}\frac{1}{2\pi}\int_{0}^{2\pi}\bar{p}_C\,\mathrm{d}\theta_B=0 \\
	\Rightarrow  &\frac{\partial}{\partial R}\bar{\bar{p}}_C=0, 
\end{align}
where $\bar{\bar{p}}_C$ is the averaged SSOP over Bob's angle and is defined by
\begin{align}
	\bar{\bar{p}}_C=\frac{1}{2\pi}\int_{0}^{2\pi}\bar{p}_C\,\mathrm{d}\theta_B.
\end{align}
Thus, (\ref{eq:appdx_bessel_mmse1}) can be converted into
\begin{align}\label{eq:appdx_bessel_mmse2}
	R_{opt}=\arg\min_R \bar{\bar{p}}_C.
\end{align}

\section{Two-Dimension MMSE Solution}
\label{appdx:bessel:owgvow}

This section solves the problem in (\ref{eq:chp5_mmse3}),
\begin{align}\label{eq:appdx_bessel_mmse3}
	R_{opt}=\arg\min_R err_2(R),
\end{align}
where 
\begin{align}\label{eq:appdx_bessel_meanErr}
	err_2(R)=\frac{1}{S}\int_{0}^{2\pi}\int_{0}^{d_{max}}d_B(\bar{p}_C-\bar{p}_{C,\text{min}})^2\,\mathrm{d}d_B\,\mathrm{d}\theta_B.
\end{align}
Because $\bar{p}_C-\bar{p}_{C,\text{min}}\geq 0$, a simplified, first-order error is used to replace $err_2(R)$, which is denoted by $err_3(R)$ and is given by
\begin{align}
	err_3(R)=\frac{1}{S}\int_{0}^{2\pi}\int_{0}^{d_{max}}d_B(\bar{p}_C-\bar{p}_{C,\text{min}})\,\mathrm{d}d_B\,\mathrm{d}\theta_B.
\end{align}
Thus, the problem of finding the optimum $R$ can be written as 
\begin{align}\label{eq:appdx_bessel_mmse4}
	R_{opt}=\arg\min_R err_3(R).
\end{align}

In this case, $\bar{p}_{C,\text{min}}$ is the minimum $\bar{p}_C$ in zone $k$, $k=1,2,...$, at angle $\theta_B$.
For convenience of mathematical derivation, the example of the coverage zones shown in Fig.\,\ref{fig:chp5_N_zone} is used.

To solve (\ref{eq:appdx_bessel_mmse4}), the zeros of the partial derivative of $err_3(R)$ are calculated.
Using $\frac{\partial}{\partial R}\bar{p}_{C,\text{min}}=0$, it can be derived that
\begin{align}
	&\frac{\partial}{\partial R}err_3(R)=0 \\
	\Rightarrow  &\frac{\partial}{\partial R}\frac{1}{S}\int_{0}^{2\pi}\int_{0}^{d_{max}}d_B\bar{p}_C\,\mathrm{d}d_B\,\mathrm{d}\theta_B=0. 
\end{align}
For any $d_B$, $\bar{p}_C$ is the same for the same zone. 
Let $\bar{p}_C^{(k)}$ denote the value of $\bar{p}_C$ in zone $k$, $k=1,2,3$.
For certain $\theta_B$, $\bar{p}_C^{(k)}$ is a constant in zone $k$.
Thus, it can be derived that
\begin{align}
	            &\frac{\partial}{\partial R}\frac{1}{S}\int_{0}^{2\pi}\int_{0}^{d_{max}}d_B\bar{p}_C\,\mathrm{d}d_B\,\mathrm{d}\theta_B=0 \\
	\Rightarrow &\frac{\partial}{\partial R}\frac{1}{S}\int_{0}^{2\pi}\Big(     
					                                                                       \int_{0}^{d_{th,2}}d_B\bar{p}_C^{(1)}\,\mathrm{d}d_B+
																																								 \int_{d_{th,2}}^{d_{th,4}}d_B\bar{p}_C^{(2)}\,\mathrm{d}d_B+
																																								 \int_{d_{th,4}}^{d_{max}}d_B\bar{p}_C^{(3)}\,\mathrm{d}d_B
															                                                                \Big)\,\mathrm{d}\theta_B=0 \label{eq:appdx_bessel_ierueqqq} \\
  \Rightarrow &\frac{\partial}{\partial R}\frac{1}{S}\int_{0}^{2\pi}\Big(     
					                                                                       \bar{p}_C^{(1)}\int_{0}^{d_{th,2}}d_B\,\mathrm{d}d_B+
																																								 \bar{p}_C^{(2)}\int_{d_{th,2}}^{d_{th,4}}d_B\,\mathrm{d}d_B+
																																								 \bar{p}_C^{(3)}\int_{d_{th,4}}^{d_{max}}d_B\,\mathrm{d}d_B
															                                                                \Big)\,\mathrm{d}\theta_B=0 \\
	\Rightarrow &\frac{\partial}{\partial R}\frac{1}{S}\int_{0}^{2\pi}\Big(     
					                                                                       \bar{p}_C^{(1)}\frac{d^2_{th,2}}{2}+
																																								 \bar{p}_C^{(2)}\frac{d^4_{th,2}-d^2_{th,2}}{2}+
																																								 \bar{p}_C^{(3)}\frac{d^4_{max}-d^2_{th,2}}{2}
															                                                                \Big)\,\mathrm{d}\theta_B=0. \label{eq:appdx_bessel_yert}
\end{align}
In addition, let $\bar{\bar{p}}_C^{(k)}$ denote the average value of $\bar{p}_C^{(k)}$ over $\theta_B\in[0,2\pi]$,
\begin{align}
	\bar{\bar{p}}_C^{(k)}=\frac{1}{2\pi}\int_{0}^{2\pi}\bar{p}_C^{(k)}\,\mathrm{d}\theta_B.
\end{align}
$d_{th,2}$, $d_{th,4}$ and $d_{max}$ are regardless of $\theta_B$. According to (\ref{eq:chp5_zone1_area}) to (\ref{eq:chp5_zone_prob}), (\ref{eq:appdx_bessel_yert}) can be converted into
\begin{align}
&\frac{\partial}{\partial R}\Big(\frac{1}{S}2\pi\frac{d^2_{th,2}}{2} \bar{\bar{p}}_C^{(1)}+
					                                    2\pi\frac{d^4_{th,2}-d^2_{th,2}}{2}\bar{\bar{p}}_C^{(2)}+                                   
    																				  2\pi\frac{d^4_{max}-d^2_{th,2}}{2}\bar{\bar{p}}_C^{(3)}\Big)=0 \\
&\Rightarrow \frac{\partial}{\partial R}\Big(q^{(1)}\bar{\bar{p}}_C^{(1)}+q^{(2)}\bar{\bar{p}}_C^{(2)}+q^{(3)}\bar{\bar{p}}_C^{(3)}\Big)=0. \label{eq:appdx_bessel_iweu}
\end{align}
Thus, (\ref{eq:appdx_bessel_mmse4}) can be converted into
\begin{align}\label{eq:appdx_bessel_mmse5}
	R_{opt}=\arg\min_R \Big(q^{(1)}\bar{\bar{p}}_C^{(1)}+q^{(2)}\bar{\bar{p}}_C^{(2)}+q^{(3)}\bar{\bar{p}}_C^{(3)}\Big).
\end{align}

\section{Proof of Theorem\,\ref{th:chp5_p2bar_up}}
\label{appdx:bessel:oieor}

There are two methods to obtain the upper bound $\bar{\bar{p}}_{up,C}$.
Both methods exploit Jensen's inequalities in (\ref{eq:chp3_JI_1}) and (\ref{eq:chp3_JI_2}).
In this section, the subscript $_C$ is omitted for convenience.


For the convenience of analyzing the tightness of the upper bound, $\bar{\bar{p}}_{up}$ can be derived based on $\bar{p}\leq \bar{p}_{up}$.
Using (\ref{eq:chp3_A_0}), it can be derived that
\begin{align}
	\bar{\bar{p}}&=\mathbb{E}_{\theta_B}[\bar{p}] \leq \mathbb{E}_{\theta_B}[\bar{p}_{up}] \nonumber \\
	&= 1-\mathbb{E}_{\theta_B}\Big[\text{exp}\Big\{-\lambda_e\pi\Big[\frac{c_0K}{2\pi(K+1)}A_0+\frac{c_0}{K+1}\Big]^{\frac{2}{\beta}}\Big\}\Big].
\end{align}
Using (\ref{eq:chp3_JI_1}) and (\ref{eq:chp3_JI_2}), it can be derived that
\begin{align}
	&1-\mathbb{E}_{\theta_B}\Big[\text{exp}\Big\{-\lambda_e\pi\Big[\frac{c_0K}{2\pi(K+1)}A_0+\frac{c_0}{K+1}\Big]^{\frac{2}{\beta}}\Big\}\Big] \label{eq:appdx_bessel_p2bar_derive1}\\
	<& 1-\text{exp}\Big\{-\lambda_e\pi\mathbb{E}_{\theta_B}\Big[\Big[\frac{c_0K}{2\pi(K+1)}A_0+\frac{c_0}{K+1}\Big]^{\frac{2}{\beta}}\Big]\Big\} \label{eq:appdx_bessel_p2bar_derive2} \\	
	\leq & 1-\text{exp}\Big\{-\lambda_e\pi\Big[\frac{c_0K}{2\pi(K+1)}\mathbb{E}_{\theta_B}[A_0]+\frac{c_0}{K+1}\Big]^{\frac{2}{\beta}}\Big\}.	\label{eq:appdx_bessel_p2bar_derive3}
\end{align}
The equality in (\ref{eq:appdx_bessel_p2bar_derive2}) does not hold because $\theta_B$ is random in this case.
Then, $\bar{\bar{p}}_{up}$ can be obtained by
\begin{align}\label{eq:appdx_bessel_p2bar_up}
	\bar{\bar{p}}_{up} = 1-\text{exp}\Big\{-\lambda_e \pi \Big[ \frac{c_0K\bar{A}_0}{2\pi(K+1)}+\frac{c_0}{K+1}  \Big]^{\frac{2}{\beta}}  \Big\},
\end{align}
where $\bar{A}_0$ is the expectation of $A_0$ over $\theta_B$ and is given by
\begin{align}\label{eq:appdx_bessel_meanA_0}
	\bar{A}_0=\mathbb{E}_{\theta_B}[A_0]=\frac{1}{2\pi}\int_0^{2\pi} A_0 \,\mathrm{d}\theta_B.
\end{align}
In this method, $A_0$ is treated throughout the derivation.
Based on the conclusion in Appendix\,\ref{appdx:bessel:uirtnvb}, the less random $A_0$ is, the tighter the bound is.
Because for the UCA, $A_0$ is much more constant with respect to $\theta_B$ than that of the ULA, as discussed in Section\,\ref{chp4:sec3:opbwec541}, the upper bound is tight.

The second method is shown in the following.
Substituting (\ref{eq:chp3_meanSSOP_Ri_0}) into (\ref{eq:chp5_p2bar_1}), $\bar{\bar{p}}$ is written by
\begin{align}\label{eq:appdx_bessel_p2bar}
	\bar{\bar{p}}=\mathbb{E}_{\theta_B}[1-\mathbb{E}_{|\tilde{h}|}[e^{-\lambda_eA}]]=1-\mathbb{E}_{\theta_B}[\mathbb{E}_{|\tilde{h}|}[e^{-\lambda_eA}]].
\end{align}
In fact, due to the similarity between the expressions of $\bar{\bar{p}}$ in (\ref{eq:appdx_bessel_p2bar}) and $\bar{p}$ in (\ref{eq:chp3_meanSSOP_Ri_0}), the same method in Section\,\ref{chp3:metric:bounds} can be reused here to obtain the upper bound for $\bar{\bar{p}}$.

Substituting (\ref{eq:chp3_meanSSOP_up_inequality}) into (\ref{eq:appdx_bessel_p2bar}), it can be derived that
\begin{align}\label{eq:appdx_bessel_p2bar_ineq1}
	\bar{\bar{p}}&=\mathbb{E}_{\theta_B}[1-\mathbb{E}_{|\tilde{h}|}[e^{-\lambda_eA}]] \nonumber \\
	&\leq  \mathbb{E}_{\theta_B}[1-e^{-\lambda_e\mathbb{E}_{|\tilde{h}|}[A]}] \nonumber \\
	&=1-\mathbb{E}_{\theta_B}[e^{-\lambda_e\mathbb{E}_{|\tilde{h}|}[A]}]. 
\end{align}
The equality holds only for the deterministic channel.
Using (\ref{eq:chp3_JI_1}), it can be further derived as
\begin{align}\label{eq:appdx_bessel_p2bar_ineq2}
 1-\mathbb{E}_{\theta_B}[e^{-\lambda_e\mathbb{E}_{|\tilde{h}|}[A]}]
< 1-\text{exp}\{-\lambda_e\mathbb{E}_{\theta_B}[\mathbb{E}_{|\tilde{h}|}[A]]\}.
\end{align}
The equality does not hold because $\theta_B$ is regarded random in this case.
Substituting $\mathbb{E}_{|\tilde{h}|}[|\tilde{h}|^2]=\frac{KG^2(\theta,\theta_B)+1}{K+1}$ into (\ref{eq:chp3_meanA_up}), it can be derived that
\begin{align}\label{eq:appdx_bessel_meanA}
	\mathbb{E}_{|\tilde{h}|}[A] \leq \pi\Big[
	\frac{c_0K}{2\pi(K+1)}\int_0^{2\pi}G^2(\theta,\theta_B)\,\mathrm{d}\theta
	+\frac{c_0}{K+1} \Big]^{\frac{2}{\beta}}.
\end{align}
When $\beta=2$, the equality holds.
For convenience, $A_0$ is used to replace $\int_0^{2\pi}G^2(\theta,\theta_B)\,\mathrm{d}\theta$.
Then, it can be derived that
\begin{align}\label{eq:appdx_bessel_p2bar_ineq3}
	\mathbb{E}_{\theta_B}[\mathbb{E}_{|\tilde{h}|}[A]] 
	\leq \pi\mathbb{E}_{\theta_B}\Big[	\Big[ \frac{c_0KA_0}{2\pi(K+1)}+\frac{c_0}{K+1} \Big]^{\frac{2}{\beta}} \Big]. 
\end{align}
Using (\ref{eq:chp3_JI_2}), it can be derived that
\begin{align}\label{eq:appdx_bessel_p2bar_ineq4}
	\pi\mathbb{E}_{\theta_B}\Big[	\Big[ \frac{c_0KA_0}{2\pi(K+1)}+\frac{c_0}{K+1} \Big]^{\frac{2}{\beta}} \Big] 
	\leq \pi \Big[\mathbb{E}_{\theta_B}	\Big[ \frac{c_0KA_0}{2\pi(K+1)}+\frac{c_0}{K+1}  \Big] \Big]^{\frac{2}{\beta}}.
\end{align}
When $\beta=2$, the equality holds.
Combining (\ref{eq:appdx_bessel_p2bar_ineq1}), (\ref{eq:appdx_bessel_p2bar_ineq2}), (\ref{eq:appdx_bessel_p2bar_ineq3}) and (\ref{eq:appdx_bessel_p2bar_ineq4}), it can be derived that
\begin{align}\label{eq:appdx_bessel_p2bar_ineq5}
	\bar{\bar{p}}	&\leq 1-\mathbb{E}_{\theta_B}[e^{-\lambda_e\mathbb{E}_{|\tilde{h}|}[A]}] \nonumber \\
  &< 1-\text{exp}\{-\lambda_e\mathbb{E}_{\theta_B}[\mathbb{E}_{|\tilde{h}|}[A]]\} \nonumber \\
	&\leq 1-\text{exp}\Big\{ -\lambda_e \pi\mathbb{E}_{\theta_B}\Big[	\Big[ \frac{c_0KA_0}{2\pi(K+1)}+\frac{c_0}{K+1} \Big]^{\frac{2}{\beta}} \Big] \Big\} \nonumber \\
	&\leq 1-\text{exp}\Big\{-\lambda_e \pi \Big[\mathbb{E}_{\theta_B}	\Big[ \frac{c_0KA_0}{2\pi(K+1)}+\frac{c_0}{K+1}  \Big] \Big]^{\frac{2}{\beta}}  \Big\}.
\end{align}

Then, (\ref{eq:appdx_bessel_p2bar_ineq5}) can be written as
\begin{align}\label{eq:appdx_bessel_p2bar_ineq6}
	\bar{\bar{p}}	< 1-\text{exp}\Big\{-\lambda_e \pi \Big[ \frac{c_0K\bar{A}_0}{2\pi(K+1)}+\frac{c_0}{K+1}  \Big]^{\frac{2}{\beta}}  \Big\}.
\end{align}
Then, the upper bound $\bar{\bar{p}}_{up}$ can be written as
\begin{align}\label{eq:appdx_bessel_p2bar_up_3}
	\bar{\bar{p}}_{up} = 1-\text{exp}\Big\{-\lambda_e \pi \Big[ \frac{c_0K\bar{A}_0}{2\pi(K+1)}+\frac{c_0}{K+1}  \Big]^{\frac{2}{\beta}}  \Big\}.
\end{align}
It can be seen that for deterministic channel, the upper bound $\bar{\bar{p}}_{up}$ is tighter when $\beta=2$ than that when $\beta>2$.

\section{Proof of Proposition\,\ref{prop:chp5_tightness}}
\label{appdx:bessel:uirtnvb}

First, the tightness of (\ref{eq:chp3_JI_1}) is examined.
Let $X$ be a random variable that is $X\sim \mathcal{U}(a,b)$.
Thus, it can be derived that 
\begin{align}
  &\mathbb{E}[e^X]=\frac{1}{b-a}\int_a^b e^x\,\mathrm{d}x=\frac{e^b-e^a}{b-a} \\
	&e^{\mathbb{E}[X]}=e^{\frac{a+b}{2}}.
\end{align}
Let the ratio $\eta_1$ define by the tightness of this inequality.
\begin{align}
	\eta_1=\frac{\mathbb{E}[e^X]}{e^{\mathbb{E}[X]}}=\frac{e^b-e^a}{b-a}e^{-\frac{a+b}{2}}=\frac{e^{\frac{b-a}{2}}-e^{-\frac{b-a}{2}}}{b-a}.
\end{align}
For convenience, let $y=b-a>0$.
\begin{align}
	\eta_1=\frac{e^{\frac{y}{2}}-e^{-\frac{y}{2}}}{y}.
\end{align}
According to the quotient rule of the derivative, the derivative of $\eta_1$ with respect to $y$ is 
\begin{align}\label{eq:appdx_bessel_eta1}
	\frac{\mathrm{d}\eta_1}{\mathrm{d}y}=\frac{(-1+\frac{y}{2})e^{\frac{y}{2}}-(-1-\frac{y}{2})e^{-\frac{y}{2}}}{y^2}.
\end{align}
Let $g(y)=(-1+\frac{y}{2})e^{\frac{y}{2}}$.
It is obvious that $g(y)$ is a monotonic increasing function for $y\in\mathbb{R}$.
Thus, for $y>0$, $g(y)-g(-y)>0$.
As a results, $\frac{\mathrm{d}\eta_1}{\mathrm{d}y}>0$.

It can be concluded that the larger $y=b-a$ is, the larger $\eta_1$ is.
In other words, the larger the range of $X$ is, the less tighter the inequality is.

Next, the tightness of (\ref{eq:chp3_JI_2}) is examined.
For convenience, let $\kappa=\frac{2}{\beta}$.
For $\beta\geq 2$, $0<\kappa\leq 1$.
\begin{align}
	&(\mathbb{E}[X])^{\kappa} = (\frac{a+b}{2})^{\kappa}  \\
	&\mathbb{E}[X^{\kappa}]=\frac{1}{b-a}\int_a^b x^{\kappa}\,\mathrm{d}x=\frac{1}{\kappa+1}\frac{b^{\kappa+1}-a^{\kappa+1}}{b-a}.
\end{align}
Let the ratio $\eta_2$ define by the tightness of this inequality,
\begin{align}
	\eta_2=\frac{(\mathbb{E}[X])^{\kappa}}{\mathbb{E}[X^{\kappa}]}=\frac{\kappa+1}{2^{\kappa}}\frac{(b-a)(b+a)^{\kappa}}{b^{\kappa+1}-a^{\kappa+1}}.
\end{align}
For convenience, let $y=b-a>0$, then $b=y+a$ and $a+b=y+2a$.
\begin{align}
	\eta_2=\frac{\kappa+1}{2^{\kappa}}
	       \frac{y(y+2a)^{\kappa}}{(y+a)^{\kappa+1}-a^{\kappa+1}}.
\end{align}
Then, the partial derivative of $\eta_2$ with respect to $y$ is examined.
According to the product rule of the derivative, it can be derived that
\begin{align}
	\frac{\partial}{\partial y}[y(y+2a)^{\kappa}]&=(y+2a)^{\kappa-1}[(\kappa+1)y+2a].
\end{align}
Then, it can be derived that
\begin{align}
	\frac{\partial}{\partial y}[(y+a)^{\kappa+1}-a^{\kappa+1}]=(\kappa+1)(y+a)^{\kappa}.
\end{align}
According to the quotient rule of the derivative,
\begin{align}
	\frac{\partial\eta_2}{\partial y} &=\frac{\kappa+1}{2^{\kappa}}\frac{1}{[(y+a)^{\kappa+1}-a^{\kappa+1}]^2} \nonumber \\
	&\{
	(y+2a)^{\kappa-1}[(\kappa+1)y+2a][(y+a)^{\kappa+1}-a^{\kappa+1}]
	-y(\kappa+1)(y+a)^{\kappa}(y+2a)^{\kappa}
	\} \nonumber \\
	&=\frac{\kappa+1}{2^{\kappa}}\frac{a(y+2a)^{\kappa-1}}{[(y+a)^{\kappa+1}-a^{\kappa+1}]^2} \nonumber \\
	&\{
	y[(1-\kappa)(y+a)^{\kappa}-(\kappa+1)a^{\kappa}]+2a[(y+a)^{\kappa}-a^{\kappa}]
	\}\nonumber \\
	&=\frac{\kappa+1}{2^{\kappa}}\frac{a(y+2a)^{\kappa-1}}{[(y+a)^{\kappa+1}-a^{\kappa+1}]^2}\{	g(y)+2a[(y+a)^{\kappa}-a^{\kappa}]\},
\end{align}
where $g(y)$ is
\begin{align}
	g(y)=y[(1-\kappa)(y+a)^{\kappa}-(\kappa+1)a^{\kappa}].
\end{align}
Because $y$ and $(y+a)^{\kappa}$ are both monotonic increasing functions for $y\in\mathbb{R}$ and $0<\kappa\leq 1$, $g(y)$ is also a monotonic increasing functions.
Because $g(0)=0$, $g(y)>0$, for $y>0$.
For $y>0$, $(y+a)^{\kappa}-a^{\kappa}>0$.
Therefore, for $y>0$, $\frac{\partial\eta_2}{\partial y}>0$.

It can be concluded that $\eta_2$ is a monotonic increasing function with respect to $y$. The larger $y=b-a$ is, the larger $\eta_2$ is.
In other words, the larger the range of $X$ is, the less tighter the inequality is.

\section{Proof of Theorem\,\ref{th:chp5_meanA0}}
\label{appdx:bessel:twqwrq}

In this section, the aim is to obtain the analytic expression for the following expression,
\begin{align}
	\bar{A}_{0,C}=\frac{1}{2\pi}\int_0^{2\pi}A_{0,C}\,\mathrm{d}\theta_B.
\end{align}

There are two different methods that lead to the same result.
First, let's look at the first method.
According to (\ref{eq:appdx_bessel_vbnewuioa}), $A_{0,C}$ is
\begin{align}
	A_{0,C}=\frac{2\pi}{N}\sum_{i,j} J_0(kRW_{i,j}) e^{jkRW_{i,j}\sin(\theta_B-Z_{i,j})},
\end{align}
where $k=2\pi/\lambda$ and $W_{i,j}=2\sin(\frac{i-j}{N}\pi)$, $i,j=1,...,N$.
Applying the integral representation of Bessel function of the first kind, $J_n(x)=\frac{1}{2\pi}\int_{-\pi}^{\pi}e^{j(n\tau-x\sin\tau)}\mathrm{d}\tau$ and $J_n(-x)=(-1)^nJ_n(x)$, it can be derived that
\begin{align}\label{eq:appdx_bessel_averageAC}
	\bar{A}_{0,C}&=\frac{1}{2\pi}\int_0^{2\pi}A_{0,C}\,\mathrm{d}\theta_B \nonumber \\
	&=\frac{1}{2\pi}\int_0^{2\pi}\frac{2\pi}{N}\sum_{i,j} J_0(kRW_{i,j}) e^{jkRW_{i,j}\sin(\theta_B-Z_{i,j})}\,\mathrm{d}\theta_B \nonumber \\
	&=\frac{1}{N}\sum_{i,j} J_0(kRW_{i,j}) \int_0^{2\pi}e^{jkRW_{i,j
	}\sin(\theta_B-Z_{i,j})}\,\mathrm{d}\theta_B \nonumber \\
	&=\frac{2\pi}{N}\sum_{i,j} J_0(kRW_{i,j}) J_0(-kRW_{i,j}) \nonumber \\
	&=\frac{2\pi}{N}\sum_{i,j} J_0^2(kRW_{i,j}).
\end{align}
For (\ref{eq:appdx_bessel_averageAC}), let $n=i-j$. 
The same method of deriving (\ref{eq:appdx_bessel_ieur}) from (\ref{eq:appdx_bessel_euirue}) is adopted, and the simpler expressions for $\bar{A}_{0,C}$ is obtained,
\begin{align}\label{eq:appdx_bessel_A0C_ave}
	\bar{A}_{0,C}=2\pi+\frac{4\pi}{N}\sum_{n=1}^{N-1}(N-n)J_0^2(2kR\sin(\frac{n}{N}\pi)).
\end{align}


For easy of analysis, $\bar{A}_{0,C}$ in (\ref{eq:appdx_bessel_A0C_ave}) can be further simplified. 
Let $\bar{A}_{0,C,n}$ be the $n$-th summation term, $n=1,...,N-1$,
\begin{align}
	\bar{A}_{0,C,n}=4\pi\frac{N-n}{N}J_0^2(2kR\sin(\frac{n}{N}\pi)).
\end{align}
Then, it can be derived that
\begin{align}
	\bar{A}_{0,C}=2\pi+\sum_{n=1}^{N-1}\bar{A}_{0,C,n}.
\end{align}
Because $\sin(\frac{n}{N}\pi)=\sin(\frac{N-n}{N}\pi)$, it can be derived that
\begin{align}
	\bar{A}_{0,C,N-n}&=4\pi\frac{N-(N-n)}{N}J_0^2(2kR\sin(\frac{N-n}{N}\pi))\nonumber \\
	&=4\pi\frac{n}{N}J_0^2(2kR\sin(\frac{n}{N}\pi)).
\end{align}
Thus, combine $\bar{A}_{0,C,n}$ and $\bar{A}_{0,C,N-n}$,
\begin{align}
	\bar{A}_{0,C,n}+\bar{A}_{0,C,N-n}=4\pi J_0^2(2kR\sin(\frac{n}{N}\pi)). 
\end{align}
When $N$ is even, $N-1$ is odd.
\begin{align}\label{eq:appdx_bessel_A0C_ave_even}
	\bar{A}_{0,C}&=2\pi+\sum_{n=1}^{\frac{N}{2}-1}\bar{A}_{0,C,n}+\bar{A}_{0,C,\frac{N}{2}}+\sum_{n=\frac{N}{2}+1}^{N-1}\bar{A}_{0,C,n} \nonumber \\
	&=2\pi+2\pi J_0^2(2kR)+4\pi\sum_{n=1}^{\frac{N}{2}-1} J_0^2(2kR\sin(\frac{n}{N}\pi)). 
\end{align}
When $N$ is odd, $N-1$ is even.
\begin{align}\label{eq:appdx_bessel_A0C_ave_odd}
	\bar{A}_{0,C}&=2\pi+\sum_{n=1}^{\frac{N-1}{2}}\bar{A}_{0,C,n}+\sum_{n=\frac{N+1}{2}}^{N-1}\bar{A}_{0,C,n} \nonumber \\
	&=2\pi+4\pi\sum_{n=1}^{\frac{N-1}{2}} J_0^2(2kR\sin(\frac{n}{N}\pi)). 
\end{align}
Observe (\ref{eq:appdx_bessel_A0C_ave_even}) and (\ref{eq:appdx_bessel_A0C_ave_odd}), these two equations are of similar form.
Because $\sin(\frac{n}{N}\pi)=\sin(\frac{N-n}{N}\pi)$, the two equations can be uniformly written by
\begin{align}\label{eq:appdx_bessel_A0C_ave2}
	\bar{A}_{0,C}=2\pi+2\pi\sum_{n=1}^{N-1} J_0^2(2kR\sin(\frac{n}{N}\pi)),
\end{align}
for $N$ is even or odd.


As a matter of fact, the above equation can be calculated from (\ref{eq:appdx_bessel_A0C_even}) and (\ref{eq:appdx_bessel_A0C_odd2}) by directly solving the integral, because
\begin{align}
	&\int_0^{2\pi}\cos(lN\theta_B)\,\mathrm{d}\theta_B=0, \\
	&\int_0^{2\pi}\cos(2lN\theta_B)\,\mathrm{d}\theta_B=0.
\end{align}

\chapter{Supporting Figures and Graphs}
\label{appdx:fig}

\section{SSOP and Its Upper Bound for UCA}
\label{appdx:fig:cmvs}

Comparing the curves for $K\to\infty$ and $\beta=3$ in Fig.\,\ref{fig:appdx_fig_p_and_bounds_DoE_beta_3_C} with the plot in Fig.\,\ref{fig:chp4_p_DoE_De} when $\beta=2$, it can be seen that the two curves have very similar fluctuating behavior with respect to $\theta_B$.
Furthermore, comparing the curves with different $K$ (except for $K=0$) in Fig.\,\ref{fig:appdx_fig_p_and_bounds_DoE_beta_3_C}, it can be seen that all curves have very similar behavior with respect to $\theta_B$, which verifies that the properties of $\bar{p}_C$ and $\bar{p}_{up,C}$ with respect to $N$, $R$ and $\theta_B$ is similar to those of $A_{0,C}$.

\begin{figure}
\centering
\includegraphics[scale=0.9]{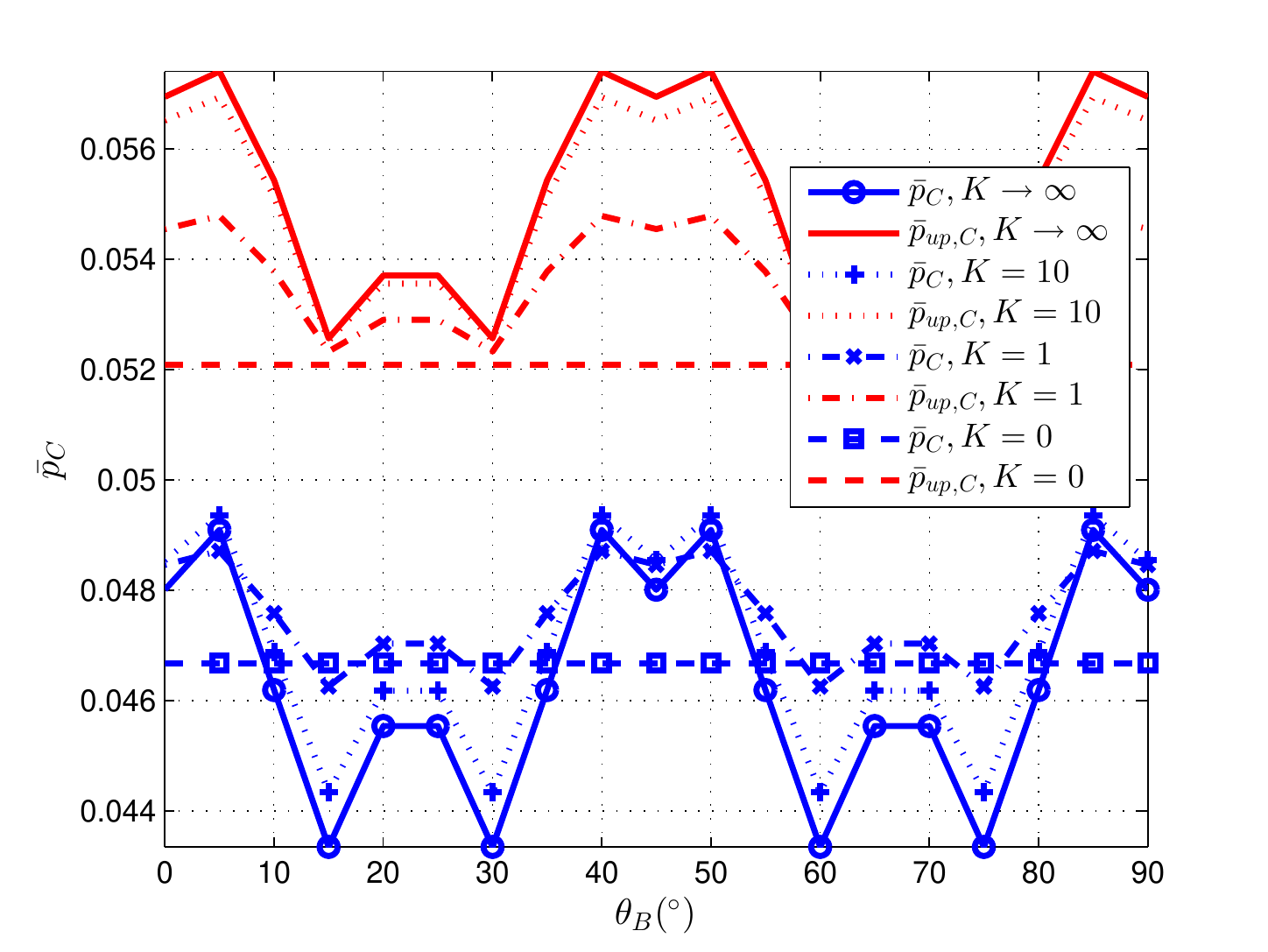}
\caption{$\bar{p}_C$ and $\bar{p}_{up,C}$ versus $\theta_B$ for different $K$. $\beta=3$, $N=8$, $R=1.75\lambda$. $P_t/\sigma_n^2=40$\,dB, $R_B=3.4594$\,bps/Hz, $R_s=1$\,bps/Hz, $\lambda_e=1\times10^{-4}$}
\label{fig:appdx_fig_p_and_bounds_DoE_beta_3_C}
\end{figure}

Similar conclusions can be drawn by comparing the curves for $K\to\infty$ and $\beta=3$ in Fig.\,\ref{fig:appdx_fig_p_and_bounds_N_beta_3_C}  with the upper plot in Fig.\,\ref{fig:chp4_p_N_and_R_De} when $\beta=2$ as well as by comparing the curves with different $K$ (except for $K=0$) in Fig.\,\ref{fig:appdx_fig_p_and_bounds_N_beta_3_C}.

\begin{figure}
\centering
\includegraphics[scale=0.9]{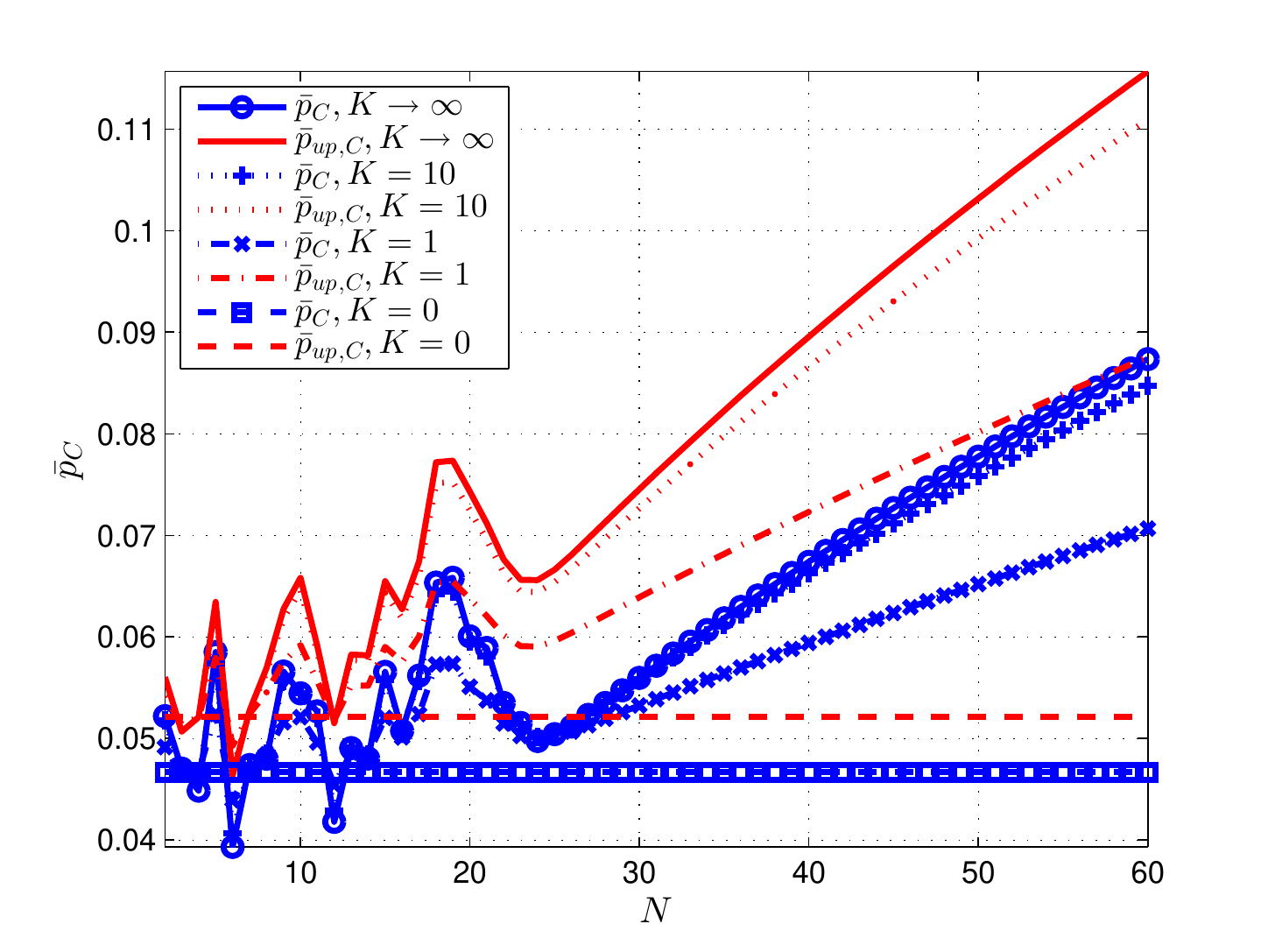}
\caption{$\bar{p}_C$ and $\bar{p}_{up,C}$ versus $N$ for different $K$. $\beta=3$, $R=1.75\lambda$, $\theta_B=0^{\circ}$. $P_t/\sigma_n^2=40$\,dB, $R_B=3.4594$\,bps/Hz, $R_s=1$\,bps/Hz, $\lambda_e=1\times10^{-4}$}
\label{fig:appdx_fig_p_and_bounds_N_beta_3_C}
\end{figure}

\section{Tightness of Upper Bound for UCA}
\label{appdx:fig:snoa}

In Fig.\,\ref{fig:appdx_fig_eta_DoE_bounds_UCA}, as $\theta_B$ changes, there are some fluctuations for $\eta_C$. 
However, the variation is in a relatively small compared to the absolute value of $\eta_C$.
Thus, it can be said that $\eta_C$ does not change much with $\theta_B$.

\begin{figure}
\centering
\includegraphics[scale=0.9]{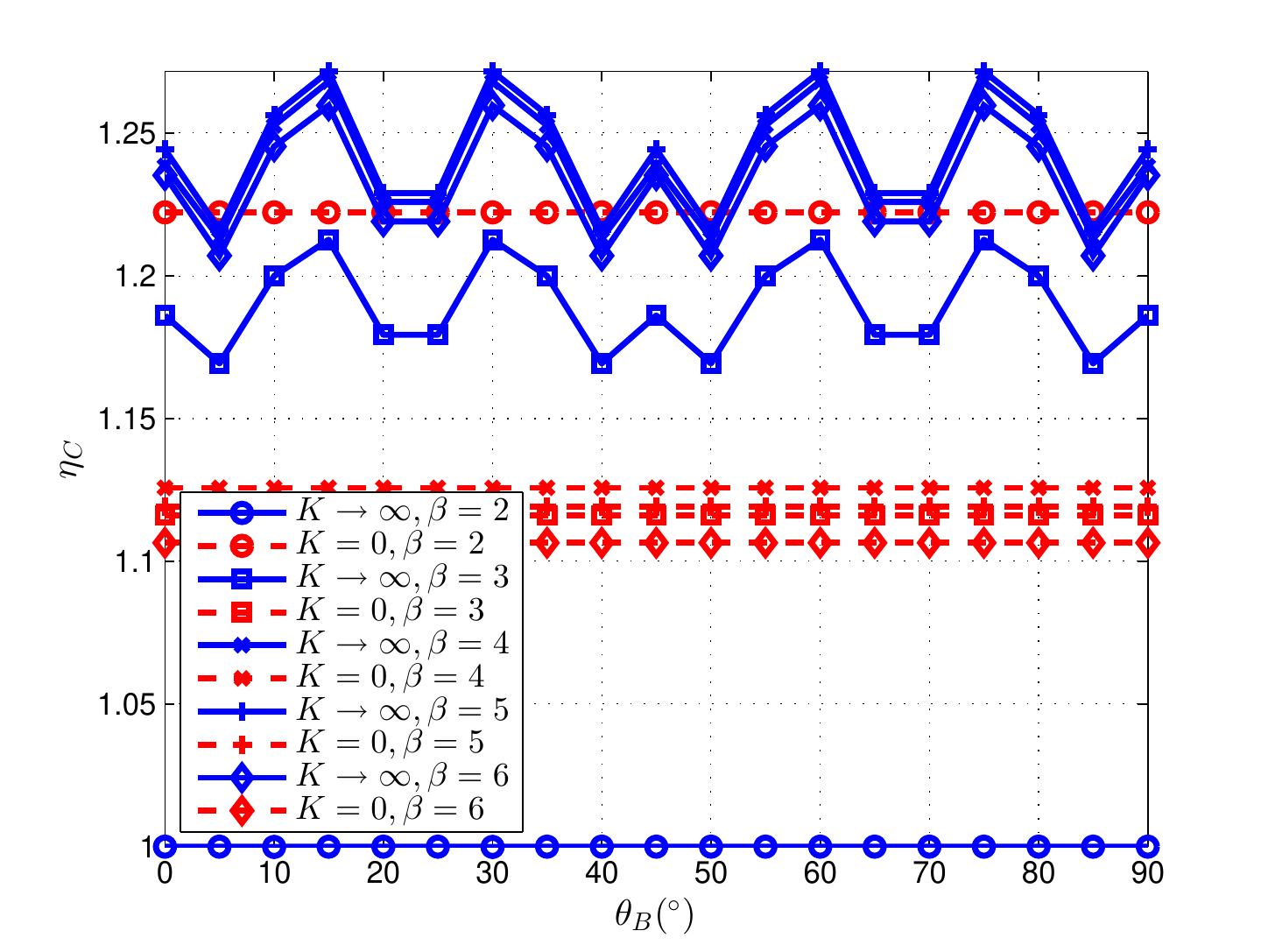}
\caption{$\eta_C$ versus $\theta_B$ for the deterministic for all $\beta$, $N=8$, $R=1.75\lambda$}
\label{fig:appdx_fig_eta_DoE_bounds_UCA}
\end{figure}

In Fig.\,\ref{fig:appdx_fig_eta_DoE_bounds_UCA}, as $N$ changes, $\eta_C$ first increase with some fluctuations, then approaches to certain values. 
The asymptotic behavior with $N$ converges instead of linearly increasing, which means that the upper bound works well even for large $N$.
Overall, the value of $\eta_C$ is smaller than 1.4.

\begin{figure}
\centering
\includegraphics[scale=0.9]{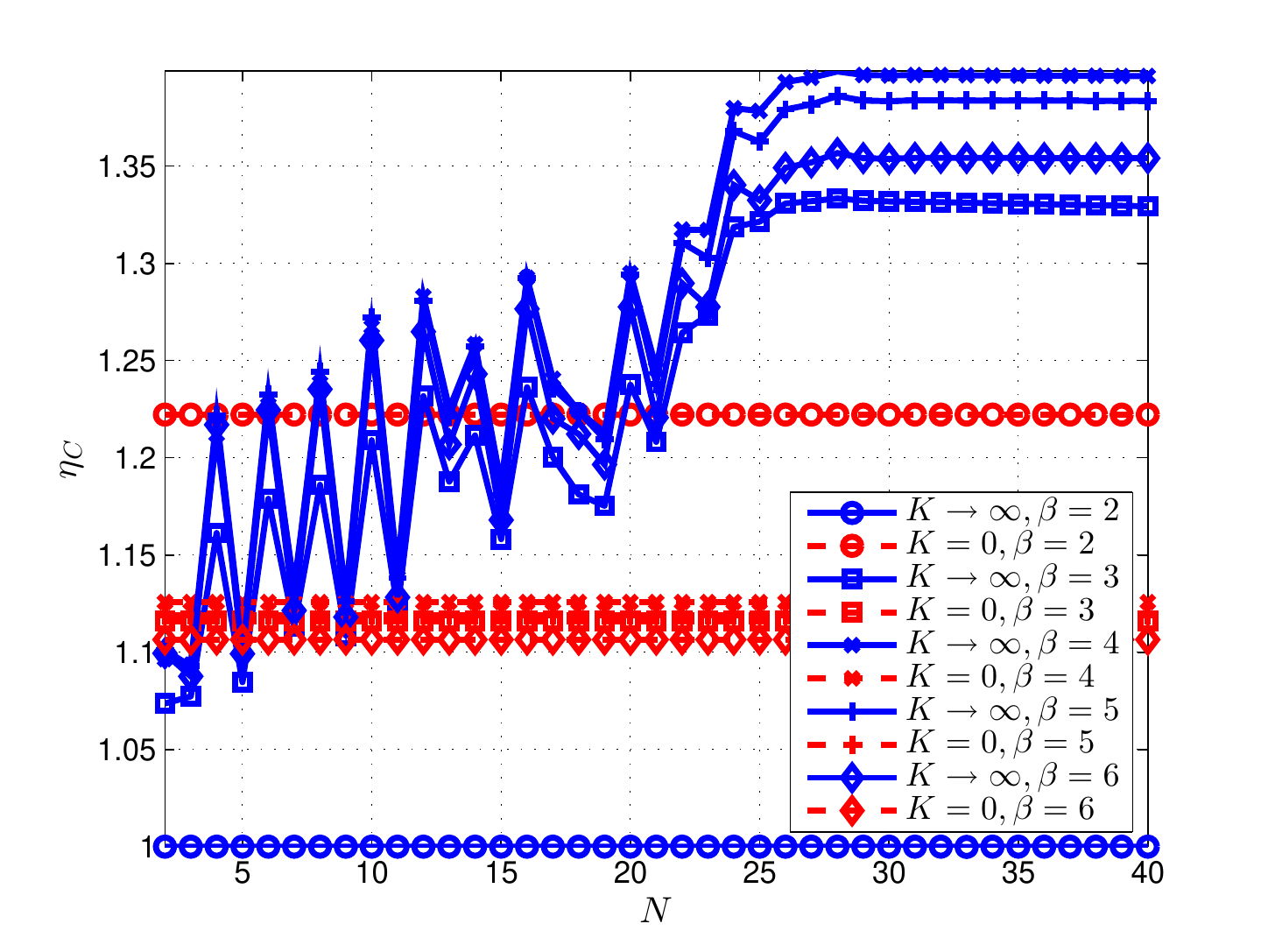}
\caption{$\eta_C$ versus $N$ for the deterministic for all $\beta$, $R=1.75\lambda$, $\theta_B=0^{\circ}$}
\label{fig:appdx_fig_eta_N_bounds_UCA}
\end{figure}

\section{Averaged SSOP over Bob's Angles for UCA}
\label{appdx:fig:ldfsjo}

As analyzed in Section\,\ref{chp5:analysis:vneiwenve}, $\bar{\bar{p}}_{up,C}$ in general decreases with some fluctuations as $R$ increases, the behavior of which is consistent with $\bar{A}_{0,C}$.
Because the upper bound $\bar{\bar{p}}_{up,C}$ is tight to $\bar{\bar{p}}_C$, the behavior of $\bar{\bar{p}}_C$ with respect to $R$ is consistent with that of $\bar{A}_{0,C}$.
In this section, the numerical results of $\bar{\bar{p}}_C$ are given for the generalized Rician channel.

An example of $\bar{A}_{0,C}$ versus $R$ is shown in Fig.\,\ref{fig:appdx_fig_A0_ave_R}, where $N=8$.
It can be seen that the curve for $\bar{A}_{0,C}$ fluctuates as $R$ increases.
In addition, $\bar{A}_{0,C}$ decreases in general.

\begin{figure}
\centering
\includegraphics[scale=0.9]{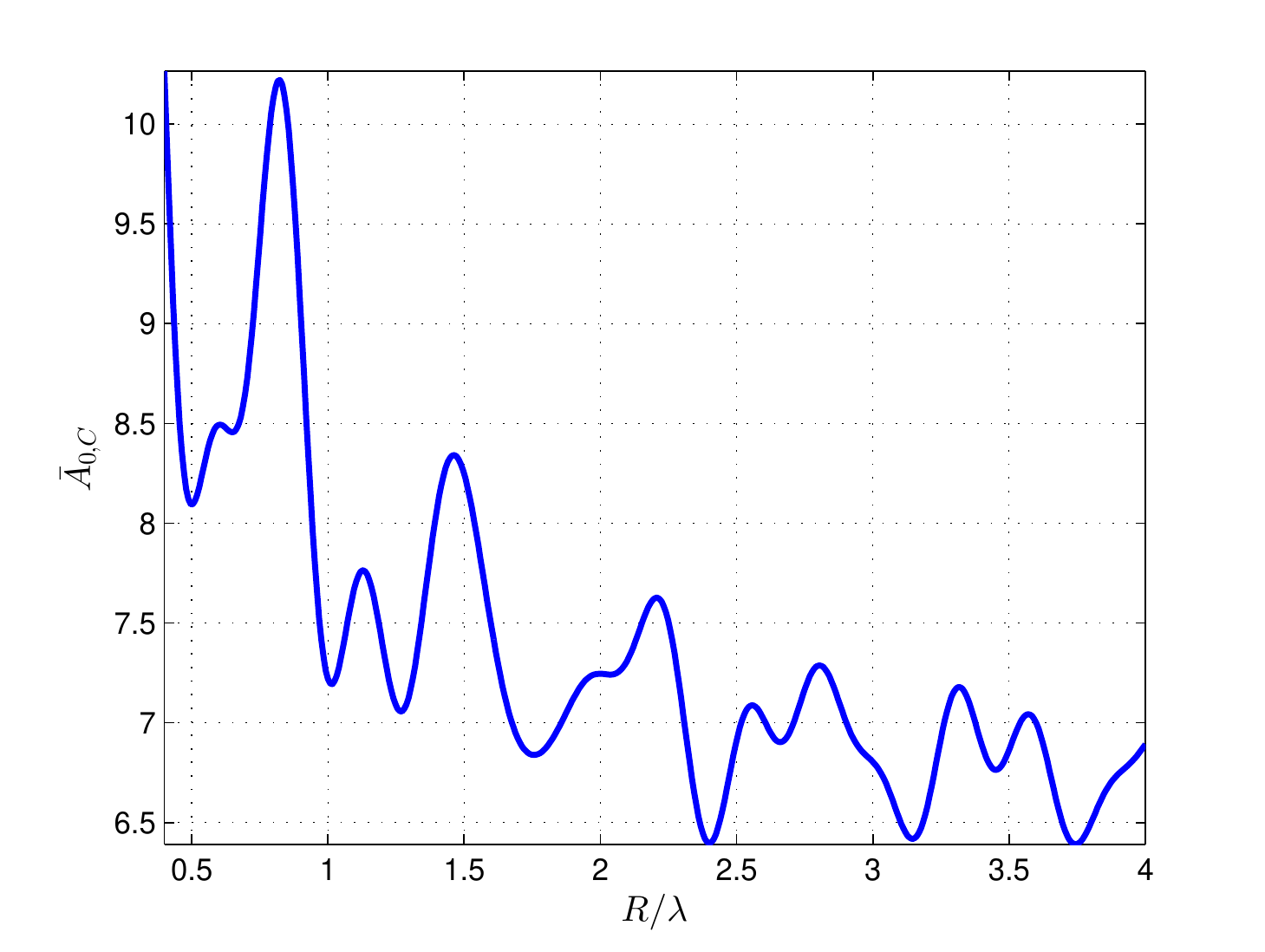}
\caption{$\bar{A}_{0,C}$. $N=8$, $\Delta d=0.5\lambda$.}
\label{fig:appdx_fig_A0_ave_R}
\end{figure}

Examples of $\bar{\bar{p}}_C$ and $\bar{\bar{p}}_{up,C}$ versus $R$ for different $K$ and $\beta=3$ are shown in Fig.\,\ref{fig:appdx_fig_p2bar_and_bounds_R_1}.
It can be seen that the behavior of $\bar{\bar{p}}_C$ and $\bar{\bar{p}}_{up,C}$ are similar and consistent with that of $\bar{A}_{0,C}$, which is decreasing with $R$ with some fluctuations.
For the Rayleigh channel, $\bar{\bar{p}}_C$ and $\bar{\bar{p}}_{up,C}$ are constant.
The curves for $K=10$ are closer to those for $K=\infty$;
The curves for $K=1$ are closer to those for $K=10$.

\begin{figure}
\centering
\includegraphics[scale=0.9]{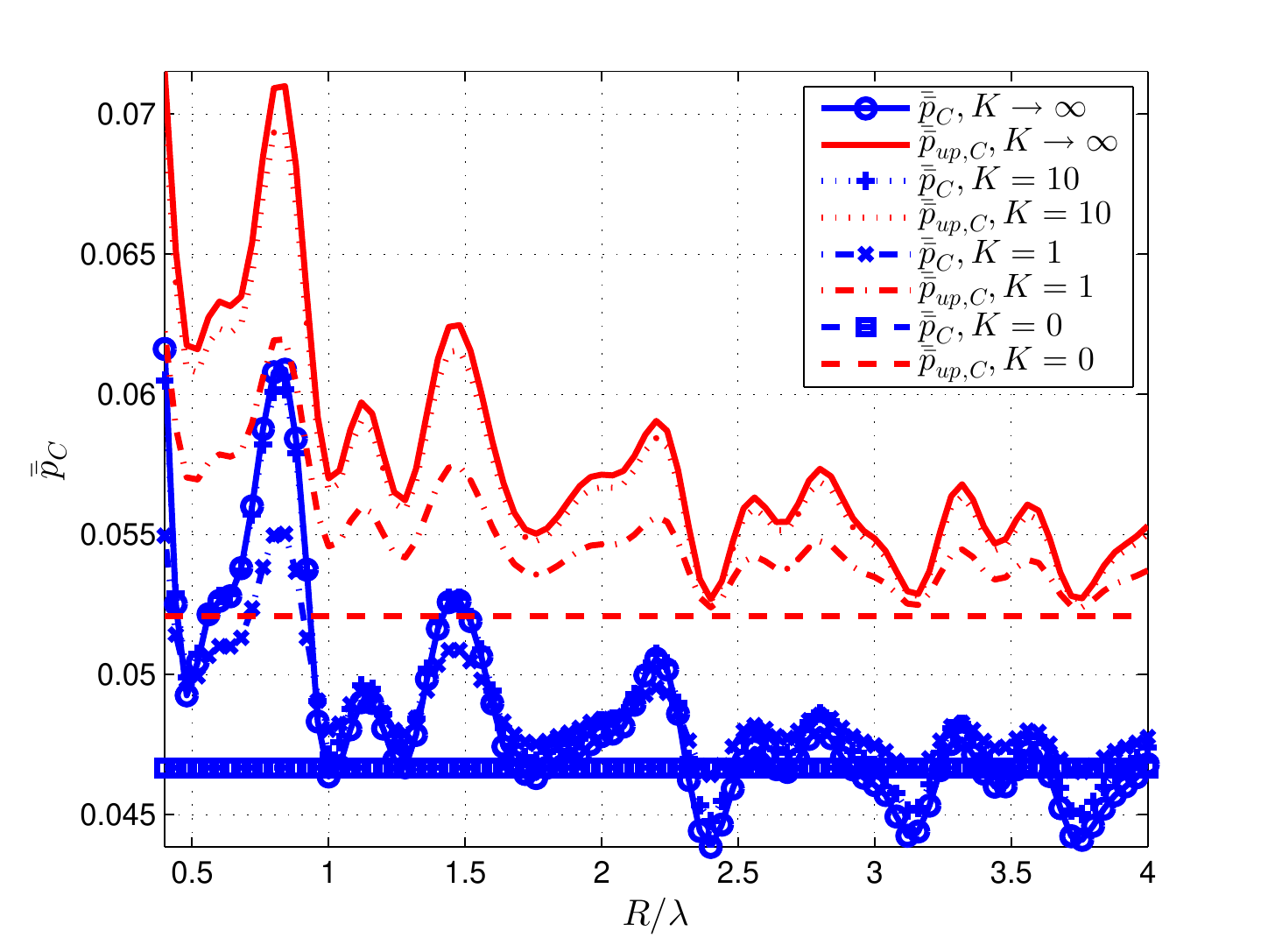}
\caption{$\bar{\bar{p}}_C$ and $\bar{\bar{p}}_{up,C}$ versus $R$, $N=8$. $P_t/\sigma_n^2=40$\,dB, $R_B=3.4594$\,bps/Hz, $R_s=1$\,bps/Hz, $\lambda_e=1\times10^{-4}$}
\label{fig:appdx_fig_p2bar_and_bounds_R_1}
\end{figure}

Combining Fig.\,\ref{fig:appdx_fig_A0_ave_R} and\,\ref{fig:appdx_fig_p2bar_and_bounds_R_1}, it can be seen that the upper bound $\bar{\bar{p}}_{up}$ has similar behavior to $\bar{A}_0$.

\chapter{Beamformer Calibrations on WARP}
\label{appdx:warp}

CFO is common is all electronic devices and is caused by various factors, e.g., oscillator frequency fluctuation and temperature changes\,\cite{murphyPhD}.
Because of the randomness of CFO, different radio interfaces experience different CFO.
In the experiments, this problem is overcome by sharing the same reference and sampling frequencies between all radio interfaces.

The second problem is more unique in this case, which is the random initial phase in each radio interface.
Even though all radio interfaces share the same reference frequency, they have separate transceiver chips to generate the carrier signal.
Because each transceiver's PLL locks at $f_0$ with some random initial phase when first turned on power, each carrier signal has different initial phase, which causes extra phase difference other than $\mathbf{w}$ in $\mathbf{u}$, resulting in a random distortion to the designed beam pattern.
As shown in Fig.\,\ref{fig:chp4_WARP_commsyst}, the relative phase offset of the $(i+1)$-th carrier signal ($i=1,...,N-1$) compared to the 1st carrier is denoted by $\varphi_i$. 

\nomenclature{CFO}{carrier frequency offset}
\nomenclature{$\varphi$}{relative phase offset between neighbor RF interfaces}

In order to form beam patterns at the transmitter, the random initial phase $\varphi_i$, $i=1,...,N-1$, needs to be calibrated. 
First, the calibration for the phase offset between two RF interfaces, e.g., RF 1 and RF 3, is introduced.


To measure the phase offset $\varphi_2$ between RF 3 and RF 1, the phase sweeping method is used.
Let $L_p$ be the total length of the packet.
A sinusoid with frequency $f$ on RF 1 is transmitted, and on RF 3, the same sinusoid is transmitted, however, in addition with an incremental phase from 0 to $2\pi$,
\begin{align}
	u_{1i}&=\sqrt{P_t}x_i=\sqrt{P_t}e^{j2\pi\frac{f}{f_s}i},\\
	u_{3i}&= \sqrt{P_t}x_i e^{j2\pi\frac{i-1}{L_p}}=\sqrt{P_t}e^{j2\pi\frac{f}{f_s}i+j2\pi\frac{i-1}{L_p}},
\end{align}
where $x_i=e^{j2\pi f i/f_s}$ $i=1,...,L_p$, where $f_s$ is the sampling frequency;
in WARPLab, $f_s$ is normally $40$\,MHz;
$e^{j2\pi\frac{i-1}{L_p}}$ is the phase increment from $0$ to $2\pi$.

\nomenclature{$f_s$}{sampling frequency}
\nomenclature{$L_p$}{packet length in WARP }

Due to the phase offset on RF 3, the received signal on RF 2 is
\begin{align}\label{eq:chp4_WARP_mspe}
	r_{2i}=h_{12}u_{1i}+h_{32}u_{3i}e^{j\varphi_2}=\sqrt{P_t}e^{j2\pi\frac{f}{f_s}i}\big(h_{12}+h_{32}e^{j2\pi\frac{i-1}{L_p}}e^{j\varphi_2}\big),
\end{align}
where $h_{12}$ and $h_{32}$ are the complex channel gains between RF 1 and RF 2 and between RF 3 and RF 2.
To be able to detect $\varphi_2$, the receiver RF interface needs to be put the same distance away from RF 1 and RF 3, so that there is no extra phase offset introduced by $h_{12}$ and $h_{32}$ in (\ref{eq:chp4_WARP_mspe}).
It is preferable to exploit the array geometry to find the middle antenna as the receive antenna.
This is the reason why RF 2 in Fig.\,\ref{fig:chp2_ULA} is chosen to estimate $\varphi_2$.

$r_{2i}$ in (\ref{eq:chp4_WARP_mspe}) can be written as
\begin{align}
	r_{2i}= \sqrt{P_t}h_{12}e^{j2\pi\frac{f}{f_s}i}\big(1+\frac{h_{32}}{h_{12}}e^{j2\pi\frac{i-1}{L_p}}e^{j\varphi_2}\big).
\end{align}
The received power is then 
\begin{align}\label{eq:chp4_WARP_oweij}
	r_{2i}^2=P_t|h_{12}|^2\big(1+\frac{h_{32}}{h_{12}}e^{j2\pi\frac{i-1}{L_p}}e^{j\varphi_2}\big)^2.
\end{align}
Because $h_{12}$ and $h_{32}$ have the same phase, $\frac{h_{32}}{h_{12}}$ in (\ref{eq:chp4_WARP_oweij}) has  a positive real value.
When $2\pi\frac{i-1}{L_p}+\varphi_2=\pi$, $r_{2i}^2$ takes its minimum value $P_t|h_{12}|^2(1-\frac{h_{32}}{h_{12}})^2$.
Thus, the index $i$ that gives $r_{2i,min}^2$ can be used to calculate $\varphi_2$,
\begin{align}
	\varphi_2=\pi-2\pi\frac{i-1}{L_p}.
\end{align}

\nomenclature{$_{\text{min}}$}{minimum value}

While it is straightforward when there is a middle antenna in the array, it requires more complex processing when there is none, e.g., estimation of $\varphi_3$ between RF 4 and RF 1.
In this case, RF 2 and RF 3 are needed as receive antennas.
On RF 1, a normal sinusoid signal is transmitted, while on RF 4, an additional phase increment is added.
\begin{align}
	u_{1i}&=\sqrt{P_t}e^{j2\pi\frac{f}{f_s}i},\\
	u_{4i}&=\sqrt{P_t}e^{j2\pi\frac{f}{f_s}i+j2\pi\frac{i-1}{L_p-1}}.
\end{align}

Due to the phase offset on RF 4, the received signals on RF 2 and RF 3 are,
\begin{align}
	r_{2i}&= \sqrt{P_t}h_{12}e^{j2\pi\frac{f}{f_s}i}\big(1+\frac{h_{42}}{h_{12}}e^{j2\pi\frac{i-1}{L_p}}e^{j\varphi_3}\big), \label{eq:chp4_WARP_ivo}\\
	r_{3i}&= \sqrt{P_t}h_{13}e^{j2\pi\frac{f}{f_s}i}\big(1+\frac{h_{43}}{h_{13}}e^{j2\pi\frac{i-1}{L_p}}e^{j\varphi_3}\big), \label{eq:chp4_WARP_owvne}
\end{align}
where $h_{12}$ and $h_{42}$ are the complex channel gains between RF 1 and RF 2 and between RF 4 and RF 2;
$h_{13}$ and $h_{43}$ are the complex channel gains between RF 1 and RF 3 and between RF 4 and RF 3.
As shown in Fig.\,\ref{fig:chp2_ULA}, RF 1 is closer to RF 2 than RF 4 by $\Delta d$; similarly, RF 1 is farther to RF 3 than RF 4 by $\Delta d$.
Thus, $h_{12}$ has the phase advance of $2\pi\frac{\Delta d}{\lambda}$ than $h_{42}$; but, $h_{13}$ has the phase advance of $-2\pi\frac{\Delta d}{\lambda}$ than $h_{43}$.
Let $\Delta\varphi=e^{-j2\pi\frac{\Delta d}{\lambda}}$.
(\ref{eq:chp4_WARP_ivo}) and (\ref{eq:chp4_WARP_owvne}) can be written as
 \begin{align}
	r_{2i}&= \sqrt{P_t}h_{12}e^{j2\pi\frac{f}{f_s}i}\big[1+\frac{|h_{42}|}{|h_{12}|}e^{j(2\pi\frac{i-1}{L_p}+\varphi_3+\Delta\varphi)}\big],  \label{eq:chp4_WARP_vonwefownv}\\
	r_{3i}&	= \sqrt{P_t}h_{13}e^{j2\pi\frac{f}{f_s}i}\big[1+\frac{|h_{43}|}{|h_{13}|}e^{j(2\pi\frac{i-1}{L_p}+\varphi_3-\Delta\varphi)}\big]. \label{eq:chp4_WARP_boen}
\end{align}

$\varphi_3+\Delta\varphi$ can be estimated from $r_{2i}$ in (\ref{eq:chp4_WARP_vonwefownv}) and $\varphi_3-\Delta\varphi$ can be estimated from $r_{3i}$ in (\ref{eq:chp4_WARP_boen}).
The method is the same as estimating $\varphi_2$.
Then $\varphi_3$ can also be estimated.

\nomenclature{$\Delta\varphi$}{addition phase offset caused by unequal propagation delay between neighbor RF interfaces}

Take the 8-element UCA in Fig.\,\ref{fig:chp2_UCA} as an example to demonstrate how to estimate $\varphi_1,...,\varphi_7$.
The 8 elements are divided into two groups, i.e., $(1,3,5,7)$ and $(2,4,6,8)$. 
For any two antenna in each group, it is easy to find a middle antenna to be the receive antenna, which allows the calibration inside each group.
To calibrate between two groups, RF 1 and RF 4 are chosen.
Then the two groups, i.e., all RF interfaces, are calibrated.
Now the beamformer is correctly set-up.

\chapter{Optimization Algorithms}
\label{appdx:opt}

\section{Algorithm 1}
\label{appdx:opt:one}

The input of the algorithm requires the first three groups of parameters that are mentioned in Section\,\ref{chp5:prob_form} as well as the wavelength $\lambda$.
In addition, the ranges and the resolutions of $R$, $\theta_B$ and $\theta$ are required as well as the limit value $Q$.
The limit of the integral is set to $Q=\pm3$ according to Section\,\ref{chp3:metric:ssop}.
The output is $R_{opt}$.

There are two main sections of the algorithm.
After the numbers of iterations regarding to $R$, $\theta_B$ and $Q$ and $\theta$ are calculated from line\,\ref{alg1:line1} to\,\ref{alg1:line5}, the first main section is from line\,\ref{alg1:line6} to\,\ref{alg1:line27}, where $\bar{\bar{p}}$ for different $R$ is calculated.
The second main section is from line\,\ref{alg1:line28} to\,\ref{alg1:line35}, where the minimum value in the vector $\bar{\bar{p}}$ is searched to find $R_{opt}$.

\begin{algorithmic}[1]
\INPUT $R_B$, $R_s$, $\sigma_n^2$, $\lambda_e$ 
\INPUT $\beta$, $K$ 
\INPUT  $\lambda$, $P_t$, $N$, $R_1$, $R_2$, $\Delta R$
\INPUT $\theta_{B1}=0$, $\Delta\theta_B$  
\INPUT $\theta_1=0$, $\theta_2=2\pi$, $\Delta\theta$ 
\INPUT $Q$, $\Delta Q$
\OUTPUT $R_{opt}$
\State{$N_R=\lfloor\frac{R_2-R_1}{\Delta R}\rfloor+1$} \label{alg1:line1}
\State{$\theta_{B2}=\frac{\pi}{N}$} \label{alg1:line2}
\State{$N_{\theta_B}=\lfloor\frac{\theta_{B2}-\theta_{B1}}{\Delta\theta_B}\rfloor+1$} \label{alg1:line3}
\State{$N_Q=\lfloor\frac{2Q}{\Delta Q}\rfloor+1$} \label{alg1:line4}
\State{$N_{\theta}=\lfloor\frac{\theta_2-\theta_1}{\Delta\theta}\rfloor+1$} \label{alg1:line5}
\State{$\bar{\bar{p}}=zeros(1,N_R)$} \Comment{preallocate memory for $\bar{\bar{p}}$} \label{alg1:line6}
\For{$idx_R \gets 1\; to\; N_R$} \Comment{loop over $R\in[R_1,R_2]$} \label{alg1:line7}
	\State{$R=R_1+(idx_R-1)\Delta R$} \Comment{current radius} \label{alg1:line8}
	\For{$idx_{\theta_B} \gets 1\; to\; N_{\theta_B}$} \Comment{loop over $\theta_B\in[\theta_{B1},\theta_{B2}]$}	 \label{alg1:line9}	
		\State{$\theta_B=\theta_{B1}+(idx_{\theta_B}-1)\Delta\theta_B$} \Comment{current Bob's angle}  \label{alg1:line10}
		\State{$S_1=0$} \label{alg1:line11}
		\For{$m \gets 1\; to\; N_Q$}   \Comment{loop over $x$}		 \label{alg1:line12} 
		  \State{$x=-Q+(m-1)\Delta Q$}	\Comment{current $x$}		\label{alg1:line13}
			\For{$n \gets 1\; to\; N_Q$} \Comment{loop over $y$}\label{alg1:line14}
				\State{$y=-Q+(n-1)\Delta Q$}	\Comment{current $y$}	\label{alg1:line15}
				\State{$S_2=0$}\label{alg1:line16}
				\For{$idx_\theta \gets 1\; to\; N_{\theta}$} \Comment{loop over $\theta\in[\theta_1,\theta_2]$}\label{alg1:line17}
					\State{$\theta=\theta_1+(idx_\theta-1)\Delta\theta$}	\Comment{current angle $\theta$}	\label{alg1:line18}
					\State{$S_2=S_2+\Big[\frac{KG_C^2(\theta,\theta_B)}{K+1}+\frac{x^2+y^2}{K+1}+\frac{2\sqrt{K}G_C(\theta,\theta_B)}{K+1}x\Big]^{\frac{2}{\beta}}\Delta\theta$} \label{alg1:line19}
				\EndFor \label{alg1:line20}
				\State{$S_1=S_1+\text{exp}\{-\frac{\lambda_e}{2}[\frac{P_t}{\sigma_n^2(2^{R_B-R_s}-1)}]^{\frac{2}{\beta}}S_2\}\frac{e^{-(x^2+y^2)}}{\pi}\Delta Q^2$} \label{alg1:line21}
			\EndFor \label{alg1:line22}
		\EndFor		\label{alg1:line23}
		\State{$\bar{\bar{p}}(idx_R)=\bar{\bar{p}}(idx_R)+1-S_1$}	\label{alg1:line24}
	\EndFor \label{alg1:line25}
	\State{$\bar{\bar{p}}(idx_R)=\bar{\bar{p}}(idx_R)/N_{\theta_B}$}		\Comment{average over all Bob's angle}	\label{alg1:line26}
\EndFor \label{alg1:line27}
\State{$R_{opt}=R_1$} \label{alg1:line28}
\State{$\bar{\bar{p}}_{min}=\bar{\bar{p}}(1)$} \label{alg1:line29}
\For{$idx_R \gets 2\; to\; N_R$} \Comment{loop over $R\in[R_1,R_2]$} \label{alg1:line30}
	\If{$\bar{\bar{p}}_{min}>\bar{\bar{p}}(idx_R)$} \label{alg1:line31}
		\State{$R_{opt}=R_1+(idx_R-1)\Delta R$} \label{alg1:line32}
		\State{$\bar{\bar{p}}_{min}=\bar{\bar{p}}(idx_R)$} \label{alg1:line33}
	\EndIf \label{alg1:line34}
\EndFor \label{alg1:line35}
\end{algorithmic}

Notice that in line\,\ref{alg1:line2}, $\bar{\bar{p}}$ is calculated in the range $\theta_B\in[0,\pi/N]$ instead of $[0,2\pi]$, as mentioned in Proposition\,\ref{prop:chp4_theta_B_range}, in order to reduce the computational complexity.
It is also worth noticing that in line\,\ref{alg1:line19},\,\ref{alg1:line21},\,\ref{alg1:line24},\,\ref{alg1:line26}, the integral in (\ref{eq:chp5_p2bar_3}) is calculated via summation.

\section{Algorithm 2}
\label{appdx:opt:two}

The first three groups of parameters that are mentioned in Section\,\ref{chp5:prob_form} as well as $\lambda$ are required as input; while the output is the look-up tables $T^{(k)}$.

There are two main sections of Algorithm\,2.
After the numbers of iterations regarding to $\theta_B$, $\theta$ and $Q$ are calculated from line\,\ref{alg2:line1} to\,\ref{alg2:line3}, the first main section is from line\,\ref{alg2:line4} to\,\ref{alg2:line33},
where $\bar{p}_C$ is calculated for all $M_{ij}$ and $\theta_B$, regardless of Bob's zone.
The second main section is from line\,\ref{alg2:line34} to\,\ref{alg2:line51}, where the look-up table $T^{(k)}$ for zone $k$, $k=1,...,K$, is calculated.

In line\,\ref{alg2:line6}, $|N\_vec|$ is the cardinality of the set $N\_vec$, which is $K$.
Similarly, $|\{M_{ij}\}|$ is the number of array modes in $\{M_{ij}\}$. 
From line\,\ref{alg2:line16} to\,\ref{alg2:line29}, $\bar{p}_C$ is calculated according to (\ref{eq:chp5_meanSSOP_Ri}), for which the integral is calculated via summation.
Notice that in line\,\ref{alg2:line15}, for the $j$-th array mode in $\{M_i\}$, there is a rotation in the DoE angle.
For example, in Fig.\,\ref{fig:chp5_arraymode}, when $\theta_B=5^{\circ}$, $\theta_{\text{doe}}=5^{\circ}$ for $M_{21}$; but $\theta_{\text{doe}}=-40^{\circ}$ for $M_{22}$.

From line\,\ref{alg2:line36} to\,\ref{alg2:line39}, the number of available array modes for zone $k$ is decided.
From line\,\ref{alg2:line40} to\,\ref{alg2:line50}, $T^{(k)}$ is calculated for all possible Bob's angle $\theta_B\in[0,2\pi]$.
Notice that the method to find the minimum value in a vector is the same as Algorithm\,1.

\begin{algorithmic}[1]
\INPUT $R_B$, $R_s$, $\sigma_n^2$, $\lambda_e$
\INPUT $\beta$, $K$
\INPUT $\lambda$, $P_t$, $N_{\text{max}}$, $R$
\INPUT $\theta_{B1}=0$, $\theta_{B2}=2\pi$, $\Delta\theta_B$
\INPUT $\theta_1=0$, $\theta_2=2\pi$, $\Delta\theta$
\INPUT $Q$, $\Delta Q$ 
\OUTPUT $T^{(k)}$
\State{$N_{\theta_B}=\lfloor\frac{\theta_{B2}-\theta_{B1}}{\Delta\theta_B}\rfloor+1$}  \label{alg2:line1}
\State{$N_{\theta}=\lfloor\frac{\theta_2-\theta_1}{\Delta\theta}\rfloor+1$}  \label{alg2:line2}
\State{$N_Q=\lfloor\frac{2Q}{\Delta Q}\rfloor+1$}  \label{alg2:line3}
\State{$N\_vec=[N_1>N_2>\cdots>N_K]$} \Comment{$N\_vec$ contains all possible numbers of active elements, $N_1=N_{\text{max}}$}  \label{alg2:line4}
\State{$N_{M_{ij}}=0$}  \Comment{total number of array modes}  \label{alg2:line5}
\For{$i \gets 1\; to\; |N\_vec|$} \Comment{loop over $N\_vec$}  \label{alg2:line6}
	\State{$N_{M_{ij}}=N_{M_{ij}}+|\{M_{i}\}|$}  \label{alg2:line7}
\EndFor    \label{alg2:line8}
\State{$\bar{p}=zeros(N_{M_{ij}},N_{\theta_B})$} \Comment{pre-allocate memory}  \label{alg2:line9}
\State{$idx=1$} \Comment{row index for matrix $\bar{p}$} \label{alg2:line10}
\For{$i \gets 1\; to\; |N\_vec|$}  \Comment{loop over $N\_vec$}  \label{alg2:line11}
	\State{$N=N\_vec(i)$} \Comment{current number of elements} \label{alg2:line12}
	\For{$j \gets 1\; to\; |\{M_{i}\}|$} \Comment{loop over $|\{M_{i}\}|$}	  \label{alg2:line13}
		\For{$idx_{\theta_B} \gets 1\; to\; N_{\theta_B}$} \Comment{loop over $\theta_B\in[\theta_{B1},\theta_{B2}]$}  \label{alg2:line14}
			\State{$\theta_{\text{doe}}=\theta_{B1}+(idx_{\theta_B}-1)\Delta\theta_B-(j-1)\frac{\pi}{N_{\text{max}}}$} \Comment{current DoE angle} \label{alg2:line15}
			\State{$S_1=0$}  \label{alg2:line16}
			\For{$m \gets 1\; to\; N_Q$}   \Comment{loop over $x$}  \label{alg2:line17}
				\State{$x=-Q+(m-1)\Delta Q$}			  \label{alg2:line18}
				\For{$n \gets 1\; to\; N_Q$} \Comment{loop over $y$}  \label{alg2:line19}
					\State{$y=-Q+(n-1)\Delta Q$}	  \label{alg2:line20}
					\State{$S_2=0$}  \label{alg2:line21}
					\For{$idx_\theta \gets 1\; to\; N_{\theta}$} \Comment{loop over $\theta\in[\theta_1,\theta_2]$}  \label{alg2:line22}
						\State{$\theta=\theta_1+(idx_\theta-1)\Delta\theta$}  \label{alg2:line23}
						\State{$S_2=S_2+\Big[\frac{KG_C^2(\theta,\theta_{\text{doe}})}{K+1}+\frac{x^2+y^2}{K+1}+\frac{2\sqrt{K}G_C(\theta,\theta_{\text{doe}})}{K+1}x\Big]^{\frac{2}{\beta}}\Delta\theta$}   \label{alg2:line24}
					\EndFor 					  \label{alg2:line25}
					\State{$S_1=S_1+\text{exp}\{-\frac{\lambda_e}{2}[\frac{P_t}{\sigma_n^2(2^{R_B-R_s}-1)}]^{\frac{2}{\beta}}S_2\}\frac{e^{-(x^2+y^2)}}{\pi}\Delta Q^2$}  \label{alg2:line26}
				\EndFor   \label{alg2:line27}
			\EndFor		  \label{alg2:line28}
			\State{$\bar{p}(idx,idx_{\theta_B})=1-S_1$}   \label{alg2:line29}
		\EndFor  \label{alg2:line30}
		\State{$idx=idx+1$}   \label{alg2:line31}
	\EndFor  \label{alg2:line32}
\EndFor  \label{alg2:line33}
\For{$k \gets 1\; to\; K$}  \Comment{zone 1 to zone $K$}  \label{alg2:line34}
	\State{$T^{(k)}=zeros(1,N_{\theta_B})$}   \label{alg2:line35}
	\State{$N_{temp}=0$}  \Comment{number of available array modes}  \label{alg2:line36}
	\For{$i \gets 1\; to\; K-k+1$}   \label{alg2:line37}
		\State{$N_{temp}=N_{temp}+|\{M_{i}\}|$}   \label{alg2:line38}
	\EndFor   \label{alg2:line39}
	\For{$idx_{\theta_B} \gets 1\; to\; N_{\theta_B}$}  \label{alg2:line40}
		\State{$idx_{min}=1$}  \label{alg2:line41}
		\State{$\bar{p}_{min}=\bar{p}(1,idx_{\theta_B})$}  \label{alg2:line42}
		\For{$idx \gets 2\; to\; N_{temp}$}  \label{alg2:line43}
			\If{$\bar{p}_{min}>\bar{p}(idx,idx_{\theta_B})$}  \label{alg2:line44}
				\State{$idx_{min}=idx$}  \label{alg2:line45}
				\State{$\bar{p}_{min}=\bar{p}(idx,idx_{\theta_B})$}   \label{alg2:line46}
			\EndIf  \label{alg2:line47}
		\EndFor  \label{alg2:line48}
		\State{$T^{(k)}(idx_{\theta_B})=idx_{min}$}   \label{alg2:line49}
	\EndFor   \label{alg2:line50}
\EndFor  \label{alg2:line51}
\end{algorithmic}

\end{appendices}

\printthesisindex 

\end{document}